\newenvironment{proof}{\noindent{\bf Proof : \ }}{\hfill$\Box$\par\medskip}
\newtheorem{theorem}{Theorem}[section]
\newtheorem{corollary}[theorem]{Corollary}
\newtheorem{lemma}[theorem]{Lemma}
\newtheorem{definition}[theorem]{Definition}
\newtheorem{claim}[theorem]{Claim}
\newtheorem{observation}[theorem]{Observation}
\newenvironment{proofof}[1]{\begin{trivlist} \item {\bf Proof
#1:~~}}
  {\qed\end{trivlist}}
\newcommand{\namedref}[2]{\hyperref[#2]{#1~\ref*{#2}}}
\newcommand{\thmlab}[1]{\label{thm:#1}}
\newcommand{\thmref}[1]{\namedref{Theorem}{thm:#1}}
\newcommand{\lemlab}[1]{\label{lem:#1}}
\newcommand{\lemref}[1]{\namedref{Lemma}{lem:#1}}
\newcommand{\claimlab}[1]{\label{claim:#1}}
\newcommand{\claimref}[1]{\namedref{Claim}{claim:#1}}
\newcommand{\corlab}[1]{\label{cor:#1}}
\newcommand{\corref}[1]{\namedref{Corollary}{cor:#1}}
\newcommand{\seclab}[1]{\label{sec:#1}}
\newcommand{\secref}[1]{\namedref{Section}{sec:#1}}
\newcommand{\applab}[1]{\label{app:#1}}
\newcommand{\appref}[1]{\namedref{Appendix}{app:#1}}
\newcommand{\figlab}[1]{\label{fig:#1}}
\newcommand{\figref}[1]{\namedref{Figure}{fig:#1}}
\newcommand{\alglab}[1]{\label{alg:#1}}
\renewcommand{\algref}[1]{\namedref{Algorithm}{alg:#1}}
\newcommand{\eqnlab}[1]{\label{eq:#1}}
\newcommand{\eqnref}[1]{\namedref{Equation}{eq:#1}}
\newcommand{\obslab}[1]{\label{obs:#1}}
\newcommand{\obsref}[1]{\namedref{Observation}{obs:#1}}
\def \tvd    {\mdef{d_{\mathsf{TV}}}}
\def \DISJ    {\mdef{\mathsf{DISJ}}}
\def \alg    {\mdef{\mathcal{A}}}
\def \adaptive    {\mdef{\textsc{AdaptiveStream}}}
\def \adaptiveone    {\mdef{\textsc{AdaptDistStream}}}
\def \countsketch    {\mdef{\textsc{CountSketch-M}}}
\def \taileps    {\mdef{tail\left(\frac{2}{\eps^2}\right)}}
\def \ams    {\mdef{\textsc{AMS-M}}}
\def \estimator    {\mdef{\textsc{Estimator-M}}}
\def \A    {\mdef{\mathbf{A}}}
\def \B    {\mdef{\mathbf{B}}}
\def \D    {\mdef{\mathbf{D}}}
\def \E    {\mdef{\mathbf{E}}}
\def \G    {\mdef{\mathbf{G}}}
\def \H    {\mdef{\mathbf{H}}}
\def \I    {\mdef{\mathbb{I}}}
\def \M    {\mdef{\mathbf{M}}}
\def \P    {\mdef{\mathbf{P}}}
\def \R    {\mdef{\mathbf{R}}}
\def \S    {\mdef{\mathbf{S}}}
\def \T    {\mdef{\mathbf{T}}}
\def \Q    {\mdef{\mathbf{Q}}}
\def \V    {\mdef{\mathbf{V}}}
\def \U    {\mdef{\mathbf{U}}}
\def \X    {\mdef{\mathbf{X}}}
\def \Y    {\mdef{\mathbf{Y}}}
\def \Z    {\mdef{\mathbf{Z}}}
\def \W    {\mdef{\mathbf{W}}}
\def \b    {\mdef{\mathbf{b}}}
\def \e    {\mdef{\mathbf{e}}}
\def \f    {\mdef{\mathbf{f}}}
\def \m    {\mdef{\mathbf{m}}}
\def \p    {\mdef{\mathbf{p}}}
\def \q    {\mdef{\mathbf{q}}}
\def \r    {\mdef{\mathbf{r}}}
\def \u    {\mdef{\mathbf{u}}}
\def \w    {\mdef{\mathbf{w}}}
\def \v    {\mdef{\mathbf{v}}}
\def \x    {\mdef{\mathbf{x}}}
\def \y    {\mdef{\mathbf{y}}}
\def \z    {\mdef{\mathbf{z}}}
\def \bzero {\mdef{\mathbf{0}}}
\newcommand\norm[1]{\left\lVert#1\right\rVert}
\newcommand{\PPr}[1]{\ensuremath{\mathbf{Pr}\left[#1\right]}}
\newcommand{\PPPr}[2]{\ensuremath{\underset{#1}{\mathbf{Pr}}\left[#2\right]}}
\newcommand{\Ex}[1]{\ensuremath{\mathbb{E}\left[#1\right]}}
\renewcommand{\O}[1]{\ensuremath{\mathcal{O}\left(#1\right)}}
\newcommand{\tO}[1]{\ensuremath{\tilde{\mathcal{O}}\left(#1\right)}}
\newcommand{\eps}{\epsilon}
\newcommand{\mdef}[1]{{\ensuremath{#1}}\xspace}  % Math Def which can also be used in normal text.
\DeclareMathOperator*{\argmin}{argmin}
\DeclareMathOperator*{\argmax}{argmax}
\DeclareMathOperator*{\polylog}{polylog}
\DeclareMathOperator*{\poly}{poly}
\DeclareMathOperator*{\median}{median}
\DeclareMathOperator*{\Var}{Var}
\DeclareMathOperator*{\Vol}{Vol}
\DeclareMathOperator*{\rank}{rank}
\newcommand{\superscript}[1]{\ensuremath{^{\mbox{\tiny{\textit{#1}}}}}\xspace}
\def \th {\superscript{th}}     % 'The i-th entry it a list...' --> i\th
\newcommand{\abs}[1]{\mdef{\left|#1\right|}}         % Absolute value
\newcommand{\ignore}[1]{}
\newif\ifnotes\notestrue %set this to true if notes are visible and to false (next line) if they should be hidden
\newcommand{\samson}[1]{\textcolor{purple}{{\bf (Samson:} {#1}{\bf ) }} \marginpar{\tiny\bf
             \begin{minipage}[t]{0.5in}
               \raggedright S:
            \end{minipage}}}            							
\newcommand{\samson}[1]{}
\renewcommand*{\@fnsymbol}[1]{\textcolor{blue}{\ensuremath{\ifcase#1\or *\or \dagger\or \ddagger\or
 \mathsection\or \triangledown\or \mathparagraph\or \|\or **\or \dagger\dagger
   \or \ddagger\ddagger \else\@ctrerr\fi}}}
\providecommand{\email}[1]{\href{mailto:#1}{\nolinkurl{#1}\xspace}}
\title{Non-Adaptive Adaptive Sampling on Turnstile Streams}
\date{\today}
\author{
Sepideh Mahabadi\thanks{Toyota Technological Institute at Chicago (TTIC). 
E-mail: \email{mahabadi@ttic.edu}}
\and
Ilya Razenshteyn\thanks{Microsoft Research. 
E-mail: \email{ilyaraz@microsoft.com}}
\and
David P. Woodruff\thanks{School of Computer Science, Carnegie Mellon University.
E-mail: \email{dwoodruf@cs.cmu.edu}}
\and
Samson Zhou\thanks{School of Computer Science, Carnegie Mellon University.  
E-mail: \email{samsonzhou@gmail.com}}
}
\begin{document}
\maketitle
\begin{abstract}
Adaptive sampling is a useful algorithmic tool for data summarization problems in the classical centralized setting, where the entire dataset is available to the single processor performing the computation. 
Adaptive sampling repeatedly selects rows of an underlying matrix $\A\in\mathbb{R}^{n\times d}$, where $n\gg d$, with probabilities proportional to their distances to the subspace of the previously selected rows. 
Intuitively, adaptive sampling seems to be limited to trivial multi-pass algorithms in the streaming model of computation due to its inherently sequential nature of assigning sampling probabilities to each row only after the previous iteration is completed. 
Surprisingly, we show this is not the case by giving the first one-pass algorithms for adaptive sampling on turnstile streams and using space $\poly(d,k,\log n)$, where $k$ is the number of adaptive sampling rounds to be performed. 

Our adaptive sampling procedure has a number of applications to various data summarization problems that either improve state-of-the-art or have only been previously studied in the more relaxed row-arrival model. 
We give the first relative-error algorithm for column subset selection on turnstile streams. 
We show our adaptive sampling algorithm also gives the first relative-error algorithm for subspace approximation on turnstile streams that returns $k$ noisy rows of $\A$. 
The quality of the output can be improved to a $(1+\eps)$-approximation at the tradeoff of a bicriteria algorithm that outputs a larger number of rows. 
We then give the first algorithm for projective clustering on turnstile streams that uses space sublinear in $n$. 
In fact, we use space $\poly\left(d,k,s,\frac{1}{\eps},\log n\right)$ to output a $(1+\eps)$-approximation, where $s$ is the number of $k$-dimensional subspaces. 
Our adaptive sampling primitive also provides the first algorithm for volume maximization on turnstile streams. 
We complement our volume maximization algorithmic results with lower bounds that are tight up to lower order terms, even for multi-pass algorithms. 
By a similar construction, we also obtain lower bounds for volume maximization in the row-arrival model, which we match with competitive upper bounds. 
\end{abstract}

\newpage

\section{Introduction}
Data summarization is a fundamental task in data mining, machine learning, statistics, and applied mathematics. 
The goal is to find a set $S$ of $k$ rows of a matrix $\A\in\mathbb{R}^{n\times d}$ that optimizes some predetermined function that quantifies how well $S$ represents $\A$. 
For example, row subset selection seeks $S$ to well-approximate $\A$ with respect to the spectral or Frobenius norm, subspace approximation asks to minimize the sum of the distances of the rows of $\A$ from $S$, while volume maximization wants to maximize the volume of the parallelepiped spanned by the rows of $S$. 
Due to their applications in data science, data summarization problems are particularly attractive to study for big data models. 

The streaming model of computation is an increasingly popular model for describing large datasets whose overwhelming size places restrictions on space available to algorithms. 
For turnstile streams that implicitly define $\A$, the matrix initially starts as the all zeros matrix and receives a large number of updates to its coordinates. 
Once the updates are processed, they cannot be accessed again and hence any information not stored is lost forever. 
The goal is then to perform some data summarization task after the stream is completed without storing $\A$ in its entirety. 

Adaptive sampling is a useful algorithmic paradigm that yields many data summarization algorithms in the centralized setting~\cite{DeshpandeV06, DeshpandeV07, DeshpandeRVW06}. 
The idea is that $S$ begins as the empty set and some row $\A_i$ of $\A$ is sampled with probability $\frac{\norm{\A_i}^p_2}{\norm{\A}_{p,2}^p}$, where $p\in\{1,2\}$ and $\norm{\A}_{p,q}=\left(\sum_{i=1}^n\left(\sum_{j=1}^d |A_{i,j}|^q\right)^{\frac{p}{q}}\right)^{\frac{1}{p}}$. 
As $S$ is populated, the algorithm \emph{adaptively samples} rows of $\A$, so that at each iteration, row $\A_i$ is sampled with probability proportional to the $p\th$ power of the distance of the row from $S$. 
That is, $\A_i$ is sampled with probability $\frac{\norm{\A_i(\I-\M^\dagger\M)}^p_2}{\norm{\A(\I-\M^\dagger\M)}_{p,2}^p}$, where $\M$ is the matrix formed by the rows in $S$. 
The procedure is repeated $k$ times until we obtain $k$ rows of $\A$, which then forms our summary of the matrix $\A$. 
Unfortunately, adaptive sampling seems like an inherently sequential procedure and thus the extent of its capabilities has not been explored in the streaming model. 

\subsection{Our Contributions}
In this paper, we show that although adaptive sampling seems like an iterative procedure, we do not need multiple passes over the stream to perform adaptive sampling. 
This is particularly surprising since \emph{any} row $\A_i$ of $\A$ can be made irrelevant, i.e., zero probability of being sampled, in future rounds if some row along the same direction of $\A_i$ is sampled in the present round. 
Yet somehow we must still output rows of $\A$ while storing a sublinear number of rows more or less non-adaptively. 
The challenge seems compounded by the turnstile model, since updates can be made to arbitrary elements of the matrix, but somehow we need to recover the rows with the largest norms. 
For example, if the last update in the stream is substantially larger than the previous updates, an adaptive sampler must return the entire row, even though this update could be an entry in any row of $\A$. 

To build our adaptive sampler, we first give an algorithm that performs a single round of sampling. 
Namely, given a matrix $\A\in\mathbb{R}^{n\times d}$ that is defined over a turnstile stream and post-processing query access to a matrix $\P\in\mathbb{R}^{d\times d}$, we first give $L_{p,q}$ samplers for $\A\P$. 
\begin{theorem}
Let $\eps>0$, $q=2$, and $p\in\{1,2\}$. 
There exists a one-pass streaming algorithm that takes rows of a matrix $\A\in\mathbb{R}^{n\times d}$ as a turnstile stream and post-processing query access to matrix $\P\in\mathbb{R}^{d\times d}$ after the stream, and with high probability, samples an index $i\in[n]$ with probability $\left(1\pm\O{\eps}\right)\frac{\norm{\A_i\P}_q^p}{\norm{\A\P}_{p,q}^p}+\frac{1}{\poly(n)}$. 
The algorithm uses $\poly\left(d,\frac{1}{\eps},\log n\right)$ bits of space. 
(See \thmref{thm:l2:sampling} and \thmref{thm:l12:sampling}.) 
\end{theorem}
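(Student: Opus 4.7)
The plan is to combine a precision-sampling reduction with a row-structured linear sketch that can be multiplied by $\P$ in post-processing. First, I attach to each row index $i \in [n]$ an independent scaling factor $s_i$, setting $s_i = t_i^{-1/2}$ when $p=2$ and $s_i = t_i^{-1}$ when $p=1$, with $t_i$ uniform on $(0,1)$. A standard precision-sampling argument (mild tail truncation is routine for $p=2$ and requires more care for $p=1$) shows that for any fixed $\P$ the winner $i^* := \argmax_i s_i\,\norm{\A_i\P}_2$ has distribution proportional to $\norm{\A_i\P}_2^p / \norm{\A\P}_{p,2}^p$, up to a $(1\pm\O{\eps})$ multiplicative distortion and $1/\poly(n)$ additive slack. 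So it suffices to identify $i^*$ and to certify its scaled norm, without ever seeing $\P$ during the stream.

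Since the $s_i$'s depend only on the row index, I maintain a linear turnstile sketch of the pre-scaled matrix $\A^* = \operatorname{diag}(s_1,\ldots,s_n)\,\A$ throughout the stream. Concretely, I run $O(\log n)$ independent copies of a row-level CountSketch that hashes $[n]$ into $B = \poly(d,1/\eps,\log n)$ buckets; each bucket stores a $d$-dimensional vector equal to $\sum_{i:\,h(i)=b} \sigma_i\, s_i\, \A_i$. Linearity lets me right-multiply every bucket by $\P$ once it is revealed in post-processing, producing sketches of $\A^*\P$ from which $(1\pm\eps)$-approximate $\ell_2$-norm estimators for rows of $\A^*\P$ follow by the standard CountSketch analysis applied with the $d$-dimensional bucket $\ell_2$ norm.

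To extract the index $i^*$ itself rather than just a heavy bucket, I pair each CountSketch with a bit-by-bit identification structure: for every position $j \in [\log n]$, run an auxiliary sketch restricted to rows whose $j$-th index bit is $1$ and recover the bit by comparing its heavy-bucket $\ell_2$ norm against the unrestricted sketch. Taking $O(\log n)$ independent repetitions and a majority vote drives the per-index failure probability down to $1/\poly(n)$, which is absorbed into the additive slack in the theorem. The $n$ scalars $s_i$ are drawn via a Nisan-style pseudorandom generator that fools the $\poly(d,1/\eps,\log n)$-space sketch, keeping randomness and storage both at $\poly(d,1/\eps,\log n)$ bits.

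The main obstacle is the $p=1$ case: $1/t_i$ is heavy-tailed, so $\max_i s_i \norm{\A_i\P}_2$ is not sharply concentrated and a single heavy-hitter routine can be fooled by collisions between moderately scaled rows that together dominate the true winner's bucket. I address this by introducing $O(\log n)$ geometric threshold levels for $s_i$, running an independent heavy-hitter subroutine at each level that sees only rows whose $s_i$ lies in a bounded range, and then selecting $i^*$ by comparing recovered scaled norms across levels. A conditioning argument on the order statistics of the $t_i$'s shows the true winner is isolated at its proper level with probability $1-1/\poly(n)$, and the reported scaled norm is accurate to $(1\pm\O{\eps})$; combined with the previous two steps this yields the claimed $L_{p,2}$ sampler in $\poly(d,1/\eps,\log n)$ space.
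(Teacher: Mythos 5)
Your high-level architecture matches the paper's: uniform scalings $t_i$ attached to row indices, a row-level CountSketch whose buckets are $d$-dimensional vectors, post-stream right-multiplication by $\P$ via linearity, and separate norm estimators. But the load-bearing step of your proposal --- that the winner $i^*=\argmax_i s_i\norm{\A_i\P}_2$ is distributed within a $(1\pm\O{\eps})$ multiplicative factor of $\norm{\A_i\P}_2^p/\norm{\A\P}_{p,2}^p$ after "mild tail truncation" --- is false, and this is a genuine gap rather than a presentational one. With uniform $t_i$ the argmax distribution is biased by constant factors: already for two rows with $\norm{\A_1\P}_2^2=2$, $\norm{\A_2\P}_2^2=1$ and $p=2$, one computes $\PPr{\norm{\A_1\P}_2^2/t_1>\norm{\A_2\P}_2^2/t_2}=3/4$, whereas proportional sampling requires $2/3$; no truncation of the $t_i$'s fixes this, since the bias comes from the event that several rows are simultaneously scaled to comparable magnitudes, which has constant (not $\O{\eps}$) probability at the argmax scale. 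The precision-sampling literature, and the paper, avoid this by only accepting a sample when the maximum scaled norm crosses a high threshold $T\approx\sqrt{(C\log n)/\eps}\,\norm{\A\P}_F$ (resp.\ $\frac{C\log n}{\eps^2}\norm{\A\P}_{1,2}$ for $p=1$): conditioned on a crossing, multi-crossings and near-threshold confusions are $\O{\eps}$-fraction events, which is exactly where the $(1\pm\O{\eps})$ relative error comes from. The price is that each instance succeeds only with probability $\Theta(\eps/\log n)$ and must be repeated $\O{\eps^{-1}\log n}$ times with a FAIL outcome otherwise; your proposal has no abort mechanism and so cannot reach relative error $\eps$.

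A second, related omission: you claim the recovered scaled norm of the winner is accurate to $(1\pm\O{\eps})$ with failure probability $1/\poly(n)$, but the CountSketch error is additive in $\eps\norm{(\B\P)_{tail}}_F$, and because the scalings are heavy-tailed this tail is too large with probability $\Theta(\eps)$ --- not $1/\poly(n)$ --- no matter how many buckets you use. The paper handles this with an explicit statistical test (the estimator $\widehat{S}$ of the residual $\norm{\B\P-\M}_F$) that aborts when the tail is large, and, crucially, proves the abort event is essentially independent of the identity and scaling of the sampled row (its \lemref{lem:tail:failure}); without that conditional-independence argument, conditioning on "the identification subroutine worked" can bias the output distribution by more than $(1\pm\eps)$. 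Your $p=1$ level-set scheme inherits both problems: isolating the winner at its level with probability $1-1/\poly(n)$ is not attainable, and even exact isolation would still leave the constant-factor argmax bias. The fix is to adopt the threshold-plus-abort structure (as in \algref{alg:l2:sampler} and \algref{alg:l12:sampler}), after which your bit-by-bit index recovery and PRG remarks become unnecessary (the paper simply queries all $n$ indices in post-processing and uses limited-independence hash families).
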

We remark that our techniques can be extended to $p\in(0,2]$ but we only require $p\in\{1,2\}$ for the purposes of our applications. 
Now, suppose we want to perform adaptive sampling of a row of $\A\in\mathbb{R}^{n\times d}$ with probability proportional to its distance or squared distance from some subspace $\H\in\mathbb{R}^{i\times d}$, where $i$ is any integer. 
Then by taking $\P=\I-\H^\dagger\H$ and either $L_{1,2}$ or $L_{2,2}$ sampling, we select rows of $\A$ with probability roughly proportional to the distance or squared distance from $\H$. 
We can thus simulate $k$ rounds of adaptive sampling in a stream, despite its seemingly inherent sequential nature.
\begin{theorem}
Let $\A\in\mathbb{R}^{n\times d}$ be a matrix and $q_S$ be the probability of selecting a set $S\subset[n]$ of $k$ rows of $\A$ according to $k$ rounds of adaptive sampling with respect to either the distances to the selected subspace in each iteration or the squared distances to the selected subspace in each iteration.  
There exists an algorithm that takes inputs $\A$ through a turnstile stream and $\eps>0$, and outputs a set $S\subset[n]$ of $k$ indices such that if $p_S$ is the probability of the algorithm outputting $S$, then $\sum_S|p_S-q_S|\le\eps$. 
The algorithm uses $\poly\left(d,k,\frac{1}{\eps},\log n\right)$ bits of space. 
(See \thmref{thm:adaptive:sampler} and \thmref{thm:l12:adaptive:sampler}.) 
\end{theorem}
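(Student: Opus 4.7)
The plan is to run $k$ independent copies of the single-round $L_{p,2}$ sampler from the previous theorem in parallel during the stream, and then invoke them one at a time in post-processing to simulate the $k$ rounds of adaptive sampling. Since that theorem permits the query matrix $\P$ to be specified after the stream, the $j$-th copy can be queried with $\P_j = \I - \M_{j-1}^\dagger\M_{j-1}$, where $\M_{j-1}\in\mathbb{R}^{(j-1)\times d}$ is the matrix whose rows are the $j-1$ rows of $\A$ already selected. Because each sampler is used only once and with fresh randomness, its output conditioned on the history of prior samples is drawn from the approximate distribution promised by the single-round guarantee, yielding a simulation of adaptive sampling without ever re-reading the stream.

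To recover the actual rows $\A_{i_j}$ and not merely their indices (so that $\M_j$ can be assembled in post-processing and the rows returned), I would additionally maintain a standard linear sparse-recovery sketch that can read out any prescribed row of $\A$ given its index using $\poly(d,\log n)$ bits. This sketch runs in parallel with the $k$ samplers and is queried once per round on the index returned by the corresponding sampler.

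For the total variation bound, let $q(i_j\mid i_1,\ldots,i_{j-1})$ denote the ideal conditional adaptive sampling probability and $p(i_j\mid i_1,\ldots,i_{j-1})$ the corresponding conditional probability produced by the $j$-th sampler. Instantiating the previous theorem with accuracy parameter $\eps' = \eps/(Ck)$ for a sufficiently large constant $C$ gives $p(i_j\mid\cdot) = \left(1\pm\O{\eps'}\right)q(i_j\mid\cdot) + 1/\poly(n)$ with high probability over the sketch randomness. A chain-rule telescoping across the $k$ rounds then yields $p_S = \left(1\pm\O{k\eps'}\right)q_S + k/\poly(n)$ for every $S$, and summing $|p_S - q_S|$ over all $S$ gives $\sum_S|p_S - q_S|\le\eps$ after absorbing the lower-order $1/\poly(n)$ terms.

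The main obstacle I expect is ensuring that the single-round guarantee applies uniformly over the data-dependent matrices $\P_j$ that arise during execution, since $\P_j$ is determined by the outputs of the earlier samplers and therefore measurable with respect to their sketches. I would handle this by observing that, conditional on a fixed history $i_1,\ldots,i_{j-1}$, the matrix $\P_j$ is deterministic and the $j$-th sampler's sketch is independent of the earlier sketches, so the single-round guarantee applies conditionally; a union bound over the at most $n^k$ possible histories, combined with setting the per-sampler failure probability to $1/\poly(n)$, keeps the joint failure probability negligible. The final space bound of $\poly(d,k,1/\eps,\log n)$ follows by multiplying the single-round cost by $k$ and substituting $\eps'=\eps/(Ck)$.
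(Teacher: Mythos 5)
Your high-level architecture (run $k$ parallel copies of the single-round $L_{p,2}$ sampler, query the $j$\th copy with the post-processing projection determined by the first $j-1$ selections, telescope the per-round error) matches the paper's. But there is a genuine gap at the step where you ``additionally maintain a standard linear sparse-recovery sketch that can read out any prescribed row of $\A$ given its index using $\poly(d,\log n)$ bits.'' No such sketch exists for an arbitrary turnstile matrix: a linear sketch from which any post-specified row could be exactly recovered would let you recover every row, hence the whole matrix, forcing $\Omega(nd\log n)$ bits; the paper makes exactly this point with the example of a stream of small updates followed by one arbitrarily large update to a random entry, concluding that returning the true rows of $\A$ in sublinear space is impossible. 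So your algorithm cannot form $\M_{j-1}$ from the true rows $\A_{i_1},\ldots,\A_{i_{j-1}}$; it can only form it from noisy perturbations of them.

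This is not a cosmetic issue, because your chain-rule argument needs the conditional law in round $j$ to be a multiplicative $(1\pm\O{\eps'})$ approximation of the ideal conditional adaptive-sampling law given the history, and that holds only if $\P_j$ is built from the exact rows. Once $\P_j$ is built from noisy rows, the conditional probabilities can be distorted by an unbounded multiplicative factor: e.g., a row lying in the span of a previously sampled row should have probability zero of being re-sampled, yet relative to the noisy row it retains a nonzero (possibly dominant) residual. The paper explicitly notes that no per-row multiplicative guarantee is possible in later rounds, and this is precisely why its proof is substantially harder than your sketch: it proves that the CountSketch noise is small in \emph{every} direction (\lemref{lem:orthogonal:noise}, bounding $\norm{\v_e\Q}_2$ for all projections $\Q$), then uses a Gram--Schmidt change-of-basis argument (\lemref{lem:adaptive:tvd}) to show each noisy substitution perturbs the sampling distribution by only an additive $\O{\eps}$ in total variation, and finally a hybrid argument over $k+1$ intermediate processes to accumulate $\O{k^2\eps}$ before rescaling. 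To repair your proposal you would need to replace the exact-recovery step with this kind of additive, direction-wise control of the noise in the projections; your conditioning/union-bound discussion of the data-dependent $\P_j$ is fine as far as it goes, but it does not address this central difficulty.
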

In other words, our output distribution is close in total variation distance to the desired adaptive sampling distribution. 

Our algorithm is the first to perform adaptive sampling on a stream; existing implementations require extended access to the matrix, such as in the centralized or distributed models, for subsequent rounds of sampling. 
Moreover, if the set $S$ of $k$ indices output by our algorithm is $s_1,\ldots,s_k$, then our algorithm also returns a set of rows $\r_1,\ldots,\r_k$ so that if $\R_0=\emptyset$ and $\R_i=\r_1\circ\ldots\circ\r_i$ for $i\in[k]$, then $\r_i=\u_{s_i}+\v_i$, where $\u_{s_i}=\A_{s_i}(\I-\R_i^\dagger\R_i)$ is the projection of the sampled row to the space orthogonal to the previously selected rows, and $\v_i$ is some small noisy vector formed by linear combinations of other rows in $\A$ such that $\norm{\v_i}_2\le\eps\norm{\u_{s_i}}_2$. 

Thus we do not return the true rows of $\A$ corresponding to the indices in $S$, but we output a small noisy perturbation to each of the rows, which we call noisy rows and suffices for a number of applications previously unexplored in turnstile streams. 
Crucially, the noisy perturbation in each of our output rows can be bounded in norm not only relative to the norm of the true row, but also relative to the residual. 
In fact, our previous example of a long stream of small updates followed a single arbitrarily large update shows that it is impossible to return the true rows of $\A$ in sublinear space. 
Since the arbitrarily large update can apply to \emph{any} entry of the matrix, the only way an algorithm can return the entire row containing the entry is if the entire matrix is stored. 

\paragraph{Column subset selection.}
In the row/column subset selection problem, the inputs are the matrix $\A\in\mathbb{R}^{n\times d}$ and an integer $k>0$, and the goal is to select $k$ rows/columns of $\A$ to form a matrix $\M$ to minimize $\norm{\A-\A\M^\dagger\M}_F$ or $\norm{\A-\M\M^\dagger\A}_F^2$. 
For the sake of presentation, we focus on the row subset selection problem for the remainder of this section. 
Since the matrix $\M$ has rank at most $k$, then $\norm{\A-\A\M^\dagger\M}_F\ge\norm{\A-\A^*_k}_F$, where $\A^*_k$ is the best rank $k$ approximation to $\A$. 
Hence, we would ideally like to obtain some guarantee for $\norm{\A-\A\M^\dagger\M}_F$ relative to $\norm{\A-\A^*_k}_F$. 
Such relative error algorithms were given in the centralized setting~\cite{DeshpandeRVW06,BoutsidisMD09,GuruswamiS12} and for row-arrival streams~\cite{CohenMM17,BravermanDMMUWZ18}, but no such results were previously known for turnstile streams. 
Our adaptive sampling framework thus provides the first algorithm on turnstile streams with relative error guarantees. 
\begin{theorem}
Given a matrix $\A\in\mathbb{R}^{n\times d}$ that arrives in a turnstile data stream, there exists a one-pass algorithm that outputs a set $\M$ of $k$ (noisy) rows of $\A$ such that 
\[\PPr{\norm{\A-\A\M^\dagger\M}_F^2\le16(k+1)!\norm{\A-\A^*_k}_F^2}\ge\frac{2}{3}.\]
The algorithm uses $\poly(d,k,\log n)$ bits of space. 
(See \thmref{thm:rss}.)
\end{theorem}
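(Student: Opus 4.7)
The plan is to feed the $L_{2,2}$ adaptive sampler of Theorem 1.2 (instantiated with $p=q=2$) for $k$ rounds with a sufficiently small accuracy parameter $\eps_0=1/\poly(n,d,k)$. This produces indices $s_1,\ldots,s_k$ together with associated noisy rows $\r_i=\u_{s_i}+\v_i$, where $\u_{s_i}=\A_{s_i}(\I-\R_i^\dagger\R_i)$ is the true orthogonal residual of $\A_{s_i}$ against the previously selected rows and $\norm{\v_i}_2\le\eps_0\norm{\u_{s_i}}_2$. Let $\M$ denote the matrix of noisy rows and $\M^{\star}$ the matrix of the \emph{true} rows $\A_{s_1},\ldots,\A_{s_k}$. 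The classical adaptive sampling analysis of Deshpande--Rademacher--Vempala--Wang shows that if $(s_1,\ldots,s_k)$ is drawn exactly from the $k$-round squared-distance adaptive sampling distribution, then $\mathbb{E}\norm{\A-\A(\M^{\star})^\dagger\M^{\star}}_F^2 \le (k+1)!\,\norm{\A-\A^*_k}_F^2$.

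I would then combine this DRVW expectation bound with two perturbation arguments. First, by the total variation guarantee of Theorem 1.2, the joint distribution over $(s_1,\ldots,s_k)$ produced by the sampler is $\eps_0$-close in TV to the true adaptive distribution; since the per-outcome cost is trivially at most $\norm{\A}_F^2\le\poly(n)$ under standard word-size assumptions, a polynomially small $\eps_0$ changes the expected cost by only an additive $o(\norm{\A-\A^*_k}_F^2)$ term. Second, replacing $\M^{\star}$ by the noisy $\M$ increases the Frobenius projection cost by at most a constant factor: since by construction each $\u_{s_i}$ is orthogonal to all previously selected residuals, the true vectors $\u_{s_1},\ldots,\u_{s_k}$ are an orthogonal system, so the Gram matrix of $\M$ is a small perturbation of a diagonal matrix, and a standard projector perturbation bound (showing $\norm{\M^\dagger\M-(\M^{\star})^\dagger\M^{\star}}_F=O(\eps_0)$) gives $\norm{\A-\A\M^\dagger\M}_F^2\le 2\norm{\A-\A(\M^{\star})^\dagger\M^{\star}}_F^2 + o(\norm{\A-\A^*_k}_F^2)$ whenever $\eps_0\ll 1/(k\cdot (k+1)!)$. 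Chaining these bounds produces $\mathbb{E}\norm{\A-\A\M^\dagger\M}_F^2 \le 3(k+1)!\,\norm{\A-\A^*_k}_F^2$, and then Markov's inequality delivers the claimed probability of failure $\le 3/16<1/3$, with room to spare for the constant $16$. The space bound is inherited from Theorem 1.2 instantiated at $\eps_0=1/\poly(n,d,k)$, which only adds logarithmic overhead and fits in $\poly(d,k,\log n)$.

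The step I expect to be the main obstacle is the projector perturbation: on its face, a small per-row perturbation $\v_i$ could, if the $\r_i$'s happened to be nearly linearly dependent, swing the $k$-dimensional orthogonal projector $\M^\dagger\M$ by a large amount, which would in turn blow up $\norm{\A-\A\M^\dagger\M}_F^2$ well beyond $\norm{\A-\A(\M^{\star})^\dagger\M^{\star}}_F^2$. The saving grace is the sequential orthogonalization built into the adaptive sampler itself: because the noisy rows come out aligned with \emph{orthogonal} residuals up to an $\eps_0$ perturbation, the smallest singular value of $\M$ is controlled and the projector is stable. Making this rigorous, and confirming that the resulting cost inflation can be safely absorbed into the generous $(k+1)!$ slack provided by the DRVW analysis, is the technical crux of the argument; the other pieces (TV-distance transfer and Markov) are standard once this step is in place.
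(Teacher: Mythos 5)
Your high-level skeleton agrees with the paper's proof of \thmref{thm:rss} (stream the adaptive sampler, invoke the Deshpande et al.\ volume-sampling/adaptive-sampling bounds to control an expectation, finish with Markov), but the way you dispose of the two error sources has a genuine gap, starting with the space claim. The sampler of \thmref{thm:adaptive:sampler} uses $\O{\frac{d^3k^6}{\eps^3}\log^6 n}$ bits, i.e.\ space polynomial in $1/\eps$, not logarithmic, so instantiating it at $\eps_0=1/\poly(n,d,k)$ already gives $\poly(n)$ space rather than $\poly(d,k,\log n)$. Worse, your two absorption arguments need $\eps_0$ far below any fixed polynomial: bounding the TV-induced change in expectation by $\eps_0\norm{\A}_F^2$, or the projector-perturbation cost by $O(\eps_0)\norm{\A}_F^2$, only gives something $o(\norm{\A-\A^*_k}_F^2)$ when $\eps_0\ll\norm{\A-\A^*_k}_F^2/\norm{\A}_F^2$, a data-dependent ratio that can be as small as $n^{-\Theta(d)}$ for bounded integer inputs and is $0$ when $\rank(\A)\le k$. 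The paper avoids worst-case-cost transfer entirely: it fixes the constant error $\eps=\frac{1}{12}$, isolates the bad set $\mathcal{S}$ of outcomes $\T$ with $\widehat{q_{\T}}>2q_{\T}$ (whose ideal probability mass is at most $\frac{1}{12}$ by the TV guarantee), bounds the algorithm's conditional expectation on the complement by twice the ideal adaptive-sampling expectation, and then applies \lemref{lem:as:vs} and \thmref{thm:vs:rss} to get the $(k+1)!$ factor before Markov and a union bound; this is where the constant $16$ comes from.

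The second gap is the noisy-row comparison, which you yourself flag as the unproven crux. The bound $\norm{\v_i}_2\le\eps_0\norm{\u_{s_i}}_2$ together with orthogonality of the residuals does not yield projector stability of the form you want: the comparison is between the span of the noisy rows and the span of the true rows $\A_{s_1},\ldots,\A_{s_k}$, and when a later sampled row is nearly parallel to an earlier one, the round-one noise can rotate the later residual direction by a constant angle (exactly the scenario of \figref{fig:bad}), so the $k$-dimensional projector can move by $\Theta(1)$; and even a genuine $O(\eps_0)$ projector bound only produces an additive $O(\eps_0)\norm{\A}_F^2$ cost error, returning you to the data-dependent problem above. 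The paper's resolution is the residual-relative, per-direction noise bound of \lemref{lem:orthogonal:noise}, namely $\norm{\v_e\Q}_2\le\frac{\eps\sqrt{\eps}}{\sqrt{C\log n}}\frac{\norm{\A\P\Q}_F}{\norm{\A\P}_F}\norm{\A_j\P}_2$ for \emph{every} projection $\Q$, propagated through the change-of-basis argument of \lemref{lem:adaptive:tvd} to obtain \corref{cor:sampler:distortion}: with constant failure probability, the cost with the noisy rows is within a $(1\pm\eps)$ \emph{relative} factor of the cost with the true rows, even at constant $\eps$. Without that directional bound (or an equivalent relative-error comparison), your chain of inequalities does not close while keeping $\poly(d,k,\log n)$ space.
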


\paragraph{Subspace approximation.}
In the subspace approximation problem, the inputs are the matrix $\A\in\mathbb{R}^{n\times d}$ and an integer $k>0$ and the goal is to output a $k$-dimensional linear subspace $\H$ that minimizes $\left(\sum_{i=1}^n d(\A_i,\H)^p\right)^{\frac{1}{p}}$, where $p\in\{1,2\}$ and $d(\A_i,\H)=\norm{\A_i(\I-\H^\dagger\H)}_2$ is the distance from $\A_i$ to the subspace $\H$. 
A number of algorithms for the subspace approximation were given for the centralized setting~\cite{DeshpandeV07, FeldmanMSW10, ShyamalkumarV12, ClarksonW15} and more recently, \cite{LevinSW18} gave the first algorithm for subspace approximation on turnstile streams. 
The algorithm of \cite{LevinSW18} is based on sketching techniques and although it offers a superior $(1+\eps)$-approximation, their subspace has a larger number of rows and the rows may not originate from $\A$, whereas we select $k$ noisy rows of the matrix $\A$ to form the subspace. 
\begin{theorem}
Given $p\in\{1,2\}$ and a matrix $\A\in\mathbb{R}^{n\times d}$ that arrives in a turnstile data stream, there exists a one-pass algorithm that outputs a set $\Z$ of $k$ (noisy) rows of $\A$ such that 
\[\PPr{\left(\sum_{i=1}^n d(\A_i,\Z)^p\right)^{\frac{1}{p}}\le 4(k+1)!\left(\sum_{i=1}^n d(\A_i,\A^*_k)^p\right)^{\frac{1}{p}}}\ge\frac{2}{3},\]
where $\A^*_k$ is the best rank $k$ solution to the subspace approximation problem. 
The algorithm uses $\poly\left(d,k,\log n\right)$ bits of space. 
(See \thmref{thm:sap} and \thmref{thm:l12:sap}.)
\end{theorem}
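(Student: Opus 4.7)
The plan is to combine the classical expectation guarantee for adaptive sampling in the centralized model (due to Deshpande--Vempala) with Markov's inequality to obtain a high-probability multiplicative bound, and then use a perturbation argument to handle the fact that our streaming sampler returns noisy rather than exact rows. Concretely, I first recall the centralized analysis: if $S=\{s_1,\ldots,s_k\}$ is drawn by $k$ rounds of \emph{exact} adaptive sampling with probabilities proportional to $d(\A_i,\mathrm{span}\text{ of previously selected rows})^p$, then
\[\EEx{S}{\sum_{i=1}^n d\bigl(\A_i,\mathrm{span}(\A_S)\bigr)^p}\le(k+1)!\sum_{i=1}^n d(\A_i,\A^*_k)^p.\]
Markov's inequality then yields, with probability at least $3/4$, that $\sum_{i=1}^n d(\A_i,\mathrm{span}(\A_S))^p\le 4(k+1)!\sum_{i=1}^n d(\A_i,\A^*_k)^p$. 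Raising both sides to the $1/p$ power (losing only constants we can absorb) gives the desired form.

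Next I invoke the adaptive sampler of \thmref{adaptive:sampler} (for $p=2$) or \thmref{l12:adaptive:sampler} (for $p=1$) to obtain an index set $S'$ whose joint distribution lies within total variation distance $\eps'$ of the true adaptive distribution. Choosing $\eps'$ to be a small polynomial in $1/k$ and coupling $S'$ with an ideal sample $S$, the bad event for $S'$ is contained in the union of the Markov failure event for $S$ and the coupling failure, so it has probability at most $1/4+\eps'<1/3$. The space is inherited directly from the adaptive sampler, which is $\poly(d,k,\log n)$.

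The main obstacle, and the genuinely new ingredient, is that the algorithm does not return the exact rows $\A_{s_i}$ but rather noisy perturbations $\r_i=\u_{s_i}+\v_i$, where $\u_{s_i}=\A_{s_i}(\I-\R_{i-1}^\dagger\R_{i-1})$ is the residual of $\A_{s_i}$ relative to the previously selected rows and $\|\v_i\|_2\le\eps\|\u_{s_i}\|_2$. I would prove by induction on $i$, using a standard subspace-perturbation bound, that $\Z=\mathrm{span}(\r_1,\ldots,\r_k)$ is close to $\mathrm{span}(\u_{s_1},\ldots,\u_{s_k})=\mathrm{span}(\A_{s_1},\ldots,\A_{s_k})$, and conclude that for every row $\A_i$,
\[d(\A_i,\Z)\le (1+O(k\eps))\, d\bigl(\A_i,\mathrm{span}(\A_S)\bigr).\]
The crucial property, which is precisely what our sampler guarantees, is that the perturbation is bounded \emph{relative to the residual} $\u_{s_i}$ rather than relative to $\|\A_{s_i}\|_2$; otherwise the noise could dwarf the true distances after projecting out earlier rows. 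Summing over $i\in[n]$ and raising to the $1/p$ power shows the subspace-approximation cost of $\Z$ exceeds that of $\mathrm{span}(\A_S)$ by at most $1+O(k\eps)$, and setting $\eps$ to a sufficiently small polynomial in $1/k$ absorbs this accumulated noise into the multiplicative slack between $(k+1)!$ and $4(k+1)!$, completing the argument.
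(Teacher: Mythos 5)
Your overall skeleton (the Deshpande--Vempala expectation bound plus Markov, combined with the total-variation guarantee of the streaming adaptive sampler, handled by a coupling rather than the paper's conditional-expectation argument with the factor-$2$ bad-set trick --- an immaterial difference) matches the paper's proofs of \thmref{thm:sap} and \thmref{thm:l12:sap}. The genuine gap is in your last step, the treatment of the noisy rows. Your claimed per-row inequality $d(\A_i,\Z)\le(1+O(k\eps))\,d(\A_i,\mathrm{span}(\A_S))$ is false in general: if $\A_i$ is parallel to the sampled row $\A_{s_1}$, the right-hand side is $0$ while the left-hand side is typically nonzero because $\r_1$ is a perturbation of $\A_{s_1}$; this is exactly the scenario of \figref{fig:bad}, and the paper explicitly notes that no multiplicative per-row guarantee is possible. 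Worse, even the aggregate bound you need cannot be derived from the guarantee $\norm{\v_i}_2\le\eps\norm{\u_{s_i}}_2$ alone. Take half the rows equal to $(0,1)$ and half equal to $(M,0)$, and $k=1$: the residual in round one is the full row, so your guarantee only forces $\norm{\v_1}_2\le\eps M$, and the perturbation $\r_1=(M,\eps M)$ is consistent with it (its noise is even a multiple of a row of $\A$). Projecting away from $\r_1$ leaves every $(M,0)$ row at distance about $\eps M$, so $\sum_i d(\A_i,\Z)^2\approx\frac{n}{2}(1+\eps^2M^2)$ while $\sum_i d(\A_i,\mathrm{span}(\A_S))^2=\frac{n}{2}$; the ratio is unbounded as $M$ grows, so a ``standard subspace-perturbation bound'' from this hypothesis cannot exist.

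What rescues the argument in the paper is the strictly stronger directional noise bound of \lemref{lem:orthogonal:noise}: for \emph{every} projection $\Q$, $\norm{\v_i\Q}_2\lesssim\eps\norm{\A_{s_i}\P}_2\,\norm{\A\P\Q}_F/\norm{\A\P}_F$, i.e., the noise is small in any direction in proportion to how little mass $\A\P$ itself has in that direction (in the example above this forces the second coordinate of $\v_1$ to be $O(\eps)$, not $O(\eps M)$). This lemma is what drives both the total-variation guarantee of \thmref{thm:adaptive:sampler} (via the change-of-basis argument of \lemref{lem:adaptive:tvd}) and the cost-distortion statement your outline actually needs, namely \corref{cor:sampler:distortion} (and \corref{cor:l12:sampler:distortion} for $p=1$): with probability $1-\eps$ the residual cost with respect to the noisy span is within a $(1\pm\eps)$ factor of that with respect to the true span. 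If you invoke that corollary as a black box, your outline becomes essentially the paper's proof; as written, the perturbation step is the missing (and genuinely hard) ingredient. A minor further point: the centralized guarantee is $\mathbb{E}\left[\sum_i d(\A_i,\T)^p\right]\le((k+1)!)^p\sum_i d(\A_i,\A^*_k)^p$ (\thmref{thm:as:sa}), not $(k+1)!$ times the optimum; the corrected constant still yields the stated $4(k+1)!$ after Markov and taking $p$-th roots.
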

Our adaptive sampling procedure also gives a bicriteria algorithm for a better approximation but allows the dimension of the subspace to be larger.
\begin{theorem}
Given $p\in\{1,2\}$, $\eps>0$, and a matrix $\A\in\mathbb{R}^{n\times d}$ that arrives in a turnstile data stream, there exists a one-pass algorithm that outputs a set $\Z$ of $\poly\left(k,\frac{1}{\eps},\log\frac{k}{\eps}\right)$ (noisy) rows of $\A$ such that 
\[\PPr{\left(\sum_{i=1}^n d(\A_i,\Z)^p\right)^{\frac{1}{p}}\le(1+\eps)\left(\sum_{i=1}^n d(\A_i,\A^*_k)^p\right)^{\frac{1}{p}}}\ge\frac{2}{3},\] 
where $\A^*_k$ is the best rank $k$ solution to the subspace approximation problem. 
The algorithm uses $\poly(d,k,\frac{1}{\eps},\log n)$ bits of space. 
(See \thmref{thm:sap:bicriteria}.)
\end{theorem}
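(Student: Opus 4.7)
The plan is to run the adaptive sampling primitive from \thmref{thm:adaptive:sampler} (when $p=2$) or \thmref{thm:l12:adaptive:sampler} (when $p=1$) for $r = \poly(k, 1/\eps, \log(k/\eps))$ rounds rather than only $k$, and then invoke the centralized bicriteria analysis of adaptive sampling for subspace approximation. Setting the total variation parameter inside the primitive to $\eps' = (\eps/n)^{C}$ for a sufficiently large constant $C$, the space remains $\poly(d, r, \log n) = \poly(d, k, 1/\eps, \log n)$ bits, and the joint distribution over the $r$ sampled indices is within total variation $\eps/10$ of the true $r$-round adaptive sampling distribution.

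Next I would appeal to existing centralized analyses: for $p=2$ the Deshpande--Vempala argument (and its sharpenings) and for $p=1$ the analogous Deshpande--Varadarajan / Shyamalkumar--Varadarajan argument imply that after $r = \poly(k,1/\eps,\log(k/\eps))$ rounds of true adaptive sampling, the span $\H$ of the chosen rows satisfies
\[\left(\sum_i d(\A_i,\H)^p\right)^{1/p} \le (1+\eps/2)\left(\sum_i d(\A_i,\A^*_k)^p\right)^{1/p}\]
with probability at least $3/4$. Total variation closeness then transfers this $(1+\eps/2)$-approximation guarantee to our sampled index set with only an $O(\eps)$ loss in success probability, from which the claimed $(1+\eps)$-bound follows.

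The final step, and the main obstacle, is to control the perturbation from outputting noisy rows $\r_i = \u_{s_i}+\v_i$ with $\norm{\v_i}_2\le \eps'\norm{\u_{s_i}}_2$ rather than the true residuals. Letting $\Z = \mathrm{span}(\r_1,\ldots,\r_r)$ and $\H = \mathrm{span}(\u_{s_1},\ldots,\u_{s_r})$, a standard subspace perturbation argument gives $d(\A_i,\Z)\le d(\A_i,\H) + \eps'\cdot\poly(r,d)\cdot\norm{\A_i}_2$ for every row, so the noise contribution to $\left(\sum_i d(\A_i,\Z)^p\right)^{1/p}$ is at most $\eps'\cdot\poly(r,d)\cdot\norm{\A}_F$. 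The subtle point is that perturbing nearly parallel $\u_{s_i}$'s can tilt the span substantially, so the dependence of this bound on geometric ill-conditioning must be made explicit. I expect to handle this by driving $\eps'$ polynomially small (which enters the space complexity only as $\log(1/\eps')$) and by using the crude $\poly(n)$ upper bound on $\norm{\A}_F$ that follows from $\log n$-bit entries, so the additive noise is absorbed into the $\eps\cdot\left(\sum_i d(\A_i,\A^*_k)^p\right)^{1/p}$ budget.
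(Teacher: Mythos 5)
There is a genuine gap, and it sits exactly where you flagged the ``main obstacle.'' Your plan drives the sampler's error parameter down to $\eps'=(\eps/n)^{C}$ and asserts this costs only $\log(1/\eps')$ in space, but the primitive you are invoking does not have that property: the space in \thmref{thm:adaptive:sampler} (and the CountSketch/repetition counts inside \algref{alg:l2:sampler}) scales polynomially in $1/\eps'$, so your choice of $\eps'$ forces $\poly(n)$ bits and the theorem's space bound is lost. Worse, even granting tiny $\eps'$, the additive perturbation bound $d(\A_i,\Z)\le d(\A_i,\H)+\eps'\poly(r,d)\norm{\A_i}_2$ has to be charged against $\eps\cdot\left(\sum_i d(\A_i,\A^*_k)^p\right)^{1/p}$, and the ratio $\norm{\A}_F/\mathrm{OPT}$ is not $\poly(n)$-bounded in general (the optimum can be arbitrarily small, or zero, when $\A$ is close to rank $k$), so no fixed inverse-polynomial $\eps'$ suffices. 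The paper escapes this dilemma not by shrinking $\eps$ but by proving a \emph{relative, directional} noise bound (\lemref{lem:orthogonal:noise}: $\norm{\v_e\Q}_2$ is small proportionally to $\norm{\A\P\Q}_F/\norm{\A\P}_F$ for every projection $\Q$), which yields \corref{cor:sampler:distortion}: the residual cost measured against the noisy rows is within a multiplicative $(1\pm\eps)$ of the cost against the true rows, with $\eps$ a constant (here $\O{\eps}$ of the target), independent of $\norm{\A}_F/\mathrm{OPT}$. Without that ingredient your step 3 does not go through in sublinear space.

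A second, smaller issue is the centralized theorem you appeal to. The bicriteria guarantee the paper uses (\thmref{thm:as:sa:bicriteria}, from \cite{DeshpandeV07}) is for the batch oversampling process of \algref{alg:as:repeated}, in which each of the $\O{k\log k}\cdot k$ rounds draws a large batch of $\O{(2k/\delta)^p(k/\delta)\log(k/\delta)}$ rows from a fixed residual distribution; a statement of the form ``$\poly(k,1/\eps)$ rounds of one-row-at-a-time adaptive sampling gives $(1+\eps)$ with constant probability'' is not what those references establish, and the per-round error-decay analysis does not transfer verbatim to singleton rounds. The paper's proof instead simulates \algref{alg:as:repeated} directly: it runs $\poly(r,\log n)$ independent $L_{p,2}$ sampler instances (to absorb FAIL outcomes), applies the same post-processing projection to all samplers within a batch, bounds the total variation to the offline batch process by the argument of \thmref{thm:adaptive:sampler} over the $r$ total samples, and then applies \corref{cor:sampler:distortion} (resp.\ \corref{cor:l12:sampler:distortion} for $p=1$) to relate the objective of the noisy rows to that of the true rows. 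You would need to restructure your step 2 along these lines rather than cite a single-row-per-round bicriteria bound.
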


\paragraph{Projective clustering.}
Projective clustering is an important problem for bioinformatics, computer vision, data mining, and unsupervised learning~\cite{Procopiuc17}. 
The projective clustering problem takes as inputs the matrix $\A\in\mathbb{R}^{n\times d}$ and integers $k>0$ for the target dimension of each subspace and $s>0$ for the number of subspaces, and the goal is to output $s$ $k$-dimensional linear subspaces $\H_1,\ldots,\H_s$ that minimizes $\left(\sum_{i=1}^n d(\A_i,\H)^p\right)^{\frac{1}{p}}$, where $p\in\{1,2\}$, $\H=\H_1\cup\ldots\cup\H_s$, and $d(\A_i,\H)$ is the distance from $\A_i$ to union $\H$ of $s$ subspaces $\H_1,\ldots,\H_s$. 
A number of streaming algorithms for projective clustering~\cite{BadoiuHI02, Har-PeledM04, Chen09, FeldmanMSW10} are based on the notion of core-sets, which are small numbers of weighted representative points. 
These results require a stream of (possibly high dimensional) points, which is equivalent to the row-arrival model and thus do not extend to turnstile streams. 
\cite{KerberR15} gives a turnstile algorithm based on random projections, but the algorithm requires space linear in the number of points. 
Thus our adaptive sampling procedure gives the first turnstile algorithm for projective clustering that uses space sublinear in the number of points. 
\begin{theorem}
Given $p\in\{1,2\}$, $\eps>0$ and a matrix $\A\in\mathbb{R}^{n\times d}$ that arrives in a turnstile data stream, there exists a one-pass algorithm that outputs a set $\S$ of $\poly\left(k,s,\frac{1}{\eps}\right)$ rows, which includes a union $\T$ of $s$ $k$-dimensional subspaces such that
\[\PPr{\left(\sum_{i=1}^n d(\A_i,\T)^p\right)^{\frac{1}{p}}\le(1+\eps)\left(\sum_{i=1}^n d(\A_i,\H)^p\right)^{\frac{1}{p}}}\ge\frac{2}{3},\]
where $\H$ is the union of $s$ $k$-dimensional subspaces that is the optimal solution to the projective clustering problem. 
The algorithm uses $\poly\left(d,k,s,\frac{1}{\eps},\log n\right)$ bits of space. 
(See \thmref{thm:pc}.)
\end{theorem}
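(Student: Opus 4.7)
My plan is to adapt the centralized iterative adaptive-sampling coreset construction for projective clustering (in the spirit of Feldman-Monemizadeh-Sohler-Woodruff) to the turnstile streaming setting using the adaptive sampling primitive established in the earlier theorem. The key observation is that although projective clustering involves a nonconvex cost $\sum_i \min_j d(\A_i,\T_j)^p$ over a union $\T = \T_1 \cup \cdots \cup \T_s$, we can upper bound this pointwise by $\sum_i \sum_j d(\A_i,\T_j)^p$, which \emph{is} directly sampleable in a stream. Specifically, running $s$ parallel instances of the streaming adaptive sampler, one per $\T_j$ with query matrix $\P_j = \I - \T_j^\dagger \T_j$, and mixing their outputs with weights proportional to $\norm{\A\P_j}_{p,2}^p$, produces a draw from the sum-of-distances distribution, which oversamples the target min-distance distribution by a factor of at most $s$.

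I would initialize $\S_0 = \emptyset$ and iterate for $T = \poly(k,s,1/\eps)$ rounds. At the start of round $t$, I compute offline an approximately optimal union $\T^{(t-1)}$ of $s$ $k$-dimensional subspaces from the noisy rows in $\S_{t-1}$ using any $(1+\eps)$-approximate offline projective clustering algorithm, whose running time may depend arbitrarily on $k,s,1/\eps$ since we only track space. I then draw a new row via the combined sampler described above and append it to $\S_t$. After $T$ rounds, the final output is $\S_T$ together with the union $\T$ obtained by running the offline solver on $\S_T$ one final time. The space bound follows from running $sT = \poly(k,s,1/\eps)$ parallel streaming samplers, each using $\poly(d,k,1/\eps,\log n)$ space, together with the description of $\T$.

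The main obstacle will be the coreset analysis: showing that $T = \poly(k,s,1/\eps)$ iterations of adaptive sampling against the current candidate union yield a set $\S_T$ that contains a $(1+\eps)$-approximate union with constant probability. Because the projective clustering cost is a minimum over $s$ subspaces, it interacts poorly with projections and triangle inequalities that underlie single-subspace coreset arguments. I plan to address this by assigning each input row to its closest current subspace and analyzing the per-class residual reduction, then taking a union bound over the $s$ classes; this essentially reduces the analysis to $s$ coupled instances of subspace-approximation-style residual bounds, with an extra factor of $s$ absorbed into the sample complexity. For $p = 1$, where the Pythagorean decomposition fails, I would invoke the bicriteria subspace approximation theorem (which already handles $p = 1$) as a black box to control residuals on each class. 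Finally, the fact that the sampler returns noisy rather than exact rows of $\A$ is absorbed into the $(1+\eps)$ slack by setting the per-round sampling error parameter to $\eps/T$, so that the accumulated perturbation across all rounds remains bounded by $\eps$ times the optimum.
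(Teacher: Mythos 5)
There is a genuine gap, and it sits exactly where your plan leans hardest: the claim that sampling proportionally to $\sum_j d(\A_i,\T_j)^p$ ``oversamples the target min-distance distribution by a factor of at most $s$'' is false. The two distributions have different normalizations, and the ratio between them is unbounded: take $s=2$ lines in the plane, put $n-1$ points on $\T_1$ at distance $D$ from $\T_2$, and one point at distance $1$ from both lines. Under the min-distance (projective clustering) distribution the lone point carries essentially all the mass, but under your mixture of the $s$ samplers its probability is roughly $2/((n-1)D^p)$, which can be made arbitrarily small. So the mixture systematically starves precisely the points that the $\min_j$ objective needs you to find, and no per-round factor-$s$ loss absorbs this. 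Beyond that, the heart of the argument --- that $T=\poly(k,s,1/\eps)$ rounds of ``sample against the current candidate union, re-solve offline, repeat'' yields a set containing a $(1+\eps)$-approximate union --- is not proved; you flag it as the main obstacle and sketch a per-class residual argument with a union bound over classes, but the assignment of points to their closest candidate subspace changes every time the candidate union is recomputed, and nothing in the sketch controls that interaction. This is a new coreset theorem, not a routine adaptation of single-subspace residual bounds, and your proposal does not supply it.

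The paper avoids sampling against a union of subspaces altogether. It first runs the bicriteria subspace approximation algorithm (\thmref{thm:sap:bicriteria}) with target dimension $ks$ and constant error to obtain a set $\V$ of rows whose span is a $2$-approximation to the optimal $ks$-dimensional subspace, hence also to the optimal projective clustering cost. It then invokes the known offline dimensionality-reduction theorem of Deshpande--Varadarajan (\thmref{thm:as:pc}): adaptively sampling $\tO{(k^2/\eps)^p k^4 s/\eps^2}$ rows with probability proportional to the distance to the span of $\V$ together with the previously sampled rows --- a \emph{single} growing subspace, so each round is just one $L_{p,2}$ sample with post-processing matrix $\P=\I-\M^\dagger\M$ --- produces a set that, united with $\V$, contains a $(1+\eps)$-approximate union of $s$ $k$-dimensional subspaces. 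The noisy rows and the additive sampling error are then handled by the total variation and distortion bounds (\thmref{thm:adaptive:sampler}, \corref{cor:sampler:distortion}, and their $L_{1,2}$ analogues for $p=1$). If you want to salvage your route, you would need both a sampling distribution that genuinely dominates the min-distance distribution (the sum does not) and a proof of the iterative convergence claim; the paper's route shows that neither is necessary once a constant-factor $ks$-dimensional anchor $\V$ is available.
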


\paragraph{Volume maximization.}
The volume maximization problem takes as inputs a matrix $\A\in\mathbb{R}^{n\times d}$ and a parameter $k$ for the number of selected rows, and the goal is to output $k$ rows $\r_1,\ldots,\r_k$ of $\A$ that maximize the volume of the parallelepiped spanned by the rows. 
\cite{indyk2018composable, mahabadi2019composable} give core-set constructions for volume maximization that approximate the optimal solution within a factor of $\tO{k}^{k/2}$ and $\O{k}^{k}$ respectively, and can be implemented in the row-arrival model. 
Their algorithms are based on spectral spanners and local search based on directional heights and do not immediately extend to turnstile streams. 
Hence our adaptive sampling procedure gives the first turnstile algorithm for volume maximization that uses space sublinear in the input size. 
\begin{theorem}
Given a matrix $\A\in\mathbb{R}^{n\times d}$ that arrives in a turnstile data stream and an approximation factor $\alpha>1$, there exists a one-pass algorithm that outputs a set $\S$ of $k$ noisy rows of $\A$ such that
\[\PPr{\alpha^k(k!)\Vol(\S)\ge\Vol(\M)}\ge\frac{2}{3},\]
where $\Vol(\S)$ is the volume of the parallelepiped spanned by $\S$ and $\M$ is a set of $k$ rows that maximizes the volume. 
The algorithm uses $\tO{\frac{ndk^2}{\alpha^2}}$ bits of space. 
(See \thmref{thm:vm}.)
\end{theorem}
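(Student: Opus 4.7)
The plan is to instantiate the $L_{2,2}$ adaptive sampling framework of Theorem~1.2 with $p=2$, running for $k$ sequential rounds and outputting the returned set of noisy rows $\S=\{\r_1,\ldots,\r_k\}$. At round $i$, the sampler selects an index $s_i$ with probability approximately proportional to $\norm{\A_{s_i}(\I-\R_{i-1}^\dagger\R_{i-1})}_2^2$, i.e.\ the squared distance from $\A_{s_i}$ to the span of the previously selected noisy rows. This matches, up to the sampler's $(1\pm O(\eps))$ probability distortion and the additive noise $\v_i$ in each returned row, the classical centralized squared-distance adaptive sampling scheme used for volume-based data summarization.

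The analysis has three stages. First I would invoke the idealized centralized guarantee: using the telescoping identity
\[
\Vol(\S)^2 \;=\; \prod_{i=1}^{k} d\bigl(\A_{s_i},\mathrm{span}(\r_1,\ldots,\r_{i-1})\bigr)^2
\]
together with a pigeonhole/Cauchy-Schwarz argument over the $k$ optimal rows of $\M$ (as in the classical volume sampling analysis of Deshpande-Vempala-Wang), exact squared-distance adaptive sampling satisfies $\E[\Vol(\S)^2]\ge \Vol(\M)^2/k!$. A reverse-Markov argument on the ratio $\Vol(\S)/\Vol(\M)$ then gives that $k!\cdot\Vol(\S)\ge \Vol(\M)$ with constant probability bounded away from~$0$.

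Second, I would transfer this guarantee to the algorithm's actual output. By Theorem~1.2 the joint law of $(s_1,\ldots,s_k)$ lies within total variation distance $\eps$ of the ideal adaptive sampling distribution, so by coupling, the centralized volume bound holds for our sampled index sequence up to an $\eps$ loss in success probability; this absorbs into the constant. Third, I would account for the per-row noise: since $\r_i=\A_{s_i}(\I-\R_{i-1}^\dagger\R_{i-1})+\v_i$ with $\norm{\v_i}_2\le\eps\norm{\A_{s_i}(\I-\R_{i-1}^\dagger\R_{i-1})}_2$, a perturbation argument on the product-of-successive-distances formula shows that the computed parallelepiped volume deviates from the ideal by a factor $(1\pm O(\eps))^k$. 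Choosing $\eps=\Theta((\alpha-1)/k)$ (equivalently $\Theta(\log\alpha/k)$) absorbs this factor into the $\alpha^k$ slack, and combining with the $k!$ bound from the centralized stage yields $\alpha^k(k!)\Vol(\S)\ge\Vol(\M)$ with probability $\ge 2/3$. Plugging this choice of $\eps$ into the $\poly(d,k,1/\eps,\log n)$ space of Theorem~1.2, together with the precision required to encode each entry of $\A$ so that all distance computations along the stream are accurate enough to survive the $k$ rounds of conditioning, gives the stated $\tO{ndk^2/\alpha^2}$ bound.

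The main obstacle is propagating the two sources of error---multiplicative error in the sampling probabilities and additive noise in the returned rows---through $k$ adaptive rounds whose conditional distributions depend on the previously sampled noisy rows. Because each $\r_i$ slightly perturbs the residual subspace used by the sampler in round $i+1$, the sampling distribution conditioned on $(\r_1,\ldots,\r_{i-1})$ must be shown to remain close in total variation to the one conditioned on the ideal rows $\u_{s_1},\ldots,\u_{s_{i-1}}$; this requires an inductive hybrid argument that carefully balances the per-round tolerance against the final $\alpha^k$ slack, and it is this balance that fixes the $1/\alpha^2$ dependence in the space bound.
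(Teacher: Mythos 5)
Your centralized starting point is not true, and the overall route cannot deliver the $\alpha$-tunable guarantee. First, exact squared-distance adaptive sampling does \emph{not} satisfy $\E{\Vol(\S)^2}\ge\Vol(\M)^2/k!$: take $k=2$ and a matrix with one row $(M,0,0)$, two unit rows $e_2,e_3$, and $n-3$ rows $(M,\delta,0)$ with $\delta^2=n^{-2/3}$. The optimum volume is about $M$, but after the first round the second sample is, with probability $1-\O{n^{-1/3}}$, one of the nearly parallel rows, so the achieved volume is about $Mn^{-1/3}$ and the expectation of the squared volume is smaller than $\Vol(\M)^2/k!$ by a factor growing with $n$. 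The underlying issue is structural: sampling proportionally to squared residuals only guarantees (with constant probability) a row whose residual is within roughly a $\sqrt{nk}$ factor of the maximum residual, so no choice of $\eps$ in the adaptive sampler turns it into an $\alpha$-approximate greedy step for a \emph{given} $\alpha$. Even if your expectation bound held, the reverse-Markov step would only give success probability about $1/(2k!)$, not a constant, since $\Vol(\S)\le\Vol(\M)$ deterministically and the expectation is allowed to equal the bound. Finally, your space accounting does not produce the stated bound: plugging $\eps=\Theta((\alpha-1)/k)$ into $\poly(d,k,1/\eps,\log n)$ yields no factor of $n$ and no $1/\alpha^2$, whereas the theorem's $\tO{ndk^2/\alpha^2}$ is genuinely linear in $n$ (and is tight against the paper's $\Omega\left(\frac{n}{k\alpha^2}\right)$ lower bound), so it cannot be recovered by a precision argument.

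The paper's proof instead simulates an \emph{approximate greedy} algorithm, not adaptive sampling alone. In round $j$ it keeps, besides the $L_{2,2}$ sampler, a CountSketch with $\Theta\left(\frac{nk}{\alpha^2}\right)$ buckets and an AMS sketch, all supporting the post-processing projection $\P=\I-\M^\dagger\M$. If the maximum-residual row $\m_j$ satisfies $\norm{\m_j}_2^2\ge\frac{\alpha^2}{4nk}\norm{\A\P}_F^2$ it is a heavy row and CountSketch recovers a noisy row of comparable norm; otherwise every row's residual is below that threshold, and the sampler returns (with probability $1-\O{1/k}$) a row with squared residual at least $\frac{1}{16nk}\norm{\A\P}_F^2$, hence within a factor $2\alpha$ of $\norm{\m_j}_2$. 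Either way each chosen row is a $2\alpha$-approximate greedy choice, and the result of \cite{CivrilM09} on approximate greedy volume maximization gives $(2\alpha)^k k!\,\Vol(\S)\ge\Vol(\M)$; rescaling $\alpha$ finishes the argument, and the $k$ CountSketch tables of $\Theta\left(\frac{nk}{\alpha^2}\right)$ $d$-dimensional buckets dominate the space, giving $\tO{\frac{ndk^2}{\alpha^2}}$. If you want to salvage your write-up, you need this two-case heavy-row mechanism (or something equivalent) rather than the volume-sampling-style expectation bound.
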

We complement \thmref{thm:vm} with a lower bound for the volume maximization problem on turnstile streams that is tight up to lower order terms. 
Additionally, we give a lower bound for volume maximization in the random order row-arrival model, which we will also show is tight up to lower order terms. 
Our lower bounds complement the thorough lower bounds for extent problems given by \cite{AgarwalS15}. 
\begin{theorem}
There exists a constant $C>1$ so that any one-pass streaming algorithm that outputs a $C^k$ approximation to the volume maximization problem with probability at least $\frac{63}{64}$ in the random order row-arrival model requires $\Omega(n)$ bits of space. 
Moreover for any integer $p>0$, any $p$-pass turnstile streaming algorithm that gives an $\alpha^k$ approximation to the volume maximization problem requires $\Omega\left(\frac{n}{kp\alpha^2}\right)$ bits of space. 
(See \corref{thm:turnstile:vm} and \corref{cor:ra:turnstile}.)
\end{theorem}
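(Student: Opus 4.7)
The plan is to establish both lower bounds through communication-complexity reductions, where the hard communication problems are embedded into instances where the optimal volume is determined by a hidden bit.

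For the turnstile bound of $\Omega(n/(kp\alpha^2))$, I would reduce from Augmented Indexing (which has a one-way communication lower bound of $\Omega(N)$ even with $p$ passes when Alice sends Bob, giving a $p$-pass trade-off of $\Omega(N/p)$). Set $N = \Theta(n/(k\alpha^2))$ and let Alice hold $x \in \{0,1\}^N$, Bob hold an index $j \in [N]$. The construction embeds $x$ into a matrix $\A \in \mathbb{R}^{n \times d}$ with $d = O(k)$ as follows: group $x$'s bits into $\Theta(n/k)$ blocks of size $\Theta(1/\alpha^2)$, and for each block create a candidate row whose components along $k-1$ ``padding'' directions are fixed, but whose projection onto a final distinguished direction has magnitude $1 + \Theta(1/\alpha) \cdot (\text{inner product with block bits})$. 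Combined with $k-1$ near-orthogonal ``anchor'' rows that Bob inserts (which he can do since he knows $j$), the parallelepiped volume becomes a product of $k$ side lengths where one side is determined by a block of Alice's bits. By a Gaussian-style argument the gap between best and worst candidate scales multiplicatively as $(1 + \Theta(1/\alpha))^k = \Theta(\alpha^k / \alpha^k) \cdot \alpha^k$ after appropriate normalization, so any $\alpha^k$-approximation reveals $x_j$. Standard information-theoretic amplification yields the claimed $\Omega(n/(kp\alpha^2))$ bound.

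For the random-order row-arrival lower bound, I would reduce from a two-party communication problem that remains hard under random ordering of the stream---a natural candidate is a variant of \DISJ\ where Alice and Bob each hold subsets of $[n]$ of known sparsity, but the relevant rows are indistinguishable from a common pool of noise rows. Specifically, construct $n$ rows so that all but $k$ are nearly identical ``junk'' vectors that contribute negligibly to any volume, and the $k$ ``special'' rows (planted based on the \DISJ\ inputs) jointly have volume $1$ while any replacement by a junk row reduces the volume by a factor of $C$. Because the hard rows are shuffled with the junk rows in a uniformly random order, pinpointing any one of them requires remembering $\Omega(n)$ bits of state---the random ordering does not help the algorithm, since in expectation the special rows are spread throughout the stream. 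A $C^k$-approximation forces the algorithm to identify all $k$ special rows (losing even one costs a factor $C$), and a standard reduction from randomized \DISJ\ with error $1/64$ yields the $\Omega(n)$ space bound.

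The main obstacle is in the turnstile construction: I need the single-bit gap inside each block to be amplified correctly to $\alpha^k$ after taking the product of $k$ side lengths of the parallelepiped, while ensuring that different choices of candidate rows produce volumes that genuinely differ by that factor rather than being washed out by the padding directions. This requires a careful geometric construction (likely involving a near-orthogonal frame with perturbations in the distinguished direction) and a tight volumetric calculation showing that the minimum singular value of the chosen $k \times k$ submatrix is exactly what the bits encode. A secondary subtlety is confirming that the random-order reduction is not broken by the algorithm ``getting lucky'' with the position of special rows; this is handled by a direct-sum argument showing that \DISJ\ remains hard under uniform shuffling of Alice's and Bob's coordinates.
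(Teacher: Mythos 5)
There are genuine gaps in both halves of your plan. For the turnstile bound, your choice of hard problem does not support the multi-pass claim: Augmented Indexing is only hard for \emph{one-way} communication, and with even two rounds Bob can send his index $j$ using $O(\log N)$ bits, so a reduction from it cannot yield a lower bound against $p$-pass algorithms (where Alice and Bob exchange the sketch back and forth $p$ times). The paper instead reduces from Gap $\ell_\infty$, whose $\Omega(N/m^2)$ bound of \thmref{thm:cc:gap:linfty} holds for arbitrary multi-round protocols; taking $m=\alpha+1$, $d=k$, and embedding an instance of size $n/k$ block-diagonally across the $k$ columns gives optimal volume $1$ versus $(\alpha+1)^k$, and the $\alpha^2$ in the space bound comes from the Gap $\ell_\infty$ communication bound, not from packing $\Theta(1/\alpha^2)$ bits per block. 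Your amplification arithmetic also does not work: a per-side gap of $1+\Theta(1/\alpha)$ multiplied over $k$ sides gives roughly $e^{\Theta(k/\alpha)}$, which is far smaller than $\alpha^k$ for $\alpha>1$; you need a per-block gap of order $\alpha$, which is exactly what the Gap $\ell_\infty$ promise provides.

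For the random-order bound, the key step of your reduction fails: in a construction where each of the $k$ special rows contributes a factor $C$ over a junk row, an algorithm that misses \emph{all} $k$ special rows still loses only a factor $C^k$, which is exactly the approximation it is allowed, so a $C^k$-approximation is not forced to identify any special row and does not decide your hidden \DISJ{} instance. What is needed (and what the paper does in \thmref{thm:cc:multi:set-disjoint}) is a pair of distributions whose \emph{optimal values} differ by more than $C^k$: in the NO case of $\DISJ_{n/2k,d/k}$ the complement trick yields $2k$ pairwise-orthogonal rows with volume $(d/2k)^{2k}$, while in the YES case every Alice/Bob pair overlaps in more than $d/8k$ coordinates with high probability, capping the volume at $(15d^2/64k^2)^k$, so any $(16/15)^k$-approximation of the value alone distinguishes the cases. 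Finally, your appeal to "\DISJ{} remains hard under uniform shuffling" is precisely the point that must be argued, not assumed; the paper handles it by showing the planted two-player distribution is within total variation distance $1/8$ of the random-order distribution (the orthogonal pair lands in separate halves with probability at least $1/2$ and the NO case has probability $1/4$), and then folding that $1/8$ into the error budget against the $\frac{63}{64}$ success probability. Without such a coupling argument, the random-order claim is unsupported.
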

Finally, we give a corresponding upper bound for volume maximization in the row-arrival model competitive with our lower bound. 
\begin{theorem}
Let $1<C<(\log n)/k$. 
There exists a one-pass streaming algorithm in the row-arrival model that computes a subset $\S$ of size $k$ of points in $\mathbb{R}^d$ such that 
\[\PPr{\O{Ck}^{k/2}\Vol(\S)\ge\Vol(\M)}\ge\frac{2}{3},\]
where $\Vol(\S)$ is the volume of the parallelepiped spanned by $\S$ and $\M$ is a set of $k$ rows that maximizes the volume. 
The algorithm uses $\O{n^{\O{1/C}} d}$ bits of space.
(See \lemref{volmax-tradeoff}.)
\end{theorem}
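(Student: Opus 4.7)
The plan is to adapt the composable core-set constructions for volume maximization (\cite{indyk2018composable} via spectral spanners and \cite{mahabadi2019composable} via local search on directional heights) to the row-arrival streaming model, using the parameter $C$ to drive a trade-off between the block size used and the quality of the approximation.

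First I would implicitly partition the stream into sequential blocks of size $B = n^{\Theta(1/C)}$. Each block is admitted into a buffer of size $B \cdot d = n^{\O{1/C}} d$, its composable core-set is computed using the offline construction, and the buffer is then cleared before the next block arrives. Because composable core-sets union without additional loss, the union over all $n/B$ blocks is already a $\O{k}^{k/2}$-approximation in principle; the issue is only that the naive union has size $(n/B) \cdot k$, which can be far larger than the target $n^{\O{1/C}}$.

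To respect the space budget, I would maintain the block-level core-sets in a shallow merge-and-reduce accumulator, periodically collapsing them back to size $n^{\O{1/C}}$ by reapplying the core-set construction to the current union. Each such reduction multiplies the approximation factor by $\O{k}^{k/2}$, and the number of reductions is calibrated via $C$: larger $C$ forces smaller blocks but allows a shallower accumulator by using a core-set of size $\Theta(Ck)$ at the intermediate level, so that a constant number of merges suffices. With this calibration, the cumulative approximation telescopes to $\O{Ck}^{k/2}$ and the total space used at any moment (the current buffer plus the accumulator) is $\O{n^{\O{1/C}} d}$.

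The main obstacle I expect is controlling the compounding of error across merges. A black-box merge-and-reduce tree of depth $\Omega(\log(n/B))$ would multiply the $\O{k}^{k/2}$ factor many times and blow up to $\O{k}^{\Omega(k\log n)}$, far worse than the target. Getting the clean $\O{Ck}^{k/2}$ bound requires invoking a stronger size-versus-approximation trade-off for the spectral-spanner or local-search core-set---specifically, that a size-$\Theta(Ck)$ core-set attains approximation $(Ck)^{k/2}$---so that only a bounded number of reductions is needed rather than a logarithmic-depth tree. Verifying this sharper per-block guarantee, and then carefully balancing the block size $B$, the intermediate core-set size, and the reduction depth against the space budget $\O{n^{\O{1/C}} d}$, is the technical crux of the proof.
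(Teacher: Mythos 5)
There is a genuine gap: your merge-and-reduce scheme cannot deliver the claimed $\O{Ck}^{k/2}$ factor, and the "constant number of merges" calibration does not work. With blocks of size $B=n^{\Theta(1/C)}$ and a space budget of $n^{\O{1/C}}d$, the accumulator can absorb only about $n^{\O{1/C}}$ rows before it must be collapsed, so the reduction process is a tree of fan-in $n^{\O{1/C}}$ and depth $\Theta(C)$ --- not $\O{1}$ --- and $C$ may be as large as $(\log n)/k$. Enlarging the intermediate core-set from $\tO{k}$ to $\Theta(Ck)$ does not change this depth, because the depth is dictated by the ratio of the number of blocks to the fan-in, not by the core-set size. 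Since an $\alpha$-composable core-set composed over $L$ levels only guarantees $\alpha^L$, a depth-$\Theta(C)$ composition with per-level factor $\O{k}^{k/2}$ gives $\O{k}^{\Theta(Ck)}$; this is exactly the paper's \corref{core-set-streaming} (with $\eps=1/C$), which the statement you are proving is explicitly meant to improve upon. To rescue the telescoping you would need a per-level approximation of roughly $(Ck)^{k/(2C)}$, i.e.\ far below $k^{k/2}$ when $C$ is large, from a polynomial-size composable core-set; no such trade-off is proved in your argument (you flag it as the "technical crux"), and it is not something the cited constructions provide, so the proof does not go through.

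The paper's actual route is different and avoids composition entirely: it multiplies each arriving row by a random Gaussian matrix $\G\in\mathbb{R}^{d\times r}$ with $r=(\log n)/C$, then runs a streaming $\eps$-kernel (directional width/height) algorithm in $\mathbb{R}^r$ whose space is exponential in the dimension, $\O{8^r}=n^{\O{1/C}}$, and which loses only a $2^k$ factor (\lemref{volmax-exp-d}). The approximation analysis is a one-shot probabilistic argument: Johnson--Lindenstrauss applied to the $2k$ points consisting of the optimal set and its Gram--Schmidt residuals shows the optimal volume shrinks by at most $2^k$, while a spectral bound $\norm{\H}_2\le 2+\sqrt{2Ck}$ for a $k\times r$ Gaussian matrix, holding with probability $1-n^{-k}$, shows every fixed $k$-subset's volume inflates by at most $(\sqrt{2Ck}+2)^k=\O{Ck}^{k/2}$, and a union bound over all $\binom{n}{k}$ subsets finishes the argument. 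If you want to keep a core-set flavor, the fix is to replace the multi-level composition with this single dimensionality-reduction step followed by the low-dimensional streaming kernel, rather than trying to beat the compounding loss of composable core-sets.
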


\subsection{Techniques}
Our first observation is that in many applications, the role played by adaptive sampling is to sample rows of a matrix $\A\in\mathbb{R}^{n\times d}$ with probability proportional to either the distance or the squared distance from some subspace $\H_j$ that we have already chosen by step $j$ of the sampling procedure. 
Adaptive sampling then imbues some randomness into the sampling procedure, which would otherwise reduce to the greedy paradigm of iteratively choosing the row of $\A(\I-\H_j^\dagger\H_j)$ with the largest squared norm. 
That is, we care more about the rows of $\A(\I-\H_j^\dagger\H_j)$ than the rows of $\A$. 

Thus our first component towards our adaptive sampling algorithm is an $L_{p,2}$ sampler with $p\in\{1,2\}$, which takes turnstile updates to $\A$ and post-processing query access to a matrix $\P\in\mathbb{R}^{d\times d}$ and outputs index $i\in[n]$ with probability roughly, i.e., within $(1\pm\eps)$ factor of $\frac{\norm{\A_i\P}_2^p}{\norm{\A\P}^p_{p,2}}$. 
By setting $\P=\I-\H_j^\dagger\H_j$, the probability of sampling each row of $\A\P$ then approximately follows the adaptive sampling distribution. 

\paragraph{$L_{p,2}$ sampler.}
We first describe how to sample rows of $\A$ when $\P$ is the identity matrix, so that we output an index $i\in[n]$ with probability roughly $\frac{\norm{\A_i}_2^p}{\norm{\A}^p_{p,2}}$ with $p\in\{1,2\}$.  
Our scheme generalizes a line of work for $\ell_p$ sampling~\cite{MonemizadehW10,SohlerW11,AndoniKO11,JowhariST11,JayaramW18}, where the input is a vector $\f$ of $n$ coordinates that are updated through a turnstile stream and the goal is to sample an index $i\in[n]$ with probability roughly $\frac{|f_i|^p}{\norm{\f}_p^p}$, from coordinates of a vector input to rows of a matrix input. 
These prior $\ell_p$ sampling algorithms have essentially followed the same template of performing a linear transformation on $\f$ to obtain a new vector $\z$, using an instance of CountSketch on $\z$ to recover a vector $\y$, and then running a statistical test on $\y$. 
If the statistical test fails, then the algorithm aborts; otherwise the coordinate of $\y$ with the maximum magnitude is output. 
The algorithm is repeated a number of times to ensure a high probability of success. 
 
Generalizing the template of $\ell_p$ sampling, we observe that if some scaling factor $t_i\in[0,1]$ is chosen uniformly at random, then $\PPr{\frac{\norm{\A_i}_2}{t_i^{1/p}}\ge K^{1/p}\norm{\A}_{p,2}}=\frac{\norm{\A_i}_2^p}{K\norm{\A}^p_{p,2}}$, where $K$ is any parameter that we choose.  
Thus if $\B_i=\frac{1}{t_i^{1/p}}\A_i$ for $i\in[n]$ and we temporarily suppose that only the row $x$ satisfies $\norm{\B_x}_2\ge T$, where $T=K^{1/p}\norm{\A}_{p,2}$ is the threshold, for only the index $i=x$, then our task would reduce to identifying $\B_x$ in sublinear space. 
To this end, if $\B$ is the matrix whose rows are $\B_1,\ldots,\B_n$, then we hash the rows of $\B$ to a CountSketch data structure to recover the row with the largest norm, which must necessarily be $\B_x$ if the error in CountSketch is small enough. 

Namely, if the error in CountSketch data structure is too large, then our algorithm will erroneously identify some scaled row as exceeding the threshold $T$ when the scaled row does not, or vice versa. 
Hence our algorithm must first run a statistical test to determine whether the error in the CountSketch data structure caused by the randomness of the data structure is sufficiently small. 
If the CountSketch error is determined to be too large by the statistical test, the algorithm aborts; otherwise the algorithm outputs the row with the largest norm if it exceeds $T$. 

Now there can still be some error if multiple rows have norms close to or exceeding $T$, but it turns out that by choosing the appropriate parameters, the probability that there exists a row whose norm exceeds $T$ is $\Omega\left(\frac{1}{K}\right)$ and the probability that the statistical test fails or that multiple rows have norms close to or exceeding $T$ is $\O{\frac{\eps}{K}}$, which incurs a relative $(1\pm\eps)$ perturbation of the sampling probabilities. 
Thus a single instance of the sampler outputs an index from roughly the desired distribution with probability $\Omega\left(\frac{1}{K}\right)$ and with probability $1-\Omega(\frac{1}{K})$, it aborts and outputs nothing. 
Hence for $p\in\{1,2\}$, we obtain a constant probability of success using $\poly\left(\frac{1}{\eps},\log n\right)$ space by setting $K=\poly\left(\frac{1}{\eps},\log n\right)$, repeating with $\O{K}$ instances, and taking the output of the first successful instance. 

It remains to argue that CountSketch and norm estimation generalize to $L_{p,2}$ error for matrices, which we do through standard arguments in \secref{sec:l2:sampler}. 
In fact, the data structures maintained by the generalized matrix CountSketch and $L_{p,2}$ norm estimation procedures are linear combinations of the rows of $\A$, so we can right multiply the rows that are stored in the $L_{p,2}$ sampler by $\P$ to simulate sampling rows of $\A\P$. 
In other words, if we had a stream of updates to the matrix $\A\P$, the resulting data structure on the stream would be equivalent to maintaining the data structure on a stream of updates to the matrix $\A$, and then multiplying each row of the data structure by $\P$ in post-processing. 
Hence we can also sample rows of $\A\P$ with probabilities proportional to the residual $\norm{\A_i\P}_2^p$, which will be crucial for our adaptive sampler. 

\paragraph{Adaptive sampler.} 
Recall that adaptive sampling iteratively samples rows of $\A\in\mathbb{R}^{n\times d}$ with probability proportional to the $p\th$ power of their distances from the subspace spanned by the rows that have already been sampled in previous rounds, for $k$ rounds. 
Thus if $\H_j$ is the matrix formed by the rows sampled by step $j$, then we would like to sample $i\in[n]$ with probability $\norm{\A_i(\I-\H_j^\dagger\H_j)}_2^p$ with the largest squared norm. 
Given our $L_{p,2}$ sampler, a natural approach is to run $k$ instances of the sampler throughout the stream. 
Once the stream completes, we use the first instance to sample a row of $\A$, which forms $\H_2$ (recall that $\H_1=\emptyset$ is ``used'' to sample in the first iteration). 
Since our $L_{p,2}$ sampler supports post-processing multiplication by a matrix $\P$, we subsequently use the $j\th$ instance to $L_{p,2}$ sample a row of $\A(\I-\H_j^\dagger\H_j)$, which we then append to $\H_j$ to form $\H_{j+1}$. 
Repeating this $k$ times, we would like to argue this simulates $k$ steps of adaptive sampling. 

The first issue with this approach is that our $L_{p,2}$ cannot return the original rows of $\A$, but only some noisy perturbation of the sampled row. 
It is easy to see that returning the noisy rows of $\A$ is unavoidable for sublinear space by considering a stream whose final update to some random coordinate is arbitrarily large, while the previous updates were small. 
Then the row containing the coordinate of the final update should be sampled with large probability, but that row can only be completely recovered if all entries of the matrix are stored. 
Fortunately we show that if we sample the index $x$, then we output a row $\r=\A_x+\v$, where the noisy component $\v$ is a linear combination of rows of $\A$ that satisfies $\norm{\v}_2\le\eps\norm{\A_x}_2$. 
Thus, the norms of the sampled rows are somewhat preserved. 

On the other hand, sampling noisy rows of $\A$ rather than the original rows of $\A$ can drastically alter the subspace spanned by the matrix formed by the rows. 
This in turn can significantly alter the sampling probabilities in future rounds. 
Consider the following example, which is depicted in \figref{fig:bad}. 
Let $\A$ be a matrix that has $\begin{bmatrix} 0 & 1\end{bmatrix}$ for half of its rows and $\begin{bmatrix} M & 0\end{bmatrix}$ for some large $M>0$ for the other half of its rows. 
Then with large probability we should sample some row $\u=\begin{bmatrix} M & 0\end{bmatrix}$ in the first step. 
However, due to noise in the sampler, we will actually obtain some noisy row $\v=\begin{bmatrix} M' & m\end{bmatrix}$, where $M'\approx M$ and $m\neq 0$. 
In \figref{fig:bad}, we depict $\u$ with the blue vector and $\v$ with the red vector. 

Now in the second round, if we had sampled $\u$ in the first round, then the only possible output of the adaptive sampler is a row $\begin{bmatrix} 0 & 1\end{bmatrix}$, since all the rows $\begin{bmatrix} M & 0\end{bmatrix}$ are contained in the subspace spanned by $\u$. 
However, since we actually sampled $\v$ in the first round, then the distance from $\u$ to the subspace spanned by $\v$ is nonzero. 
Furthermore, since $M$ is large, then it actually seems likely that we might sample a row $\begin{bmatrix} M & 0\end{bmatrix}$ rather than $\begin{bmatrix} 0 & 1\end{bmatrix}$. 
Thus we might sample some row that we should have not sampled or worse, we might repeatedly sample the same row! 
Pictorially, the blue vector $\u$ in \figref{fig:bad} has no projection away from itself, but results in the rightmost green vector when projected to the red vector $\v$, and thus $\u$ might be sampled \emph{again} with high probability. 

Similarly, the noisy perturbations may cause us to completely avoid rows that we should have sampled with nonzero probability if we had access to the original rows. 
In fact, this example shows that we cannot guarantee that our adaptive sampler gives a multiplicative $(1+\eps)$-approximation to the true sampling probabilities of each row in any round. 

\begin{figure*}
\centering
\subfloat[Troublesome distorted sampling probabilities.\figlab{fig:bad}]{
\begin{tikzpicture}
\draw[->,thick,blue](0,0) -- (4,0);
\draw[->,thick](0,0) -- (0,1);
\draw[->,thick,green](4/17,1/17) -- (0,1);
\draw[->,thick,red](0,0) -- (4,1);
\draw[->,thick,green](64/17,16/17) -- (4,0);
\end{tikzpicture}
}
\qquad
\qquad
\subfloat[Actual distorted sampling probabilities.\figlab{fig:actual}]{
\begin{tikzpicture}
\draw[->,thick,blue](0,0) -- (4,0);
\draw[->,thick](0,0) -- (0,1);
\draw[->,thick,green](10/101,1/101) -- (0,1);
\draw[->,thick,red](0,0) -- (4,0.4);
\draw[->,thick,green] (400/101,40/101) -- (4,0);
\end{tikzpicture}
}
\caption{Distortion of sampling probabilities by $L_{p,2}$ sampler. 
Suppose we should have sampled the blue vector but instead we obtain the red vector from the noisy output. 
Then the sampling probabilities in the second round will be the norms of the green vectors, even though the probability of sampling the blue vector in the second round should \emph{actually be zero} if we had sampled the actual blue vector. 
Thus the sampling probabilities are distorted. 
In particular, we are worried that in \figref{fig:bad}, the blue vector might be sampled again in the second round because the projection to the red vector has large norm. 
We show \figref{fig:bad} is unlikely and the actual scenario is more like \figref{fig:actual}, where the sampling probabilities are only perturbed by a small additive amount. 
}
\figlab{fig:probs}
\end{figure*}
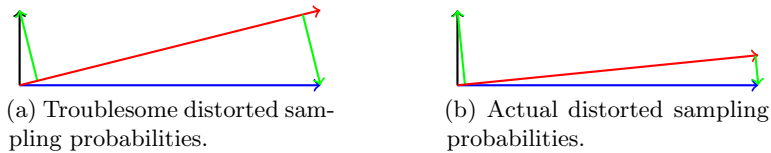

Our key observation is that the noisy row $\r$ output by our $L_{p,2}$ sampler not only has a noisy component $\v$ small in norm compared to $\A_x$, but also the component of $\v$ in the space orthogonal to $\A_x$ must be small. 
That is, $\r$ can also be written as $\r=\A_x+\w$, where $\norm{\w\Q}_2\le\eps\norm{\A_x}_2\frac{\norm{\A\Q}_F}{\norm{\A}_F}$ for \emph{any} projection matrix $\Q$. 
This tighter bound in any orthogonal direction allows us to bound in subsequent rounds the \emph{additive} error of sampling probabilities, which are based on the vector lengths in orthogonal directions. 
Thus, we show that with high probability, \figref{fig:bad} cannot happen and our actual situation is more like that in \figref{fig:actual}. 

Namely, if we write an orthonormal basis $U$ for the actual rows of $\A$ and an orthonormal basis $W$ for the noisy rows that we sample, we can show that the norm of a row projected onto $W$ has a small additive perturbation from when it is projected onto $U$. 
Thus we require the construction of orthonormal bases $U$ and $W$ from which we can easily extract the sampling probabilities of rows both with respect to the original rows and to the noisy rows. 
We achieve this by designing $U$ so that the first basis vectors of $U$ are precisely the true sampled rows of $\A$, followed by the noisy perturbations for each sample. 
We then argue that if we design $W$ so that the first basis vectors of $W$ are precisely the noisy rows that we sample, then the coefficients of each row represented in terms of $U$ and $W$ have only a small additive difference. 
By summing across all rows, then we can bound the total variation distance between sampling with the noisy rows of $\A$ and sampling with the actual rows of $\A$. 
That is, in each adaptive sampling round where we use a noisy row obtained from our $L_{p,2}$ sampler with error parameter $\eps>0$ rather than the actual row of $\A$, the total variation distance in the sampling distribution increases by an additive $\O{\eps}$. 
Now since we use the first sampled noisy row for $k-1$ additional rounds, the second sampled noisy row for $k-2$ additional rounds, and so forth, then by an inductive argument, the total variation distance between $k$ rounds of our algorithm and $k$ rounds of adaptive sampling is $\O{k^2\eps}$.

\paragraph{Applications.}
For many applications on turnstile streams, we show it suffices in each step to obtain a noisy row that is orthogonal to the previously selected rows, sampled with probability proportional to the $p\th$ power of the distance to the subspace spanned by those rows. 
Thus for $p=1$, our adaptive sampler allows us to perform residual based sampling in place of subspace embedding techniques used by previous work in various applications~\cite{KerberR15, SohlerW18, LevinSW18}. 
Additionally for $p=2$, our adaptive sampler allows us to simulate volume sampling, which has a wide range of applications~\cite{DeshpandeV06, DeshpandeRVW06, DeshpandeV07, CivrilM09}. 

For volume maximization on turnstile streams, we use a combination of our $L_{2,2}$ sampler and our generalized CountSketch data structure to simulate an approximation to the greedy algorithm of choosing the row with the largest residual at each step. 
If the largest residual found by CountSketch exceeds a certain threshold, we use that row; otherwise any row output by our adaptive sampler will be a good approximation to the row with the largest residual. 
Thus the volume of the parallelepiped spanned by these rows is a good approximation to the optimal solution. 

\paragraph{Volume Maximization.}
We provide lower bounds for volume maximization in turnstile streams and the row-arrival model through reductions from the Gap $\ell_\infty$ problem and the distributional set-disjointness problem, respectively. 
For both cases we show that embedding the same instance across multiple columns gives hardness of approximating within a factor with exponential dependency on $k$. 
For our algorithmic results in the row-arrival model, we first note that the composable core-set techniques of \cite{indyk2018composable} automatically gives a streaming algorithm for volume maximization. 
In fact, \cite{indyk2018composable} shows the stronger guarantee that any composable core-set for the directional height of a point set suffices to give a good approximation for volume maximization. 
Using this idea, we give a dimensionality reduction algorithm for volume maximization in the row-arrival model competitive with our lower bounds by embedding the input into a lower dimensional space. 

Recall that from Johnson-Lindenstrauss, right multiplication by a random matrix whose entries are drawn i.i.d from a Gaussian distribution suffices to preserve the directional heights of the points in an optimal set by a constant factor, say $2$, and thus the volume of the largest set of $k$ points is only distorted by a factor of $2^k$. 
We then prove that for every other subset of $k$ points, their volume does not increase by too much by showing that the eigenvalues of the matrix representation of the points are preserved by some factor $C$ with very high probability. 
Thus taking a union bound over all subsets of $k$ points, all volumes are preserved by a factor $C^k$ and we obtain dimensionality reduction of the problem by applying right multiplication of the random matrix to each of the input rows.

\paragraph{Paper Organization.}
We first handle $L_{2,2}$ sampling in \secref{sec:l2:sampler}. 
Using the $L_{2,2}$ sampler, we build an adaptive sampler in \secref{sec:noisy:adaptive} that samples rows with probability proportional to the squared distances of the subpsace spanned by previously selected rows. 
We show the applications of our adaptive sampler in \secref{sec:apps}, including projective clustering, subspace approximation, column subset selection, and volume maximization. 
We give lower bounds for volume maximization in \secref{sec:vm}, showing that our adaptive sampler gives nearly optimal algorithms in turnstile streams. 
Finally we give algorithms for volume maximization in the row-arrival model in \secref{sec:vm:ra}, competitive with the lower bounds in \secref{sec:vm}. 
For completeness, we detail $L_{1,2}$ sampling and adaptive sampling rows with probability proportional to the distances of the subspace spanned by previously selected rows in \appref{app:appendix}. 

\subsection{Preliminaries}
For any positive integer $n$, we use the notation $[n]$ to represent the set $\{1,\ldots,n\}$. 
We use the notation $x=(1\pm\eps)y$ to denote the containment $(1-\eps)y\le x\le(1+\eps)y$. 
We write $\poly(n)$ to denote some fixed constant degree polynomial in $n$ but we write $\frac{1}{\poly(n)}$ to denote some arbitrary degree polynomial in $n$. 
When an event has probability $1-\frac{1}{\poly(n)}$ of occurring, we say the event occurs with high probability. 
We use $\tO{\cdot}$ to omit lower order terms and similarly $\polylog(n)$ to omit terms that are polynomial in $\log n$.  

For our purposes, a turnstile stream will implicitly define a matrix $\A\in\mathbb{R}^{n\times d}$ through a sequence of $m$ updates. 
We use $\A_i$ to denote the $i\th$ row of $\A$ and $A_{i,j}$ to denote the $j\th$ entry of $\A_i$. 
The matrix $\A$ initially starts as the all zeros matrix. 
Each update in the stream has the form $(i_t,j_t,\Delta_t)$, where $t\in[m]$, $i_t\in[n]$, $j_t\in[d]$, and $\Delta_t\in\{-M,-M+1,\ldots,M-1,M\}$ for some large positive integer $M$. 
The update then induces the change $A_{i_t,j_t}\gets A_{i_t,j_t}+\Delta_t$ in $\A$.  
We assume throughout that $m,M=\poly(n)$ and $n\gg d$. 
We will typically only permit one pass over the stream, but for multiple passes the order of the updates remains the same in each pass. 

In the row-arrival model, the stream has length $n$ and the $i\th$ update in the stream is precisely row $\A_i$. 
Again we restrict each entry $A_{i,j}$ of $\A$ to be in the range $\{-M,-M+1,\ldots,M-1,M\}$ for some large positive integer $M=\poly(n)$. 
We assume that $\A$ can be adversarially chosen in the row-arrival model, but for the random order row-arrival model, once the entries of $\A$ are chosen, an arbitrary permutation of the rows of $\A$ is chosen uniformly at random, and the rows of that permutation constitute the stream. 
For the problems that we consider, the optimal solution is invariant to permutation of the rows of $\A$, so the random order does not impact the desired solution. 
Observe that algorithms for turnstile streams can be used in the row-arrival model, but not necessarily vice versa. 

We use $\I_k$ to denote the $k\times k$ identity matrix and we drop the subscript when the dimensions are clear. 
We use the notation $\A=\A_1\circ\A_2\circ\ldots\circ\A_n$ to denote that the matrix $\A$ is formed by the rows $\A_1,\ldots,\A_n$ and the notation $\A^\top$ to denote the transpose of $\A$. 
For a matrix $\M\in\mathbb{R}^{k\times d}$ with linearly independent rows, we use $\M^\dagger\in\mathbb{R}^{d\times k}$ to denote the Moore-Penrose pseudoinverse of $\M$, so that $\M^\dagger=\M^\top(\M\M^\top)^{-1}$ and $\M\M^{\dagger}=\I_k$. 

\begin{definition}[Vector/matrix norms]
For a vector $\v\in\mathbb{R}^n$, we have the Euclidean norm $\norm{\v}_2=\sqrt{\sum_{i=1}^n v_i^2}$ and more generally, $\norm{\v}_p=\left(\sum_{i=1}^n |v_i|^p\right)^{\frac{1}{p}}$. 
For a matrix $\A\in\mathbb{R}^{n\times d}$, we denote the Frobenius norm of $\A$ by $\norm{\A}_F=\sqrt{\sum_{i=1}^n\sum_{j=1}^d A_{i,j}^2}$. 
More generally, we write the $L_{p,q}$ norm of $\A$ by $\norm{\A}_{p,q}=\left(\sum_{i=1}^n\left(\sum_{j=1}^d |A_{i,j}|^q\right)^{\frac{p}{q}}\right)^{\frac{1}{p}}$, so that $\norm{\A}_F=\norm{\A}_{2,2}$. 
\end{definition}
For $\A\in\mathbb{R}^{n\times d}$, we use $\A_{tail(b)}$ to denote $\A$ with the $b$ rows of $\A$ with the largest Euclidean norm set to zeros. 
\begin{definition}[$L_{p,q}$ sampling]
Let $\A\in\mathbb{R}^{n\times d}$, $0\le\eps<1$, and $p,q>0$. 
An \emph{$L_{p,q}$ sampler} with $\eps$-relative error is an algorithm that outputs an index $i\in[n]$ such that for each $j\in[n]$,
\[\PPr{i=j}=\frac{\norm{\A_j}_q^p}{\norm{\A}^p_{p,q}}(1\pm\eps)+\O{n^{-c}},\]
for some arbitrarily large constant $c\ge 1$. 
In each case, the sampler is allowed to output fail with some probability $\delta$, in which case it must output $\bot$. 
When the underlying matrix is just a vector, i.e., $d=1$, we drop the $q$ term and call such an algorithm an $L_p$ sampler.  
\end{definition}

\begin{definition}[Adaptive sampling]
Let $\A\in\mathbb{R}^{n\times d}$ be a matrix from which we wish to sample and $\M\in\mathbb{R}^{m\times d}$ be a matrix corresponding to a specific subspace. 
For $p\in\{1,2\}$, an \emph{adaptive sampler} is an algorithm that outputs an index $i\in[n]$ and the corresponding row $\A_i$ such that for each $j\in[n]$,
\[\PPr{i=j}=\frac{\norm{\A_j\P}_2^p}{\norm{\A\P}^p_{p,2}},\]
where $\P=\I-\M^\dagger\M$. 
\end{definition}
In typical applications, we will wish to perform $k$ rounds of adaptive sampling with subspaces $\M_1,\M_2,\ldots,\M_k$, where $\M_1$ is the all zeros matrix, and each $\M_i$ will consist of the rows sampled from rounds $1$ to $i-1$. 
We will use the term adaptive sampling to refer to both a single round of sampling and multiple rounds of sampling interchangeably when the context is clear. 

Note that the adaptive sampling for input matrices $\A\in\mathbb{R}^{n\times d}$ and $\M\in\mathbb{R}^{m\times d}$ can be seen as $L_{p,2}$ sampling on an input matrix $\A\P$ with $\eps=0$ and $\P=\I-\M^\dagger\M$, but returning the row $\A_i$ instead of $\A_i\P$. 

\paragraph{AMS and CountSketch.}
We will refer to the classical AMS and CountSketch algorithms for intuition, but we require more generalized versions that we will present in \secref{sec:l2:sampler}. 
For the sake of completeness, the classical AMS algorithm~\cite{AlonMS99} can be formulated as taking a matrix $\A\in\mathbb{R}^{n\times d}$ through a turnstile stream and using $\O{\frac{1}{\eps^2}\log^2 n}$ bits of space to output a $(1+\eps)$-approximation to $\norm{\A}_F$ with high probability. 
For each entry $A_{i,j}$, the algorithm generates a random sign $h_{i,j}$ and maintains $S=\sum_{i=1}^n\sum_{j=1}^d h_{i,j}A_{i,j}$ throughout the stream.  
At the end of the stream, the algorithm uses $S^2$ as its estimator for $\norm{\A}_F^2$. 
By running $\O{\frac{1}{\eps^2}}$ instances of the estimator and taking the mean, the variance of the estimator decreases. 
By taking the median of $\O{\log n}$ means, the estimator succeeds with high probability.  
By taking $d=1$, the AMS algorithm can also be used to approximate $\norm{\v}_2$ for any vector $\v\in\mathbb{R}^n$ whose entries are updated in a turnstile stream. 

The classical CountSketch algorithm~\cite{CharikarCF04} can be used to find all entries $A_{i,j}$ of a matrix $\A\in\mathbb{R}^{n\times d}$ that arrives implicitly through a turnstile stream such that $|A_{i,j}|\ge\eps\norm{\A}_F$ for a constant input parameter $\eps>0$. 
The CountSketch data structure maintained by the algorithm is a $r\times b$ table $T$. 
For each row $k\in[r]$, a random sign $h_k(i,j)$ is generated for each $A_{i,j}$ and each entry $A_{i,j}$ is randomly hashed to a bucket $g_k(i,j)\in[b]$ in row $k$. 
Each bucket $\ell$ in row $k$ then maintains $T_{k,\ell}=\sum_{(i,j):g_k(i,j)=\ell} A_{i,j}h_k(i,j)$, which is a linear combination of the entries assigned to the bucket along with the random signs for the entries. 
Then for each row $k$, the estimator for $A_{i,j}$ is $T_{k,g_k(i,j)}h_k(i,j)$, which is the value in the bucket of row $k$ assigned to $A_{i,j}$, rescaled by the random sign. 
Finally, the estimator for $A_{i,j}$ by the CountSketch data structure is the median of the estimators of $A_{i,j}$ across all rows. 
It can be seen that $r=\O{\log n}$ and $b=\O{\frac{1}{\eps^2}}$ suffices to estimate each entry of $A_{i,j}$ within an additive $\frac{\eps}{4}\norm{\A}_F$ factor with high probability. 
Thus if $A_{i,j}\ge\eps\norm{\A}_F$, its estimated value will exceed $\frac{\eps}{2}\norm{\A}_F$ and will be output by CountSketch given an accurate estimation of $\norm{\A}_F$, such as by AMS. 

\vskip 0.2in
We require the following definition of total variation distance to bound the difference between two probability distributions, such as the ``ideal'' sampling distributions compared to the distributions provided by our algorithms. 
\begin{definition}[Total variation distance]
Let $\mu,\nu$ be two probability distributions on a finite domain $\Omega$. 
Then the total variation distance between $\mu$ and $\nu$ is defined as $\tvd(\mu,\nu)=\frac{1}{2}\sum_{x\in\Omega}|\mu(x)-\nu(x)|$. 
\end{definition}

\section{$L_{2,2}$ Sampler}
\seclab{sec:l2:sampler}
In this section, we first describe a turnstile streaming algorithm that takes a matrix $\A\in\mathbb{R}^{n\times d}$ that arrives as a data stream and post-processing query access to a matrix $\P\in\mathbb{R}^{d\times d}$, and outputs an index $i\in[n]$ of a row of $\A\P$ sampled with probability roughly $\frac{\norm{\A_i\P}_2^2}{\norm{\A\P}_F^2}$. 

\paragraph{High level idea.}
First suppose we only wanted to sample a row $i$ of $\A\in\mathbb{R}^{n\times d}$ with probability roughly $\frac{\norm{\A_i}_2^2}{\norm{\A}_F^2}$. 
By multiplying each row $\A_i$ with a random scaling factor $\frac{1}{\sqrt{t_i}}$, where $t_i\in[0,1]$ is chosen independently and uniformly at random, the probability that $\frac{1}{t_i}\norm{\A_i}_2^2\ge\norm{\A}_F^2$ is precisely the probability that $t_i\le\frac{\norm{\A_i}_2^2}{\norm{\A}_F^2}$, which is the desired probability of sampling row $i$. 

Now suppose only one row $\A_i$ satisfies $\frac{1}{t_i}\norm{\A_i}_2^2\ge\norm{\A}_F^2$, so that we would like to output $\A_i$. 
If we stored all rows of $\A$ as well as all scaling factors $t_j$, then we could identify and output this row, but the required space would be linear in the input size. 
Instead, we hash all scaled rows $\frac{1}{t_j}\norm{\A_j}$ to a number of buckets in a CountSketch data structure. 
Observe that if $\frac{1}{t_i}\norm{\A_i}_2^2\ge\norm{\A}_F^2$ for only one index $i$, then $\frac{1}{\sqrt{t_i}}\A_i$ must also be the scaled row with the largest norm. 
Moreover, it turns out that the mass of $\sum_{j=1}^n\frac{1}{t_j}\norm{\A_j}_2^2$ is dominated by a small number of rows.  
Hence with a sufficiently large number of buckets, the scaled row $i$ is the heavy hitter with the largest norm among all the heavy hitters of the scaled rows and so CountSketch will ideally identify the row $i$. 

This approach can fail due to two reasons. 
The first potential issue is if the accuracy of CountSketch does not suffice to identify the row $\A_i$ due to the noise from the tail of the mass of $\sum_{j=1}^n\frac{1}{t_j}\norm{\A_j}_2^2$. 
That is, if the noise of the tail due to the selection of the scaling factors $t_j$ prevents CountSketch from successfully identifying the heavy hitters, then this approach will fail. 
We can run a statistical test to identify when the noise is too large and preemptively abort accordingly. 
Moreover, if the CountSketch data structure maintains enough buckets, then the noise being sufficiently small happens with probability $\Omega(\eps)$, so we can run $\O{\frac{1}{\eps}}$ instances of the algorithm in parallel and take the first instance that does not abort. 

A separate issue is resolving the assumption that only one row $\A_i$ satisfies $\frac{1}{t_i}\norm{\A_i}_2^2\ge\norm{\A}_F^2$. 
As it turns out, many rows can exceed this threshold, but if we instead require $\frac{1}{t_i}\norm{\A_i}_2^2\ge\frac{1}{\eps}\norm{\A}_F^2$, then the probability that some row exceeds this threshold is $\Theta(\eps)$. 
The probability that multiple rows exceed this higher threshold is now $\O{\eps^2}$. 
Our algorithm outputs the row with the largest norm when some row exceeds the threshold, so in the case where multiple rows exceed the higher threshold we attribute the output to possible sampling probability perturbation. 
Hence the probability that multiple rows exceed the higher threshold only slightly perturbs the sampling probability of each row by a $(1\pm\eps)$ factor. 
Thus we can again repeat $\O{\frac{1}{\eps}}$ times until some row $\A_i$ satisfies $\frac{1}{t_i}\norm{\A_i}_2^2\ge\norm{\A}_F^2$. 
Similarly, if the error from CountSketch causes an inaccurate estimation of the row with the largest norm, then we might think the heaviest row does not exceed the threshold when it does in reality or vice versa. 
Fortunately, this only occurs when the row with the largest norm is very close to the threshold, which we again show only causes the sampling probability of each row to perturb by a $(1\pm\eps)$ factor. 

For technical reasons, we further increase the threshold and thus run a larger number of instances in parallel to avoid failure. 
We note that although CountSketch successfully identifies the row $i$, it can only output a noisy perturbation of $\A_i$. 
That is, it can only output some row $\r=\A_i+\v$, where the noisy component $\v$ satisfies $\norm{\v}_2\le\eps\norm{\A_i}_2$. 

Finally, we note that these procedures are all performed through linear sketches and that each bucket stores aggregate rows of the matrix $\A$. 
Thus if we had a stream of updates to the matrix $\A\P$, the resulting data structure would be equivalent to maintaining the data structure on a stream of updates to the matrix $\A$, and then multiplying each row of the data structure by $\P$ post-processing. 
Hence we can also sample rows of $\A\P$ with probabilities proportional to $\norm{\A_i\P}_2^2$.  

\subsection{Streaming Algorithms with Post-Processing}
We require generalizations of the celebrated AMS~\cite{AlonMS99} and CountSketch~\cite{CharikarCF04} algorithms to handle Frobenius norm estimation of $\A\P$ and to output the rows of $\A\P$ whose norm exceed a certain fraction of the total Frobenius norm, respectively. 
These generalizations are streaming algorithms that perform their desired function in low space even though query access to $\P$ is only provided after the stream ends. 

\begin{algorithm}[!htb]
\caption{Basic algorithm that estimates $\norm{\A\P}_F$, where $\P$ is a post-processing matrix}
\alglab{alg:ams:basic}
\begin{algorithmic}[1]
\Require{Matrix $\A\in\mathbb{R}^{n\times d}$, query access to matrix $\P\in\mathbb{R}^{d\times d}$ after the stream ends, constant parameter $\eps>0$.}
\Ensure{$(1+\eps)$-approximation of $\norm{\A\P}_F$.}
\State{Let $h_i\in\{-1,+1\}$ be $4$-wise independent for $i\in[n]$.}
\State{Let $\v\in\mathbb{R}^{1\times d}$ be a vector of zeros.}
\State{\textbf{Streaming Stage:}}
\For{each update $\Delta_t$ to entry $A_{i,j}$}
\State{Add $\Delta_t\cdot h_i$ to $v_j$.}
\EndFor
\State{\textbf{Processing $\P$ Stage:}}
\State{Output $\norm{\v\P}_2$.}
\end{algorithmic}
\end{algorithm}

We give in \algref{alg:ams:basic} the generalization of the AMS~\cite{AlonMS99} algorithm that estimates $\norm{\A\P}_F^2$, where $\A$ arrives in a stream and post-processing query access to $\P$ is given after the stream ends. 
Moreover, \algref{alg:ams:basic} is a linear sketch, so it can also be used to estimate $\norm{\A\P-\M}_F^2$ for a second arbitrary post-processing matrix $\M\in\mathbb{R}^{n\times d}$. 
\begin{lemma}
\lemlab{lem:ams}
Given a constant $\eps>0$, there exists a one-pass streaming algorithm \ams{} that takes updates to entries of a matrix $\A\in\mathbb{R}^{n\times d}$, as well as query access to post-processing matrices $\P\in\mathbb{R}^{d\times d}$ and $\M\in\mathbb{R}^{n\times d}$ that arrive after the stream, and outputs a quantity $\hat{F}$ such that $(1-\eps)\norm{\A\P-\M}_F\le\hat{F}\le(1+\eps)\norm{\A\P-\M}_F$. 
The algorithm uses $\O{\frac{d}{\eps^2}\log^2 n}$ bits of space and succeeds with high probability. 
\end{lemma}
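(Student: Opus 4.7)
\medskip

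\noindent\textbf{Proof plan.} The algorithm shown in \algref{alg:ams:basic} already handles the case $\M=\bzero$ via one sketch; the plan is to first observe that the sketch is linear, then wrap it in the standard AMS mean-of-means-and-median amplification to obtain the claimed $(1\pm\eps)$ guarantee with high probability, and finally bolt on the post-processing subtraction of $\M$. Concretely, during the stream I maintain, for each of $L=\O{\frac{1}{\eps^2}\log n}$ independent copies indexed by $\ell\in[L]$, a row vector $\v^{(\ell)}\in\mathbb{R}^{1\times d}$ together with $4$-wise independent signs $\{h_i^{(\ell)}\}_{i\in[n]}$ specified by a short seed; an update $\Delta_t$ to $A_{i_t,j_t}$ adds $\Delta_t h_{i_t}^{(\ell)}$ to $v^{(\ell)}_{j_t}$, so at the end of the stream $\v^{(\ell)}=\sum_{i=1}^n h_i^{(\ell)}\A_i$. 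In the post-processing stage, once $\P$ and $\M$ are available, I compute
\[
\u^{(\ell)} \;=\; \v^{(\ell)}\P \;-\; \sum_{i=1}^n h_i^{(\ell)}\M_i \;=\; \sum_{i=1}^n h_i^{(\ell)}(\A_i\P-\M_i),
\]
and use $\|\u^{(\ell)}\|_2^2$ as a single estimator of $\|\A\P-\M\|_F^2$.

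The core step is to show that each $\|\u^{(\ell)}\|_2^2$ is an unbiased estimator of $\|\B\|_F^2$ with variance at most $C\|\B\|_F^4$ for an absolute constant $C$, where $\B:=\A\P-\M$. For unbiasedness, expanding $u^{(\ell)}_j{}^{2}=\sum_{i,i'}h_i^{(\ell)}h_{i'}^{(\ell)}B_{i,j}B_{i',j}$ and using pairwise independence collapses the sum to $\sum_i B_{i,j}^2$, and summing over $j$ gives $\Ex{\|\u^{(\ell)}\|_2^2}=\|\B\|_F^2$. For the variance bound, $4$-wise independence of the $h_i^{(\ell)}$ reduces $\Ex{\|\u^{(\ell)}\|_2^4}$ to a sum of three patterns of index pairings whose contributions are, respectively, $\sum_i\|\B_i\|_2^4\le\|\B\|_F^4$, $\|\B\|_F^4$, and $2\|\B^\top\B\|_F^2\le 2\|\B\|_F^4$, yielding $\Var{\|\u^{(\ell)}\|_2^2}\le 3\|\B\|_F^4$.

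Given this per-copy guarantee, standard AMS amplification finishes the job: average $\O{\frac{1}{\eps^2}}$ independent copies to get a random variable whose standard deviation is at most $\frac{\eps}{3}\|\B\|_F^2$, then take the median of $\O{\log n}$ such averages to boost from constant to $1-\frac{1}{\poly(n)}$ success by Chebyshev plus Chernoff. Outputting the square root of the median-of-means gives $\hat F$ with $(1-\eps)\|\B\|_F\le \hat F\le(1+\eps)\|\B\|_F$ with high probability. For the space bound, each sketch stores a $d$-dimensional vector over $\O{\log n}$-bit entries (recall $m,M=\poly(n)$), the seed for the $4$-wise independent signs fits in $\O{\log n}$ bits, and we run $\O{\frac{1}{\eps^2}\log n}$ copies, giving a total of $\O{\frac{d}{\eps^2}\log^2 n}$ bits; the matrices $\P$ and $\M$ are only queried during post-processing and so are not counted against the streaming space.

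The only mildly tricky point I anticipate is the variance computation: writing out the fourth moment carefully and collecting the surviving pairings is where a routine mistake is easy to make, but it is otherwise a direct generalization of the classical AMS analysis from a single vector update stream to the matrix setting with linear post-processing by $\P$ and additive correction by $\M$, both of which enter the estimator linearly after the stream and therefore do not interfere with the sketch linearity that unbiasedness and the variance bound rely on.
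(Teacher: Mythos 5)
Your proof is correct and follows essentially the same approach as the paper: maintain the linear sketch $\v^{(\ell)}=\sum_i h_i^{(\ell)}\A_i$, exploit linearity so that $\v^{(\ell)}\P-\sum_i h_i^{(\ell)}\M_i$ is exactly the AMS sketch of $\B=\A\P-\M$, and apply mean-of-means-and-median amplification. The only cosmetic difference is that you prove the unbiasedness and variance bound directly for $\|\B\|_F^2$ at the matrix level, whereas the paper argues column-by-column by citing the classical AMS guarantee and union-bounding over the $d$ columns; both yield the same $\O{\frac{d}{\eps^2}\log^2 n}$ space.
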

\begin{proof}
Recall that the classic AMS estimator~\cite{AlonMS99} takes a vector $\f\in\mathbb{R}^{n\times 1}$ that arrives as a data stream and outputs an estimate $\hat{f}$ such that $(1-\eps)\norm{\f}_2\le\hat{f}\le(1+\eps)\norm{\f}_2$. 
The algorithm generates $4$-wise independent signs $s_i\in\{-1,+1\}$ for each $i\in[n]$ and maintains $\sum_{i=1}^n s_if_i$ in the stream. 
At the end of the stream, $\left(\sum_{i=1}^n s_if_i\right)^2$ is a good estimator for $\norm{\f}_2^2$. 
The algorithm can then be run $\O{\frac{1}{\eps^2}}$ times in parallel, taking the mean of the instances to decrease the variance and output a $(1+\eps)$-approximation for $\norm{\f}_2$. 
Taking the median of $\O{\log n}$ estimators further increases the probability of success to $1-\frac{1}{\poly(n)}$. 

For a matrix $\A\in\mathbb{R}^{n\times d}$ that arrives as a data stream, the Frobenius norm of $\A$ can then be approximated by running a classic AMS estimator for each of the $d$ columns of $\A$. 
That is, each row of $\A$ can be given a random sign and the algorithm stores the sum of the signed rows in the stream. 
The Frobenius norm estimator is then the two norm of the stored row. 
Since each of the classic AMS estimator for the $d$ columns of $\A$ succeeds with high probability, then by a union bound over the $d\ll n$ columns, the estimator is a $(1+\eps)$-approximation of the Frobenius norm of $\A$ with high probability. 
Moreover, since an entire row is stored, the algorithm requires an additional factor of $d$ space. 
This is less efficient than using a Frobenius norm estimator by hashing each entry of $\A$ to a separate sign and simply storing the sum of the scaled entries, but it is more flexible. 
In particular, we can apply linear transformations to the stored row to simulate right multiplication on $\A$. 

Let $\P\in\mathbb{R}^{d\times d}$ be a given post-processing matrix. 
To show that \algref{alg:ams:basic} provides a good approximation to $\norm{\A\P}_F$, it suffices to argue that running an AMS estimator for the $d$ columns of $\A$ and then multiplying by $\P$ afterwards is equivalent to running an AMS estimator for each of the $d$ columns of $\A\P$. 
Observe that running an AMS estimator for $\A\P$ simply requires multiplying each row of $\A\P$ by a random sign and adding the resulting signed rows. 
This is equivalent to multiplying each row of $\A$ by a random sign, adding the resulting signed rows of $\A$, and then multiplying by $\P$, which is exactly what \algref{alg:ams:basic} does. 
In other words, the AMS estimator is a linear transformation that maps from $\A$ to $\S\A$ for some sketching matrix $\S$, but seeing rows of $\A$ and then multiplying by $\P$ results in the same data structure as seeing the rows of $\A\P$, since $\S(\A\P)=(\S\A)\P$ by associativity. 
Finally, note that if we want to estimate $\norm{\A\P-\M}_F$ given a post-processing matrix $\M$, then we can compute $\S(\A\P-\M)=\S\A\P-\S\M$ for the AMS estimator, given $\S\A\P$ along with $\M$ and the sketching matrix $\S$. 

Each estimator stores a row with $d$ entries each using $\O{\log n}$ bits. 
The estimator is repeated $\O{\frac{1}{\eps^2}\log n}$ times to give a $(1+\eps)$-approximation and to obtain high probability guarantees. 
Thus, the algorithm requires $\O{\frac{d}{\eps^2}\log^2 n}$ bits of space in total. 
\end{proof}

We give in \algref{alg:countsketch:basic} the generalization of the CountSketch~\cite{CharikarCF04} algorithm that outputs all rows $i$ of $\A\P$ such that $\norm{\A_i\P}_2\ge\eps\norm{\A\P}_F$, where $\A$ arrives in a stream and post-processing query access to $\P$ is given after the stream ends. 
We call a row $i$ a \emph{heavy row} if $\norm{\A_i\P}_2\ge\eps\norm{\A\P}_F$. 
\begin{algorithm}[!htb]
\caption{Basic algorithm that outputs heavy rows of $\norm{\A\P}_F$, where $\P$ is a post-processing matrix}
\alglab{alg:countsketch:basic}
\begin{algorithmic}[1]
\Require{Matrix $\A\in\mathbb{R}^{n\times d}$, query access to matrix $\P\in\mathbb{R}^{d\times d}$ after the stream ends, constant parameter $\eps>0$.}
\Ensure{Slight perturbations of the rows $\A_i\P$ with $\norm{\A_i\P}_2\ge\eps\norm{\A\P}_F$.}
\State{$r\gets\Theta(\log n)$ with a sufficiently large constant.}
\State{$b\gets\Omega\left(\frac{1}{\eps^2}\right)$ with a sufficiently large constant.}
\State{Let $T$ be an $r\times b$ table of buckets, where each bucket stores an $\mathbb{R}^{1\times d}$ row, initialized to zeros.}
\State{Let $s_{i,j}\in\{-1,+1\}$ be $4$-wise independent for $i\in[n]$, $j\in[r]$.}
\State{Let $h_i:[n]\to[b]$ be $4$-wise independent for $i\in[r]$.}
\State{\textbf{Streaming Stage:}}
\For{each update $\Delta_t$ to entry $A_{i,j}$}
\For{each $k=1$ to $r$}
\State{Add $\Delta_t\cdot s_{i,k}$ to entry $j$ of the vector in bucket $h_k(i)$ of row $k$.}
\EndFor
\EndFor
\State{Let $\v_{k,\ell}$ be the vector in row $k$, bucket $\ell$ of $T$ for $k\in[r],\ell\in[b]$.}
\State{\textbf{Processing $\P$ Stage:}}
\For{$k\in[r],\ell\in[b]$}
\State{$\v_{k,\ell}\gets\v_{k,\ell}\P$}
\EndFor
\State{On query $i\in[n]$, report $\median_{k\in[r]}\norm{\v_{k,h_k(i)}}_2$.}
\end{algorithmic}
\end{algorithm}

We first show that if $\X=\A\P$ and the stream updates the entries of $\X$ rather than the entries of $\A$, then we can obtain a good approximation to the heavy rows. 
Equivalently, the statement reads that if $\P=\I$ is the identity matrix, then \algref{alg:countsketch:basic} finds the heavy rows of $\A\P$. 
We will ultimately show \algref{alg:countsketch:basic} finds the heavy rows of $\A\P$ for general $\P$ by using the same linear sketching argument as in the proof of \lemref{lem:ams}.

For a matrix $\X\in\mathbb{R}^{n\times d}$, recall that $\X_{tail(b)}$ denotes $\X$ with the $b$ rows of $\X$ with the largest norm set to zeros. 
The following lemma shows that the $\frac{2}{\eps^2}$ rows with the largest norm output by \algref{alg:countsketch:basic} forms a good estimate of $\X$, even with respect to the stronger Frobenius tail error.  
\begin{lemma}
\lemlab{lem:countsketch:tail}
For any matrix $\X\in\mathbb{R}^{n\times d}$, \algref{alg:countsketch:basic} outputs an estimate $\widehat{\X_i}$ for each row $\X_i$, which together form an estimate matrix $\widehat{\X}$. 
Then with high probability, for all $i\in[n]$, there exists a vector $\v_i$ such that $\widehat{\X_i}=\X_i+\v_i$ and $\norm{\v_i}_2\le\eps\norm{\X_{\taileps}}_F$. 
Consequently, $\left|\norm{\X_i}_2-\norm{\widehat{\X_i}}_2\right|\le\eps\norm{\X_{\taileps}}_F$ for all $i\in[n]$. 
Moreover, $\norm{\X_{\taileps}}_F\le\norm{\X-\tilde{\X}}_F\le 2\norm{\X_{\taileps}}_F$ with high probability, where $\tilde{\X}=\widehat{\X}-\widehat{\X}_{\taileps}$ denotes the top $\frac{2}{\eps^2}$ rows of $\widehat{\X}$ by norm. 
\end{lemma}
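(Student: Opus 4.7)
The plan is to lift the classical CountSketch analysis from scalars to $d$-dimensional row vectors. Fix a row $i \in [n]$ and a level $k \in [r]$. Expanding the bucket contents gives $s_{i,k}\v_{k,h_k(i)} = \X_i + N_{i,k}$, where the noise vector is $N_{i,k} := \sum_{j \ne i,\, h_k(j)=h_k(i)} s_{i,k}s_{j,k}\X_j$. Because the signs are $4$-wise independent, conditioning on the hash $h_k$ makes all cross terms in $\norm{N_{i,k}}_2^2$ vanish in expectation, so $\mathbb{E}_s[\norm{N_{i,k}}_2^2 \mid h_k] = \sum_{j \ne i,\, h_k(j)=h_k(i)}\norm{\X_j}_2^2$.

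To control this sum I would partition the remaining rows into a ``head'' $H$ consisting of the $2/\eps^2$ rows with largest Euclidean norm and a complementary ``tail'' $T$. Taking $b$ a sufficiently large constant multiple of $1/\eps^2$, a union bound over $H$ ensures that the event ``some head row collides with $i$ under $h_k$'' occurs with probability at most $1/4$. Conditioned on no head collision, $\mathbb{E}_{h_k}\!\left[\sum_{j \in T,\, h_k(j)=h_k(i)}\norm{\X_j}_2^2\right] = \norm{\X_{\taileps}}_F^2/b$, and a further Markov bound yields $\norm{N_{i,k}}_2 \le \eps\norm{\X_{\taileps}}_F$ with probability at least $1/2$. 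Amplifying over $r = \Theta(\log n)$ independent levels, a Chernoff bound guarantees that, with probability $1 - 1/\poly(n)$, a strict majority of the levels are ``good'' for row $i$. Since any two good estimates lie within Euclidean distance $2\eps\norm{\X_{\taileps}}_F$ of each other while a bad estimate is within this distance only of other bad estimates, the algorithm identifies a good level $k^*$ by picking any estimate whose vector value lies within $2\eps\norm{\X_{\taileps}}_F$ of a strict majority of the other $r-1$ estimates. Defining $\widehat{\X_i} := s_{i,k^*}\v_{k^*,h_{k^*}(i)}$ and absorbing constants into $b$ yields $\widehat{\X_i} = \X_i + \v_i$ with $\norm{\v_i}_2 \le \eps\norm{\X_{\taileps}}_F$; a union bound over $i \in [n]$ proves the first claim, and the triangle inequality $\bigl|\norm{\X_i}_2 - \norm{\widehat{\X_i}}_2\bigr| \le \norm{\v_i}_2$ gives the norm-estimate consequence.

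For the ``moreover'' part, set $\tau := \norm{\X_{\taileps}}_F$. The lower bound $\norm{\X - \tilde{\X}}_F \ge \tau$ is immediate, because $\X_{\taileps}$ minimizes $\norm{\X - Z}_F$ among all zeroings $Z$ of $2/\eps^2$ rows. For the upper bound, let $S$ index the true top $2/\eps^2$ rows and $S'$ the top $2/\eps^2$ rows of $\widehat{\X}$ by norm, and set $A := S \setminus S'$ and $B := S' \setminus S$, so $|A| = |B| \le 2/\eps^2$. Since $\norm{\widehat{\X_i}}_2 \le \norm{\widehat{\X_j}}_2$ for every $i \in A$ and $j \in B$, the per-row norm bound yields $\norm{\X_i}_2 \le \norm{\X_j}_2 + 2\eps\tau$; combined with the a priori ordering $\norm{\X_i}_2 \ge \norm{\X_j}_2$ (from $A \subseteq S$ and $B \subseteq T$), under a norm-sorted bijection $\pi: A \to B$ one has $\norm{\X_i}_2^2 - \norm{\X_{\pi(i)}}_2^2 \le 2\eps\tau\bigl(\norm{\X_i}_2 + \norm{\X_{\pi(i)}}_2\bigr)$ termwise. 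Summing this inequality, applying Cauchy--Schwarz together with $|A| \le 2/\eps^2$ and $\sum_{j \in B}\norm{\X_j}_2^2 \le \tau^2$, produces a quadratic inequality in $y := \sqrt{\sum_{i \in A}\norm{\X_i}_2^2}$ of the form $y^2 - \tau^2 \le O(\tau)\sqrt{y^2+\tau^2}$, which I solve to bound $y^2 = O(\tau^2)$. Since $\norm{\X - \tilde{\X}}_F^2 = \tau^2 + y^2 - \sum_{j \in B}\norm{\X_j}_2^2 \le \tau^2 + y^2$, the claimed factor-$2$ bound follows after choosing the constant in $b$ large enough to tighten the leading $2\eps$ coefficient.

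The main obstacle is this last step: naively, $|A|$ pointwise errors of size $O(\eps\tau)$ aggregate to $O(\tau^2/\eps)$, which is vacuous. The key is to avoid bounding $\norm{\X_i}_2$ against $\tau$ one row at a time, and instead pair $A$ with $B$ via a norm-sorted bijection, exploit both the estimated ordering (from CountSketch) and the true ordering (from $A \subseteq S$, $B \subseteq T$) to cancel the leading contributions, and close the self-reference on $\sum_{i \in A}\norm{\X_i}_2^2$ via the quadratic inequality above.
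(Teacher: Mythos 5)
Your per-row analysis tracks the paper's (condition out head collisions, bound the expected tail noise in a bucket, median amplification over $\Theta(\log n)$ levels, with Markov in place of Chebyshev), and is in one respect more careful: the lemma needs a \emph{vector} estimate $\widehat{\X_i}$, and your step of selecting a level whose bucket vector agrees with a majority --- rather than the level of median scalar norm, which is what \algref{alg:countsketch:basic} reports and which need not be a good level --- actually supplies one. One small correction: a bad estimate \emph{can} be within $2\eps\norm{\X_{\taileps}}_F$ of a good one; the valid argument is that any vector agreeing with a majority is, by pigeonhole, within $2\eps\norm{\X_{\taileps}}_F$ of some good estimate and hence within $3\eps\norm{\X_{\taileps}}_F$ of $\X_i$, with the constant absorbed into $b$.

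Your treatment of the ``moreover'' upper bound departs genuinely from the paper's, and I believe the departure is necessary. The paper bounds $\norm{\X-\tilde{\X}}_F$ as if only the $\frac{2}{\eps^2}$ altered rows contribute, each by $\eps\norm{\X_{\taileps}}_F$, but $\X-\tilde{\X}$ agrees with $\X$ on all rows outside $S'$, and the ``missed heavy'' set $A=S\setminus S'$ can contain $\Theta(1/\eps^2)$ rows with norms near $\norm{\X_{\taileps}}_F$, so that a naive row-by-row bound over $A$ is off by a factor of $\Theta(1/\eps)$. Your pairing of $A$ with $B=S'\setminus S$, using both orderings to conclude that all paired norms agree to $2\eps\norm{\X_{\taileps}}_F$, then Cauchy--Schwarz and the self-referential quadratic, is exactly how the leading terms have to cancel; this is a repair, not merely an alternative. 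You also correctly flag that with per-row error exactly $\eps\norm{\X_{\taileps}}_F$ the resulting constant is closer to $4$ than $2$, and that the factor $2$ requires enlarging the constant in $b$. Indeed this is load-bearing: with $\eps=1$, $\X=(3,3,1)^\top\in\mathbb{R}^{3\times 1}$, and $\widehat{\X}=(2,2,2)^\top$ (all per-row errors equal to $\eps\norm{\X_{\taileps}}_F=1$), the CountSketch could legitimately report $S'=\{1,3\}$, giving $\norm{\X-\tilde{\X}}_F=\sqrt{11}>2\norm{\X_{\taileps}}_F$.
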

\begin{proof}
For a fixed $i\in[n]$ and row $k$ in the CountSketch table $T$, let $h_k(i)$ be the bucket to which $\X_i$ hashes.
Let $\mathcal{E}_1$ be the event that the $\frac{2}{\eps^2}$ rows with the highest norms excluding $\X_i$ are not hashed to $h_k(i)$. 
For $b=\Omega\left(\frac{1}{\eps^2}\right)$ with sufficiently large constant, $\mathcal{E}_1$ occurs with probability at least $\frac{9}{10}$. 
Let $\v_i$ be the sum of the vectors hashed to $h_k(i)$, excluding $\X_i$, so that the vector stored in bucket $h_k(i)$ is $\widehat{\X_i}=\X_i+\v_i$. 
Conditioned on $\mathcal{E}_1$, the expected squared norm of the noise in bucket $h_k(i)$ can be bounded by $\Ex{\norm{\v_i}_2^2}\leq\frac{\eps^2}{100}\norm{\X_{\taileps}}_F^2$ for sufficiently large $b$. 
Note that Jensen's inequality implies $\Ex{\norm{\v_i}_2}\le\frac{\eps}{10}\norm{\X_{\taileps}}_F$ and we also have $\Var(\norm{\v_i}_2)\le\frac{\eps^2}{100}\norm{\X_{\taileps}}_F^2$. 
Thus by Chebyshev's inequality, 
\[\PPr{\norm{\v_i}_2\ge\eps\norm{\X_{\taileps}}_F}\le\frac{1}{81},\]
conditioning on $\mathcal{E}_1$. 
Hence, 
\[\PPr{\norm{\v_i}_2\ge\eps\norm{\X_{\taileps}}_F}\le\frac{1}{81}+\frac{1}{10}.\]
By repeating for each of the $r=\Theta(\log n)$ rows and taking the median, we have from triangle inequality that $\left|\norm{\X_i}_2-\norm{\widehat{\X_i}}_2\right|\le\eps\norm{\X_{\taileps}}_F$ for all $i\in[n]$ with high probability, thus proving the first part of the claim.

For the second part of the claim, note that $\norm{\X_{\taileps}}_F\le\norm{\X-\tilde{\X}}_F$ trivially holds, since $\tilde{\X}$ is a matrix with at most $\frac{2}{\eps^2}$ nonzero rows, and $\X_{\taileps}$ has removed the $\frac{2}{\eps^2}$ rows of $\X$ with the largest mass from $\X$. 
Moreover, $\X-\tilde{\X}$ alters at most $\frac{2}{\eps^2}$ rows of $\X$, each by at most $\eps\norm{\X_{\taileps}}_F$. 
Thus,
\begin{align*}
\norm{\X-\tilde{\X}}_F\le\sqrt{\sum_{i=1}^{2/\eps^2}\left(\eps\norm{\X_{\taileps}}_F\right)^2}=\sqrt{2}\norm{\X_{\taileps}}_F.
\end{align*}
\end{proof}
Taking $b=\Theta\left(\frac{1}{\eps^2}\right)$ in \lemref{lem:countsketch:tail}, we have the following guarantees of $\countsketch$.
\begin{lemma}
\lemlab{lem:countsketch}
Given a constant $b>0$, there exists a one-pass streaming algorithm \countsketch{} that takes updates to entries of a matrix $\A\in\mathbb{R}^{n\times d}$, as well as query access to a post-processing matrix $\P\in\mathbb{R}^{d\times d}$ that arrives after the stream, and outputs all indices $i$ such that $\norm{\A_i\P}_2\ge\frac{1}{\sqrt{b}}\norm{\A\P}_F$. 
For each index $i$, $\countsketch$ also outputs a vector $\r$ such that $\r=\A_i\P+\v_i$ and $\norm{\v_i}_2\le\frac{1}{\sqrt{b}}\norm{(\A\P)_{tail(b)}}_F$. 
The algorithm uses $\O{db\log^2 n}$ bits of space and succeeds with high probability. 
\end{lemma}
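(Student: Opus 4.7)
The plan is to combine \lemref{lem:countsketch:tail} with the linear-sketching observation from the proof of \lemref{lem:ams}. I would instantiate \lemref{lem:countsketch:tail} with $\eps = 1/\sqrt{b}$ (absorbing the constant $2$) and view the CountSketch data structure of \algref{alg:countsketch:basic} as a fixed linear map $\S$ acting on the rows of any $n \times d$ input, so that the vectors stored in its buckets are exactly the rows of $\S\X$. Feeding updates to $\A$ into the sketch and then right-multiplying every stored $d$-dimensional bucket vector by $\P$ in post-processing yields $(\S\A)\P = \S(\A\P)$ by associativity, so the resulting table is identical to what one would have obtained by streaming $\A\P$ directly. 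Consequently the tail guarantees of \lemref{lem:countsketch:tail} transfer verbatim to $\X = \A\P$.

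Given this, for each $i \in [n]$ I would estimate $\norm{\A_i\P}_2$ in post-processing by the median across the $r = \Theta(\log n)$ rows of $\norm{\v_{k,h_k(i)}}_2$, and retain the bucket vector $\r = \v_{k^\star, h_{k^\star}(i)}$ from the row achieving the median as the noisy row. By \lemref{lem:countsketch:tail}, $\r = \A_i\P + \v_i$ with $\norm{\v_i}_2 \le \eps \norm{(\A\P)_{tail(2/\eps^2)}}_F = \frac{1}{\sqrt{b}}\norm{(\A\P)_{tail(b)}}_F$, which is the claimed perturbation bound. To select which indices to report, I would run \ams{} in parallel to obtain a constant-factor estimate $\hat{F}$ of $\norm{\A\P}_F$ (invoking \lemref{lem:ams} with the same post-processing matrix $\P$) and emit every $i$ whose estimated norm exceeds an appropriately set threshold slightly below $\frac{1}{\sqrt{b}}\hat{F}$; the additive guarantee $\bigl|\norm{\A_i\P}_2 - \norm{\r}_2\bigr| \le \frac{1}{\sqrt{b}}\norm{(\A\P)_{tail(b)}}_F \le \frac{1}{\sqrt{b}}\norm{\A\P}_F$ then ensures that every true heavy index is retained with high probability.

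For the space bound, the CountSketch table consists of $r \cdot b = \O{b \log n}$ buckets, each holding a $d$-dimensional row whose entries fit in $\O{\log n}$ bits (since the entries of $\A$ and $\P$ are polynomially bounded and at most $\poly(n)$ updates are aggregated in each bucket), giving $\O{db \log^2 n}$ bits; the \ams{} sketch of \lemref{lem:ams} contributes $\O{db \log^2 n}$ bits at constant accuracy, so the total fits the stated bound. The one step I would flag as nontrivial is the commutation of the post-processing multiplication by $\P$ with the sketching operator — but this follows verbatim from the linear-sketch argument already used in \lemref{lem:ams}, so there is no new analytic content; the remainder is a direct translation of the standard CountSketch guarantees into the bucketwise vector setting via \lemref{lem:countsketch:tail}.
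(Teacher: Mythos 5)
Your proposal is correct and follows essentially the same route as the paper: instantiate \lemref{lem:countsketch:tail} with $b=\Theta(1/\eps^2)$, use the linear-sketch identity $\S(\A\P)=(\S\A)\P$ (as in the proof of \lemref{lem:ams}) to justify the post-processing multiplication by $\P$, use \ams{} for the Frobenius-norm threshold, and count the space of the $r\times b$ table of $d$-dimensional buckets. The only quibble is that with $\eps=1/\sqrt{b}$ your bound involves $\norm{(\A\P)_{tail(2b)}}_F$ rather than $\norm{(\A\P)_{tail(b)}}_F$, but since removing more rows only decreases the tail norm this is an inequality in the favorable direction and does not affect the claim.
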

\begin{proof}
Correctness follows from \lemref{lem:countsketch:tail} providing an accurate estimate of the norms of the heavy rows and \lemref{lem:ams} providing an accurate estimate of the Frobenius norm of $\A\P$. 
The space complexity results from using a table with $\Theta(\log n)$ rows and $b$ buckets in each row. 
Furthermore, each bucket consists of a vector of dimension $d$, whose entries are each represented using $\O{\log n}$ bits. 
Thus, the algorithm requires $\O{db\log^2 n}$ bits of space. 
\end{proof}

\subsection{$L_{2,2}$ Sampling Algorithm}
In this section, we give an algorithm for $L_{2,2}$ sampling that will ultimately be used to simulate adaptive sampling on turnstile streams.

\subsubsection{Algorithm Description}
Given subroutines that estimate $\norm{\A\P}_F$ and the heavy rows of $\A\P$, we implement our $L_{2,2}$ sampler in \algref{alg:l2:sampler}. 
Our algorithm first takes each row $\A_i$ of matrix $i$ and forms a row $\B_i=\frac{\A_i}{\sqrt{t_i}}$, where $t_i$ is a scaling factor drawn uniformly at random from $[0,1]$. 
Note that we have the following observation:
\begin{observation}
\obslab{obs:sample:prob}
For any value $\gamma>0$, $\PPr{\norm{\B_i\P}_2\ge\gamma\norm{\A\P}_F}=\frac{\norm{\A_i\P}^2_2}{\gamma^2\norm{\A\P}_F^2}$.
\end{observation}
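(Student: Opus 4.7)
The statement is an elementary probability computation once we unwind the definition of $\B_i$. The plan is to rewrite the event $\{\norm{\B_i\P}_2\ge\gamma\norm{\A\P}_F\}$ as an event about the single uniform random variable $t_i$, and then compute its probability directly from the CDF of the uniform distribution on $[0,1]$.

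First I would use linearity of matrix multiplication to observe that
\[
\norm{\B_i\P}_2 \;=\; \norm{\tfrac{1}{\sqrt{t_i}}\A_i\P}_2 \;=\; \tfrac{1}{\sqrt{t_i}}\,\norm{\A_i\P}_2,
\]
since $t_i\in[0,1]$ is a positive scalar. Then the threshold condition $\norm{\B_i\P}_2\ge\gamma\norm{\A\P}_F$ is equivalent to
\[
\sqrt{t_i}\;\le\;\frac{\norm{\A_i\P}_2}{\gamma\,\norm{\A\P}_F},
\]
which, after squaring both (nonnegative) sides, becomes $t_i\le\frac{\norm{\A_i\P}_2^2}{\gamma^2\norm{\A\P}_F^2}$. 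Since $t_i$ is uniform on $[0,1]$, the probability of this event is exactly $\frac{\norm{\A_i\P}_2^2}{\gamma^2\norm{\A\P}_F^2}$ whenever this quantity lies in $[0,1]$, which is the regime of interest (otherwise the probability is trivially $1$ and the claim should be read with the implicit min, matching the usage in the surrounding algorithm where $\gamma$ is chosen large enough that the ratio is less than $1$).

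There is essentially no obstacle here; the only thing to be careful about is the case $\norm{\A\P}_F=0$, which can be excluded as degenerate (then $\A_i\P=0$ for all $i$ and the event has probability $0$, consistent with the formula interpreted as $0/0\cdot\text{finite}$ by convention). The observation follows.
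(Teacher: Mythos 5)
Your proof is correct and follows essentially the same route as the paper: rewrite the threshold event as $t_i\le\frac{\norm{\A_i\P}_2^2}{\gamma^2\norm{\A\P}_F^2}$ and read off the probability from the uniform distribution on $[0,1]$. Your added remarks about the ratio exceeding $1$ and the degenerate case $\norm{\A\P}_F=0$ are harmless refinements the paper leaves implicit.
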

\begin{proof}
Since $\B_i=\frac{\A_i}{\sqrt{t_i}}$ and $t_i$ is drawn uniformly at random from $[0,1]$, then we have
\begin{align*}
\PPr{\norm{\B_i\P}_2\ge\gamma\norm{\A\P}_F}&=\PPr{\frac{\norm{\A_i\P}^2_2}{t_i}\ge\gamma^2\norm{\A\P}_F^2}\\
&=\PPr{t_i\le\frac{\norm{\A_i\P}^2_2}{\gamma^2\norm{\A\P}_F^2}}=\frac{\norm{\A_i\P}^2_2}{\gamma^2\norm{\A\P}_F^2}.
\end{align*}
\end{proof}
Intuitively, \obsref{obs:sample:prob} claims that by setting $T\propto\norm{\A\P}_F$, we can identify a row of $\B\P$ whose norm exceeds $T$ to effectively $L_{2,2}$ sample a row of $\A\P$. 
For technical reasons, we set $T=\sqrt{\frac{C\log n}{\eps}}\norm{\A\P}_F$. 
Our algorithm then uses \countsketch{} to find heavy rows of $\B\P$ and \ams{} to give an estimate $\widehat{F}$ of $\norm{\A\P}_F$ to determine whether there exists a row of $\B\P$ whose norm exceeds $T$. 

Our algorithm also uses \countsketch{} and a separate instance of \ams{} to compute $\widehat{S}$, which estimates the error in the tail of $\B\P$ and also indicates how accurate \countsketch is. 
If $\widehat{S}$ is large, then our estimations for each row of $\B\P$ from \countsketch{} may be inaccurate, so our algorithm must abort. 
Otherwise, if $\widehat{S}$ is sufficiently small, then our estimations for each row of $\B\P$ is somewhat accurate. 
Thus if the row of $\B\P$ with the largest norm exceeds $\sqrt{\frac{C\log n}{\eps}}\widehat{F}$, which is our estimation for $T$, then we output that particular row rescaled by $\sqrt{t_i}$ to recover the (noisy) original row of $\A\P$. 
 
\begin{algorithm}[!htb]
\caption{Single $L_{2,2}$ Sampler}
\alglab{alg:l2:sampler}
\begin{algorithmic}[1]
\Require{Matrix $\A\in\mathbb{R}^{n\times d}$ that arrives as a stream, matrix $\P\in\mathbb{R}^{d\times d}$ that arrives after the stream, approximation parameter $\eps>0$. }
\Ensure{Noisy row $\r$ of $\A\P$ sampled roughly proportional to the squared row norms of $\A\P$.}
\State{\textbf{Pre-processing Stage:}}
\State{$b\gets\Omega\left(\frac{1}{\eps^2}\right)$, $r\gets\Theta(\log n)$ with sufficiently large constants}
\State{For $i\in[n]$, generate independent scaling factors $t_i\in[0,1]$ uniformly at random.}
\State{Let $\B$ be the matrix consisting of rows $\B_i=\frac{1}{\sqrt{t_i}}\A_i$.}
\State{Let $\ams_1$ and $\ams_2$ track the Frobenius norms of $\A\P$ and $\B\P$, respectively.}
\State{Let $\countsketch$ be an $r\times b$ table, where each entry is a vector in $\mathbb{R}^d$, to find the heavy hitters of $\B$.}
\State{\textbf{Streaming Stage:}}
\For{each row $\A_i$}
\Comment{Presented in row-arrival model but also works for turnstile streams}
\State{Update $\countsketch$ with $\B_i=\frac{1}{\sqrt{t_i}}\A_i$.}
\State{Update linear sketch $\ams_1$ with $\A_i$.}
\State{Update linear sketch $\ams_2$ with $\B_i=\frac{1}{\sqrt{t_i}}\A_i$.}
\EndFor
\State{\textbf{Processing $\P$ Stage:}}
\State{After the stream, obtain matrix $\P$.}
\State{Multiply each vector $\v$ in each entry of the $\countsketch$ table by $\P$: $\v\gets\v\P$.}
\State{Multiply each vector $\v$ in $\ams_1$ by $\P$: $\v\gets\v\P$.}
\State{Multiply each vector $\v$ in $\ams_2$ by $\P$: $\v\gets\v\P$.}
\State{\textbf{Extraction Stage:}}
\State{Use $\ams_1$ to compute $\widehat{F}$ with $\norm{\A\P}_F\le\widehat{F}\le 2\norm{\A\P}_F$.}
\State{Extract the $\frac{2}{\eps^2}$ (noisy) rows of $\B\P$ that are estimated by $\countsketch$ to have the largest norms.}
\State{Let $\M\in\mathbb{R}^{d\times d}$ be the matrix with $\frac{2}{\eps^2}$-nonzero rows consisting of these top (noisy) rows.}
\State{Use $\ams_2$ to compute $\widehat{S}$ with $\norm{\B\P-\M}_F\le\widehat{S}\le 2\norm{\B\P-\M}_F$.}
\State{Let $\r_i$ be the (noisy) row of $\A\P$ in $\countsketch$ with the largest norm.}
\State{Let $C>0$ be some large constant so that the probability of failure is $\O{\frac{1}{n^{C/2}}}$.}
\If{$\widehat{S}>\sqrt{\frac{C\log n}{\eps}}\widehat{F}$ or $\norm{\r_i}_2<\sqrt{\frac{C\log n}{\eps}}\widehat{F}$}
\State{\Return FAIL.}
\Else
\State{\Return $\r=\sqrt{t_i}\r_i$.}
\EndIf
\end{algorithmic}
\end{algorithm}

\subsubsection{Analysis}
Conditioning on only a single row $\B_i\P$ satisfying $\norm{\B_i\P}_2\ge T=\sqrt{\frac{C\log n}{\eps}}\norm{\A\P}_F$, we could immediately identify this row if we had access to all rows of $\B\P$, as well as $\norm{\A\P}_F$, but this requires too much space. 
Instead, we use \countsketch{} to find the heavy rows of $\B\P$ and compare their norms to an estimate of $T$. 
However, if the error caused by \countsketch is high due to the randomness of the data structure, then the estimations of the row norms may be inaccurate and so our algorithm should abort. 
Our algorithm uses an estimator $\widehat{S}$ to compute the tail of $\B\P$, which bounds the error caused by \countsketch. 
We first show that the event of our algorithm aborts because the tail estimator $\widehat{S}$ is too large has small probability and is independent of the index $i$ and the value of $t_i$. 
\begin{lemma}
\lemlab{lem:tail:failure}
For each $j\in[n]$ and value of $t_j$,
\[\PPr{\widehat{S}>\sqrt{\frac{C\log n}{\eps}}\widehat{F}\,\Big|\,t_j}=\O{\eps}+\frac{1}{\poly(n)}.\]
\end{lemma}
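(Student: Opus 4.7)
The plan is to bound $\widehat{S}$ in terms of the Frobenius tail $\norm{(\B\P)_{\taileps}}_F$ and then show that this tail is $O\bigl(\sqrt{\log n/\eps}\bigr)\cdot\norm{\A\P}_F$ with probability $1-O(\eps)$ over the random scalings $\{t_i\}_{i\neq j}$. Since $\widehat{F}\ge\norm{\A\P}_F$ with high probability by \lemref{lem:ams}, the claim then follows by choosing the constant $C$ in the threshold large enough relative to the constants produced by the tail bound.

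The first reduction is deterministic given success of the sketches. By \lemref{lem:countsketch:tail} applied to $\X=\B\P$ (which is valid because CountSketch is a linear sketch and commutes with post-processing right multiplication by $\P$, exactly as in the proof of \lemref{lem:ams}), the top-$2/\eps^2$ reconstruction $\M$ satisfies $\norm{\B\P-\M}_F\le 2\norm{(\B\P)_{\taileps}}_F$, and \lemref{lem:ams} guarantees $\widehat{S}\le 2\norm{\B\P-\M}_F$; chaining these yields $\widehat{S}\le 4\norm{(\B\P)_{\taileps}}_F$ with probability $1-1/\poly(n)$.

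The heart of the proof is the tail bound itself. Fix $j\in[n]$ and the value of $t_j$. Since one may always take row $j$ to lie among the $2/\eps^2$ rows removed to form the tail, $\norm{(\B\P)_{\taileps}}_F^2\le\norm{(\B^{(-j)}\P)_{tail(2/\eps^2-1)}}_F^2$, where $\B^{(-j)}$ zeroes out row $j$. The scalings $\{t_i\}_{i\neq j}$ are independent of $t_j$, so I can treat the right-hand side as a function of i.i.d.\ uniform random variables. Writing $F=\norm{\A\P}_F^2$ and $Z_i=\norm{\A_i\P}_2^2/t_i$, I introduce two events. The first, $\mathcal{E}_1$, is that fewer than $2/\eps^2-1$ indices $i\ne j$ satisfy $Z_i>\eps F/4$; by \obsref{obs:sample:prob} the expected count is at most $4/\eps$, so Markov gives $\PPr{\neg\mathcal{E}_1\mid t_j}=O(\eps)$. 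The second, $\mathcal{E}_2$, is that $\sum_{i\ne j}Z_i\cdot\mathbf{1}[Z_i\le\eps F/4]\le (C'\log n/\eps)\cdot F$; a direct integration yields $\Ex{Z_i\cdot\mathbf{1}[Z_i\le\eps F/4]}=\norm{\A_i\P}_2^2\ln\bigl(\eps F/(4\norm{\A_i\P}_2^2)\bigr)$ when $\norm{\A_i\P}_2^2\le\eps F/4$, which is $O(\log n)\norm{\A_i\P}_2^2$ once nonzero row norms are confined to a $\poly(n)$ range (by standard discretization), and summing over $i\ne j$ gives expectation $O(F\log n)$, so Markov again produces $\PPr{\neg\mathcal{E}_2\mid t_j}=O(\eps)$.

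On $\mathcal{E}_1\cap\mathcal{E}_2$, the $(2/\eps^2-1)$-th largest value of $Z_i$ for $i\ne j$ is at most $\eps F/4$, so the truncated sum controlled by $\mathcal{E}_2$ dominates the tail, giving $\norm{(\B^{(-j)}\P)_{tail(2/\eps^2-1)}}_F^2\le (C'\log n/\eps)\cdot F$. Combining with the earlier reduction, $\widehat{S}\le 4\sqrt{(C'\log n)/\eps}\cdot\norm{\A\P}_F\le\sqrt{(C\log n)/\eps}\cdot\widehat{F}$ whenever $C\ge 16C'$, with total failure probability $O(\eps)+1/\poly(n)$. The main obstacle is the logarithmic factor arising in $\Ex{Z_i\cdot\mathbf{1}[Z_i\le\eps F/4]}$: bounding $\ln(\eps F/\norm{\A_i\P}_2^2)$ by $O(\log n)$ requires an a priori polynomial upper bound on the ratio of the largest to smallest nonzero $\norm{\A_i\P}_2$, obtained either by a standard discretization argument or by zeroing out rows whose squared norm is below a $1/\poly(n)$ threshold (perturbing the overall analysis by only a $1/\poly(n)$ amount).
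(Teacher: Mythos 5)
Your proof is correct and follows essentially the same route as the paper: reduce $\widehat{S}$ to the Frobenius tail of $\B\P$ via \lemref{lem:ams} and \lemref{lem:countsketch:tail}, then split the tail over $i\neq j$ (so the bound is independent of $t_j$) into the count of heavily scaled rows and the aggregate mass of lightly scaled rows, bounding each by Markov with expectations $\O{1/\eps}$ and $\O{F\log n}$ respectively. The only divergence is how the logarithmic integral is capped: the paper conditions on $t_i\ge 1/n^{C/2}$ (a $1/\poly(n)$ failure event), while you restrict the row norms to a polynomial range, which also works — and in fact no truncation is needed at all, since concavity of $x\mapsto x\ln(c/x)$ gives $\sum_i x_i\ln(c/x_i)\le F\ln(\eps n/4)=\O{F\log n}$ directly.
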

\begin{proof}
Let $\mathcal{E}_1$ be the event that:
\begin{enumerate}
\item
$\norm{\A\P}_F\le\widehat{F}\le 2\norm{\A\P}_F$
\item
$\norm{\B\P-\M}_F\le\widehat{S}\le 2\norm{\B\P-\M}_F$
\item
$\norm{(\B\P)_{\taileps}}_F\le\norm{\B\P-\M}_F\le 2\norm{(\B\P)_{\taileps}}_F$
\end{enumerate}
Fix an index $j\in[n]$ and let $t_j$ be any fixed value $t\in[0,1]$. 
Since the goal of the lemma is to show that the probability of the failure event is independent of our choice of $j$ and $t$, these variables will actually not appear in the remainder of the proof. 
 
Observe that $\mathcal{E}_1$ holds with high probability by \lemref{lem:countsketch} and \lemref{lem:ams}.
Conditioned on $\mathcal{E}_1$, it suffices to bound the probability that $4\norm{(\B\P)_{\taileps}}_F>\sqrt{\frac{C\log n}{\eps}}\norm{\A\P}_F$.

Let $U=\sqrt{\eps}\norm{\A\P}_F$ and define the indicator variable for whether row $\B_i\P$ is heavy. 
That is, we define $y_i=1$ if $\norm{\B_i\P}_2>U$ and $y_i=0$ otherwise. 
Define $z_i$ as the scaled indicator variable $z_i=\frac{1}{U^2}\norm{\B_i\P}_2^2(1-y_i)$ so that $z_i\in[0,1]$ represents a scaled contribution of the small rows. 
Define $Y=\sum_{i\neq j} y_i$ to be the total number of heavy rows, $Z=\sum_{i\neq j} z_i$ to be the total scaled contribution of the small rows, and $\W\in\mathbb{R}^{n\times d}$ to be the matrix of the heavy rows, i.e., $\W_i=\B_i\P$ if $y_i=1$ and $\W_i$ is the row of all zeros otherwise. 
Observe that $\W$ contains at most $Y+1$ nonzero rows and $U^2Z=\norm{\B\P-\W}_F^2$. 
Moreover, $\norm{(\B\P)_{\taileps}}_F\le U\sqrt{Z}$ unless $Y\ge\frac{2}{\eps^2}$. 
Hence if $\mathcal{E}_2$ denotes the event that $Y\ge\frac{2}{\eps^2}$ and $\mathcal{E}_3$ denotes the event that $Z\ge\frac{C\log n}{16U^2\eps}\norm{\A\P}_F^2$, then it suffices to bound the probability of the events $\mathcal{E}_2$ and $\mathcal{E}_3$ by $\O{\eps}$, since $\neg\mathcal{E}_2\wedge\neg\mathcal{E}_3$ implies $4\norm{(\B\P)_{\taileps}}_F\le\sqrt{\frac{C\log n}{\eps}}\norm{\A\P}_F$.
In other words, the probability of failure due to the tail estimator is small if the number of heavy rows is small ($\neg\mathcal{E}_2$) and the total contribution of the small rows is small ($\neg\mathcal{E}_3$).  

By \obsref{obs:sample:prob}, $\Ex{y_i}=\frac{\norm{\A_i\P}_2^2}{U^2}$ and so $\Ex{Y}\le\frac{1}{\eps}$ by linearity of expectation since $U=\sqrt{\eps}\norm{\A\P}_F$. 
Hence $\PPr{\mathcal{E}_2}=\O{\eps}$ by Markov's inequality for sufficiently small $\eps$.  

To bound $\PPr{\mathcal{E}_3}$, observe that $z_i>0$ only for $\norm{\B_i\P}_2\le U$ or equivalently, $t_i\ge\frac{\norm{\A_i\P}_2^2}{\eps\norm{\A\P}_F^2}$. 
For sufficiently small $\eps$, $\frac{\norm{\A_i\P}_2^2}{\eps\norm{\A\P}_F^2}\ge\frac{\norm{\A_i\P}_2^2}{\norm{\A\P}_F^2}$.
Thus,
\begin{align*}
\Ex{z_i}\le\int_{\norm{\A_i\P}_2^2/\norm{\A\P}_F^2}^1 z_i\,dt_i=\int_{\norm{\A_i\P}_2^2/\norm{\A\P}_F^2}^1\frac{1}{t_i}\frac{1}{U^2}\norm{\A_i\P}_2^2\,dt_i. 
\end{align*}
Let $\mathcal{E}_4$ be the event that $t_i\ge\frac{1}{n^{C/2}}$ for all $i\in[n]$, so that $\PPr{\mathcal{E}_4}\ge1-\frac{1}{n^{C/2-1}}$.
Conditioned on $\mathcal{E}_4$, we have
\begin{align*}
\Ex{z_i\,|\,\mathcal{E}_4}\le\frac{1}{1-\frac{1}{n^{C/2}}}\int_{\frac{1}{n^{C/2}}}^1\frac{1}{t_i}\frac{1}{U^2}\norm{\A_i\P}_2^2\,dt_i\le\frac{C\log n}{U^2}{\norm{\A_i\P}_2^2}.
\end{align*}
Hence, we have $\Ex{Z\,|\,\mathcal{E}_4}\le\frac{C\log n}{\eps}$ and so the probability that $Z>\frac{C\log n}{16U^2\eps}\norm{\A\P}_F^2=\frac{C\log n}{16\eps^2}$ for sufficiently small $\eps$ is bounded by $\O{\eps}$ by Markov's inequality. 
Since the events $\neg\mathcal{E}_1,\mathcal{E}_2,\mathcal{E}_3,\mathcal{E}_4$ each occur with probability at most $\O{\eps}+\frac{1}{\poly(n)}$, then the claim follows.
\end{proof}

\noindent
The probability of sampling each row $i\in[n]$ will still be slightly distorted due to the noise from \countsketch, since we do not have exact values for the norm of each row. 
Similarly, if multiple rows exceed the threshold, we will output the row with the largest norm, which also alters the sampling probability of each row. 
We now show that these events only slightly perturb the probability of sampling each index $i$ and moreover, the output row is a small noisy perturbation of the original row.  
\begin{lemma}
\lemlab{lem:sampling:prob}
Conditioned on a fixed value of $\widehat{F}$, the probability that \algref{alg:l2:sampler} outputs (noisy) row $i$ is $\left(1\pm\O{\eps}\right)\frac{\eps}{C\log n}\frac{\norm{\A_i\P}_2^2}{\widehat{F}^2}+\frac{1}{\poly(n)}$. 
\end{lemma}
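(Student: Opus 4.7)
The plan is to condition on $\widehat{F}$ being fixed throughout, so that the threshold $T := \sqrt{C\log n/\eps}\,\widehat{F}$ used by the algorithm is also fixed, and then match the algorithm's output distribution against the idealized threshold-sampling distribution induced by \obsref{obs:sample:prob}. By that observation, the marginal probability that $\norm{\B_i\P}_2 \geq T$ is exactly $\frac{\norm{\A_i\P}_2^2}{T^2} = \frac{\eps}{C\log n}\frac{\norm{\A_i\P}_2^2}{\widehat{F}^2}$, which is already the claimed leading term. So the goal is to show that the algorithm outputs (a noisy version of) row $i$ essentially whenever this event holds, up to a $(1\pm\O{\eps})$ multiplicative perturbation and $\frac{1}{\poly(n)}$ additive error.

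I would first dispose of the event that multiple rows simultaneously exceed $T$. Since $\widehat{F} \geq \norm{\A\P}_F$, summing \obsref{obs:sample:prob} over all indices yields $\sum_j \PPr{\norm{\B_j\P}_2 \geq T} \leq \frac{\eps}{C\log n}$, and independence of the $t_j$ combined with Bonferroni gives $\PPr{\text{two or more heavy rows}} = \O{(\eps/\log n)^2}$. Hence conditioned on index $i$ being heavy, the probability that some $j \neq i$ is also heavy is $\O{\eps/\log n}$, contributing only a $(1\pm\O{\eps})$ relative perturbation. Next, \lemref{lem:tail:failure} bounds the probability of aborting via the tail test by $\O{\eps}+\frac{1}{\poly(n)}$, \emph{uniformly} over $i$ and $t_i$, which after renormalization costs at most another $(1\pm\O{\eps})$ multiplicative factor plus $\frac{1}{\poly(n)}$ additive error. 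The high-probability guarantees of \ams and \countsketch fail with probability $\frac{1}{\poly(n)}$ and contribute only to the additive term.

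The main obstacle is handling the \countsketch estimation noise near the threshold: a row whose true norm is very close to $T$ may be incorrectly classified as exceeding or not exceeding the algorithm's effective threshold. By \lemref{lem:countsketch}, each estimated row norm of $\B\P$ differs from the true value by at most $\O{\eps}\norm{(\B\P)_{tail(2/\eps^2)}}_F$. Conditioning on the tail test passing gives $\norm{(\B\P)_{tail(2/\eps^2)}}_F \leq \widehat{S} \leq \sqrt{C\log n/\eps}\,\widehat{F}$ (up to the constant-factor slack in \lemref{lem:countsketch:tail}), so the additive estimation error per row is $\O{\sqrt{\eps C\log n}}\,\widehat{F} = \O{\eps}\cdot T$. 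Consequently the algorithm is effectively comparing $\norm{\B_i\P}_2$ against a perturbed threshold $T' = T(1 \pm \O{\eps})$, and a second application of \obsref{obs:sample:prob} with $T'$ shows the sampling probability at the perturbed threshold is $(1 \pm \O{\eps})\,\frac{\norm{\A_i\P}_2^2}{T^2}$.

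Finally, combining all four sources of deviation (multiple heavy rows, tail-test abort, \countsketch and \ams failure, and threshold perturbation) via a union bound and collecting the resulting $(1\pm\O{\eps})$ factors yields the claimed $\left(1\pm\O{\eps}\right)\frac{\eps}{C\log n}\frac{\norm{\A_i\P}_2^2}{\widehat{F}^2}+\frac{1}{\poly(n)}$. The sampled row is returned as $\r = \sqrt{t_i}\,\r_i$, which recovers a noisy version of $\A_i\P$ by \lemref{lem:countsketch}; the noise is controlled exactly by the same CountSketch guarantee used in the threshold analysis, so no additional work is required to describe the output row's form.
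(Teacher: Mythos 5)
Your proposal follows essentially the same route as the paper's proof: the leading term comes from the exact threshold probability of \obsref{obs:sample:prob}, and the same four failure sources (data-structure failure, tail-test abort via the uniformity in \lemref{lem:tail:failure}, other rows at or near the threshold, and misclassification of row $i$ within a CountSketch error window of width $\sqrt{C\eps\log n}\,\widehat{F}=\O{\eps}T$) are each shown to cost only a $(1\pm\O{\eps})$ relative or $\frac{1}{\poly(n)}$ additive perturbation. The only cosmetic difference is that you split "other rows" into exact exceedance plus a perturbed-threshold correction and phrase the abort cost as a renormalization, whereas the paper folds both into a single near-threshold event and subtracts the failure probabilities directly; the bounds are the same.
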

\begin{proof}
We first define the following set of events.
\begin{itemize}
\item
Let $\mathcal{E}$ denote the event that $t_i<\frac{\eps}{C\log n}\frac{\norm{\A_i\P}_2^2}{\widehat{F}^2}$ so that the algorithm should ideally output index $i$ and observe that $\PPr{\mathcal{E}}=\frac{\eps}{C\log n}\frac{\norm{\A_i\P}_2^2}{\widehat{F}^2}$ since $t_i\in[0,1]$ is selected uniformly at random. 
\item
Let $\mathcal{E}_1$ denote the event that one of the data structures $\countsketch$, $\ams_1$, or $\ams_2$ fails. 
Note that $\mathcal{E}_1$ occurs with probability $\frac{1}{\poly(n)}$ by \lemref{lem:countsketch} and \lemref{lem:ams}. 
\item
Let $\mathcal{E}_2$ denote the event that $\widehat{S}>\sqrt{\frac{C\log n}{\eps}}\widehat{F}$. 
Conditioned on $\mathcal{E}$, the probability of $\mathcal{E}_2$ is $\O{\eps}$ by \lemref{lem:tail:failure}. 
\item
Let $\mathcal{E}_3$ denote the event that some other row $\B_j\P$ also either exceeds the threshold or is close enough to the threshold, thus possibly preventing $\B_i\P$ from being reported. 
Specifically, $\mathcal{E}_3$ can only occur if some other row $j$ satisfies $\norm{\B_j\P}_2\ge\sqrt{\frac{C\log n}{\eps}}\widehat{F}-\sqrt{C\eps\log n}\widehat{F}$. 
Since $\B_j\P=\frac{1}{\sqrt{t_j}}\A_j\P$ and $t_j$ is chosen uniformly at random from $[0,1]$, then row $j$ exceeds this threshold with probability at most $\O{\frac{\eps}{C\log n}\frac{\norm{\A_j\P}_2^2}{\widehat{F}^2}}$ by \obsref{obs:sample:prob}. 
Taking a union bound over all $n$ rows, the probability of $\mathcal{E}_3$ is $\O{\frac{\eps}{\log n}}$. 
\item
Let $\mathcal{E}_4$ denote the event that $\norm{\B_i\P}$ exceeds the threshold but is not reported due to noise in the CountSketch data structure, i.e., $\norm{\r_i}_2<\sqrt{\frac{C\log n}{\eps}}\widehat{F}$.  
We now analyze the probability of $\mathcal{E}_4$. 
Conditioning on $\neg\mathcal{E}_2$, we have $\widehat{S}\le\sqrt{\frac{C\log n}{\eps}}\widehat{F}$. 
Conditioning on $\neg\mathcal{E}_1$, then $\norm{\B\P-\M}_F\le\widehat{S}$. 
Thus by \lemref{lem:countsketch:tail} (or \lemref{lem:countsketch}), 
\[\left|\norm{\B_i\P}_2-\norm{\widehat{\B_i\P}}_2\right|\le\eps\norm{(\B\P)_{\taileps}}_F\le\eps\norm{\B\P-\M}_F\le\eps\widehat{S}\le\sqrt{C\eps\log n}\widehat{F}.\]
Hence, $\mathcal{E}_4$ can only occur for 
\[\sqrt{\frac{C\log n}{\eps}}\widehat{F}\le\norm{\B_i\P}_2\le\sqrt{\frac{C\log n}{\eps}}\widehat{F}+\sqrt{C\eps\log n}\widehat{F},\]
which occurs with probability at most $\O{\frac{\eps^2}{C\log n}\frac{\norm{\A_i\P}_2^2}{\widehat{F}^2}}$ over the choice of $t_i$. 
\end{itemize}

Conditioning on $\mathcal{E}$, the sampler should return $\A_i\P$ but may fail to do so because of any of the events $\mathcal{E}_1$, $\mathcal{E}_2$, $\mathcal{E}_3$, or $\mathcal{E}_4$.
Putting things together, $\mathcal{E}_1$ occurs with probability $\frac{1}{\poly(n)}$. 
Conditioning on $\mathcal{E}$, $\mathcal{E}_2$ and $\mathcal{E}_3$ each occur with probability $\O{\eps}$, so the probability of $\mathcal{E}$ and at least one of $\mathcal{E}_2$ or $\mathcal{E}_3$ occurring is $\O{\frac{\eps^2}{C\log n}\frac{\norm{\A_i\P}_2^2}{\widehat{F}^2}}$, which is also the probability of $\mathcal{E}_4$. 
Thus the sampling probability of each $\A_i\P$ follows. 

Finally, we emphasize that for the index $i$ selected, it holds by \lemref{lem:countsketch:tail} that
\[\left|\norm{\B_i\P}_2-\norm{\widehat{\B_i\P}}_2\right|\le\sqrt{C\eps\log n}\widehat{F}\]
and $\norm{\widehat{\B_i\P}}_2\ge\sqrt{\frac{C\log n}{\eps}}\widehat{F}$. 
Thus, $\norm{\widehat{\B_i\P}}_2$ is a $(1+\eps)$ approximation to $\norm{\B_i\P}_2$ and similarly, $\sqrt{t_i}\norm{\widehat{\B_i\P}}_2$ has norm $(1+\eps)$ within that of $\norm{\A_i\P}_2$. 
\end{proof}	

Since each row is sampled with roughly the desired probability, we now analyze the space complexity of the resulting $L_{2,2}$ sampler. 
\begin{theorem}
\thmlab{thm:l2:sampling}
Given $\eps>0$, there exists a one-pass streaming algorithm that takes rows of a matrix $\A\in\mathbb{R}^{n\times d}$ as a turnstile stream and a matrix $\P\in\mathbb{R}^{d\times d}$ after the stream, and outputs (noisy) row $i$ of $\A\P$ with probability $\left(1\pm\O{\eps}\right)\frac{\norm{\A_i\P}_2^2}{\norm{\A\P}_F^2}+\frac{1}{\poly(n)}$. 
The algorithm uses $\O{\frac{d}{\eps^3}\log^3 n\log\frac{1}{\delta}}$ space to succeed with probability $1-\delta$. 
\end{theorem}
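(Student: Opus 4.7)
}
The plan is to boost the per-instance success probability of \algref{alg:l2:sampler} by running many parallel copies and returning the output of the first copy that does not FAIL. The heavy lifting has already been done in \lemref{lem:sampling:prob}, which gives, conditioned on any fixed value of the Frobenius estimate $\widehat{F}$, the per-index sampling probability of a single instance. The remaining task is therefore (i) compute the total success probability of a single instance, (ii) choose the number $R$ of parallel repetitions so that amplification gives success probability $1-\delta$, (iii) verify that the conditional distribution of the output agrees with the claimed $L_{2,2}$ distribution, and (iv) account for the space.

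First, I would fix $R$ independent copies of \algref{alg:l2:sampler} and analyze one copy. Conditioning on $\widehat{F}=(1\pm\eps)\norm{\A\P}_F$ (which holds with high probability by \lemref{lem:ams}) and summing the per-index probabilities from \lemref{lem:sampling:prob} over $i\in[n]$, the total probability that a single instance returns a row (rather than FAIL) is
\[
\sum_{i=1}^n \left(1\pm\O{\eps}\right)\frac{\eps}{C\log n}\frac{\norm{\A_i\P}_2^2}{\widehat{F}^2}+\frac{1}{\poly(n)}=\Theta\!\left(\frac{\eps}{\log n}\right).
\]
Thus with $R=\Theta\!\left(\frac{\log n}{\eps}\log\frac{1}{\delta}\right)$ independent copies, the probability that every copy returns FAIL is at most $\left(1-\Omega(\eps/\log n)\right)^R\le\delta$, so with probability $1-\delta$ at least one copy succeeds.

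Next, I would show that the returned index follows the desired distribution. Since the $R$ copies are i.i.d., conditioning on the event that at least one succeeds and on which copy is first successful, the returned index is distributed as the output of a single successful copy. Dividing the per-index success probability from \lemref{lem:sampling:prob} by the total per-instance success probability, and absorbing the $(1\pm\eps)$ error from $\widehat{F}=(1\pm\eps)\norm{\A\P}_F$ into the overall $(1\pm\O{\eps})$ factor, the conditional probability of outputting index $i$ is
\[
\left(1\pm\O{\eps}\right)\frac{\norm{\A_i\P}_2^2}{\norm{\A\P}_F^2}+\frac{1}{\poly(n)},
\]
as required. The noisy nature of the returned row is inherited from the $(1+\eps)$-approximation guarantee for $\norm{\B_i\P}_2$ in the last part of \lemref{lem:sampling:prob}.

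Finally, for the space bound, each of the $R$ instances uses \countsketch{} with $b=\Theta(1/\eps^2)$ buckets per row and $\Theta(\log n)$ rows, plus two copies of \ams{}; by \lemref{lem:countsketch} and \lemref{lem:ams} this totals $\O{\frac{d}{\eps^2}\log^2 n}$ bits per instance. Multiplying by $R=\Theta\!\left(\frac{\log n}{\eps}\log\frac{1}{\delta}\right)$ yields the claimed $\O{\frac{d}{\eps^3}\log^3 n\log\frac{1}{\delta}}$ space. The only subtle point, and the one I would be most careful about, is the conditioning on $\widehat{F}$: the failure events in \lemref{lem:tail:failure} are shown independent of $i$ and $t_i$ for fixed $\widehat{F}$, so marginalizing over $\widehat{F}$ only adds the $1/\poly(n)$ additive term corresponding to the low-probability event that $\widehat{F}$ is a bad approximation.
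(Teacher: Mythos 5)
Your proposal is correct and follows essentially the same route as the paper's proof: invoke \lemref{lem:sampling:prob} for the per-index probability conditioned on $\widehat{F}$, observe that a single instance succeeds with probability $\Theta\left(\frac{\eps}{\log n}\right)$, repeat $\O{\frac{1}{\eps}\log n\log\frac{1}{\delta}}$ times, and multiply the per-instance space bound. The only nitpick is that the algorithm only guarantees $\norm{\A\P}_F\le\widehat{F}\le 2\norm{\A\P}_F$ rather than a $(1\pm\eps)$-approximation, but this is immaterial since, as you note, dividing the per-index probability by the total success probability cancels the $\widehat{F}$ dependence.
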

\begin{proof}
By \lemref{lem:sampling:prob} and the fact that $\norm{\A\P}_F\le\widehat{F}\le 2\norm{\A\P}_F$ with high probability by \lemref{lem:ams}, each row $\A_i\P$ is output with probability $\left(1+\eps\right)\frac{\norm{\A_i\P}_2^2}{\norm{\A\P}_F^2}+\frac{1}{\poly(n)}$, conditioned on the sampler outputting \emph{some} index rather than aborting. 
Recall that the sampler outputs index if the tail estimator $\widehat{S}$ is small and the estimated norm of some row exceeds the threshold. 
\lemref{lem:tail:failure} shows that the tail estimator is small only with probability $\O{\eps}$ while a straightforward computation shows that the probability that the estimated norm of some row exceeding the threshold is $\Theta\left(\frac{\eps}{\log n}\right)$. 
Thus the sampler can be repeated $\O{\frac{1}{\eps}\log n\log\frac{1}{\delta}}$ times to obtain probability of success at least $1-\delta$. 
By \lemref{lem:ams}, each instance of \ams{} uses $\O{\frac{d}{\eps^2}{\log^2 n}}$ space. 
Moreover, each sampler uses a $\O{\frac{1}{\eps^2}}\times\O{\log n}$ table, and each entry in the table is a vector of $d$ integers, the total space complexity is $\O{\frac{d}{\eps^3}\log^3 n\log\frac{1}{\delta}}$. 
\end{proof}

\paragraph{Generation of Uniform Random Variables.} 
First observe that with high probability, each of the uniform random variables $t_i$ are least $\frac{1}{\poly(ndmM)}=\frac{1}{\poly(n)}$ precision, where $m=\poly(n)$ is the length of the stream and $M=\poly(n)$ is the largest change an update can induce in the matrix $\A\in\mathbb{R}^{n\times d}$. 
Then we claim it suffices to generate the uniform random variables $t_i$ up to $\O{\log n}$ bits of precision, for a sufficiently large constant. 
Indeed note that the truncation perturbs the values of the uniform random variables by an additive $\frac{1}{\poly(n)}$ value, which induces a $\O{\frac{1}{\poly(n)}}$ additive error for each CountSketch bucket. 
Therefore, we can incorporate the additive error induced by truncating the uniform random variables at $\O{\log n}$ bits of precision into the $\O{\frac{1}{\poly(n)}}$ additive error of the $L_{2,2}$ sampler. 

We say a family $\mathcal{H}=\{h:[n]\to[m]\}$ is \emph{$\delta$-approximate} if for all $r\le n$, possible inputs $i_1,\ldots,i_r\in[n]$ and possible outputs $o_1,\ldots,o_1\in[m]$,
\[\PPPr{h\in H}{h(i_1)=o_1\wedge\ldots\wedge h(i_r)=o_r}=\frac{1}{m^r}+\delta.\]
If $\delta=0$ for all $r\le k$, the function is \emph{$k$-wise independent}. 

We observe that $\O{1}$-wise independent random variables $t_i$ would suffice for justifying the low-probability failure events in \lemref{lem:tail:failure} through Chebyshev's inequality. 
Recall that $k$-wise independent random variables can be generated from a polynomial of degree $k$ over a field of size $\O{\poly(n)}$~\cite{WegmanC81}, which can be stored using $\O{k\log n}$ space~\cite{WegmanC81}.  
Thus for the purposes of our $L_{2,2}$ sampler, using $\O{1}$-wise independent random variables $t_i$ instead of fully independent random variables gives the exact guarantees as \thmref{thm:l2:sampling}.

However, \secref{sec:noisy:adaptive} requires Chernoff bounds to analyze the size of specific sets, which will not na\"{i}vely work with $\O{1}$-wise independent random variables. 
Instead, we can apply the limited independence Chernoff-Hoeffding bounds in~\cite{SchmidtSS95} using $\O{\log n}$-wise independent random variables, thus limiting the probability of the failure events by $\O{\frac{1}{\poly(n)}}$. 
Moreover, $\delta$-approximate $k$-wise hash functions can be generated using $\O{k+\log n+\log\frac{1}{\delta}}$ bits, e.g., by composing the generators of \cite{AlonBI86} and \cite{NaorN93}. 
Thus for $\delta=\frac{1}{\poly(n)}$, the error can again be absorbed into the $\O{\frac{1}{\poly(n)}}$ additive error of the $L_{2,2}$ sampler, while the family of hash functions requires $\O{\log n}$ bits to store. 
\section{Noisy Adaptive Squared Distance Sampling}
\seclab{sec:noisy:adaptive}
Given a matrix $\A\in\mathbb{R}^{n\times d}$ that arrives in a data stream, either turnstile or row-arrival, we want to simulate $k$ rounds of adaptive sampling. 
That is, in the first round we want to sample some row $\r_1$ of $\A$, such that each row $\A_i$ is selected with probability proportional to its squared row norm $\norm{\A_i}_2^2$. 
Once rows $\r_1,\ldots,\r_{j-1}$ are selected, then the $j\th$ round of adaptive sampling samples each row $\A_i$ with probability proportional to the squared row norm of the orthogonal component to $\R_{j-1}$, $\norm{\A_i(\I-\R_{j-1}^\dagger\R_{j-1})}_2^2$, where for each $j\leq k$, $\R_j=\r_1\circ\ldots\circ\r_j$. 

Observe that if only a single round of adaptive sampling were required, the problem would reduce to $L_{2,2}$ sampling, which we can perform in a stream through \algref{alg:l2:sampler}. 
In fact, the post-processing stage of \algref{alg:l2:sampler} would not be necessary since the post-processing matrix would be $\P=\I$, which is the identity matrix. 
Moreover, the sketch of $\A$ of \algref{alg:l2:sampler} is oblivious to the choice of the post-processing matrix $\P$, so we would like to repeat this $k$ times by creating $k$ separate instances of the $L_{2,2}$ sampler of \algref{alg:l2:sampler} and for the $j\th$ instance, multiply by the post-processing matrix $\P_j=\I-\R_{j-1}^\dagger\R_{j-1}$. 
Unfortunately, if $f(j)$ is the index of the row of $\A\P_j$ that is selected in the $j\th$ round, the row $\r_j$ that the $L_{2,2}$ sampler outputs is not $\A_{f(j)}\P_j$ but rather a noisy perturbation of it, which means in future rounds we are not sampling with respect to a subspace containing $\A_{f(j)}\P_j$ but rather a subspace containing $\r_j$. 
This is particularly a problem if $\r_j$ is parallel to another row $\A_i$ that is not contained in the subspace of $\A_{f(j)}\P_j$, then in future rounds the probability of sampling $\A_i$ is zero, when it should in fact be nonzero. 
Although the above example shows that the noisy perturbation does not preserve relative sampling probabilities for each row, we show that the perturbations give a good additive approximation to the sampling probabilities. 
That is, we bound the total variation distance between sampling with respect to the true rows of $\A$ and sampling with respect to the noisy rows of $\A$. 
We give our algorithm in full in \algref{alg:noisy:adaptive}. 

\begin{algorithm}[!htb]
\caption{Noisy Adaptive Sampler}
\alglab{alg:noisy:adaptive}
\begin{algorithmic}[1]
\Require{Matrix $\A\in\mathbb{R}^{n\times d}$ that arrives as a stream $\A_1,\ldots,\A_n\in\mathbb{R}^d$, parameter $k$ for number of rows to be sampled, constant parameter $\eps>0$.}
\Ensure{$k$ Noisy and projected rows of $\A$.}
\State{Create instances $\alg_1,\ldots,\alg_k$ of the $L_{2,2}$ sampler of \algref{alg:l2:sampler} where the number of buckets $b=\Theta\left(\frac{\log^2 n}{\eps^2}\right)$ is sufficiently large.}
\State{Let $\M$ be empty $0\times d$ matrix.}
\State{\textbf{Streaming Stage:}}
\For{each row $\A_i$}
\State{Update each sketch $\alg_1,\ldots,\alg_k$}
\EndFor
\State{\textbf{Post-processing Stage:}}
\For{$j=1$ to $j=k$}
\State{Post-processing matrix $\P\gets\I-\M^\dagger\M$.}
\State{Update $\alg_j$ with post-processing matrix $\P$.}
\State{Let $\r_j$ be the noisy row output by $\alg_j$.}
\State{Append $\r_j$ to $\M$: $\M\gets\M\circ\r_j$.}
\EndFor
\State{\Return $\M$.}
\end{algorithmic}
\end{algorithm}

For the purpose of the analysis, we first show that if the $L_{2,2}$ sampler outputs row $\r_1$ that is a noisy perturbation of row $\A_{f(1)}\P$, then not only can we bound $\norm{\A_{f(1)}\P-\r_1}_2$ as in \lemref{lem:countsketch:tail}, but also we can bound the norm of the component of $\r_1$ orthogonal to $\A_{f(1)}\P$. 
This is significant because future rounds of sampling will focus on the norms of the orthogonal components for the sampling probabilities. 
\begin{lemma}
\lemlab{lem:orthogonal:noise}
Given a matrix $\A\in\mathbb{R}^{n\times d}$ and a matrix $\P\in\mathbb{R}^{d\times d}$, as defined in Line 8 and round $i\le k$, of \algref{alg:noisy:adaptive}, suppose index $j\in[n]$ is sampled (in round $i$). 
Then with high probability, the sampled (noisy) row $\r_i$ satisfies $\r_i=\A_j\P+\v_e$ with 
\[\norm{\v_e\Q}_2\le\frac{\eps\sqrt{\eps}\norm{\A\P\Q}_F}{\sqrt{C\log n}\norm{\A\P}_F}\norm{\A_j\P}_2,\]
for any projection matrix $\Q\in\mathbb{R}^{d\times d}$. 
Hence, $\v_e$ is orthogonal to each noisy row $\r_y$, where $y\in[i-1]$. 
\end{lemma}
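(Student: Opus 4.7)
The plan is to exploit the linearity of the CountSketch sketch inside \algref{alg:l2:sampler} to re-run its error analysis with $\Q$ applied in post-processing, and then use the acceptance rule that triggered the sample to convert the resulting bound into one scaled by $\norm{\A_j\P}_2/\norm{\A\P}_F$. The orthogonality conclusion will then fall out for free from the structure of $\v_e$.

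First I would unpack the structure of the output. When \algref{alg:l2:sampler} returns index $j$, the CountSketch bucket from which the answer is extracted (say at row $\ell$ of the table) holds $\B_j\P + \w_\ell$, where $\w_\ell = \sum_{k \in S_\ell} \sigma_k\,\B_k\P$, $S_\ell \subseteq [n]\setminus\{j\}$ is the set of indices colliding with $j$ in row $\ell$, and $\sigma_k \in \{-1,+1\}$ are the associated CountSketch signs. Rescaling, $\r_i = \sqrt{t_j}(\B_j\P + \w_\ell) = \A_j\P + \v_e$, with
\[\v_e \;=\; \sqrt{t_j}\,\w_\ell \;=\; \sum_{k\in S_\ell}\sigma_k\sqrt{t_j/t_k}\,\A_k\P.\]
In particular $\v_e$ lies in the row span of $\A\P$. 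Since $\P = \I - \M^\dagger\M$ projects onto the orthogonal complement of the rows of $\M = \r_1\circ\cdots\circ\r_{i-1}$, every row of $\A\P$ is orthogonal to each $\r_y$ for $y<i$, so $\v_e$ is too, giving the second conclusion of the lemma immediately.

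Next I would bound $\norm{\v_e\Q}_2$ by re-running the CountSketch noise analysis on $\A\P\Q$ instead of $\A\P$. Right multiplication by $\Q$ commutes with the signed sum, so $\v_e\Q = \sum_{k\in S_\ell}\sigma_k\sqrt{t_j/t_k}\,\A_k\P\Q$ and, conditional on the hash function $h$ and the scaling variables,
\[\EEx{\sigma}{\norm{\v_e\Q}_2^2\,\Big|\,h,\{t_k\}} \;=\; t_j\sum_{k\in S_\ell}\frac{\norm{\A_k\P\Q}_2^2}{t_k}.\]
This is exactly the object analyzed in \lemref{lem:countsketch:tail} and \lemref{lem:tail:failure}, but with $\Q$ inserted everywhere in place of the identity. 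Following the same route — conditioning on the pairwise-independence event that the top $b$ rows of $\B\P$ miss the bucket of $j$, truncating the $t_k$'s to $t_k \ge 1/\poly(n)$, and invoking the $\O{\log n}$-wise independence Chernoff-Hoeffding bound highlighted at the end of \secref{sec:l2:sampler} — the right-hand side is $\O{\log n/b}\cdot\norm{\A\P\Q}_F^2$ with probability $1-1/\poly(n)$. A Chebyshev step over $\sigma$ followed by a median amplification over the $r=\Theta(\log n)$ rows of the table then promotes this into a high-probability bound on the bucket actually reported. Plugging in the stronger bucket count $b = \Theta(\log^2 n/\eps^2)$ specified in \algref{alg:noisy:adaptive} yields $\norm{\v_e\Q}_2 \le \O{\eps/\log n}\cdot\sqrt{t_j}\,\norm{\A\P\Q}_F$.

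Finally I would convert the surviving $\sqrt{t_j}$ factor using the sampler's acceptance rule: the sampler outputs $j$ only when its estimate of the row norm exceeds $\sqrt{C\log n/\eps}\,\widehat F$, which combined with $\widehat F = \Theta(\norm{\A\P}_F)$ from \lemref{lem:ams} and the per-row CountSketch accuracy from \lemref{lem:countsketch} forces $\norm{\A_j\P}_2/\sqrt{t_j}\ge\Omega(\sqrt{C\log n/\eps})\,\norm{\A\P}_F$, i.e., $\sqrt{t_j}\le\O{\sqrt{\eps/(C\log n)}}\,\norm{\A_j\P}_2/\norm{\A\P}_F$. Substituting this into the previous display gives the stated bound. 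The main obstacle is the third step: although the argument mirrors \lemref{lem:tail:failure}, one must be careful that the $1/t_k$ weights do not blow up the variance of $\sum_k\norm{\A_k\P\Q}_2^2/t_k$, and this is precisely what the stronger bucket count $b=\Theta(\log^2 n/\eps^2)$ — as opposed to $\Theta(1/\eps^2)$ in the single \algref{alg:l2:sampler} setting — is purchasing.
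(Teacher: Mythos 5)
Your proposal is correct and follows essentially the same route as the paper's proof: write $\v_e=\sqrt{t_j}\E$ with $\E$ a signed combination of other rows of $\A\P$ (giving orthogonality to the previous $\r_y$ for free), use linearity of the sketch to bound the bucket noise in the direction $\Q$ via the rescaled matrix $\B\P\Q$ with the enlarged bucket count $b=\Theta(\log^2 n/\eps^2)$, and convert $\sqrt{t_j}$ using the acceptance threshold $\norm{\B_j\P}_2\ge\sqrt{C\log n/\eps}\,\norm{\A\P}_F$. The only deviations are harmless quantitative looseness in the middle step (the paper bounds $\norm{\B\P\Q}_F^2=\O{\log^2 n}\norm{\A\P\Q}_F^2$ whp via level sets and Chernoff, so the per-bucket bound is $\O{\log^2 n/b}\norm{\A\P\Q}_F^2$ rather than your stated $\O{\log n/b}$), which if anything only claims a slightly stronger intermediate bound than needed for the stated conclusion.
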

\begin{proof}
Let $\Q\in\mathbb{R}^{d\times d}$ be a projection matrix. 
For each $x\in[n]$, let $\B_x=\frac{\A_x}{\sqrt{t_x}}$ be the rescaled row of $\A_x$. 
Let $\E$ be the noise in the bucket corresponding to the selected row $j$, so that the output vector is $\A_j+\sqrt{t_j}\E$ and the noise is $\v_e=\sqrt{t_j}\E$. 
Note that $\E$ is a linear combination of rows of $\A\P$ and thus $\E$ is orthogonal to all previous noisy rows $\r_y$ with $y\in[i-1]$. 
Let $\B\in\mathbb{R}^{n\times d}$ be the rescaled matrix of $\A$ so that row $x$ of $\B$ is $\B_x$ for $x\in[n]$. 
Recall that $t_x\in[0,1]$ is selected uniformly at random for each $x\in[n]$, so that for each integer $c\ge0$, 
\[\PPr{\frac{\norm{\A_x\P\Q}_2^2}{t_x}\ge\frac{\norm{\A\P\Q}_F^2}{2^c}}\le\frac{2^c\norm{\A_x\P\Q}_2^2}{\norm{\A\P\Q}_F^2},\]
e.g., by \obsref{obs:sample:prob}.
Since $\B_x=\frac{\A_x}{\sqrt{t_x}}$, then by linearity of expectation over $x\in[n]$, we can bound the expected size of each of the disjoint level sets $S_c:=\left\{x\in[n]\,:\,\frac{\norm{\A\P\Q}_F^2}{2^{c-1}}>\norm{\B_x\P\Q}_2^2\ge\frac{\norm{\A\P\Q}_F^2}{2^c}\right\}$ by $\Ex{|S_c|}\le\min(2^c,n)$ for each $c$. 
From the independence of the scaling factors $t_x$, then the Chernoff bound implies that
\[\PPr{\left|S_c\right|\le\min(2^{c+\beta}\log n,n)}\ge1-\frac{1}{\poly(n)},\]
for sufficiently large constant $\beta$. 
Thus the Frobenius norm of $\B\P\Q$ can be roughly upper bounded by the Frobenius norm of $\A\P\Q$ by a union bound over level sets $S_c$ for $c\le\log n$ and upper bounding the norms of each of the rows in level sets $S_c$ with $c>\log n$ by $\frac{\norm{\A\P\Q}^2_F}{n}$ and thus the total mass of the level sets $S_c$ with $c>\log n$ by $\norm{\A\P\Q}^2_F$. 
That is,  
\[\PPr{\norm{\B\P\Q}_F^2\ge 2^{\beta}\log^2 n\norm{\A\P\Q}_F^2}\ge1-\frac{1}{\poly(n)}.\]
Hence the total mass $\norm{\B\P\Q}_F^2$ distributed across the CountSketch table is $\O{\log^2 n\norm{\A\P\Q}_F^2}$ with high probability.

By \lemref{lem:countsketch} and hashing rows of $\B\P\Q$ to a CountSketch table with $b=\Theta\left(\frac{\log^2 n}{\eps^2}\right)$ buckets with sufficiently large constant, the bucket corresponding to $\A_j$ has mass at most $\eps\norm{\A\P\Q}_F$ when projected onto $\Q$. 
That is, $\norm{\E\Q}_2\le\eps\norm{\A\P\Q}_F$. 
This can also be seen from the fact that \countsketch{} is a linear sketch and considering the error in a certain subspace $\Q$ is equivalent to right multiplication by $\Q$. 

Since row $j$ was selected, it must have been true that $\norm{\B_j\P}_2\ge\sqrt{\frac{C\log n}{\eps}}\norm{\A\P}_F$. 
Because $\B_j=\frac{\A_j}{\sqrt{t_j}}$, then $\sqrt{t_j}\le\frac{\sqrt{\eps}\norm{\A_j\P}_2}{\sqrt{C\log n}\norm{\A\P}_F}$.  
Therefore,
\[\norm{\sqrt{t_j}\E\Q}_2\le\frac{\eps\sqrt{\eps}\norm{\A\P\Q}_F}{\sqrt{C\log n}\norm{\A\P}_F}\norm{\A_j\P}_2.\]
In particular, since $\v_e\Q=\sqrt{t_j}\E\Q$, the above expression also bounds the Euclidean norm of $\v_e\Q$. 
Intuitively, not only is the overall noise of $\A\P$ well-partitioned among the buckets of CountSketch, but the noise in each direction $\A\P\Q$ must also be well-partitioned among the buckets of CountSketch with high probability. 
\end{proof}

Recall that our sampler only returns noisy rows $\r_i$, rather than the true rows of $\A\P_i$, where $\P_i$ is any post-processing matrix.  
This is problematic for multiple rounds of sampling, since $\r_i$ is then used to form the next post-processing matrix $\P_{i+1}$, rather than the true row. 
We next show that the total variation distance has not been drastically altered by sampling with respect to the noisy rows rather than the true rows. 

The main idea is that because the noise in each direction is proportional to the total mass in the subspace by \lemref{lem:orthogonal:noise}, we can bound the total perturbation in the squared norms of each row of $\A\P_i$. 
We first argue that if we obtain a noisy row in the first sampling iteration but then we obtain the true rows in the subsequent iterations, then the total variation distance between the resulting probability distribution of sampling each row is close to the ideal probability distribution of sampling each row if we had obtained the true rows over all iterations. 
It then follows from triangle inequality that the actual sampling distribution induced by obtaining noisy rows in each round is close to the ideal sampling distribution if we had obtained the true rows. 

To bound the perturbation in the sampling probability of each row, we require a change of basis matrix from a representation of vectors in terms of the true rows of $\A$ to a representation of vectors in terms of the noisy rows of $\A$. 
This change of basis matrix crucially must be close to the identity matrix, in order to preserve the perturbation in the squared norms. 

\begin{lemma}
\lemlab{lem:adaptive:tvd}
Let $f(1)$ be the index of a noisy row $\r_1$ sampled in the first iteration of \algref{alg:noisy:adaptive}. 
Let $\mathcal{P}_1$ be a process that projects away from $\A_{f(1)}$ and iteratively selects $k-1$ additional rows of $\A$ through adaptive sampling (with $p=2$).  
Let $\mathcal{P}_2$ be a process that projects away from $\r_1$ and iteratively selects $k-1$ additional rows of $\A$ through adaptive sampling (with $p=2$). 
Then for $\eps<\frac{1}{d}$, the total variation distance between the distributions of $k$ indices output by $\mathcal{P}_1$ and $\mathcal{P}_2$ is $\O{k\eps}$. 
\end{lemma}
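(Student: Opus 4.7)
The plan is to reduce the TV distance between the joint distributions to a sum of per-round conditional TV distances, and then bound each round's contribution using \lemref{lem:orthogonal:noise}. Couple $\mathcal{P}_1$ and $\mathcal{P}_2$ so that they produce the same sampled indices in rounds $2,\ldots,k$ whenever possible; the standard chain-rule inequality for TV then gives
\[\tvd(\mathcal{P}_1,\mathcal{P}_2)\le\sum_{t=2}^{k}\EEx{\text{common history}}{\tvd\bigl(\text{round-}t\text{ law under }\mathcal{P}_1,\,\text{round-}t\text{ law under }\mathcal{P}_2\bigr)},\]
so it suffices to show that at each round $t$, conditioned on both processes having sampled the same indices $i_2,\ldots,i_{t-1}$ in previous rounds, the round-$t$ sampling distributions over $[n]$ differ by $O(\eps)$ in TV.

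For a fixed common history, let $T_1=\operatorname{span}\{\A_{f(1)},\A_{i_2},\ldots,\A_{i_{t-1}}\}$ and $T_2=\operatorname{span}\{\r_1,\A_{i_2},\ldots,\A_{i_{t-1}}\}$, and let $\P_1,\P_2$ denote the orthogonal projections onto $T_1^\perp,T_2^\perp$. I would construct orthonormal bases $u_1,\ldots,u_{t-1}$ of $T_1$ and $w_1,\ldots,w_{t-1}$ of $T_2$ by Gram--Schmidt, seeded with $u_1=\A_{f(1)}/\norm{\A_{f(1)}}_2$ and $w_1=\r_1/\norm{\r_1}_2$ and then appending $\A_{i_2},\A_{i_3},\ldots$ in the same order. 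Since $\r_1=\A_{f(1)}+\v_e$ and \lemref{lem:orthogonal:noise} bounds $\norm{\v_e\Q}_2$ for \emph{every} projection $\Q$, invoking it with $\Q$ equal to the projection away from the already-included Gram--Schmidt directions at each step controls how the deviation $\norm{w_j-u_j}_2$ accumulates.

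The round-$t$ sampling probability of index $i$ under $\mathcal{P}_j$ is $\norm{\A_i\P_j}_2^2/\norm{\A\P_j}_F^2$, and expanding $\norm{\A_i\P_j}_2^2=\norm{\A_i}_2^2-\sum_{l=1}^{t-1}\langle\A_i,b_l^{(j)}\rangle^2$ with $b_l^{(1)}=u_l,b_l^{(2)}=w_l$ gives
\[\norm{\A_i\P_1}_2^2-\norm{\A_i\P_2}_2^2=\sum_{l=1}^{t-1}\langle\A_i,w_l-u_l\rangle\langle\A_i,w_l+u_l\rangle.\]
Summing over $i\in[n]$ via Cauchy--Schwarz together with the $\norm{w_j-u_j}_2$ bounds, I would show that both the per-index numerators and the aggregate denominator $\norm{\A\P_j}_F^2$ are perturbed by at most an additive $O(\eps)\norm{\A\P_1}_F^2$, where the hypothesis $\eps<1/d$ absorbs second-order cross terms across the $\le d$ basis directions. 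This yields an $O(\eps)$ TV bound per round, and summing across the $k-1$ remaining rounds gives $\tvd\le O(k\eps)$.

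The main obstacle will be the Gram--Schmidt propagation step: although \lemref{lem:orthogonal:noise} tightly controls the initial deviation $\norm{(w_1-u_1)\Q}_2$ for each projection $\Q$, after orthogonalization against the later rows $\A_{i_2},\A_{i_3},\ldots$ the vectors $w_l$ and $u_l$ for $l\ge 2$ live on subspaces whose residual Frobenius masses $\norm{\A\Q}_F$ can be small relative to $\norm{\A}_F$, so the argument requires a delicate direction-by-direction application of \lemref{lem:orthogonal:noise} in which the $\norm{\A\Q}_F/\norm{\A}_F$ factor in the noise bound is matched to the scale of each intermediate residual direction.
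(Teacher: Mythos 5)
Your architecture matches the paper's: a hybrid over rounds reducing the joint TV distance to per-round conditional TV bounds, Gram--Schmidt bases seeded with $\A_{f(1)}/\norm{\A_{f(1)}}_2$ versus $\r_1/\norm{\r_1}_2$, and a direction-by-direction invocation of \lemref{lem:orthogonal:noise} to control the deviation between the two bases. However, the step you defer as ``the main obstacle'' is not a technical detail to be filled in later --- it is the entire content of the paper's proof, and without it the argument as written does not close. A bound on $\norm{\w_l-\u_l}_2$ alone, fed into Cauchy--Schwarz, gives an aggregate numerator perturbation on the order of $\eps\norm{\A}_F^2$ (or worse, since $\w_l-\u_l$ can point along directions where $\A$ has large mass), whereas what is needed is $\O{\eps}\norm{\A\P_1}_F^2$, and the residual mass $\norm{\A\P_1}_F^2$ can be arbitrarily small compared to $\norm{\A}_F^2$ --- this is exactly the pathology of \figref{fig:bad}. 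The only way to rescue the bound is to exploit that \lemref{lem:orthogonal:noise} makes the noise along each basis direction $\u_i$ proportional to $\norm{\A\P_i}_F/\norm{\A}_F$ (with $\P_i$ the projection onto $\u_i$), so that when the cross terms such as $\tau_i\lambda_{s,1}\lambda_{s,i}$ are split by AM--GM and summed over rows, the ``heavy'' factor $\sum_s\lambda_{s,1}^2\le\norm{\A}_F^2$ is cancelled by the $\norm{\A\P_i}_F^2/\norm{\A}_F^2$ factor hidden in $\tau_i^2$, leaving a quantity charged to the residual mass $\sum_s\lambda_{s,i}^2$ with $i\ge t$. Carrying this out requires (i) quantitative control of \emph{every} entry of the change-of-basis matrix between the two Gram--Schmidt bases (the analogue of \claimref{claim:cob}, where the off-diagonal entries must be of size $\O{\tau_i\tau_j}$, not merely $\O{\tau_i}$, for the cross terms to be absorbable when $\eps<1/d$), and (ii) the row-by-row coefficient comparison $|\zeta_{s,i}^2-\lambda_{s,i}^2|$ followed by the summation identities $\sum_s\frac{\tau_i^2}{\eps}\lambda_{s,1}^2\le\frac{\eps^2}{C\log n}\sum_s\lambda_{s,i}^2$. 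Your proposal states that such a ``delicate direction-by-direction application'' is required but does not exhibit it, so the claimed $\O{\eps}$ per-round bound is asserted rather than proved.

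A secondary point: your chain-rule decomposition conditions on both processes having produced the same index history, but some histories have probability zero under $\mathcal{P}_1$ (a later row linearly dependent on $\A_{f(1)}$ and the sampled rows) while having nonzero probability under $\mathcal{P}_2$; these histories must be accounted for explicitly (the paper's $\mathsf{BAD}$ sets), since for them there is no per-round comparison to make and their total probability mass must be charged to the TV accumulated in earlier rounds. This is fixable within your framework, but as stated the decomposition silently assumes all conditioning events are admissible for both processes.
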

\begin{proof}
Suppose $\mathcal{P}_1$ sequentially samples rows $\A_{f(1)},\ldots,\A_{f(k)}$. 
For each $t\in[k]$, let $\T_t=\A_{f(1)}\circ\ldots\circ\A_{f(t)}$ and $\Z_t=\I-\T_t^\dagger\T_t$ and $\R_t=\r_1\circ\A_{f(2)}\ldots\circ\A_{f(t)}$ and $\Y_t=\I-\R_t^\dagger\R_t$. 
We assume for the sake of presentation that $\A$ is a full-rank matrix, i.e. $\rank(\A)=d$ and prove the claim by induction. 

\paragraph{Base case.} 
For $t=2$, we first must show that the sampling distributions induced by $\r_1$ and $\A_{f(1)}$ are similar. 
Let $U=\{\u_1,\ldots,\u_d\}$ be the orthonormal basis for the row span of $\A$ so that $\u_1=\frac{\A_{f(1)}}{\norm{\A_{f(1)}}_2}$ points in the direction of $\A_{f(1)}$. 
Similarly, let $W=\{\w_1,\ldots,\w_d\}$ be an orthonormal basis for the row span of $\A$ obtained by applying the Gram-Schmidt process to the set $\{\w_1,\u_2,\ldots,\u_d\}$, where $\w_1=\frac{\r_1}{\norm{\r_1}_2}$. 
We argue that the change of basis matrix $\B$ from $U$ to $W$ must be close to the identity matrix. 

First, observe that from \lemref{lem:orthogonal:noise}, we have $\r_1=\norm{\A_{f(1)}}_2\left(\u_1+\sum_{i=1}^d(\pm\tau_i)\u_i\right)$, where $\tau_i\le\frac{\eps\sqrt{\eps}}{\sqrt{C\log n}}\frac{\norm{\A\P_i}_F}{\norm{\A}_F}$ with high probability and $\P_i=\u_i^\dagger\u_i$ is the projection matrix onto $\u_i$. 
Thus by setting $\tau^2=\sum_{i=1}^d\tau_i^2$, we can write the first row $\b_1$ of $\B$ so that the first entry is at least $1-\tau$ and entry $i$ is at most $\tau_i$ in magnitude. 
\begin{claim}
\claimlab{claim:cob}
The first entry in row $j>1$ is at most $2\tau_j$ in magnitude, entry $j$ in row $j$ is at least $1-3\tau_j$ in magnitude, and entry $i$ in row $j$ is at most $5\tau_i\tau_j$ in magnitude for $i<j$ with $i\neq 1$ and at most $2\tau_i\tau_j$ in magnitude for $i>j$. 
\end{claim}
\begin{proof}
We first consider the base case $j=2$ and determine $\b_2$ through the Gram-Schmidt process.  
For the elementary vector $\e_2\in\mathbb{R}^d$, note that $|\langle\e_2,\b_1\rangle|\norm{\b_1}_2\le\tau_2$. 
Thus we have that entry $i$ in $\b_2$ for $i\neq 2$ is at most $\frac{|\langle\e_i,\b_1\rangle|\tau_2}{1-\tau_2}$ in magnitude. 
Specifically for $i=1$, the first entry in $\b_2$ is bounded by $2\tau_2$ in magnitude, while for $i>2$, entry $i$ in $\b_2$ is bounded by $2\tau_2\tau_i$ in magnitude.  
It follows that the second entry in $\b_2$ is at least $1-2\tau_2$. 

We use similar reasoning to bound the entries in row $j$ of $\B$. 
Note that for sufficiently small $\eps<\frac{1}{d}$, we have
\[\sum_{i=1}^{j-1}|\langle\e_j,\b_i\rangle|\norm{\b_i}_2\le\tau_j+\sum_{i=2}^{j-1}5\tau_j\tau_i \le2\tau_j.\] 
Thus from the Gram-Schmidt process, entry $i$ in $\b_j$ for $i\neq j$ is at most $\frac{1}{1-2\tau_j}\left|\sum_{\ell=1}^{j-1}\langle\e_i,\b_{\ell}\rangle\cdot\langle\e_j,\b_{\ell}\rangle\right|$, which is at most 
\[\frac{1}{1-2\tau_j}\left(\tau_j+\sum_{\ell=2}^{j-1}4\tau_j\tau_{\ell}^2\right)\le2\tau_j\] 
in magnitude for $i=1$ for sufficiently small $\eps<\frac{1}{d}$ and at most 
\[\frac{1}{1-2\tau_j}\left(\tau_i\tau_j+2\tau_i\tau_j +\sum_{\ell=2,\ell\neq i}^{j-1}10\tau_j\tau_i\tau_{\ell}^2\right)\le5\tau_i\tau_j\] 
in magnitude for $i<j$ with $i\neq 1$ and at most
\[\frac{1}{1-2\tau_j}\left(\tau_i\tau_j+\sum_{\ell=2}^{j-1}4\tau_j\tau_i\tau_{\ell}^2\right)\le2\tau_i\tau_j\]
in magnitude for $i>j$. 
Thus it follows that entry $j$ in row $j$ of $\B$ is at least $1-3\tau_j$, which completes the induction.
\end{proof}
Therefore by \claimref{claim:cob}, we have
\begin{align}
\B=\begin{bmatrix}
1-\O{\tau} & \pm\O{\tau_2} & \pm\O{\tau_3} & \pm\O{\tau_4} & \ldots & \pm\O{\tau_d}\\
\pm\O{\tau_2} & 1-\O{\tau_2} & \pm\O{\tau_2\tau_3} & \pm\O{\tau_2\tau_4} & \ldots & \pm\O{\tau_2\tau_d}\\
\pm\O{\tau_3} & \pm\O{\tau_3\tau_2} & 1-\O{\tau_3} & \pm\O{\tau_3\tau_4} & \ldots & \pm\O{\tau_3\tau_d}\\
\pm\O{\tau_4} & \pm\O{\tau_4\tau_2} & \pm\O{\tau_4\tau_3} & 1-\O{\tau_4} & \ldots & \pm\O{\tau_4\tau_d}\\
\vdots & \vdots & \vdots & \vdots & \ddots & \vdots \\
\pm\O{\tau_d} & \pm\O{\tau_d\tau_2} & \pm\O{\tau_d\tau_3} & \pm\O{\tau_d\tau_4} & \ldots & 1-\O{\tau_d}
\end{bmatrix}.
\label{eqn:cob:matrix}\tag{$\star$}
\end{align}
for $\eps<\frac{1}{d}$, where the $\O{\cdot}$ notation in (\ref{eqn:cob:matrix}) hides a constant that is at most $5$. 

We can write each row $\A_s$ in terms of basis $U$ as $\A_s=\sum_{i=1}^d\lambda_{s,i}\u_i$ and in terms of basis $W$ as $\A_s=\sum_{i=1}^d\zeta_{s,i}\w_i$. 
Then since we project away from $\A_{f(1)}$, we should have sampled $\A_s$ with probability $\frac{\norm{\A_s\Z_{t-1}}_2^2}{\norm{\A\Z_{t-1}}_F^2}=\frac{\sum_{i=2}^d\lambda_{s,i}^2}{\sum_{j=1}^n\sum_{i=2}^d\lambda_{j,i}^2}$ in the second round but instead we sample it with probability $\frac{\norm{\A_s\Y_{t-1}}_2^2}{\norm{\A\Y_{t-1}}_F^2}=\frac{\sum_{i=2}^d\zeta_{s,i}^2}{\sum_{j=1}^n\sum_{i=2}^d\zeta_{j,i}^2}$. 
From the change of basis matrix $\B$, we can also write 
\[
\zeta_{s,i}=(1-\O{\tau_i})\lambda_{s,i}\pm\O{\tau_i}\lambda_{s,1}\pm\sum_{j\notin \{i,1\}}\O{\tau_i\tau_j}\lambda_{s,j},
\]
for $i\ge 2$. 
Therefore we can bound the difference
\begin{align*}
|\zeta_{s,i}^2-\lambda_{s,i}^2|&\le25\Big(\tau_i\lambda_{s,i}^2+\tau_i^2\lambda_{s,1}^2+\tau_i\lambda_{s,1}\lambda_{s,i}+\sum_{j,\ell\neq\{i,1\}}\tau_i^2\tau_j\tau_\ell\lambda_{s,j}\lambda_{s,\ell}\\
&+\sum_{j\neq\{i,1\}}\tau_i\tau_j\lambda_{s,i}\lambda_{s,j}+\sum_{j\neq\{i,1\}}\tau_i^2\tau_j\lambda_{s,1}\lambda_{s,j}\Big)\\
&\le 25\left(\tau_i\lambda_{s,i}^2+d\tau_i^2\lambda_{s,1}^2+\tau_i\lambda_{s,1}\lambda_{s,i}+4\sum_{j=2}^d d\tau_j^2\lambda_{s,j}^2\right),
\end{align*}
where the last inequality follows from AM-GM and that all values of $\tau_i,\tau_j\leq \eps^{3/2}$ and thus $\tau<1$ for $\eps<\frac{1}{d}$. 
We also have $\tau_i\lambda_{s,1}\lambda_{s,i}\le\eps\lambda_{s,i}^2+\frac{\tau_i^2}{\eps}\lambda_{s,1}^2$. 
Thus, 
\begin{align*}
\left|\sum_{i=2}^d\zeta_{s,i}^2-\sum_{i=2}^d\lambda_{s,i}^2\right|&\le25\sum_{i=2}^d\left[2\left(\eps+4d^2\tau_i^2\right)\lambda_{s,i}^2+\frac{2\tau_i^2}{\eps}\lambda_{s,1}^2\right]\\
&\le 25\sum_{i=2}^d\left(6\eps\lambda_{s,i}^2+\frac{2\tau_i^2}{\eps}\lambda_{s,1}^2\right),
\end{align*}
since $\tau_i^2<\eps^3$, and $\eps<\frac{1}{d}$. 
Moreover, $\tau_i^2\le\frac{\eps^3}{C\log n}\frac{\sum_{a=1}^n\lambda_{a,i}^2}{\sum_{a=1}^n\sum_{b=1}^d\lambda_{a,b}^2}$, thus we have $\sum_{s=1}^n\frac{\tau_i^2}{\eps}\lambda_{s,1}^2 \leq \frac{\eps^2}{C\log n}\sum_{s=1}^n \lambda_{s,i}^2$. 
Therefore, we have $\left|\sum_{j=1}^n\sum_{i=2}^d\zeta_{j,i}^2-\sum_{j=1}^n\sum_{i=2}^d\lambda_{j,i}^2\right|\le200\eps\sum_{j=1}^n\sum_{i=2}^d\lambda_{j,i}^2$.
In other words, $\norm{\A\Y_1}_F^2$ is within a $(1+200\eps)$ factor of $\norm{\A\Z_1}_F^2$. 
Moreover, $\frac{\norm{\A_s\Y_1}_2^2}{\norm{\A\Y_1}_F^2}$ and $\frac{\norm{\A_s\Z_1}_2^2}{\norm{\A\Z_1}_F^2}$ are probability distributions that each sum to $1$ across all $s$. 
Thus we have the distortion in the sampling probability of $\A_s$ is
\[\left|\frac{\sum_{i=2}^d\lambda_{s,i}^2}{\sum_{j=1}^n\sum_{i=2}^d\lambda_{j,i}^2}-\frac{\sum_{i=2}^d\zeta_{s,i}^2}{\sum_{j=1}^n\sum_{i=2}^d\zeta_{j,i}^2}\right|\le\frac{2(1+200\eps)25}{\sum_{j=1}^n\sum_{i=2}^d\lambda_{j,i}^2}\sum_{i=2}^d\left(6\eps\lambda_{s,i}^2+2\frac{\tau_i^2}{\eps}\lambda_{s,1}^2\right).\]
Taking the sum over all rows $\A_s$ and noting $\tau_i^2\le\frac{\eps^3}{C\log n}\frac{\sum_{a=1}^n\lambda_{a,i}^2}{\sum_{a=1}^n\sum_{b=1}^d\lambda_{a,b}^2}$, we have that 
\[\sum_{i=1}^n\left|\frac{\norm{\A_i\Y_1}_2^2}{\norm{\A\Y_1}_F^2}-\frac{\norm{\A_i\Z_1}_2^2}{\norm{\A\Z_1}_F^2}\right|\le799\eps,\]
for sufficiently small $\eps$. 
Thus including the $\frac{1}{\poly(n)}$ event of failure from \lemref{lem:orthogonal:noise}, the total variation distance is at most $800\eps$, which completes our base case.

\paragraph{Inductive step.} 
Suppose that the total variation distance between the distributions of the first $t-1$ indices sampled by $\mathcal{P}_1$ and $\mathcal{P}_2$ is at most $800(t-1)\eps$. 
We consider the difference in the probability distribution induced by linearly independent vectors $\A_{f(1)},\ldots,\A_{f(t-1)}$ and the probability distribution induced by linearly independent vectors $\r_1,\A_{f(2)},\ldots,\A_{f(t-1)}$. 
We can define $U=\{\u_1,\ldots,\u_d\}$ to be an orthonormal basis for the row span of $\A$ such that $\{\u_1,\ldots,\u_s\}$ is a basis for the row span of $\{\A_{f(1)},\ldots,\A_{f(s)}\}$ for each $2\le s\le t-1$. 
Similarly, let $W=\{\w_1,\ldots,\w_d\}$ be an orthonormal basis for the row span of $\A$ such that $\{\w_1,\ldots,\w_s\}$ is an orthonormal basis that extends the row span of $\{\r_1,\A_{f(2)},\ldots,\A_{f(s)}\}$ for each $2\le s\le t-1$. 
We again have from \lemref{lem:orthogonal:noise} that with high probability, $\r_1=\norm{\A_{f(1)}}_2\left(\u_1+\sum_{i=1}^d(\pm\O{\tau_i})\u_i\right)$ and $\tau_i=\frac{\eps\sqrt{\eps}}{\sqrt{C\log n}}\frac{\norm{\A\P_i}_F}{\norm{\A}_F}$ with constant at most $5$ hidden in the $\O{\cdot}$ notation and $\P_i=\u_i^\dagger\u_i$ is the projection matrix onto $\u_i$. 
We condition on this relationship between $\r_1$ and the basis $U$ and incorporate the $\frac{1}{\poly(n)}$ probability of failure into our variation distance at the end of the inductive step. 
Thus, we can apply the Gram-Schmidt process to obtain the change of basis matrix $\B$ from $U$ to $W$ whose entries are again bounded as in (\ref{eqn:cob:matrix}). 
We emphasize that the same bounds apply in the matrix $\B$ since we still receive a noisy row in the first iteration of the sampling procedure and we receive the true rows in the subsequent iterations, just as in the base case. 
Thus, \lemref{lem:orthogonal:noise} is only invoked in determining the bounds of the first row $\b_1$ and the subsequent bounds are determined using the Gram-Schmidt process, exactly as in \claimref{claim:cob}. 

We again write each row $\A_s$ in terms of basis $U$ as $\A_s=\sum_{i=1}^d\lambda_{s,i}\u_i$ and in terms of basis $W$ as $\A_s=\sum_{i=1}^d\zeta_{s,i}\w_i$. 
Then since we project away from $\A_{f(1)},\ldots,\A_{f(t-1)}$, we should have sampled $\A_s$ with probability $\frac{\norm{\A_s\Z_{t-1}}_2^2}{\norm{\A\Z_{t-1}}_F^2}=\frac{\sum_{i=t}^d\lambda_{s,i}^2}{\sum_{j=1}^n\sum_{i=t}^d\lambda_{j,i}^2}$ in round $t$ but instead we sample it with probability $\frac{\norm{\A_s\Y_{t-1}}_2^2}{\norm{\A\Y_{t-1}}_F^2}=\frac{\sum_{i=t}^d\zeta_{s,i}^2}{\sum_{j=1}^n\sum_{i=t}^d\zeta_{j,i}^2}$. 
From the change of basis matrix $\B$, we again have that for $i\ge 2$:
\begin{align*}
|\zeta_{s,i}^2-\lambda_{s,i}^2|&\le25\Big(\tau_i\lambda_{s,i}^2+\tau_i^2\lambda_{s,1}^2+\tau_i\lambda_{s,1}\lambda_{s,i}+\sum_{j,\ell\neq\{i,1\}}\tau_i^2\tau_j\tau_\ell\lambda_{s,j}\lambda_{s,\ell}\\
&+\sum_{j\neq\{i,1\}}\tau_i\tau_j\lambda_{s,i}\lambda_{s,j}+\sum_{j\neq\{i,1\}}\tau_i^2\tau_j\lambda_{s,1}\lambda_{s,j}\Big).
\label{eqn:coeff:diff}\tag{$\boxplus$}
\end{align*}
From AM-GM, we have:
\begin{align*}
\tau_i\lambda_{s,1}\lambda_{s,i}&\le\eps\lambda_{s,i}^2+\frac{\tau_i^2}{\eps}\lambda_{s,1}^2\\
\tau_i^2\tau_j\tau_\ell\lambda_{s,j}\lambda_{s,\ell}&\le\tau_i^2\tau_j^2\lambda_{s,j}^2+\tau_i^2\tau_{\ell}^2\lambda_{s,\ell}^2,\\
\tau_i\tau_j\lambda_{s,i}\lambda_{s,j}&\le\eps^2\lambda_{s,i}^2+\frac{\tau_i^2\tau_j^2}{\eps^2}\lambda_{s,j}^2,\\
\tau_i^2\tau_j\lambda_{s,1}\lambda_{s,j}&\le\tau_i^2\lambda_{s,1}^2+\tau_i^2\tau_j^2\lambda_{s,j}^2.
\end{align*}
Thus for $\eps<\frac{1}{d}$, $|\zeta_{s,i}^2-\lambda_{s,i}^2|\le25\left(2\eps\lambda_{s,i}^2+\frac{2\tau_i^2}{\eps}\lambda_{s,1}^2+4\sum_{j=2}^d\frac{\tau_i^2\tau_j^2}{\eps^2}\lambda_{s,j}^2\right)$. 
Then 
\begin{align*}
\left|\sum_{i=t}^d\zeta_{s,i}^2-\sum_{i=t}^d\lambda_{s,i}^2\right|\le25\sum_{i=t}^d\left(2\eps\lambda_{s,i}^2+\frac{2\tau_i^2}{\eps}\lambda_{s,1}^2+4\sum_{j=2}^d\frac{\tau_i^2\tau_j^2}{\eps^2}\lambda_{s,j}^2\right).
\end{align*}
Now recall that $\tau_i^2=\frac{\eps^3}{C\log n}\frac{\norm{\A\P_i}_F^2}{\norm{\A}_F^2}=\frac{\eps^3}{C\log n}\frac{\sum_{a=1}^n\lambda_{a,i}^2}{\sum_{a=1}^n\sum_{b=1}^d\lambda_{a,b}^2}$. 
Therefore we get that, 
\[
\sum_{s=1}^n\sum_{i=t}^d\sum_{j=2}^d \left(\frac{\tau_i^2}{\eps^2}\right) \lambda_{s,j}^2= \sum_{i=t}^d \left(\frac{\tau_i^2}{\eps^2}\right) \sum_{s=1}^n \sum_{j=2}^d \lambda_{s,j}^2
\le\sum_{i=t}^d \frac{\eps}{C\log n}\sum_{s=1}^n\lambda_{s,i}^2.
\] 
Similarly, we get that,
\[
\sum_{s=1}^n\sum_{i=t}^d \left(\frac{\tau_i^2}{\eps}\right) \lambda_{s,1}^2= \sum_{i=t}^d \left(\frac{\tau_i^2}{\eps}\right) \sum_{s=1}^n \lambda_{s,1}^2
\le\sum_{i=t}^d \frac{\eps^2}{C\log n}\sum_{s=1}^n\lambda_{s,i}^2.
\] 
Therefore, we can bound 
\[
\sum_{s=1}^n \left|\sum_{i=t}^d\zeta_{s,i}^2-\sum_{i=t}^d\lambda_{s,i}^2\right|\le200\eps\sum_{s=1}^n\sum_{i=t}^d\lambda_{s,i}^2
\]
so that $\norm{\A\Y_{t-1}}_F^2$ is once again within a $(1+200\eps)$ factor of $\norm{\A\Z_{t-1}}_F^2$. 
Moreover, 
\[\sum_{s=1}^n\left|\frac{\sum_{i=t}^d\zeta_{s,i}^2}{\sum_{j=1}^n\sum_{i=t}^d\lambda_{j,i}^2}-\frac{\sum_{i=t}^d\lambda_{s,i}^2}{\sum_{j=1}^n\sum_{i=t}^d\lambda_{j,i}^2}\right|\le200\eps.\]
Since we consider the total variation distance across the probability distribution, then the sampling probabilities each sum to $1$ and we have
\[\sum_{i=1}^n\left|\frac{\norm{\A_i\Y_{t-1}}_2^2}{\norm{\A\Y_{t-1}}_F^2}-\frac{\norm{\A_i\Z_{t-1}}_2^2}{\norm{\A\Z_{t-1}}_F^2}\right|=\sum_{s=1}^n\left|\frac{\sum_{i=t}^d\zeta_{s,i}^2}{\sum_{j=1}^n\sum_{i=t}^d\zeta_{j,i}^2}-\frac{\sum_{i=t}^d\lambda_{s,i}^2}{\sum_{j=1}^n\sum_{i=t}^d\lambda_{j,i}^2}\right|\le2(1+200\eps)200\eps\le799\eps,\]
for sufficiently small $\eps$. 
Thus including the $\frac{1}{\poly(n)}$ probability of failure from \lemref{lem:orthogonal:noise}, the total variation distance between the probability distributions of the index of the sample output by $\mathcal{P}_1$ and $\mathcal{P}_2$ in round $t$ is at most $800\eps$. 

From the inductive hypothesis, the total variation distance between the probability distributions of $t-1$ indices corresponding to samples output by $\mathcal{P}_1$ and $\mathcal{P}_2$ across $t-1$ rounds is at most $800(t-1)\eps$. 
Now for any sequence of rows $\mathcal{S}=\{\A_{f(2)},\ldots,\A_{f(t-1)}\}$, let $\mathcal{E}_{\mathcal{S},1}$ be the event that the rows of $\mathcal{S}$ are sequentially sampled, given that the first sampled row is $\A_{f(1)}$ and let $\mathcal{E}_{\mathcal{S},2}$ be the event that rows of $\mathcal{S}$ are sequentially sampled, given that the first sampled row is $\r_1$. 
Let $\mathsf{BAD}$ be the sets $\mathcal{S}$ such that at least one of $\A_{f(1)}\cup\mathcal{S}$ or $\r_1\cup\mathcal{S}$ is not linearly independent. 
Observe that $P_{\mathsf{BAD}}:=\sum_{\mathcal{S}\in\mathsf{BAD}}\left|\PPr{\mathcal{E}_{\mathcal{S},1}}-\PPr{\mathcal{E}_{\mathcal{S},2}}\right|\le 800(t-1)\eps$, since sampling a row $\A_{f(i)}$ that is linearly dependent with $\A_{f(1)},\ldots,\A_{f(i-1)}$ occurs with probability zero so then sampling a row $\A_{f(i)}$ that is linearly dependent with $\r_1,\ldots,\A_{f(i-1)}$ must be realized in the total variation distance in the first $t-1$ rounds. 

Otherwise, we have that the total variation distance between the probability distributions of the index corresponding to the sample output by $\mathcal{P}_1$ and $\mathcal{P}_2$ in round $t$ is at most $800\eps$. 
Let $p_j$ be the event that $\A_j$ is sampled in round $t$. 
Then we have that the probability that $\mathcal{S}\cup\A_j$ is sequentially sampled, given that the first sampled row is $\A_{f(1)}$, is $\PPr{\mathcal{E}_{\mathcal{S}\cup\A_j,1}}=\PPr{\mathcal{E}_{\mathcal{S},1}}\PPr{p_j|\mathcal{E}_{\mathcal{S},1}}$ and the probability that $\mathcal{S}\cup\A_j$ is sequentially sampled, given that the first sampled row is $\r_1$, is $\PPr{\mathcal{E}_{\mathcal{S}\cup\A_j,2}}=\PPr{\mathcal{E}_{\mathcal{S},2}}\PPr{p_j|\mathcal{E}_{\mathcal{S},2}}$. 
We have $\sum_{\mathcal{S}\notin\mathsf{BAD}}\left|\PPr{\mathcal{E}_{\mathcal{S},1}}-\PPr{\mathcal{E}_{\mathcal{S},2}}\right|\le800(t-1)\eps-P_{\mathsf{BAD}}$. 
Moreover for $\mathcal{S}\notin\mathsf{BAD}$, we have $\sum_{j\in[n]}\left|\PPr{p_j|\mathcal{E}_{\mathcal{S},1}}-\PPr{p_j|\mathcal{E}_{\mathcal{S},2}}\right|\le800\eps$. 
Thus we have 
\[\sum_{\mathcal{S}\notin\mathsf{BAD}}\sum_{j\in[n]}\left|\PPr{\mathcal{E}_{\mathcal{S}\cup\A_j,1}}-\PPr{\mathcal{E}_{\mathcal{S}\cup\A_j,2}}\right|\le800(t-1)\eps-P_{\mathsf{BAD}}+800\eps.\]
 
Since $\sum_{\mathcal{S}\in\mathsf{BAD}}\sum_{j\in[n]}\left|\PPr{\mathcal{E}_{\mathcal{S}\cup\A_j,1}}-\PPr{\mathcal{E}_{\mathcal{S}\cup\A_j,2}}\right|\le P_{\mathsf{BAD}}$, then we have that the total variation distance is at most $800t\eps$, which completes the induction. 
Thus the total variation distance between the probability distributions of $k$ indices output by $\mathcal{P}_1$ and $\mathcal{P}_2$ across $k$ rounds is at most $800k\eps$.
\end{proof}

Since the total variation distance induced by a single noisy row is small, we obtain that the total variation distance between offline adaptive sampling and our adaptive sampler is small by rescaling the error parameter. 
Thus we now provide the full guarantees for our adaptive sampler. 
\begin{theorem}
\thmlab{thm:adaptive:sampler}
Given a matrix $\A\in\mathbb{R}^{n\times d}$ that arrives in a turnstile stream, there exists a one-pass algorithm \adaptive{} that outputs a set of $k$ indices such that the probability distribution for each set of $k$ indices has total variation distance $\eps$ of the probability distribution induced by adaptive sampling with respect to squared distances to the selected subspace in each iteration. 
The algorithm uses $\O{\frac{d^3k^6}{\eps^3}\log^6 n}$ bits of space. 
\end{theorem}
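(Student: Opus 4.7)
The plan is to analyze \algref{alg:noisy:adaptive} by combining the single-round guarantees of \thmref{thm:l2:sampling} with a telescoping hybrid argument built on \lemref{lem:adaptive:tvd}. I would instantiate each of the $k$ inner $L_{2,2}$ samplers with internal accuracy $\eps' = \Theta(\eps/k^2)$ and per-sampler failure probability $1/\poly(n)$, so that a union bound across all $k$ samplers still yields success with high probability. Because the CountSketch inside each sampler is a linear sketch, multiplying its stored vectors by the post-processing matrix $\P_j = \I - \M_{j-1}^\dagger \M_{j-1}$ reproduces exactly the sketch that would have been maintained on a stream of updates to $\A\P_j$, so \thmref{thm:l2:sampling} governs round $j$, and \lemref{lem:orthogonal:noise} ensures that the returned noisy row $\r_j$ has its perturbation controlled in every direction $\Q$ by $\norm{\A\P_j\Q}_F/\norm{\A\P_j}_F$.

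To convert these per-round guarantees into a global total-variation bound, I would telescope through hybrid distributions $\mathcal{H}_0, \mathcal{H}_1, \ldots, \mathcal{H}_k$, where $\mathcal{H}_j$ performs the first $j$ rounds using the noisy rows actually produced by the samplers and the remaining $k - j$ rounds as ideal adaptive sampling on the original rows of $\A$ projected away from the union of the $j$ previously sampled noisy rows. Then $\mathcal{H}_0$ is the ideal adaptive sampling distribution and $\mathcal{H}_k$ is the output of \adaptive. After conditioning on the first $j - 1$ noisy rows, the transition from $\mathcal{H}_{j-1}$ to $\mathcal{H}_j$ replaces a single exact row by its noisy counterpart inside a residual subproblem of $k - j + 1$ adaptive rounds, which is exactly the setup of \lemref{lem:adaptive:tvd} applied to the appropriately projected matrix; it contributes at most $\O{(k - j + 1)\eps'}$ variation distance, while \thmref{thm:l2:sampling} contributes an additional $\O{\eps'}$ from the multiplicative distortion of the round-$j$ sampling probabilities themselves. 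Summing via the triangle inequality yields $\tvd(\mathcal{H}_0, \mathcal{H}_k) = \O{k^2\eps'} = \O{\eps}$, which is the desired guarantee after tuning the constant hidden in $\eps'$.

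The main obstacle, and what I expect to be the delicate step, is justifying that \lemref{lem:adaptive:tvd} is legitimately applicable inside each hybrid transition: the lemma is stated for a single noisy row introduced in the very first round, whereas we need it invoked in round $j$ after conditioning on $j - 1$ earlier noisy rounds. The way around this is to observe that once the noisy outputs $\r_1,\ldots,\r_{j-1}$ are fixed, the remaining task is itself an adaptive sampling instance on $\A$ restricted to the orthogonal complement of the noisy subspace spanned by $\r_1,\ldots,\r_{j-1}$, and \lemref{lem:orthogonal:noise} invoked with the projection $\Y_{j-1}$ forces the change-of-basis matrix from the true residual basis to the noisy residual basis to obey exactly the entry-wise bounds that drive the proof of \lemref{lem:adaptive:tvd}, so its induction carries over verbatim with $k$ replaced by $k - j + 1$. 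For the space bound, \algref{alg:noisy:adaptive} maintains $k$ parallel $L_{2,2}$ samplers, each of which uses $\poly(d, 1/\eps', \log n)$ bits by \thmref{thm:l2:sampling} with bucket count inflated to $\Theta(\log^2 n/(\eps')^2)$ as demanded by \lemref{lem:orthogonal:noise} and failure probability boosted to $1/\poly(n)$; substituting $\eps' = \Theta(\eps/k^2)$, summing over the $k$ samplers, and folding in the standard $\O{\log n}$ overhead per counter entry delivers the stated $\O{d^3 k^6 \log^6 n / \eps^3}$ space bound.
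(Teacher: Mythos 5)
Your proposal follows essentially the same route as the paper's proof: the paper defines $k+1$ hybrid processes $\mathcal{P}_1,\ldots,\mathcal{P}_{k+1}$ (noisy rows in the first $i-1$ rounds, true adaptive sampling thereafter), bounds each adjacent pair by $\O{k\eps}$ by re-running the argument of \lemref{lem:adaptive:tvd} on the projected matrix $\A\Q$ — exactly your resolution of the ``delicate step'' — and telescopes via the triangle inequality to $\O{k^2\eps}$, then rescales; your hybrids $\mathcal{H}_0,\ldots,\mathcal{H}_k$ are the same construction up to an index shift. The one substantive discrepancy is your parameter choice: \lemref{lem:adaptive:tvd} is only established for error parameter smaller than $\frac{1}{d}$, so the paper instantiates the $L_{2,2}$ samplers with internal accuracy $\O{\frac{\eps}{dk^2}}$ rather than your $\Theta(\eps/k^2)$; this is also precisely where the claimed space bound comes from, since with $\eps'=\Theta(\eps/k^2)$ the bucket-count and repetition arithmetic yields roughly $d k^7\eps^{-3}\log^6 n$ instead of $\O{\frac{d^3k^6}{\eps^3}\log^6 n}$, the extra factors of $d$ in the theorem arising exactly from enforcing $\eps'<\frac{1}{d}$. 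This is a mechanical adjustment rather than a structural flaw, but as written your application of \lemref{lem:adaptive:tvd} and your derivation of the stated space bound are not justified for general $\eps$.
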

\begin{proof}
Consider a set of $k+1$ processes $\mathcal{P}_1,\mathcal{P}_2,\ldots,\mathcal{P}_{k+1}$, where for each $i\in[k+1]$, $\mathcal{P}_i$ is a process that samples noisy rows from the $L_{2,2}$ sampler for the first $i-1$ rounds and actual rows from $\A$ beginning with round $i$, through adaptive sampling with $p=2$. 
Observe that $\mathcal{P}_1$ is the actual adaptive sampling process and $\mathcal{P}_{k+1}$ is the noisy process of \algref{alg:noisy:adaptive}. 
Then \lemref{lem:adaptive:tvd} argues that the total variation distance between the output distributions of the $k$ indices sampled by $\mathcal{P}_1$ and $\mathcal{P}_2$ is at most $\O{k\eps}$. 
In fact, the proof of \lemref{lem:adaptive:tvd} also shows that the total variation distance between the output distributions of the indices sampled by $\mathcal{P}_i$ and $\mathcal{P}_{i+1}$ is at most $\O{k\eps}$ for any $i\in[k]$. 
This is because the sampling distributions of $\mathcal{P}_i$ and $\mathcal{P}_{i+1}$ is identical in the first $i$ rounds, so we can use the same argument starting at round $i$ using the input matrix $\A\Q$ rather than $\A$, where $\Q$ is the projection matrix away from the noisy rows sampled in the first $i$ rounds. 
Let $\mu_i$ be the probability distribution of the $k$ indices output by $\mathcal{P}_i$. 
Thus from a triangle inequality argument, we have that 
\[\tvd(\mathcal{P}_1,\mathcal{P}_{k+1})\le\sum_{i=1}^k\tvd(\mathcal{P}_i,\mathcal{P}_{i+1})=\sum_{i=1}^k\O{k\eps}=\O{k^2\eps}.\]
In other words, the total variation distance between the probability distribution of the $k$ indices output by \algref{alg:noisy:adaptive} and the probability distribution of the $k$ indices output by adaptive sampling is at most $\O{k^2\eps}$. 
Then we obtain total variation distance $\eps$ by the appropriate rescaling factor. 

\lemref{lem:adaptive:tvd} requires the error parameter to be less than $\frac{1}{d}$. 
To analyze the space complexity, observe that with the error parameter $\O{\frac{\eps}{dk^2}}$, then \lemref{lem:adaptive:tvd} suggests that $\O{\frac{d^2k^4\log^2 n}{\eps^2}}$ buckets are necessary in each CountSketch structure in the $L_{2,2}$ sampler. 
Thus each CountSketch structure is a $\O{\frac{d^2k^4\log^2 n}{\eps^2}}\times\O{\log n}$ table. 
Each entry in the table is a vector of $d$ integers that use $\O{d\log n}$ bits of space for each vector, and the sampler can be repeated $\O{\frac{k}{\eps}\log^2 n}$ times to obtain probability of success at least $1-\frac{1}{\poly(n)}$. 
This forms one $L_{2,2}$ sampler, but we need $k$ iterations of the $L_{2,2}$ sampler to simulate $k$ rounds of adaptive sampling. 
Therefore, the total space complexity is $\O{\frac{d^3k^6}{\eps^3}\log^6 n}$. 
\end{proof}
Note that the proof of \lemref{lem:adaptive:tvd} also showed that $\norm{\A\Y_t}_F^2$ is within a $(1+\O{\eps})$ factor of $\norm{\A\Z_t}_F^2$. 
In \thmref{thm:adaptive:sampler}, we have now sampled $k$ noisy rows rather than a single noisy row followed by $k-1$ true rows, but we also rescale the error parameter down to $\O{\frac{\eps}{dk^2}}$. 
\begin{corollary}
\corlab{cor:sampler:distortion}
Suppose \algref{alg:noisy:adaptive} samples noisy rows $\r_1,\ldots,\r_k$ rather than the actual rows $\A_{f(1)},\ldots,\A_{f(k)}$. 
Let $\T_k=\A_{f(1)}\circ\ldots\circ\A_{f(k)}$, $\Z_k=\I-\T_k^\dagger\T_k$, $\R_k=\r_1\circ\ldots\circ\r_k$ and $\Y_k=\I-\R_k^\dagger\R_k$. 
Then $(1-\eps)\norm{\A\Y_k}_F^2\le\norm{\A\Z_k}_F^2\le(1+\eps)\norm{\A\Y_k}_F^2$ with probability at least $1-\eps$. 
\end{corollary}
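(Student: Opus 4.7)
The plan is to mimic the hybrid-process argument used in the proof of \thmref{thm:adaptive:sampler}, but track the Frobenius norm of the residual rather than the total variation distance of the output indices. First I would define the same chain of processes $\mathcal{P}_1,\mathcal{P}_2,\ldots,\mathcal{P}_{k+1}$, where $\mathcal{P}_i$ uses noisy rows output by the $L_{2,2}$ sampler for iterations $1,\ldots,i-1$ and then switches to true rows of $\A$ for iterations $i,\ldots,k$. Let $\Y_k^{(i)}$ denote the projector away from the first $k$ sampled rows produced by $\mathcal{P}_i$, so that $\Y_k^{(1)}=\Z_k$ and $\Y_k^{(k+1)}=\Y_k$.

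Second, for each consecutive pair $\mathcal{P}_i,\mathcal{P}_{i+1}$ I would argue that $\norm{\A\Y_k^{(i)}}_F^2=(1\pm\O{\eps})\norm{\A\Y_k^{(i+1)}}_F^2$ by invoking exactly the change-of-basis bound already established in the inductive step of \lemref{lem:adaptive:tvd}. More precisely, the two processes agree in the first $i-1$ rounds, so I can condition on the common prefix of sampled rows and then view round $i$ as the ``base'' round where exactly one noisy row replaces one true row. The proof of \lemref{lem:adaptive:tvd} (the displayed computation bounding $|\sum_i\zeta_{s,i}^2-\sum_i\lambda_{s,i}^2|$) already shows that after such a single substitution, the Frobenius norm of the residual projected away from the $t$ selected rows is preserved up to a multiplicative $(1+200\eps)$. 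Since rounds $i+1,\ldots,k$ then draw truly from $\A$ in both processes, the residual subspace at round $k$ in $\mathcal{P}_i$ and $\mathcal{P}_{i+1}$ differs by the swap of one basis vector only, and the same change-of-basis bound applies (using orthonormal bases that extend the commonly sampled rows).

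Third, I would chain the $k$ multiplicative bounds together: $\norm{\A\Z_k}_F^2=\prod_{i=1}^k(1\pm\O{\eps})\norm{\A\Y_k}_F^2=(1\pm\O{k\eps})\norm{\A\Y_k}_F^2$, provided $\eps$ is small enough that the product expands cleanly. As in \thmref{thm:adaptive:sampler}, I would then rescale the internal error parameter of the $L_{2,2}$ samplers down to $\O{\eps/(dk^2)}$ (which we already pay for in the space bound) so that the aggregated distortion is at most $\eps$, and absorb the $\frac{1}{\poly(n)}$ failure probabilities from \lemref{lem:orthogonal:noise} across the $k$ rounds into the $1-\eps$ success probability via a union bound.

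The main obstacle is the bookkeeping for the hybrid: the statement of \lemref{lem:adaptive:tvd} is phrased only for the case of a single noisy row in round $1$, and I must justify why the identical Gram-Schmidt/change-of-basis argument with bounds as in equation~(\ref{eqn:cob:matrix}) goes through for each intermediate hybrid $\mathcal{P}_i\to\mathcal{P}_{i+1}$. The key point, which I would spell out, is that conditioned on the shared sampled prefix, the post-processing matrix fed into the $L_{2,2}$ sampler at round $i$ is a projection, so \lemref{lem:orthogonal:noise} still supplies the required directional noise bound $\norm{\v_e\Q}_2\le\tfrac{\eps\sqrt{\eps}}{\sqrt{C\log n}}\tfrac{\norm{\A\P\Q}_F}{\norm{\A\P}_F}\norm{\A_j\P}_2$, from which the entrywise bounds on the change-of-basis matrix and thus the $(1+\O{\eps})$ Frobenius-norm preservation follow verbatim. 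Once that transfer of the single-round argument is in place, the multiplicative chaining and rescaling are routine.
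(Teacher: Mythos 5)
Your proposal is correct and follows essentially the same route as the paper: the paper obtains the corollary by noting that the proof of \lemref{adaptive:tvd} already shows $\norm{\A\Y_t}_F^2=(1\pm\O{\eps})\norm{\A\Z_t}_F^2$ for a single noisy-row substitution, and then combining this with the hybrid-process argument of \thmref{adaptive:sampler} together with the rescaling of the sampler's error parameter to $\O{\eps/(dk^2)}$. Your one-swap-at-a-time chaining conditioned on the common prefix, the appeal to \lemref{orthogonal:noise} with the projection away from the previously sampled noisy rows, and the union bound over the $\frac{1}{\poly(n)}$ failure events are exactly this argument made explicit.
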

At first glance, it might seem strange that \corref{cor:sampler:distortion} obtains increased accuracy with higher probability, but recall that \algref{alg:noisy:adaptive} has a space dependency on $\poly\left(\frac{1}{\eps}\right)$. 
\section{Applications}
\seclab{sec:apps}
In this section, we give a number of data summarization applications for our adaptive sampler. 
In each application, the goal is to find a set $S$ of $k$ rows of an underlying matrix $\A\in\mathbb{R}^{n\times d}$ defined on a turnstile stream that optimizes a given predetermined function, which quantifies how well $S$ represents $\A$. 
In particular among other things, we must show that for the purposes of each application, (1) it suffices to return a noisy perturbation of the orthogonal component at each sampling iteration, rather than the original row of the underlying matrix and (2) the algorithm still succeeds with an additive perturbation to sampling probabilities, rather than an ideal $(1\pm\eps)$ multiplicative perturbation. 

\subsection{Column/Row Subset Selection}
\seclab{sec:rss}
We first show that our adaptive sampling procedure can be used to give turnstile streaming algorithms for column/row subset selection.
Recall that in the row (respectively column) subset selection problem, the inputs are an approximation parameter $\eps>0$, a parameter $k$ for the number of selected rows or columns, and a matrix $\A\in\mathbb{R}^{n\times d}$ that arrives as a data stream and the goal is to output a set $\M$ of $k$ rows (respectively columns) of $\A$ such that $\norm{\A-\A\M^\dagger\M}_F^2\le(1+\eps)\norm{\A-\A^*_k}_F^2$ (respectively $\norm{\A-\M\M^\dagger\A}_F^2\le(1+\eps)\norm{\A-\A^*_k}_F^2$), where $\A^*_k$ is the best rank $k$ approximation to $\A$. 
For the remainder of the section, we focus on row subset selection with the assumption that $n\gg d$ as continuation of the adaptive sampling scheme in previous sections, but we note that our results extend naturally to column subset selection. 

Recall that volume sampling induces a probability distribution on subsets of rows rather than individual rows of $\A$. 
For a subset $\T$ of $k$ rows of $\A$, let $\Delta(\T)$ be the simplex defined by these rows and the origin and $\Vol(\T)$ be the volume of $\Delta(\T)$. 
Then the volume sampling probability distribution samples each subset $\T$ of $k$ rows of $\A$ with probability
\[p_{\T}=\frac{\Vol(\T)^2}{\sum_{\S:|\S|=k}\Vol(\S)^2},\]
where $\S$ is taken across all subsets of $k$ rows of $\A$. 

\cite{DeshpandeRVW06} gives the following relationship between volume sampling and row subset selection.
\begin{theorem}\cite{DeshpandeRVW06}
\thmlab{thm:vs:rss}
Given a matrix $\A\in\mathbb{R}^{n\times d}$, let $\T$ be a subset of $k$ rows of $\A$ generated from the volume sampling probability distribution. 
Then
\[\mathbb{E}_{\T}\left[\norm{\A-\A\T^\dagger\T}_F^2\right]\le(k+1)\norm{\A-\A^*_k}_F^2,\]
where $\A^*_k$ is the best rank $k$ approximation to $\A$. 
\end{theorem}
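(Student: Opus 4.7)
The plan is to reduce the expectation to a ratio of elementary symmetric polynomials in the squared singular values of $\A$, and then bound that ratio by the tail sum $\sum_{i>k}\sigma_i^2 = \|\A-\A^*_k\|_F^2$ via a clean combinatorial decomposition. The workhorse identity is the classical volume factorization
$$\Vol_{k+1}(\T\cup\{\A_j\})^2 \;=\; \Vol_k(\T)^2 \cdot d(\A_j,\mathrm{span}(\T))^2,$$
which lets me convert sums of squared distances into sums of $(k+1)$-volumes.

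First I would rewrite $\|\A - \A\T^\dagger\T\|_F^2 = \sum_j d(\A_j,\mathrm{span}(\T))^2$ and multiply by $\Vol_k(\T)^2$ so that the summand becomes $\Vol_{k+1}(\T\cup\{\A_j\})^2$. Summing over all $k$-subsets $\T$ and observing that each $(k+1)$-subset $\S$ is produced exactly $k+1$ times (once for every element one can peel off), the ratio defining $\mathbb{E}_\T$ collapses to
$$\mathbb{E}_\T\bigl[\|\A - \A\T^\dagger\T\|_F^2\bigr] \;=\; (k+1)\cdot \frac{\sum_{|\S|=k+1}\Vol_{k+1}(\S)^2}{\sum_{|\T|=k}\Vol_k(\T)^2}.$$
Next I would invoke the Cauchy--Binet formula: via the SVD $\A = \U\Sigma\V^\top$ it gives $\sum_{|\T|=j}\Vol_j(\T)^2 = e_j(\sigma_1^2,\ldots,\sigma_d^2)$, where $e_j$ is the $j$-th elementary symmetric polynomial. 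Substituting reduces the theorem to the purely combinatorial claim
$$\frac{e_{k+1}(x_1,\ldots,x_d)}{e_k(x_1,\ldots,x_d)} \;\le\; x_{k+1}+\cdots+x_d$$
whenever $x_1\ge\cdots\ge x_d\ge 0$, since then the right-hand side equals $\sum_{i>k}\sigma_i^2 = \|\A-\A^*_k\|_F^2$.

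This last inequality is the step I expect to need the most thought, but there is an elegant path: group each $(k+1)$-subset $S\subseteq[d]$ by its maximum index $j^*(S)$, which must satisfy $j^*(S)\ge k+1$ simply because $|S|=k+1$. Writing $T = S\setminus\{j^*(S)\}$ yields a bijection between $(k+1)$-subsets and pairs $(T,j)$ with $|T|=k$ and $j>\max T$, giving the identity $e_{k+1}(x) = \sum_{|T|=k}\prod_{i\in T}x_i\cdot\sum_{j>\max T}x_j$. Since $\max T\ge k$ in sorted order, the inner sum $\sum_{j>\max T}x_j$ is over indices contained in $\{k+1,\ldots,d\}$ and is therefore bounded by $x_{k+1}+\cdots+x_d$; factoring this uniform bound out of the outer sum over $T$ leaves exactly $e_k(x)$, completing the argument. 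The main obstacle is really just spotting this grouping by maximum element; once it is in hand, the proof needs no heavy analytic machinery beyond Cauchy--Binet.
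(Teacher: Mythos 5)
The paper does not prove this theorem; it simply cites it from \cite{DeshpandeRVW06}, so there is no internal proof to compare against. Your argument is correct and matches the structure of the original: the Gram-determinant factorization plus the $(k+1)$-fold overcounting of $(k+1)$-subsets gives the exact identity $\mathbb{E}_\T\left[\norm{\A-\A\T^\dagger\T}_F^2\right] = (k+1)\,e_{k+1}(\sigma^2)/e_k(\sigma^2)$ via Cauchy--Binet, and your ``peel off the maximum element'' decomposition correctly yields $e_{k+1}(x)\le e_k(x)\sum_{i>k}x_i$ for sorted nonnegative $x$, because every $k$-subset $T\subseteq[d]$ has $\max T\ge k$, so the inner sum $\sum_{j>\max T}x_j$ is supported on $\{k+1,\ldots,d\}$. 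Two small bookkeeping notes: the paper's $\Vol(\T)$ is a simplex volume (carrying a $1/k!$), so the factorization should read $\Vol_{k+1}(\T\cup\{\A_j\})^2=\tfrac{1}{(k+1)^2}\Vol_k(\T)^2\,d(\A_j,\mathrm{span}(\T))^2$, but this constant cancels in every ratio and changes nothing; and the final division implicitly requires $e_k(\sigma^2)>0$, i.e.\ $\rank(\A)\ge k$, which is forced anyway for volume sampling to be a well-defined distribution.
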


For the remainder of \secref{sec:rss}, we consider the adaptive sampling scheme in \algref{alg:as} to obtain a subset of $k$ rows of $\A$, proportional to the squared row norms of the orthogonal projection at each step. 
\begin{algorithm}[!htb]
\caption{Offline Adaptive Sampling by Squared Row Norms of Orthogonal Projection}
\alglab{alg:as}
\begin{algorithmic}[1]
\Require{Matrix $\A\in\mathbb{R}^{n\times d}$, integer $k>0$}
\Ensure{Subset of $k$ rows of $\A$}
\State{$\M\gets\emptyset$}
\For{$i=1$ to $i=k$}
\State{Choose $\r$ to be $\A_j$, with probability $\frac{\norm{\A_j(\I-\M^\dagger\M)}_2^2}{\norm{\A(\I-\M^\dagger\M)}_F^2}$, for $j\in[n]$.}
\State{$\M\gets\M\circ\r$}
\EndFor
\State{\Return $\M$}
\end{algorithmic}
\end{algorithm}

\cite{DeshpandeV06} shows that the adaptive sampling probabilities can be bounded by a multiple of the volume sampling probabilities.
\begin{lemma}\cite{DeshpandeV06}
\lemlab{lem:as:vs}
Given a matrix $\A\in\mathbb{R}^{n\times d}$, let $p_{\T}$ be the probability of sampling a set $\T$ of $k$ rows from the volume sampling probability distribution and let $q_{\T}$ be the probability of sampling $\T$ from the adaptive sampling probability distribution, as in \algref{alg:as}.  
Then $q_{\T}\le k!p_{\T}$.
\end{lemma}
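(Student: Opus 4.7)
The plan is to express $q_{\T}$ as a sum over the $k!$ orderings of $\T$, telescope the numerators to $\Vol(\T)^2$, and then establish a uniform lower bound $\prod_i F_i^\pi \ge Z_k$ for every ordering $\pi$ of $\T$, where $Z_k := \sum_{|\S|=k}\Vol(\S)^2$ is the volume-sampling normalizer and $F_i^\pi := \norm{\A(\I - \M_{i-1}^\dagger\M_{i-1})}_F^2$ is the residual Frobenius norm after projecting $\A$ away from the first $i-1$ rows of $\pi$. Once this per-ordering bound is in hand, summing over the $k!$ orderings is immediate.

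Fix an ordering $(s_1, \ldots, s_k)$ of $\T$ and let $\M_i$ denote the matrix with rows $\A_{s_1}, \ldots, \A_{s_i}$, and set $d_i := \norm{\A_{s_i}(\I - \M_{i-1}^\dagger\M_{i-1})}_2$. The probability that \algref{alg:as} produces this specific ordering is $\prod_{i=1}^k d_i^2 / F_i^\pi$, and the parallelepiped-volume identity $\Vol(\M_i)^2 = d_i^2 \cdot \Vol(\M_{i-1})^2$ telescopes to $\prod_i d_i^2 = \Vol(\T)^2$. Summing over orderings yields $q_\T = \Vol(\T)^2 \sum_\pi \prod_i (F_i^\pi)^{-1}$, so the desired $q_\T \le k! p_\T = k!\,\Vol(\T)^2 / Z_k$ follows at once from the per-ordering bound $\prod_i F_i^\pi \ge Z_k$.

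To prove this per-ordering bound, I will let $\sigma_1 \ge \sigma_2 \ge \cdots \ge \sigma_d \ge 0$ denote the singular values of $\A$. By the Cauchy--Binet formula applied to $\A^\top\A$, we have $Z_k = e_k(\sigma_1^2, \ldots, \sigma_d^2)$, the $k$-th elementary symmetric polynomial of the squared singular values. Now $\A(\I - \M_{i-1}^\dagger\M_{i-1})$ is an orthogonal projection of $\A$ onto a subspace of codimension $i-1$, so the Courant--Fischer characterization yields the Cauchy interlacing inequality $\sigma_j\bigl(\A(\I-\M_{i-1}^\dagger\M_{i-1})\bigr) \ge \sigma_{j+i-1}(\A)$, and summing the squares over $j$ gives $F_i^\pi \ge \sum_{l=i}^d \sigma_l^2$.

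Finally, I will expand $\prod_{i=1}^k \bigl(\sum_{l=i}^d \sigma_l^2\bigr)$ and observe that every increasing $k$-subset $\{a_1 < a_2 < \cdots < a_k\} \subseteq [d]$ contributes the term $\prod_j \sigma_{a_j}^2$ to the expansion (selecting $\sigma_{a_i}^2$ from the $i$-th factor is legal since $a_i \ge i$ for any sorted $k$-subset), while every other term in the expansion is nonnegative. This gives $\prod_i F_i^\pi \ge e_k(\sigma_1^2, \ldots, \sigma_d^2) = Z_k$, completing the argument. The main obstacle is fixing the Cauchy-interlacing direction correctly, since what is needed is a \emph{lower} bound on each $\sigma_j$ of the projected matrix in terms of a \emph{later} singular value of $\A$; once that inequality is established, the elementary-symmetric-polynomial expansion closes the loop cleanly.
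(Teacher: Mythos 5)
Your proof is correct, and it is worth noting that the paper itself gives no argument for this statement: it is cited directly from \cite{DeshpandeV06}, so your write-up is a self-contained reconstruction rather than a variant of anything in the text. The skeleton you use (sum $q_{\T}$ over the $k!$ orderings, telescope the numerators to the squared parallelepiped volume of $\T$, and lower-bound the product of per-round normalizers $\prod_i F_i^\pi$ by $Z_k=\sum_{|\S|=k}\Vol(\S)^2=e_k(\sigma_1^2,\ldots,\sigma_d^2)$ via Cauchy--Binet) is exactly the standard route, and each step checks out: the interlacing inequality $\sigma_j(\A(\I-\M_{i-1}^\dagger\M_{i-1}))\ge\sigma_{j+i-1}(\A)$ is valid by the Courant--Fischer argument you indicate (and is only stronger when $\M_{i-1}$ is rank-deficient, though such orderings carry zero probability anyway), and the elementary-symmetric-polynomial expansion of $\prod_{i=1}^k\sum_{l\ge i}\sigma_l^2$ indeed dominates $e_k$ because every increasing $k$-subset satisfies $a_i\ge i$. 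The one place where you work harder than necessary is the interlacing step: since $\I-\M_{i-1}^\dagger\M_{i-1}$ projects away from a subspace of dimension at most $i-1$, the bound $F_i^\pi=\norm{\A(\I-\M_{i-1}^\dagger\M_{i-1})}_F^2\ge\sum_{l=i}^d\sigma_l^2$ follows immediately from the Eckart--Young fact that no rank-$(i-1)$ projection can leave less residual Frobenius mass than the best rank-$(i-1)$ approximation; Cauchy interlacing gives the same conclusion but with more machinery. Also, for consistency with the paper's convention that $\Vol$ is a simplex volume, you may want to remark explicitly that the $(1/k!)^2$ factors cancel in the ratio defining $p_{\T}$, so using parallelepiped volumes throughout is harmless.
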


Thus our adaptive sampling procedure immediately gives a one-pass turnstile streaming algorithm for column/row subset selection.
\begin{theorem}
\thmlab{thm:rss}
Given a matrix $\A\in\mathbb{R}^{n\times d}$ that arrives in a turnstile data stream, there exists a one-pass algorithm that outputs a set $\R$ of $k$ (noisy) rows of $\A$ such that 
\[\PPr{\norm{\A-\A\R^\dagger\R}_F^2\le 16(k+1)!\norm{\A-\A^*_k}_F^2}\ge\frac{2}{3}.\] 
The algorithm uses $\poly\left(d,k,\log n\right)$ bits of space. 
\end{theorem}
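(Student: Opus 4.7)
The plan is to chain together three reductions: from our noisy streaming sampler down to the offline adaptive sampling distribution of \algref{alg:as}, from adaptive sampling down to volume sampling via \lemref{lem:as:vs}, and finally from volume sampling to the relative-error guarantee via \thmref{thm:vs:rss}. Specifically, I would invoke \adaptive{} of \thmref{thm:adaptive:sampler} with a sufficiently small error parameter $\eps_0$ (say $\eps_0\le 1/100$, so that all losses can be absorbed into the stated factor $16(k+1)!$), obtaining a set $\R = \r_1 \circ \cdots \circ \r_k$ of $k$ noisy rows together with the underlying set of true indices $f(1),\ldots,f(k)$. Let $\T = \A_{f(1)}\circ\cdots\circ\A_{f(k)}$ denote the matrix of true rows at these indices.

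Combining \lemref{lem:as:vs} and \thmref{thm:vs:rss}, for the offline adaptive sampling distribution we have
\[
\EEx{\T\sim\text{adaptive}}{\norm{\A-\A\T^\dagger\T}_F^2} \;\le\; k!\,\EEx{\T\sim\text{volume}}{\norm{\A-\A\T^\dagger\T}_F^2} \;\le\; (k+1)!\,\norm{\A-\A^*_k}_F^2,
\]
so by Markov's inequality the event $\mathcal{E}_{\text{good}}:=\{\norm{\A-\A\T^\dagger\T}_F^2 \le 4(k+1)!\norm{\A-\A^*_k}_F^2\}$ holds with probability at least $3/4$ under true adaptive sampling. Since \thmref{thm:adaptive:sampler} guarantees that the output distribution over index $k$-tuples is within total variation distance $\eps_0$ of true adaptive sampling, $\mathcal{E}_{\text{good}}$ still holds with probability at least $3/4 - \eps_0$ on the indices selected by our algorithm.

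It then remains to transfer the bound from the true projection $\A\T^\dagger\T$ to the noisy projection $\A\R^\dagger\R$ that our algorithm actually realizes. This is exactly what \corref{cor:sampler:distortion} provides: with probability at least $1-\eps_0$,
\[
\norm{\A-\A\R^\dagger\R}_F^2 \;=\; \norm{\A\Y_k}_F^2 \;\le\; \frac{1}{1-\eps_0}\norm{\A\Z_k}_F^2 \;=\; \frac{1}{1-\eps_0}\norm{\A-\A\T^\dagger\T}_F^2.
\]
Conditioning on both $\mathcal{E}_{\text{good}}$ and this distortion event, which together fail with probability at most $1/4 + 2\eps_0 < 1/3$ for $\eps_0$ sufficiently small, we get $\norm{\A-\A\R^\dagger\R}_F^2 \le \frac{4}{1-\eps_0}(k+1)!\norm{\A-\A^*_k}_F^2 \le 16(k+1)!\norm{\A-\A^*_k}_F^2$.

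The main subtlety, and the step where I would exercise the most care, is the transfer from the true-row projection to the noisy-row projection. A priori, replacing the true row $\A_{f(i)}$ by the noisy row $\r_i$ could drastically increase the residual $\norm{\A-\A\R^\dagger\R}_F^2$ (the kind of catastrophe illustrated by \figref{fig:bad}). The saving grace is that \corref{cor:sampler:distortion} is stated precisely for the residual Frobenius norm after projecting away from all $k$ sampled noisy rows, which matches the row subset selection objective exactly. The space bound $\poly(d,k,\log n)$ then follows directly from \thmref{thm:adaptive:sampler} applied with $\eps_0$ a constant.
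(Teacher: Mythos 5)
Your proposal is correct, and it rests on exactly the same chain of reductions as the paper's proof: \thmref{thm:adaptive:sampler} to get within total variation distance of true adaptive sampling, \lemref{lem:as:vs} together with \thmref{thm:vs:rss} to bound the expected residual under the ideal distribution by $(k+1)!\norm{\A-\A^*_k}_F^2$, and \corref{cor:sampler:distortion} to transfer the guarantee from the true rows $\T$ to the noisy rows $\R$. The only difference is in how the total variation distance is exploited: you apply Markov's inequality under the \emph{ideal} adaptive distribution first, obtaining a deterministic good event on index $k$-tuples, and then shift its probability by $\eps_0$ when passing to the algorithm's distribution; the paper instead works with the algorithm's distribution directly, isolating the exceptional collection $\mathcal{S}$ of tuples with $\widehat{q_{\T}}>2q_{\T}$ (whose ideal mass is bounded by the TV distance), bounding the conditional expectation of the residual outside $\mathcal{S}$ by roughly twice the ideal expectation, and only then applying Markov. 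Both arguments are sound and yield the same constants up to the slack already absorbed in the factor $16(k+1)!$; your ordering is arguably a bit cleaner since the event whose probability you transfer is a fixed function of the sampled indices, so no exceptional set or conditional-expectation bookkeeping is needed, while the paper's version has the mild advantage of making explicit where oversampled tuples could have hurt and why they cannot.
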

\begin{proof}
\thmref{thm:adaptive:sampler} states that the indices of the rows sampled by \adaptive{} have probability distribution roughly equivalent to the probability distribution of the indices of the rows sampled by adaptive sampling, as in \algref{alg:as}. 
We would thus like to apply \lemref{lem:as:vs} and \thmref{thm:vs:rss}, but we need to avoid specific failure events in our analysis. 
First, it may be the case that \adaptive{} samples a set $\R$ so that $\norm{\A-\A\R^\dagger\R}_F^2$ is very large, but the corresponding set $\T$ has very little probability of being sampled by the adaptive sampling probability distribution. 
We absorb this failure event into the total variation distance. 
Second, we must analyze the difference between $\norm{\A-\A\R^\dagger\R}_F^2$ for the set $\R$ noisy rows compared to $\norm{\A-\A\T^\dagger\T}_F^2$ for the actual rows. 
This is handled by \corref{cor:sampler:distortion}, which bounds the difference. 
We formalize these notions below. 

Let $\mathcal{E}$ be the event that algorithm \adaptive{} of \thmref{thm:adaptive:sampler} with error parameter $\eps=\frac{1}{12}$ sequentially samples an ordered set $\R$ of $k$ noisy rows corresponding to an arbitrary ordered subset $\T$ of $k$ rows of $\A$ and $\norm{\A-\A\R^\dagger\R}_F^2\le\left(1+\frac{1}{12}\right)\norm{\A-\A\T^\dagger\T}_F^2$ and note that $\PPr{\mathcal{E}}\ge\frac{11}{12}$ by \corref{cor:sampler:distortion}. 
Let $\mathcal{E}_{\T}$ be the event that algorithm \adaptive{} of \thmref{thm:adaptive:sampler} with error parameter $\eps=\frac{1}{12}$ samples a set $\R$ of $k$ noisy rows corresponding to the \emph{specific} subset $\T$ of $k$ rows of $\A$ and $\norm{\A-\A\R^\dagger\R}_F^2\le\left(1+\frac{1}{12}\right)\norm{\A-\A\T^\dagger\T}_F^2$. 

Let $\widehat{q_{\T}}$ be the probability that the indices corresponding to $\T$ are sampled by \adaptive{} of \thmref{thm:adaptive:sampler}. 
Let $q_{\T}$ be the probability of sampling $\T$ from the adaptive sampling probability distribution, as in \algref{alg:as}.  
Let $\mathcal{S}$ be the collection of $k$-sets of indices of rows $\T$ such that $\widehat{q_{\T}}>2q_{\T}$ and note that $\sum_{\T\in\mathcal{S}}q_{\T}\le\frac{1}{12}$ since $\{q_{\T}\}_{\T}$ and $\{\widehat{q_{\T}}\}_{\T}$ have total variation distance at most $\frac{1}{12}$ using \adaptive{} with error parameter $\eps=\frac{1}{12}$ by \thmref{thm:adaptive:sampler}. 
When the event $\mathcal{E}$ occurs, we abuse notation by saying $\R\notin\mathcal{S}$ if the indices corresponding to the set of sampled rows does not belong in $\mathcal{S}$. 
Observe that algorithmically, we do not know the indices corresponding to the set $\R$ of $k$ noisy rows, but analytically each row of $\R$ must correspond to a certain row of $\T$, based on the scaling of the uniform random variables at each round of \adaptive.
Then 
\begin{align*}
\underset{\R\notin\mathcal{S}}{\mathbb{E}}\left[\norm{\A-\A\R^\dagger\R}_F^2\,\Big|\,\mathcal{E}\right]&=\sum_{\T\notin\mathcal{S}}\widehat{q_{\T}}\cdot\mathbb{E}\left[\norm{\A-\A\R^\dagger\R}_F^2\,\Big|\,\mathcal{E}_{\T}\right]\\
&\le\frac{1}{\PPr{\mathcal{E}_{\T}}}\sum_{\T\notin\mathcal{S}}\widehat{q_{\T}}\left(1+\frac{1}{12}\right)\norm{\A-\A\T^\dagger\T}_F^2\\
&\le\frac{12}{11}\sum_{\T\notin\mathcal{S}}2q_{\T}\left(1+\frac{1}{12}\right)\norm{\A-\A\T^\dagger\T}_F^2\\
&\le\frac{13}{11}\sum_{\T\notin\mathcal{S}}2q_{\T}\norm{\A-\A\T^\dagger\T}_F^2,
\end{align*}
where the penultimate inequality follows from $\T\notin\mathcal{S}$ implying that $\widehat{q_{\T}}\le 2q_{\T}$. 
By \lemref{lem:as:vs} and \thmref{thm:vs:rss}, it follows that
\begin{align*}
\underset{\R\notin\mathcal{S}}{\mathbb{E}}\left[\norm{\A-\A\R^\dagger\R}_F^2\,\Big|\,\mathcal{E}\right]&\le\frac{26}{11}\sum_{\T:|\T|=k}p_{\T}k!\norm{\A-\A\T^\dagger\T}_F^2\\
&\le\frac{26}{11}(k+1)!\norm{\A-\A^*_k}_F^2,
\end{align*}
where $p_{\T}$ is the probability of sampling a set $\T$ of $k$ rows from the volume sampling probability distribution. 
Therefore by Markov's inequality, the correctness of the claim follows by additionally taking a union bound over $\PPr{\neg\mathcal{E}}\le\frac{1}{12}$ and $\sum_{\T\in\mathcal{S}}q_{\T}\le\frac{1}{12}$. 
The space of the algorithm follows from taking $\eps=\O{1}$ in \thmref{thm:adaptive:sampler}. 
\end{proof}

\subsection{Subspace Approximation}
\seclab{sec:apps:sa}
We next show that our adaptive sampling procedure can be used to give turnstile streaming algorithms for the subspace approximation. 
Recall that in the subspace approximation problem, the inputs are a parameter $p\ge 1$, a matrix $\A\in\mathbb{R}^{n\times d}$ that arrives as a data stream, and a parameter $k$ for the target dimension of the subspace, and the goal is to output a $k$-dimensional linear subspace $\H$ that minimizes $\left(\sum_{i=1}^n d(\A_i,\H)^p\right)^{\frac{1}{p}}$, where $d(\A_i,\H)=\norm{\A_i(\I-\H^\dagger\H)}_2$ is the distance from $\A_i$ to the subspace $\H$.

We consider a generalized version of the adaptive sampling scheme that appears in \algref{alg:as} to obtain a subset of $k$ rows of $\A$, where rows are sampled with probabilities proportional to $p\th$ power of the distance to the subspace formed by the span of the sampled rows. 
The generalized version, which appears in \algref{alg:as:generalized}, corresponds to \algref{alg:as} when $p=2$.  
\begin{algorithm}[!htb]
\caption{Offline Adaptive Sampling by $p\th$ Power of Distance to Subspace}
\alglab{alg:as:generalized}
\begin{algorithmic}[1]
\Require{Matrix $\A\in\mathbb{R}^{n\times d}$, integers $k>0$, $p\ge 1$}
\Ensure{Subset of $k$ rows of $\A$}
\State{$\M\gets\emptyset$}
\For{$i=1$ to $i=k$}
\State{Choose $\r$ to be $\A_j$, with probability $\frac{d(\A_j,\M)^p}{\sum_{\ell=1}^n d(\A_{\ell},\M)^p}$, for $j\in[n]$.}
\State{$\M\gets\M\circ\r$}
\EndFor
\State{\Return $\M$}
\end{algorithmic}
\end{algorithm}

\cite{DeshpandeV07} shows that adaptive sampling based on the $p\th$ powers of the subspace distances can be used to give a good approximation to the subspace approximation problem.
\begin{theorem}
\cite{DeshpandeV07}
\thmlab{thm:as:sa}
Given a matrix $\A\in\mathbb{R}^{n\times d}$, let $\T$ be a subset of $k$ rows of $\A$ generated from the adaptive sampling probability distribution, as in \algref{alg:as:generalized}. 
Then
\[\mathbb{E}_{\T}\left[\sum_{i=1}^n d(\A_i,\T)^p\right]\le((k+1)!)^p\sum_{i=1}^n d(\A_i,\A^*_k)^p,\]
where $\A^*_k$ is the best rank $k$ solution to the subspace approximation problem. 
\end{theorem}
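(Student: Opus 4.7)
The plan is to prove this by induction on $k$, with the engine being a one-step lemma that quantifies how much the expected cost decreases when one additional row is sampled adaptively. Let $\M_j$ denote the subspace spanned by the first $j$ adaptively sampled rows, write $E(\T) := \sum_i d(\A_i, \T)^p$, and let $V^* := \A^*_k$ so that $E^* := E(V^*)$ is the optimum; the target is $\mathbb{E}[E(\M_k)] \le ((k+1)!)^p E^*$.

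The central step is a one-step lemma: if $\M$ has been selected so far and $\r$ is drawn with probability proportional to $d(\A_j,\M)^p$, then in expectation the cost $\mathbb{E}_{\r}[E(\mathrm{span}(\M, \r))]$ is bounded by a linear combination of $E^*$ and $E(\mathrm{span}(\M, V^*))$, the latter of which is at most $E^*$. To prove this one-step lemma, I would decompose each row $\A_i$ relative to $V^*$ as $\A_i = \A_i^\parallel + \A_i^\perp$, then apply the $p$-th power triangle (or Minkowski) inequality to split $d(\A_i, \mathrm{span}(\M, \r))$ into a contribution from $\A_i^\perp$, directly bounded by $d(\A_i, V^*)$, and a contribution from $\A_i^\parallel$ that is controlled by relating $\mathrm{span}(\M, \r)$ to $\mathrm{span}(\M, V^*)$ plus an error involving $d(\r, V^*)$. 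The crucial calculation involves
\[
\mathbb{E}_{\r}\bigl[d(\r, V^*)^p\bigr] \;=\; \frac{\sum_j d(\A_j, \M)^p \, d(\A_j, V^*)^p}{\sum_j d(\A_j, \M)^p},
\]
which a H\"older-type inequality bounds by a constant multiple of $E^*$, yielding the one-step bound.

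Iterating the one-step lemma for $k$ rounds with careful bookkeeping produces the factor $(k+1)!$, with each round contributing a factor of roughly $(k+1-j)$ for the ``missing'' direction relative to $V^*$; raising the Minkowski losses to the $p$-th power gives the quoted $((k+1)!)^p$. The main obstacle will be controlling the triangle inequality losses for general $p \ge 1$. For $p = 2$ one obtains a sharper $(k+1)!$ bound via volume sampling and orthogonality (as is done via \thmref{thm:vs:rss} and \lemref{lem:as:vs} earlier in this section), but for general $p$ the constants in the $p$-th power triangle inequality propagate across rounds, and the $((k+1)!)^p$ factor is the price paid by this chain of inequalities. The delicate point is ensuring that the per-round factor is $(k+1-j)$ rather than something larger, since this determines whether the final bound is factorial or exponential in $k$.
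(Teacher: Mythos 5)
A preliminary remark: the paper does not prove \thmref{thm:as:sa} at all — it is imported verbatim from \cite{DeshpandeV07} — so your proposal has to be judged against the known Deshpande--Varadarajan argument rather than an in-paper proof. Your overall skeleton (induction on $k$, decompose each $\A_i$ relative to the optimum, triangle inequality, compute an expectation over the adaptively sampled row) is in the right family, but the central one-step lemma you propose is false as stated, and the induction you build on it cannot work. You claim that if $\M$ is the current sample and $\r$ is drawn with probability proportional to $d(\A_j,\M)^p$, then $\mathbb{E}_{\r}[E(\mathrm{span}(\M,\r))]$ is bounded by a linear combination of $E^*$ and $E(\mathrm{span}(\M,V^*))$. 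Since $\mathrm{span}(\M,V^*)\supseteq V^*$ forces $E(\mathrm{span}(\M,V^*))\le E^*$, such a bound would already give $\mathbb{E}[E(\M_1)]=\O{E^*}$ after a \emph{single} round, i.e.\ a $k$-independent approximation of the $k$-dimensional optimum by a one-dimensional span. This is impossible: take $k\ge 2$ and place $m$ points at $R\e_i$ for each $i\in[k]$. Then $E^*=0$ (and remains negligible under small perturbations), while $E(\mathrm{span}(\r_1))\approx(k-1)mR^p$ for every possible first sample, so no finite coefficients make your inequality hold at any round in which fewer than $k$ rows have been selected. Consequently ``iterating the one-step lemma'' has nothing valid to iterate, and no bookkeeping of per-round factors can rescue it.

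The fix — and the actual structure in \cite{DeshpandeV07} — is to compare not against $V^*$ itself but against the best completion with the \emph{remaining dimension budget}: one proves that a single adaptive sample $\x$ lets you trade one dimension of the comparison subspace for $\x$ at bounded multiplicative cost, i.e.\ $\mathbb{E}_{\x}\bigl[\min_{\dim G=j-1}\sum_i d(\A_i,\mathrm{span}(\M\cup\{\x\}\cup G))^p\bigr]$ is at most a factor of order $(j+1)^p$ times $\min_{\dim H=j}\sum_i d(\A_i,\mathrm{span}(\M\cup H))^p$, and chaining this as the budget drops from $k$ to $0$ telescopes to $((k+1)!)^p$. Relatedly, the expectation you single out, $\mathbb{E}_{\r}[d(\r,V^*)^p]\le E^*$, is true but trivial (a weighted average of nonnegative terms is at most their sum) and is not the quantity the argument needs; what must be controlled is the normalized ratio $d(\x,\mathrm{span}(\M,H))^p/d(\x,\M)^p$ averaged under the $d(\cdot,\M)^p$-proportional distribution, because the per-point error terms it multiplies carry a scale of order $d(\A_i,\M)$, and it is exactly this normalization that makes the telescoping work. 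Your base case $k=1$ does go through along the lines you describe, but without reformulating the inductive statement as above the proposal has a genuine gap.
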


Thus our adaptive sampling procedure immediately gives a one-pass turnstile streaming algorithm for the subspace approximation problem with $p=2$, by a similar argument to \thmref{thm:rss}.
\begin{theorem}
\thmlab{thm:sap}
Given a matrix $\A\in\mathbb{R}^{n\times d}$ that arrives in a turnstile data stream, there exists a one-pass algorithm that outputs a set $\R$ of $k$ (noisy) rows of $\A$ such that 
\[\PPr{\left(\sum_{i=1}^n d(\A_i,\R)^2\right)^{\frac{1}{2}}\le 4(k+1)!\left(\sum_{i=1}^n d(\A_i,\A^*_k)^2\right)^{\frac{1}{2}}}\ge\frac{2}{3},\]
where $\A^*_k$ is the best rank $k$ solution to the subspace approximation problem.  
The algorithm uses $\poly\left(d,k,\log n\right)$ bits of space. 
\end{theorem}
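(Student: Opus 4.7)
My plan is to mirror the proof of \thmref{thm:rss} very closely, substituting the distance-to-subspace objective for the Frobenius reconstruction error and using \thmref{thm:as:sa} in place of the volume-sampling/\thmref{thm:vs:rss} chain. The key observation is that for $p=2$, the objective is exactly a Frobenius-norm residual: if $\R$ is the set of output rows with row span equal to some subspace, then $\sum_{i=1}^n d(\A_i,\R)^2 = \norm{\A(\I-\R^\dagger\R)}_F^2$, which is the very quantity that \corref{cor:sampler:distortion} controls. Thus I would run \adaptive{} of \thmref{thm:adaptive:sampler} with error parameter $\eps$ set to a sufficiently small constant (say $1/24$), and let $\R$ be the set of noisy rows it returns, corresponding analytically to some ideal ordered subset $\T$ of $k$ true rows of $\A$.

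The key steps are as follows. First, by \thmref{thm:as:sa} with $p=2$, a set $\T$ drawn from the true adaptive sampling distribution of \algref{alg:as:generalized} satisfies
\[
\mathbb{E}_{\T}\!\left[\sum_{i=1}^n d(\A_i,\T)^2\right]\le ((k+1)!)^2\sum_{i=1}^n d(\A_i,\A^*_k)^2.
\]
Second, following exactly the structure in the proof of \thmref{thm:rss}, let $q_\T$ be the true adaptive sampling probability and $\widehat{q_\T}$ the probability with which \adaptive{} outputs indices $\T$, and define $\mathcal{S}=\{\T:\widehat{q_\T}>2q_\T\}$. By \thmref{thm:adaptive:sampler} with total variation distance at most $\eps$, we get $\sum_{\T\in\mathcal{S}}q_\T\le\eps$ and $\sum_{\T\in\mathcal{S}}\widehat{q_\T}\le2\eps$. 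Third, let $\mathcal{E}$ be the event from \corref{cor:sampler:distortion} that $\norm{\A(\I-\R^\dagger\R)}_F^2\le(1+\eps)\norm{\A(\I-\T^\dagger\T)}_F^2$, occurring with probability at least $1-\eps$. Conditioning on $\mathcal{E}$ and $\T\notin\mathcal{S}$, I can bound
\[
\underset{\R\notin\mathcal{S}}{\mathbb{E}}\!\left[\sum_{i=1}^n d(\A_i,\R)^2\,\Big|\,\mathcal{E}\right]\le\frac{1+\eps}{1-\eps}\sum_{\T\notin\mathcal{S}}2q_\T\sum_{i=1}^n d(\A_i,\T)^2\le\frac{2(1+\eps)}{1-\eps}((k+1)!)^2\sum_{i=1}^n d(\A_i,\A^*_k)^2,
\]
applying \thmref{thm:as:sa}. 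Finally, I would apply Markov's inequality to convert this expectation bound to a high-probability bound, take square roots, union bound over $\neg\mathcal{E}$, $\T\in\mathcal{S}$, and the Markov event, and tune the constants so the surviving approximation ratio is at most $4(k+1)!$ with probability $\ge 2/3$.

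The space bound is immediate from \thmref{thm:adaptive:sampler} applied with $\eps=\Theta(1)$, giving $\poly(d,k,\log n)$ bits. The main subtlety, as in \thmref{thm:rss}, is properly accounting for the three sources of failure: the total variation distortion from the sampler ($\mathcal{S}$), the additive error of \corref{cor:sampler:distortion} comparing noisy to true residuals, and the Markov slack; none of these is hard individually, but they must be simultaneously absorbed into the constant $4$ in front of $(k+1)!$. Otherwise the argument is essentially a verbatim adaptation of the proof of \thmref{thm:rss}, with the simplification that no volume-sampling intermediate step is required because \thmref{thm:as:sa} directly bounds the adaptive sampling expectation.
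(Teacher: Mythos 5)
Your proposal is correct and follows essentially the same route as the paper's proof of \thmref{thm:sap}: run \adaptive{} with a small constant error parameter, use \thmref{thm:as:sa} (with $p=2$) for the expectation bound, control the distorted sampling probabilities via the total variation guarantee of \thmref{thm:adaptive:sampler} through the set $\mathcal{S}$, compare noisy to true residuals via \corref{cor:sampler:distortion}, and finish with Markov's inequality, a union bound, and a square root to absorb the constants into $4(k+1)!$. Your observation that no volume-sampling intermediate step is needed here (unlike \thmref{thm:rss}) matches the paper exactly.
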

\begin{proof}
As in the proof of \thmref{thm:rss}, let $\mathcal{E}$ be the event that algorithm \adaptive{} of \thmref{thm:adaptive:sampler} with error parameter $\eps=\frac{1}{12}$ samples a set $\R$ of $k$ noisy rows corresponding to an arbitrary subset $\T$ of rows of $\A$ and $\sum_{i=1}^n d(\A_i,\R)^2\le\left(1+\frac{1}{12}\right)\sum_{i=1}^n d(\A_i,\T)^2$, so that $\PPr{\mathcal{E}}\ge\frac{11}{12}$ by \corref{cor:sampler:distortion}. 
For a set $\T$ of rows of $\A$, let $\mathcal{E}_{\T}$ be the event that algorithm \adaptive{} of \thmref{thm:adaptive:sampler} with error parameter $\eps=\frac{1}{12}$ samples a set $\R$ of $k$ noisy rows corresponding to $\T$ and $\sum_{i=1}^n d(\A_i,\R)^2\le\left(1+\frac{1}{12}\right)\sum_{i=1}^n d(\A_i,\T)^2$. 

Let $\widehat{q_{\T}}$ be the probability that the indices corresponding to $\T$ are sampled by \adaptive{} of \thmref{thm:adaptive:sampler} and $q_{\T}$ be the probability of sampling $\T$ from the adaptive sampling probability distribution, as in \algref{alg:as:generalized} with $p=2$.  
Let $\mathcal{S}$ be the set of rows $\T$ such that $\widehat{q_{\T}}>2q_{\T}$ and note that $\sum_{\T\in\mathcal{S}}q_{\T}\le\frac{1}{12}$ since $\{q_{\T}\}_{\T}$ and $\{\widehat{q_{\T}}\}_{\T}$ have total variation distance at most $\frac{1}{12}$ using \adaptive{} with error parameter $\eps=\frac{1}{12}$ by \thmref{thm:adaptive:sampler}. 
We again abuse notation by saying $\R\notin\mathcal{S}$ when the event $\mathcal{E}$ occurs if the indices corresponding to the set of sampled rows $\R$ does not belong in $\mathcal{S}$. 
Then
\begin{align*}
\underset{\R\notin\mathcal{S}}{\mathbb{E}}\left[\sum_{i=1}^n d(\A_i,\R)^2\,\Big|\,\mathcal{E}\right]&=\sum_{\T\notin\mathcal{S}}\widehat{q_{\T}}\cdot\mathbb{E}\left[\sum_{i=1}^n d(\A_i,\R)^2\,\Big|\,\mathcal{E}_{\T}\right]\\
&\le\frac{1}{\PPr{\mathcal{E}_{\T}}}\sum_{\T\notin\mathcal{S}}\widehat{q_{\T}}\cdot\left(1+\frac{1}{12}\right)\sum_{i=1}^n d(\A_i,\T)^2\\
&\le\frac{12}{11}\sum_{\T\notin\mathcal{S}}2q_{\T}\left(1+\frac{1}{12}\right)\sum_{i=1}^n d(\A_i,\T)^2\\
&\le\frac{13}{11}\sum_{\T\notin\mathcal{S}}2q_{\T}\sum_{i=1}^n d(\A_i,\T)^2,
\end{align*}
where the penultimate inequality follows from $\T\notin\mathcal{S}$ implying that $\widehat{q_{\T}}\le 2q_{\T}$. 
By \thmref{thm:as:sa}, 
\begin{align*}
\underset{\R\notin\mathcal{S}}{\mathbb{E}}\left[\sum_{i=1}^n d(\A_i,\R)^2\,\Big|\,\mathcal{E}\right]\le\frac{26}{11}((k+1)!)^2\sum_{i=1}^n d(\A_i,\A^*_k)^2.
\end{align*}
Therefore by Markov's inequality and taking a union bound over $\PPr{\neg\mathcal{E}}\le\frac{1}{12}$ and $\sum_{\T\in\mathcal{S}}q_{\T}\le\frac{1}{12}$, we have that
\[\PPr{\sum_{i=1}^n d(\A_i,\R)^2\le 16((k+1)!)^2\sum_{i=1}^n d(\A_i,\A^*_k)^2}\ge\frac{2}{3},\]
as desired. 
The space of the algorithm follows from taking $\eps=\O{1}$ in \thmref{thm:adaptive:sampler}.
\end{proof}

To simulate \algref{alg:as:generalized} for $p=1$, we need to sample a row $\A_i\P$ with probability proportional to $\norm{\A_i\P}_2$ rather than $\norm{\A_i\P}_2^2$. 
We show how to do this in \algref{alg:l12:noisy:adaptive} in \secref{sec:l12:adaptive}. 
By applying \thmref{thm:l12:adaptive:sampler} and using the same argument as \thmref{thm:sap}, we also obtain a one-pass turnstile streaming algorithm for the subspace approximation problem for $p=1$. 
\begin{theorem}
\thmlab{thm:l12:sap}
Given a matrix $\A\in\mathbb{R}^{n\times d}$ that arrives in a turnstile data stream, there exists a one-pass algorithm that outputs a set $\T$ of $k$ (noisy) rows of $\A$ such that 
\[\PPr{\sum_{i=1}^n d(\A_i,\T)\le 4(k+1)!\left(\sum_{i=1}^n d(\A_i,\A^*_k)\right)}\ge\frac{2}{3},\]
where $\A^*_k$ is the best rank $k$ solution to the subspace approximation problem.  
The algorithm uses $\poly\left(d,k,\log n\right)$ bits of space. 
\end{theorem}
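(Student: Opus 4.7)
The plan is to mirror the argument of \thmref{thm:sap} but invoke the $L_{1,2}$ adaptive sampler of \thmref{thm:l12:adaptive:sampler} in place of the $L_{2,2}$ adaptive sampler of \thmref{thm:adaptive:sampler}, and to combine it with \thmref{thm:as:sa} specialized to $p=1$, which gives
\[\mathbb{E}_{\T}\left[\sum_{i=1}^n d(\A_i,\T)\right]\le(k+1)!\sum_{i=1}^n d(\A_i,\A^*_k).\]
Run \adaptiveone{} with error parameter $\eps=\frac{1}{12}$ to produce an ordered set $\R$ of $k$ noisy rows corresponding to some ordered index set $\T$ of rows of $\A$.
By \thmref{thm:l12:adaptive:sampler} the distribution $\{\widehat{q_\T}\}$ of the sampled index sequence has total variation distance at most $\frac{1}{12}$ from the ideal adaptive sampling distribution $\{q_\T\}$ of \algref{alg:as:generalized} with $p=1$.

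Next, I would isolate the two failure modes exactly as in \thmref{thm:rss} and \thmref{thm:sap}.
First, define the event $\mathcal{E}$ that the residual cost on the noisy rows is not much larger than that on the true rows, i.e.\ $\sum_{i=1}^n d(\A_i,\R)\le\bigl(1+\tfrac{1}{12}\bigr)\sum_{i=1}^n d(\A_i,\T)$.
This requires the analog of \corref{cor:sampler:distortion} for $p=1$, which follows from the same calculation carried out in the proof of \lemref{lem:adaptive:tvd} (where $\norm{\A\Y_t}_F^2$ was shown to be $(1\pm\O{\eps})$-close to $\norm{\A\Z_t}_F^2$); replacing the squared Euclidean norms by norms and invoking the $L_{1,2}$ version of the orthogonal-noise bound shows $\sum_i d(\A_i,\R)$ is $(1\pm\O\eps)$-close to $\sum_i d(\A_i,\T)$ except with $\O{\eps}$ probability.
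Second, let $\mathcal{S}$ be the collection of $k$-subsets $\T$ with $\widehat{q_\T}>2q_\T$; total variation distance at most $\frac{1}{12}$ gives $\sum_{\T\in\mathcal{S}}q_\T\le\frac{1}{12}$.

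Conditioning on $\mathcal{E}$ and on $\R\notin\mathcal{S}$, I would bound
\[\underset{\R\notin\mathcal{S}}{\mathbb{E}}\!\left[\sum_{i=1}^n d(\A_i,\R)\,\Big|\,\mathcal{E}\right]\le\frac{13}{11}\sum_{\T\notin\mathcal{S}}2q_\T\sum_{i=1}^n d(\A_i,\T)\le\frac{26}{11}(k+1)!\sum_{i=1}^n d(\A_i,\A^*_k),\]
where the first inequality uses $\widehat{q_\T}\le 2q_\T$ on the complement of $\mathcal{S}$ together with the conditional distortion bound from $\mathcal{E}$, and the second inequality applies \thmref{thm:as:sa} with $p=1$.
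Markov's inequality then yields $\sum_i d(\A_i,\R)\le 4(k+1)!\sum_i d(\A_i,\A^*_k)$ with probability at least $\frac{5}{6}$ conditioned on $\mathcal{E}\wedge(\R\notin\mathcal{S})$, and a union bound over the failure events $\neg\mathcal{E}$ and $\R\in\mathcal{S}$ (each of probability at most $\frac{1}{12}$) gives overall success probability at least $\frac{2}{3}$.
The space bound is inherited by taking $\eps=\O{1}$ in \thmref{thm:l12:adaptive:sampler}.

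The only genuinely new step compared to the $p=2$ case is establishing the distortion event $\mathcal{E}$ for $p=1$: because the $L_{1,2}$ sampler returns perturbed rows, I must argue that projecting $\A$ away from $\R$ instead of $\T$ only changes $\sum_i d(\A_i,\cdot)$ by a $(1\pm\O\eps)$ factor.
This should be the main technical check, but it follows by the same basis-change argument used in \lemref{lem:adaptive:tvd}: the change-of-basis matrix between an orthonormal basis extending the true rows and one extending the noisy rows is (\ref{eqn:cob:matrix})-close to the identity, so the norm (rather than squared norm) of each row's residual component is perturbed by $\O{\eps}$ in aggregate, yielding the desired factor.
Everything else is bookkeeping identical to the proofs of \thmref{thm:rss} and \thmref{thm:sap}.
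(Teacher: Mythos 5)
Your proposal follows essentially the same route as the paper's proof: condition on the distortion event $\mathcal{E}$ (the $L_{1,2}$ analog of \corref{cor:sampler:distortion}, which the paper states as \corref{cor:l12:sampler:distortion}), excise the bad set $\mathcal{S}$ of $k$-tuples where $\widehat{q_\T}>2q_\T$ using the $\frac{1}{12}$ total-variation bound from \thmref{thm:l12:adaptive:sampler}, bound the conditional expectation by $\frac{26}{11}(k+1)!\sum_i d(\A_i,\A^*_k)$ via \thmref{thm:as:sa} with $p=1$, and finish with Markov and a union bound. One small quibble: applying Markov to a mean of $\frac{26}{11}(k+1)!\cdot$ with failure budget $\frac{1}{6}$ actually yields a threshold of $\frac{6\cdot 26}{11}(k+1)!\approx 14.2(k+1)!$, not $4(k+1)!$; the paper's own proof derives $16(k+1)!$ for this reason (so the constant $4$ in the theorem statement appears to be a typo inherited from the $p=2$ version, where the $\frac12$-power legitimately turns $16$ into $4$).
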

\begin{proof}
Let $\mathcal{E}$ be the event that \thmref{thm:l12:adaptive:sampler} with $\eps=\frac{1}{12}$ samples a set $\R$ of $k$ noisy rows corresponding to an arbitrary subset $\T$ of rows of $\A$ and $\sum_{i=1}^n d(\A_i,\R)\le\left(1+\frac{1}{12}\right)\sum_{i=1}^n d(\A_i,\T)$, so that $\PPr{\mathcal{E}}\ge\frac{11}{12}$ by \corref{cor:l12:sampler:distortion}. 
For a set $\T$ of rows of $\A$, let $\mathcal{E}_{\T}$ be the event that \thmref{thm:l12:adaptive:sampler} with $\eps=\frac{1}{12}$ samples a set $\R$ of $k$ noisy rows corresponding to $\T$ and $\sum_{i=1}^n d(\A_i,\R)\le\left(1+\frac{1}{12}\right)\sum_{i=1}^n d(\A_i,\T)$. 

Let $\widehat{q_{\T}}$ be the probability that the indices corresponding to $\T$ are sampled by \thmref{thm:l12:adaptive:sampler} and $q_{\T}$ be the probability of sampling $\T$ from the adaptive sampling probability distribution, as in \algref{alg:as:generalized} with $p=1$.  
Let $\mathcal{S}$ be the set of rows $\T$ such that $\widehat{q_{\T}}>2q_{\T}$ and note that $\sum_{\T\in\mathcal{S}}q_{\T}\le\frac{1}{12}$ since $\{q_{\T}\}_{\T}$ and $\{\widehat{q_{\T}}\}_{\T}$ have total variation distance at most $\frac{1}{12}$ by \thmref{thm:l12:adaptive:sampler}. 
Then
\begin{align*}
\underset{\R\notin\mathcal{S}}{\mathbb{E}}\left[\sum_{i=1}^n d(\A_i,\R)\,\Big|\,\mathcal{E}\right]&=\sum_{\T\notin\mathcal{S}}\widehat{q_{\T}}\cdot\mathbb{E}\left[\sum_{i=1}^n d(\A_i,\R)\,\Big|\,\mathcal{E}_{\T}\right]\\
&\le\frac{1}{\PPr{\mathcal{E}_{\T}}}\sum_{\T\notin\mathcal{S}}\widehat{q_{\T}}\left(1+\frac{1}{12}\right)\sum_{i=1}^n d(\A_i,\T)\\
&\le\frac{13}{11}\sum_{\T\notin\mathcal{S}}2q_{\T}\sum_{i=1}^n d(\A_i,\T),
\end{align*}
where the last inequality follows from the fact that $\PPr{\mathcal{E}}\ge\frac{11}{12}$ and from $\T\notin\mathcal{S}$ implying that $\widehat{q_{\T}}\le 2q_{\T}$. 
By \thmref{thm:as:sa}, 
\begin{align*}
\underset{\R\notin\mathcal{S}}{\mathbb{E}}\left[\sum_{i=1}^n d(\A_i,\R)^2\,\Big|\,\mathcal{E}\right]\le\frac{26}{11}(k+1)!\sum_{i=1}^n d(\A_i,\A^*_k).
\end{align*}
By Markov's inequality and a union bound over $\PPr{\neg\mathcal{E}}\le\frac{1}{12}$ and $\sum_{\T\in\mathcal{S}}q_{\T}\le\frac{1}{12}$, we have that
\[\PPr{\sum_{i=1}^n d(\A_i,\R)\le 16(k+1)!\sum_{i=1}^n d(\A_i,\A^*_k)}\ge\frac{2}{3},\]
as desired. 
The space of the algorithm follows from taking $\eps=\O{1}$ in \thmref{thm:l12:adaptive:sampler}.
\end{proof}

\cite{DeshpandeV07} also shows that adaptive sampling can be used to give a bicriteria approximation to the subspace approximation problem. 
We use the notation $\tO{\cdot}$ in the remainder of \secref{sec:apps:sa} to omit $\polylog\left(k,\frac{1}{\eps}\right)$ factors, with degrees depending on $p$.

\begin{algorithm}[!htb]
\caption{Repeated Offline Adaptive Oversampling by $p\th$ Power of Distance to Subspace}
\alglab{alg:as:repeated}
\begin{algorithmic}[1]
\Require{Matrix $\A\in\mathbb{R}^{n\times d}$, integers $k>0$, $p\ge 1$}
\Ensure{Subset of $\tO{k^2\cdot \left(\frac{k}{\eps}\right)^{p+1}}$ rows of $\A$}
\State{Let $\S_0$ be a set of $k$ rows obtained through \algref{alg:as}.}
\State{$\delta\gets\frac{\eps}{\log k}$, $\S\gets\S_0$, $t\gets\O{k\log k}$}
\For{$j=1$ to $j=t$}
\For{$i=1$ to $i=k$}
\State{Sample a subset of rows $\S_i$ to be $\O{\left(\frac{2k}{\delta}\right)^p\frac{k}{\delta}\log\frac{k}{\delta}}$ rows of $\A$, where each $\A_{\ell}$ is selected with probability $\frac{d(\A_{\ell},\S_{i-1})^p}{\sum_{\ell=1}^n d(\A_{\ell},\S_{i-1})^p}$, for $\ell\in[n]$.}
\State{$\S\gets\S\circ\S_i$}
\EndFor
\State{$\S_1=\ldots=\S_k=\emptyset$}
\EndFor
\State{\Return $\S$}
\end{algorithmic}
\end{algorithm}

\begin{theorem}
\cite{DeshpandeV07}
\thmlab{thm:as:sa:bicriteria}
Given a matrix $\A\in\mathbb{R}^{n\times d}$ and a parameter $\eps>0$, let $m=\O{k}$ and let $\T_1,\ldots,\T_m$ each be a subset of $\tO{k^2\cdot\left(\frac{k}{\eps}\right)^{p+1}}$ rows generated from the adaptive sampling probability distribution with respect to repeated oversampling, as in \algref{alg:as:repeated}. 
Then for $\T=\T_1\cup\ldots\cup\T_m$,
\[\PPr{\left(\sum_{i=1}^n d(\A_i,\T)^p\right)^{\frac{1}{p}}\le(1+\eps)\left(\sum_{i=1}^n d(\A_i,\A^*_k)^p\right)^{\frac{1}{p}}}\ge\frac{3}{4},\]
where $\A^*_k$ is the best rank $k$ solution to the subspace approximation problem. 
\end{theorem}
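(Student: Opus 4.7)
The plan is to combine the weak single-round guarantee of \thmref{as:sa} with a chained oversampling argument followed by probability amplification. Let $\mathsf{OPT} = \sum_i d(\A_i, \A^*_k)^p$. By \thmref{as:sa}, the initial set $\S_0$ of $k$ adaptively sampled rows satisfies $\Ex{\sum_i d(\A_i, \S_0)^p} \le ((k+1)!)^p\,\mathsf{OPT}$, so Markov's inequality gives that $\S_0$ is a $\poly(k)$-approximation with constant probability. The task is to improve this to $(1+\eps)$ via oversampling.

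The core technical ingredient, following \cite{DeshpandeV07}, is an \emph{oversampling lemma}: given any current subspace $V$, drawing $N = \tO{(k/\delta)^{p+1}}$ rows adaptively with probabilities proportional to $d(\A_i,V)^p$ yields, with constant probability, a row (or a small set of rows) whose addition to $V$ captures the ``next'' direction of the optimal subspace $\V^*$ not already contained in $V$. Chaining this $k$ times in the inner loop --- each round updating the current subspace to $\mathrm{span}(\S_0 \cup \S_1 \cup \cdots \cup \S_{i-1})$ for the next distance-based distribution --- successively captures all $k$ directions of $\V^*$. The outcome of one successful inner loop is that $\mathrm{span}(\S_0 \cup \S_1 \cup \cdots \cup \S_k)$ contains a $k$-dimensional subspace with cost at most $(1+\delta)\,\mathsf{OPT}$. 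The exponent $p+1$ arises because a factor $(k/\delta)^p$ is needed to overcome $p\th$-power weighting in the sampling distribution for each direction, while an additional factor $k/\delta$ together with polylogarithmic overhead boosts each round's success probability high enough that all $k$ chained rounds jointly succeed with probability $\Omega(1/\polylog(k))$.

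The outer loop of $t = \O{k \log k}$ independent trials then amplifies the per-trial success into the required $3/4$ overall guarantee: since the final output $\S$ is the union of samples from all trials, a single successful outer iteration suffices. Setting $\delta = \eps/\log k$ so that the multiplicative slack accumulated across the inner rounds telescopes to $1+\eps$ delivers the claimed approximation factor. The total sample size is
\[
t \cdot k \cdot N \;=\; \O{k \log k} \cdot k \cdot \tO{(k/\eps)^{p+1}} \;=\; \tO{k^2 \cdot (k/\eps)^{p+1}},
\]
matching the bound in the theorem statement.

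The principal obstacle is establishing the oversampling lemma at the stated sample complexity. For $p=2$ the argument reduces cleanly to a second-moment Frobenius-norm variance computation in the spirit of leverage-score sampling, and the key identity is that sampling proportional to squared residual distances is within a constant factor of volume sampling from the residual. For general $p$, and in particular $p=1$, the quadratic structure is absent, so one must directly bound the $p\th$-moments of the sampling distribution against the contribution of each direction of $\V^*$ to $\mathsf{OPT}$. A Hoeffding- or Bernstein-style tail bound then closes the argument, provided one first establishes that each ``next'' direction of $\V^*$ attracts probability at least $\Omega((\delta/k)^p)$ under the current adaptive distribution --- itself a structural statement comparing the projection of $\A$ onto $\V^*$ against the residual $\A(\I - V^\dagger V)$. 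This is the delicate point, and it is what forces the $p+1$ (rather than $p$) exponent in the per-round sample count.
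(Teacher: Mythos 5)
First, note that the paper does not prove this statement at all: \thmref{thm:as:sa:bicriteria} is imported verbatim from \cite{DeshpandeV07} (only the algorithm, \algref{alg:as:repeated}, is restated), so there is no internal proof for your argument to be checked against; what you have written must stand on its own as a reconstruction of the Deshpande--Varadarajan argument. At the level of architecture your outline is consistent with that argument and with \algref{alg:as:repeated}: seed with the $((k+1)!)^p$-approximation of \thmref{thm:as:sa}, run chained rounds of residual-distance oversampling with $\tO{(k/\delta)^{p+1}}$ samples per round, set $\delta=\eps/\log k$, and repeat $t=\O{k\log k}$ times.

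The genuine gap is that the entire technical content is concentrated in the ``oversampling lemma'' that you assert rather than prove: the claim that $\tO{(k/\delta)^{p+1}}$ rows sampled proportional to $d(\cdot,V)^p$ capture, with constant probability, the next direction of the optimal subspace (equivalently, that the span of $V$ plus the sample contains a $k$-dimensional subspace whose cost improves appropriately), together with the structural claim that this direction carries probability mass $\Omega\left((\delta/k)^p\right)$ under the current residual distribution. This is precisely the heart of \cite{DeshpandeV07}, especially for $p=1$ where no second-moment/volume-sampling identity is available, and without it the proposal is an outline, not a proof. Two quantitative points would also need repair even at the outline level: (i) Markov applied to \thmref{thm:as:sa} gives a $(k+1)!$-factor seed, not a $\poly(k)$-factor one, and this matters because the choice $t=\O{k\log k}$ is calibrated to $\log\left((k+1)!\right)$; and (ii) your accounting of the error is inconsistent --- a per-round multiplicative slack of $(1+\delta)$ with $\delta=\eps/\log k$ does not telescope to $1+\O{\eps}$ over the $k$ inner rounds (let alone over all $tk$ rounds), so the role of the $\O{k\log k}$ repetitions cannot be pure independent amplification as you describe; in the actual argument those rounds are what drive the excess cost down from the factorial-factor start, with the per-round guarantee of the form ``new cost $\le(1+\delta)\cdot\mathrm{OPT}$ plus a constant fraction of the old excess,'' which is why the slack does not compound.
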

By a similar argument to \thmref{thm:rss} and \thmref{thm:sap}, our adaptive sampling procedure gives a one-pass turnstile streaming algorithm that produces a bicriteria approximation to the subspace approximation problem with $p=2$.
\begin{theorem}
\thmlab{thm:sap:bicriteria}
Given a matrix $\A\in\mathbb{R}^{n\times d}$ that arrives in a turnstile data stream, there exists a one-pass algorithm that outputs a set $\T$ of $\tO{k^3\cdot\left(\frac{k}{\eps}\right)^{p+1}}$ (noisy) rows of $\A$ such that 
\[\PPr{\left(\sum_{i=1}^n d(\A_i,\T)^p\right)^{\frac{1}{p}}\le(1+\eps)\left(\sum_{i=1}^n d(\A_i,\A^*_k)^p\right)^{\frac{1}{p}}}\ge\frac{2}{3}\]
for $p\in\{1,2\}$, where $\A^*_k$ is the best rank $k$ solution to the subspace approximation problem. 
The algorithm uses $\poly(d,k,\frac{1}{\eps},\log n,\log\frac{k}{\eps})$ bits of space. 
\end{theorem}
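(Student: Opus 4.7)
The plan is to faithfully simulate \algref{alg:as:repeated} on a turnstile stream using many parallel instances of our $L_{p,2}$ samplers, and then invoke \thmref{thm:as:sa:bicriteria} essentially as a black box. The overall structure mirrors \thmref{thm:sap} and \thmref{thm:l12:sap}, with the added complication that the offline sampler now has $t k = \O{k^2 \log k}$ rounds of subspace update and draws $N = \O{(2k/\delta)^p(k/\delta)\log(k/\delta)}$ rows per round. To match this, during the streaming stage I maintain $\tO{k^3(k/\eps)^{p+1}}$ independent sketches of the $L_{p,2}$ sampler of \thmref{thm:l2:sampling} (for $p=2$) or \thmref{thm:l12:sampling} (for $p=1$), each with a small error parameter $\eps'$ to be fixed later. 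In the post-processing stage, I use $k$ of the sketches -- wired together as in \thmref{thm:adaptive:sampler} or \thmref{thm:l12:adaptive:sampler} -- to produce the initial set $\S_0$; then for each of the $tk$ subsequent adaptive rounds, I use $N$ fresh sketches post-multiplied by $\P = \I - \M^\dagger \M$ (where $\M$ is the matrix of noisy rows selected so far in that inner loop iteration) to draw $N$ rows whose distribution approximates the adaptive distribution proportional to $d(\A_\ell,\M)^p$.

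For correctness, let $q_{\T}$ denote the probability that \algref{alg:as:repeated} returns the ordered multiset $\T$, and $\widehat q_{\T}$ the probability that our streaming procedure returns (indices corresponding to) $\T$. I extend the inductive total variation argument of \lemref{lem:adaptive:tvd} through all $tk$ rounds, treating each of the $N$ noisy rows drawn in a round as a small perturbation of a true row and chaining via the triangle inequality exactly as in the proof of \thmref{thm:adaptive:sampler}. This gives $\tvd(\{q_\T\},\{\widehat q_\T\}) \le \frac{1}{12}$ provided $\eps' = 1/\poly(d,k,1/\eps)$. In parallel, \corref{cor:sampler:distortion} (for $p=2$) or its $L_{1,2}$ analogue \corref{cor:l12:sampler:distortion} (for $p=1$) guarantees that the noisy set $\R$ returned by the algorithm satisfies $\left(\sum_i d(\A_i,\R)^p\right)^{1/p} \le (1+\O{\eps})\left(\sum_i d(\A_i,\T)^p\right)^{1/p}$ with probability $1-\O{\eps}$. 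Then, exactly as in the proof of \thmref{thm:rss}, defining $\mathcal{S} = \{\T : \widehat q_{\T} > 2 q_{\T}\}$ and restricting the expectation from \thmref{thm:as:sa:bicriteria} to $\T \notin \mathcal{S}$, applying Markov's inequality, and taking a union bound over the three constant-probability bad events (large variation distance, large noise distortion, failure of \thmref{thm:as:sa:bicriteria}), yields the $(1+\eps)$ bound with probability at least $\frac{2}{3}$ after rescaling $\eps$ by a constant factor.

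The main technical obstacle is propagating the perturbation analysis of \lemref{lem:adaptive:tvd} through the much longer sampling sequence while keeping $\eps'$ only polynomially small. Whereas \lemref{lem:adaptive:tvd} bounded the total variation after a single injected noisy row in an otherwise true $k$-round process, here each of the $tk$ rounds introduces $N$ noisy rows simultaneously. The change-of-basis calculation of \lemref{lem:adaptive:tvd} goes through essentially unchanged for each of these rows because the key orthogonal-noise guarantee of \lemref{lem:orthogonal:noise} holds per sample, so the accumulated variation distance scales linearly as $\O{(tkN)^2 \eps'}$; taking $\eps' = \O{\eps/(dtkN)^c}$ for a small constant $c$ drives the aggregate below $\frac{1}{12}$. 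Finally, since each sketch uses $\poly(d,1/\eps',\log n) = \poly(d,k,1/\eps,\log n)$ space and we maintain $\tO{k^3(k/\eps)^{p+1}}$ sketches, the overall space is $\poly\!\left(d,k,\tfrac{1}{\eps},\log n,\log\tfrac{k}{\eps}\right)$ as claimed.
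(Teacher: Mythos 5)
Your overall structure is right: simulating \algref{alg:as:repeated} with $\poly(r,\log n)$ parallel $L_{p,2}$ samplers, chaining the total variation argument of \lemref{lem:adaptive:tvd} and \thmref{thm:adaptive:sampler} through all $r=\tO{k^3(k/\eps)^{p+1}}$ rounds with a suitably small error parameter $\eps'$, and invoking \corref{cor:sampler:distortion} or \corref{cor:l12:sampler:distortion} to control the objective value on noisy rows. This matches the paper's construction and TV-distance analysis.

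However, the final step does not go through as you wrote it. You say you will follow the template of \thmref{thm:rss}: define $\mathcal{S}=\{\T:\widehat q_\T>2q_\T\}$, ``restrict the expectation from \thmref{thm:as:sa:bicriteria} to $\T\notin\mathcal{S}$,'' and apply Markov's inequality. But \thmref{thm:as:sa:bicriteria} does not furnish an expectation bound. Unlike \thmref{thm:vs:rss} and \thmref{thm:as:sa}, which bound $\Ex{\norm{\A-\A\T^\dagger\T}_F^2}$ and $\Ex{\sum_i d(\A_i,\T)^p}$ and therefore invite the $\mathcal{S}$/Markov machinery, \thmref{thm:as:sa:bicriteria} directly asserts that the offline sampling distribution places probability at least $\tfrac34$ on the ``good'' sets $\T$. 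There is nothing to restrict and nothing to apply Markov to; the Markov-style argument you import from \thmref{thm:rss} has no random variable to operate on. The correct and considerably simpler closing step is: let $S$ be the collection of index sequences whose corresponding offline-sampled rows achieve the $(1+\eps)$ bound, so $\PPr{\mathcal{E}}\ge\tfrac34$ for the offline process. Then TV distance $\le\eps$ implies the streaming sampler lands in $S$ with probability at least $\tfrac34-\eps$, and a union bound with the $1-\eps$ event from \corref{cor:sampler:distortion} gives success probability at least $\tfrac34-2\eps\ge\tfrac23$ after rescaling. Once you replace the Markov step with this direct TV argument, the rest of your proof is sound.
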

\begin{proof}
The proof follows the same template as \thmref{thm:rss} that analyzes both the total variation distance between the ideal distribution and the actual distribution, as well as the quality of the approximation by the noisy rows to the actual rows. 
We first consider the case $p=2$ and observe that \thmref{thm:as:sa:bicriteria} requires \algref{alg:as:repeated} to sample $r:=\tO{k^3\cdot\left(\frac{k}{\eps}\right)^{p+1}}$ rows in total. 
We cannot quite run \algref{alg:noisy:adaptive} as stated, since it uses $k$ instances of the $L_{2,2}$ sampler in \algref{alg:l2:sampler} to sequentially sample $k$ rows. 

On the other hand, by creating $\poly(r,\log n)$ instances of the $L_{2,2}$ sampler \algref{alg:l2:sampler} with sufficiently small error parameter, we can simulate each round of \algref{alg:as:repeated} used to generate $\T_1,\ldots,\T_m$ in the statement of \thmref{thm:as:sa:bicriteria}. 
We require $\poly(r,\log n)$ instances of the $L_{2,2}$ sampler to produce $r$ samples, since recall that each sampler has some probability of outputting FAIL. 
To sample a set $\R_j$ corresponding to $\T_j$, we first use \algref{alg:noisy:adaptive} to sample a set $\Y_0$ of $k$ (noisy) rows of $\A$ with total variation distance $\O{\frac{\eps}{r}}$ from the distribution of $\S_0$ in \algref{alg:as:repeated}. 
Now for $\delta=\frac{\eps}{\log k}$, each time we should have sampled a set $\S_i$ of $\O{\left(\frac{2k}{\delta}\right)^p\frac{k}{\delta}\log\frac{k}{\delta}}$ rows of $\A$ after projecting away from $\S_{i-1}$, we instead use instances of the $L_{2,2}$ sampler \algref{alg:l2:sampler} to sample a set $\Y_i$ of $\O{\left(\frac{2k}{\delta}\right)^p\frac{k}{\delta}\log\frac{k}{\delta}}$ noisy rows of $\A$ by projecting away from $\Y_{i-1}$. 

Since we perform $r$ rounds of sampling in total, then using the same argument as \thmref{thm:adaptive:sampler}, we can bound the total variation distance between our output distribution and the distribution of \thmref{thm:as:sa:bicriteria} by $\O{\eps}$. 
Let $S$ be the set of $r$ indices corresponding to sets $\R$ of rows sampled by repeated iterations of \algref{alg:as:repeated} so that $\left(\sum_{i=1}^n d(\A_i,\R)^p\right)^{\frac{1}{p}}\le\left(1+\eps\right)\left(\sum_{i=1}^n d(\A_i,\A^*_k)^p\right)^{\frac{1}{p}}$. 
Let $\mathcal{E}$ be the event that the set $\R$ of rows sampled by repeated iterations of \algref{alg:as:repeated} correspond to a set of $r$ indices from $S$. 
Thus the probability that the indices of the $r$ samples $\T$ produced by the $L_{2,2}$ samplers correspond to indices of $S$ is at least $\PPr{\mathcal{E}}-\eps$. 

Moreover for the noisy rows $\T$ that we sample, $\left(\sum_{i=1}^n d(\A_i,\T)^p\right)^{\frac{1}{p}}\le(1+\eps)\left(\sum_{i=1}^n d(\A_i,\R)^p\right)^{\frac{1}{p}}$ with probability at least $1-\eps$ by \corref{cor:sampler:distortion}. 
By \thmref{thm:as:sa:bicriteria}, $\PPr{\mathcal{E}}\ge\frac{3}{4}$. 
Thus for sufficiently small $\eps$ and by a rescaling argument, the probability that $\left(\sum_{i=1}^n d(\A_i,\T)^p\right)^{\frac{1}{p}}\le(1+\eps)\left(\sum_{i=1}^n d(\A_i,\A^*_k)^p\right)^{\frac{1}{p}}$ is at least $\frac{2}{3}$. 
Then the correctness of the claim holds and the total space required is $\poly(d,k,\frac{1}{\eps},\log n,\log\frac{k}{\eps})$ by argument in \thmref{thm:adaptive:sampler}. 
For $p=1$, we use the $L_{1,2}$ sampler \algref{alg:l12:sampler} in place of the $L_{2,2}$ sampler \algref{alg:l2:sampler} to sample rows $\Y_i$ for $i>0$ with probability proportional to their distances from the current subspace at each iteration, rather than the squared distances. 
Correctness then follows from the same argument, using \thmref{thm:l12:adaptive:sampler} and \corref{cor:l12:sampler:distortion} for the $L_{1,2}$ samplers. 
\end{proof}

\subsection{Projective Clustering}
We now show that our adaptive sampling procedure can also be used to give turnstile streaming algorithms for projective clustering, where the inputs are a parameter $p\ge 1$, a matrix $\A\in\mathbb{R}^{n\times d}$ that arrives as a data stream and parameters $k$ for the target dimension of each subspace and $s$ for the number of subspaces, and the goal is to output $s$ $k$-dimensional linear subspaces $\H_1,\ldots,\H_s$ that minimizes: 
\[\left(\sum_{i=1}^n d(\A_i,\H)^p\right)^{\frac{1}{p}},\]
where $\H=\H_1\cup\ldots\cup\H_s$ and $d(\A_i,\H)$ is the distance from $\A_i$ to union $\H$ of $s$ subspaces $\H_1,\ldots,\H_s$. Again we use $\tO{\cdot}$ to omit $\polylog\left(k,s,\frac{1}{\eps}\right)$ factors, with degrees depending on $p$. 
 
\begin{algorithm}[!htb]
\caption{Dimensionality Reduction for Projective Clustering}
\alglab{alg:pc}
\begin{algorithmic}[1]
\Require{Matrix $\A\in\mathbb{R}^{n\times d}$, integers $k>0$, $s>0$, $p\ge 1$, and a subspace $\V$ of dimension at least $k$.}
\Ensure{Subset of $\tO{\left(\frac{k^2}{\eps}\right)^p\frac{k^4 s}{\eps^2}}$ rows of $\A$}
\State{$\S\gets\emptyset$}
\For{$t=1$ to $t=\tO{\left(\frac{k^2}{\eps}\right)^p\frac{k^4 s}{\eps^2}}$}
\State{Sample a row $\r$ of $\A$, where each row $\A_i$ is selected with probability $\frac{d(\A_i,\S\cup\V)}{\sum_{j=1}^n d(\A_j,\S\cup\V)}$.}
\State{$\S\gets\S\circ\r$}
\EndFor
\State{\Return $\S$}
\end{algorithmic}
\end{algorithm}
\cite{DeshpandeV07} also shows that adaptive sampling can be used to perform dimensionality reduction for projective clustering. 
\begin{theorem}
\cite{DeshpandeV07}
\thmlab{thm:as:pc}
Let $\V$ be a subspace of dimension at least $k$ such that
\[\left(\sum_{i=1}^n d(\A_i,\V)^p\right)^{\frac{1}{p}}\le2\left(\sum_{i=1}^n d(\A_i,\H)^p\right)^{\frac{1}{p}},\]
where $\H$ is the union of $s$ $k$-dimensional subspaces that is the optimal solution to the projective clustering problem. 
Then with probability at least $\frac{3}{4}$, \algref{alg:pc} outputs a set $\S$ such that $\V\cup\S$ contains a union $\T$ of $s$ $k$-dimensional subspaces such that
\[\left(\sum_{i=1}^n d(\A_i,\T)^p\right)^{\frac{1}{p}}\le(1+\eps)\left(\sum_{i=1}^n d(\A_i,\H)^p\right)^{\frac{1}{p}}.\]
\end{theorem}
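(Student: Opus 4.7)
The plan is to reduce the multi-subspace claim to the single-subspace bicriteria result, \thmref{as:sa:bicriteria}, applied cluster by cluster. Let $\H=\H_1\cup\ldots\cup\H_s$ denote the optimum and, for each row $\A_i$, let $c(i)\in[s]$ be the index of the nearest optimal subspace; write $C_j=\{i:c(i)=j\}$ and let $\A^{(j)}$ be the submatrix of $\A$ restricted to $C_j$. Then $\sum_i d(\A_i,\H)^p=\sum_j\sum_{i\in C_j}d(\A_i,\H_j)^p$, so it suffices to exhibit, for each $j$, a $k$-dimensional subspace $\T_j\subseteq\V\cup\S$ with $\sum_{i\in C_j}d(\A_i,\T_j)^p\le(1+\eps)^p\sum_{i\in C_j}d(\A_i,\H_j)^p$; then $\T=\T_1\cup\ldots\cup\T_s$ is a union of $s$ $k$-dimensional subspaces that proves the claim.

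For each \emph{heavy} cluster, defined as those whose contribution is at least $\frac{\eps}{2s}\cdot\mathrm{OPT}^p$ where $\mathrm{OPT}^p=\sum_i d(\A_i,\H)^p$, I would invoke \thmref{as:sa:bicriteria} on the restricted instance $\A^{(j)}$ with starting subspace $\V$. The key observation is that conditioning the global draw of \algref{pc} on the event $\{c(i)=j\}$ yields a sample from $C_j$ distributed proportionally to $d(\A_i,\V\cup\S)^p$; because $\V\cup\S$ is $\V$ augmented only by previously chosen rows, the within-cluster distribution matches (up to the monotonic effect of prior draws from other clusters) the adaptive sampling distribution on $\A^{(j)}$ with starting subspace $\V$, and the bicriteria analysis transfers. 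Using $\sum_\ell d(\A_\ell,\V)^p\le 2^p\mathrm{OPT}^p$, the sampling denominator is at all times at most $2^p\mathrm{OPT}^p$, so each heavy cluster is hit with per-iteration probability $\Omega(\eps/s)$. Running $T=\tO{(k^2/\eps)^p k^4 s/\eps^2}$ iterations, a Chernoff bound shows that, except with probability $\O{1/s}$, each heavy cluster receives at least the $\tO{(k^2/\eps)^p k^3/\eps}$ hits that \thmref{as:sa:bicriteria} demands in order to produce a per-cluster $(1+\eps)$-approximate subspace $\T_j\subseteq\V\cup\S$ with failure probability at most $\frac{1}{4s}$. Light clusters, whose aggregate contribution is at most $\frac{\eps}{2}\mathrm{OPT}^p$, are covered by any $k$-dimensional subspace of $\V$ and folded into the $(1+\eps)$ slack. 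Union-bounding over the $s$ clusters and the hit-count event delivers the overall $\frac{3}{4}$ success probability.

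The main obstacle is making the coupling rigorous: rows from every cluster are mingled in $\S$, so after each iteration the sampling weights on $C_j$ reflect distances from $\V\cup\S$ rather than merely from $\V$ together with prior within-cluster picks. One must argue that extra rows in $\S$ only shrink within-cluster distances, so the per-cluster subsequence of draws still produces subspaces at least as good as the idealized per-cluster adaptive sampler analyzed in \thmref{as:sa:bicriteria}; equivalently, that the bicriteria approximation guarantee is monotone in the starting subspace. Establishing this monotonicity cleanly while preserving the independence structure needed for the Chernoff bound on per-cluster hit counts is the technical heart of the proof, and is the step that would demand the most care to write out in full.
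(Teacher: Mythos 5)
This theorem is stated with a citation to \cite{DeshpandeV07} and the paper itself offers no proof, so there is no in-paper argument against which to compare; the evaluation can only be against the cited source and on internal soundness. Your high-level plan — decompose rows by nearest optimal subspace $\H_j$, split clusters into heavy and light, bound the denominator of the sampling weight by $2^p\mathrm{OPT}^p$ using the hypothesis on $\V$, and apply a Chernoff bound to get enough per-cluster hits — is indeed the Deshpande--Varadarajan strategy, and you correctly locate the coupling between the global sampling process and per-cluster adaptive sampling as the crux.

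The genuine gap is the black-box invocation of \thmref{thm:as:sa:bicriteria}. That theorem is stated for the structured repeated-oversampling process of \algref{alg:as:repeated}: it draws a fresh $\S_0$ via adaptive sampling (it does not take a pre-supplied subspace $\V$), and then for each of $\O{k\log k}$ outer rounds it builds a nested sequence of \emph{batches} $\S_1,\ldots,\S_k$ of size $\O{(2k/\delta)^p(k/\delta)\log(k/\delta)}$ each, resetting them between rounds. \algref{alg:pc}, by contrast, draws one row per step, always projecting away from the cumulative set $\V\cup\S$, and never resets. There is no statement in the paper of a ``starting-subspace'' or ``single-row, cumulative'' variant of the bicriteria result, so the phrase ``invoke \thmref{thm:as:sa:bicriteria} on $\A^{(j)}$ with starting subspace $\V$'' asserts a lemma that does not exist here and would itself need to be proved. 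Separately, the monotonicity claim you lean on — that enlarging the projection subspace by cross-cluster rows can only help — is not established by observing that distances shrink: the adaptive sampling \emph{probabilities} are ratios and do not inherit monotonicity, and the Chernoff argument for per-cluster hit counts needs the per-step hit probability to stay above $\Omega(\eps/s)$ \emph{conditioned on an arbitrary history}, which requires showing that the cluster's residual cost does not collapse prematurely relative to the shrinking denominator. The decomposition and the $\Omega(\eps/s)$-hit-rate idea are the right skeleton, but the proof as written reduces to a theorem in a form that is not available and leaves the central coupling lemma unproved.
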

Since the optimal solution to the projective clustering problem is certainly no better than the optimal $ks$-dimensional subspace, we use the bicriteria subspace approximation algorithm of \thmref{thm:sap:bicriteria} with input dimension $ks$.  
Thus, we obtain a one-pass turnstile streaming algorithm for the projective clustering problem.
\begin{theorem}
\thmlab{thm:pc}
Given a matrix $\A\in\mathbb{R}^{n\times d}$ that arrives in a turnstile data stream, there exists a one-pass algorithm that outputs a set $\S$ of $\tO{(ks)^{p+4}+\frac{k^4s}{\eps^2}\left(\frac{k^2}{\eps}\right)^p}$ (noisy) rows of $\A$, which includes a union $\T$ of $s$ $k$-dimensional subspaces such that
\[\PPr{\left(\sum_{i=1}^n d(\A_i,\T)^p\right)^{\frac{1}{p}}\le(1+\eps)\left(\sum_{i=1}^n d(\A_i,\H)^p\right)^{\frac{1}{p}}}\ge\frac{2}{3}\]
for $p=\{1,2\}$, where $\H$ is the union of $s$ $k$-dimensional subspaces that is the optimal solution to the projective clustering problem. 
The algorithm uses $\poly\left(d,k,s,\frac{1}{\eps},\log n\right)$ bits of space. 
\end{theorem}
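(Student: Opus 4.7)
The plan is to compose two streaming procedures already built from our noisy adaptive sampler: the bicriteria subspace approximation algorithm of \thmref{thm:sap:bicriteria} to produce a coarse seed subspace $\V$, followed by a streaming simulation of the offline dimensionality reduction procedure \algref{alg:pc} to sample the remaining rows that refine $\V$ into a $(1+\eps)$-approximation. The two-phase structure is required because \thmref{thm:as:pc} only applies when the seed subspace $\V$ is itself a constant-factor approximation for projective clustering, and the promised row count $\tO{(ks)^{p+4}+(k^4 s/\eps^2)(k^2/\eps)^p}$ is exactly the sum of what the two phases produce.

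First I would invoke \thmref{thm:sap:bicriteria} with rank parameter $ks$ in place of $k$ and a small constant error, obtaining a subspace $\V$ spanned by $\tO{(ks)^{p+4}/\eps^{p+1}}$ noisy rows of $\A$. Since any union $\H=\H_1\cup\cdots\cup\H_s$ of $s$ $k$-dimensional subspaces is contained in a single $(ks)$-dimensional subspace, the best rank-$(ks)$ subspace approximation has $p$-cost at most the projective clustering optimum, so
\[\left(\sum_{i=1}^n d(\A_i,\V)^p\right)^{1/p}\le(1+o(1))\left(\sum_{i=1}^n d(\A_i,\A^*_{ks})^p\right)^{1/p}\le 2\left(\sum_{i=1}^n d(\A_i,\H)^p\right)^{1/p},\]
and $\V$ satisfies the hypothesis of \thmref{thm:as:pc}. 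The bicriteria phase costs $\poly(d,k,s,1/\eps,\log n)$ bits of space.

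Next I would maintain in parallel with the bicriteria sketch a bank of $N:=\tO{(k^2/\eps)^p\cdot k^4 s/\eps^2}$ independent instances of \algref{alg:l2:sampler} (or of the $L_{1,2}$ sampler when $p=1$), each with error parameter $1/\poly(k,s,1/\eps)$. In post-processing, $\V$ is extracted first; then for each $t=1,\ldots,N$ I feed the $t$-th sampler the projection matrix $\P_t=\I-(\V\circ\r_1\circ\cdots\circ\r_{t-1})^\dagger(\V\circ\r_1\circ\cdots\circ\r_{t-1})$ to obtain the next noisy row $\r_t$. By the round-by-round triangle inequality argument of \thmref{thm:adaptive:sampler}, the joint distribution of $\S=\{\r_1,\ldots,\r_N\}$ lies within total variation distance $O(\eps)$ of the idealized distribution induced by \algref{alg:pc} with seed $\V$ held fixed. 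Combined with \thmref{thm:as:pc}, this implies that with constant probability the noisy $\V\cup\S$ contains a union $\T$ of $s$ $k$-dimensional subspaces attaining cost at most $(1+\eps)\cdot\mathsf{OPT}$, and a union bound over (i) bicriteria success, (ii) the total variation gap, and (iii) the noise-vs-truth distortion of the final objective---the last controlled by \corref{cor:sampler:distortion} for $p=2$ and the analogous corollary for $p=1$---followed by a constant rescaling of $\eps$ yields success probability at least $\tfrac{2}{3}$.

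The main obstacle is step (iii): \thmref{thm:as:pc} speaks about spans of true rows of $\A$, while our $\V\cup\S$ is built from noisy rows, and the objective is an infimum over unions of $s$ $k$-dimensional subspaces inside this span rather than a simple sum of row norms. I would handle this by applying \lemref{lem:orthogonal:noise} direction-by-direction in each round: every noisy sample perturbs the spanning subspace by an additive amount bounded in every projection by an $\eps/\poly(k,s)$ fraction of the current Frobenius mass, so over $N$ rounds the noisy span sits within a small neighborhood of the true span, and any fixed union $\T$ of $s$ $k$-dimensional subspaces in the true span admits a nearby copy in the noisy span whose $p$-cost against $\A$ differs by at most a multiplicative $(1+\eps)$ factor. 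The total space is dominated by the $N$ parallel $L_{p,2}$ sampler instances and is $\poly(d,k,s,1/\eps,\log n)$ bits.
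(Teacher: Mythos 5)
Your proposal is correct and follows essentially the same two-phase route as the paper's proof: a constant-error invocation of \thmref{thm:sap:bicriteria} at rank $ks$ to obtain a constant-factor seed $\V$ (justified because any union of $s$ $k$-dimensional subspaces sits inside a single $ks$-dimensional subspace), followed by $\poly(r,\log n)$ parallel $L_{p,2}$ samplers driven in post-processing to simulate \algref{alg:pc} relative to $\V$, with correctness assembled from \thmref{thm:adaptive:sampler} (total variation distance), \corref{cor:sampler:distortion} or \corref{cor:l12:sampler:distortion} (objective distortion from noisy rows), and \thmref{thm:as:pc}. Two small remarks: the bicriteria phase is run with $\eps=\O{1}$, so its row count is $\tO{(ks)^{p+4}}$ without the extra $1/\eps^{p+1}$ factor you wrote; and your final paragraph spelling out why a good union $\T$ inside the \emph{true} span transfers to a nearby union inside the \emph{noisy} span via \lemref{lem:orthogonal:noise} applied direction-by-direction is actually a more explicit treatment of a step the paper compresses into a one-line appeal to \corref{cor:sampler:distortion}, so that detail is welcome rather than a deviation.
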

\begin{proof}
Note that the optimal solution $\H$ of $s$ $k$-dimensional subspaces is no better than the solution $\A^*_{ks}$ of the $ks$ subspace approximation problem: 
\[\left(\sum_{i=1}^n d(\A_i,\A^*_{ks})^p\right)^{\frac{1}{p}}\le\left(\sum_{i=1}^n d(\A_i,\H)^p\right)^{\frac{1}{p}},\]
By setting $\eps=\O{1}$ in \thmref{thm:sap:bicriteria}, we can first obtain a set $\V$ of $\tO{(ks)^3\cdot\left(ks\right)^{p+1}}$ rows of $\A$ such that
\[\left(\sum_{i=1}^n d(\A_i,\V)^p\right)^{\frac{1}{p}}\le2\left(\sum_{i=1}^n d(\A_i,\A^*_{ks})^p\right)^{\frac{1}{p}}\le2\left(\sum_{i=1}^n d(\A_i,\H)^p\right)^{\frac{1}{p}},\]
which satisfies the conditions of \thmref{thm:as:pc}. 
This can be done with arbitrarily high constant probability by taking $\V$ to be a number of independent instances of the algorithm in \thmref{thm:sap:bicriteria}. 

We now simulate \algref{alg:pc} by iteratively sampling $r:=\tO{\left(\frac{k^2}{\eps}\right)^p\frac{k^4 s}{\eps^2}}$ rows projected away from $\V$. 
The rest of the proof uses the same template as \thmref{thm:sap:bicriteria}. 
We first describe the case where $p=2$. 
By using $\poly(r,\log n)$ independent copies of \algref{alg:l2:sampler} with sufficiently small error parameter, we can iteratively sample rows with probability proportional to their squared distance away from the span of the previous rows and $\V$. 
By \thmref{thm:adaptive:sampler}, it follows that the total variation distance of the sampled rows $\S$ using the $L_{2,2}$ samplers is some small constant $\O{\eps}$ from the output distribution of \thmref{thm:as:pc}. 
By \corref{cor:sampler:distortion}, the objective on the sampled output is within $(1+\O{\eps})$ of the offline adaptive sampler. 
Hence, the existence of a union $\T$ of $s$ $k$-dimensional subspaces that is a good approximation of the optimal solution follows from \thmref{thm:as:pc}. 
Since the probability of failure of \thmref{thm:as:pc} is at most $\frac{1}{4}$, then for sufficiently small $\eps$, it holds that $\left(\sum_{i=1}^n d(\A_i,\T)^p\right)^{\frac{1}{p}}\le(1+\eps)\left(\sum_{i=1}^n d(\A_i,\H)^p\right)^{\frac{1}{p}}$ with probability at least $\frac{2}{3}$. 
By \thmref{thm:sap:bicriteria}, we can obtain $\V$ using $\poly\left(d,k,s,\frac{1}{\eps},\log n\right)$ bits of space.  
By \thmref{thm:adaptive:sampler}, we need $\poly(d,k,\frac{1}{\eps},\log n,\log\frac{k}{\eps})$ bits of space to sample the $r$ rows of $\A$, so the space complexity follows. 

For $p=1$, we instead use $\poly(r,\log n)$ independent copies of the $L_{1,2}$ sampler \algref{alg:l12:sampler}. 
The same argument then follows using \thmref{thm:l12:adaptive:sampler} to bound the total variation distance from the output distribution of \thmref{thm:as:pc} by $\O{\eps}$ and \corref{cor:l12:sampler:distortion} to bound the objective on the sampled output compared to that of the offline adaptive sampler. 
\end{proof}

\subsection{Volume Maximization}
\seclab{sec:vm:ts}
We now show that our sampling procedure can also be used to give turnstile streaming algorithms for volume maximization, where the inputs are a matrix $\A\in\mathbb{R}^{n\times d}$ that arrives as a data stream and a parameter $k$ for the number of selected rows, and the goal is to output $k$ rows $\r_1,\ldots,\r_k$ of $\A$ that maximize $\Vol(\R)$, where $\R=\r_1\circ\ldots\circ\r_k$. 
A possible approach to the volume maximization problem in an offline model is the greedy algorithm, which repeatedly chooses the row with the largest distance from the subspace spanned by the rows that have already been selected, for $k$ steps. 
\cite{CivrilM09} shows that this offline greedy algorithm gives a $k!$-approximation to the volume maximization problem. 
In fact, their analysis also implies that an offline \emph{approximate} greedy algorithm gives a good approximation to the volume maximization problem. 
\begin{theorem}
\cite{CivrilM09}
\lemlab{lem:as:vm}
Given a matrix $\A\in\mathbb{R}^{n\times d}$ and an integer $k>0$, let $\G$ be a set of $k$ rows chosen by the approximate greedy algorithm that repeatedly chooses a (noisy) row whose distance from the subspace spanned by the rows that have already been chosen is within a multiplicative $\alpha$ factor of the largest distance of a row to the subspace. 
Let $\V$ be a set of $k$ rows of $\A$ with the maximum volume. 
Then $\Vol(\V)\le (\alpha^k)k!\cdot\Vol(\G)$.  
\end{theorem}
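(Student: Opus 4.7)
The plan is to isolate the only role that the greedy choice plays in the Civril--Magdon-Ismail analysis of the exact algorithm, and observe that the $\alpha$-approximate greedy hypothesis interacts with that role through a single multiplicative $\alpha$ factor per step. First I would introduce the relevant quantities: let $g_1,\ldots,g_k$ be the rows chosen by the approximate greedy in order, set $G_i := \text{span}(g_1,\ldots,g_i)$ with $G_0 = \{\bzero\}$, and denote the actual per-step residual by $d_i := d(g_i,G_{i-1})$ and the maximum possible residual at step $i$ by $D_i := \max_{j\in[n]} d(\A_j, G_{i-1})$. The algorithmic hypothesis gives $d_i \ge D_i/\alpha$ at every step, and by Gram--Schmidt applied to $g_1,\ldots,g_k$ one has $\Vol(\G) = \prod_{i=1}^k d_i \ge \alpha^{-k}\prod_{i=1}^k D_i$.

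Next I would invoke the core inequality
\[
\Vol(\V) \;\le\; k!\prod_{i=1}^k D_i, \qquad (\dagger)
\]
which is the bound proved by \cite{CivrilM09} for the exact greedy ($\alpha=1$). The key observation is that their argument bounds $\Vol(\V)$ purely in terms of the coordinates of $\V$ in an orthonormal basis $\hat g_1,\ldots,\hat g_d$ of $\mathbb{R}^d$ extending the Gram--Schmidt orthogonalization of the chain $G_0\subset G_1\subset\cdots\subset G_k$: for any row $v_j$ of $\A$ and any $\ell\le k$, the $\hat g_\ell$-coordinate of $v_j$ is the $\hat g_\ell$-component of the residual $v_j - P_{G_{\ell-1}}v_j$, so its magnitude is bounded by $d(v_j,G_{\ell-1})\le D_\ell$; the tail coordinates for $\ell>k$ are bounded by $D_k$ because enlarging the subspace only decreases distances, and in particular the sequence $D_\ell$ is monotonically non-increasing. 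This coordinate bound, together with the Leibniz/Cauchy--Binet expansion of the determinant $\det(CC^\top)$ for the coordinate matrix $C$, is the only input the argument needs; nowhere does it require that $g_i$ actually attain the maximum $D_i$. Consequently $(\dagger)$ transfers verbatim to the chain produced by the $\alpha$-approximate greedy.

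Combining the two ingredients yields the lemma in a single line:
\[
\Vol(\V) \;\le\; k!\prod_{i=1}^k D_i \;\le\; k!\alpha^k\prod_{i=1}^k d_i \;=\; (\alpha^k)(k!)\,\Vol(\G).
\]

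The main obstacle, and the technical subtlety that makes the extension from exact to approximate greedy non-obvious, is verifying that $(\dagger)$ really is insensitive to whether the algorithm attains $d_i = D_i$ or only $d_i \ge D_i/\alpha$; in particular, one must confirm that no step of the CMI combinatorics (e.g., a sign-alternating cancellation in the Leibniz expansion, or a monotonicity argument across the $D_\ell$'s) implicitly requires the exact-greedy equality. Once that is checked, the $\alpha^k$ loss follows cleanly because each of the $k$ steps contributes exactly one multiplicative $\alpha$ to the product $\prod d_i$ relative to $\prod D_i$, and the total distortion is this product times the unchanged combinatorial factor $k!$ inherited from the exact-greedy analysis.
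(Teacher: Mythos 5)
The paper does not actually prove this lemma; it cites \cite{CivrilM09} and notes in the surrounding text only that ``their analysis also implies'' the approximate-greedy extension. Your reconstruction supplies that missing argument and is structurally correct: the decomposition $\Vol(\G)=\prod_i d_i$, the algorithmic inequality $d_i\ge D_i/\alpha$, and the observation that the geometric bound $\Vol(\V)\le k!\prod_i D_i$ is a statement about an arbitrary flag $G_0\subset\cdots\subset G_k$ (it uses only $d(\cdot,G_{\ell-1})\le D_\ell$, never that the greedy attains the maximum) together give exactly $\Vol(\V)\le \alpha^k\,k!\,\Vol(\G)$. That is the essential insight behind the theorem, and you have stated and combined the pieces correctly.

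The one place your sketch is thin is the justification of $(\dagger)$ itself. The single-coordinate bound $|\langle v_j,\hat g_\ell\rangle|\le D_\ell$ together with a Leibniz expansion of each $\det(C_S)$ and a sum over Cauchy--Binet subsets $S$ gives an extra $\sqrt{\binom{d}{k}}$ blowup, so the literal ``Leibniz/Cauchy--Binet'' route needs more care than you indicate. The clean way to get $(\dagger)$ for a general chain is to use the full residual bound $\|v_j(\I-P_{G_{\ell-1}})\|_2\le D_\ell$ rather than the entrywise bound: by Weyl's inequality $\sigma_i(\V)\le\sigma_1(\V(\I-P_{G_{i-1}}))$ since $\V P_{G_{i-1}}$ has rank at most $i-1$, and $\sigma_1(\V(\I-P_{G_{i-1}}))\le\|\V(\I-P_{G_{i-1}})\|_F\le\sqrt{k}\,D_i$, giving $\Vol(\V)=\prod_i\sigma_i(\V)\le k^{k/2}\prod_i D_i\le k!\prod_i D_i$ for a completely arbitrary chain. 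This confirms your claim that $(\dagger)$ is insensitive to the exact-greedy equality, and in fact shows the approximation factor can be sharpened from $\alpha^k\,k!$ to $\alpha^k\,k^{k/2}$.
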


\begin{algorithm}[!htb]
\caption{Volume Maximization}
\alglab{alg:as:vol:max}
\begin{algorithmic}[1]
\Require{Matrix $\A\in\mathbb{R}^{n\times d}$ that arrives as a stream $\A_1,\ldots,\A_n\in\mathbb{R}^d$, parameter $k$ for number of rows, approximation factor $\alpha>1$.}
\Ensure{$k$ Noisy and projected rows of $\A$.}
\State{Create instances $\alg_1,\ldots,\alg_k$, where each $\alg_j$ for $j\in[k]$ is $\O{\log n\log k}$ copies of the $L_{2,2}$ sampler of \algref{alg:l2:sampler} with error parameter $\eps=\frac{1}{4}$.}
\State{Create instances $\countsketch_1,\ldots,\countsketch_k$ with $\Theta\left(\frac{nk}{\alpha^2}\right)$ buckets.}
\State{Create instances $\ams_1,\ldots,\ams_k$ with error parameter $\O{1}$.}
\State{Let $\M$ be empty $0\times d$ matrix.}
\State{\textbf{Streaming Stage:}}
\For{each row $\A_i$}
\State{Update each sketch $\alg_1,\ldots,\alg_k$.}
\State{Update each sketch $\ams_1,\ldots,\ams_k$.}
\State{Update each sketch $\countsketch_1,\ldots,\countsketch_k$.}
\EndFor
\State{\textbf{Post-processing Stage:}}
\For{$j=1$ to $j=k$}
\State{Post-processing matrix $\P\gets\I-\M^\dagger\M$.}
\State{Update $\alg_j,\ams_j,\countsketch_j$ with post-processing matrix $\P$.}
\If{$\ams_j$ and $\countsketch_j$ find a (noisy) row $\r$ of $\A\P$ with $\norm{\r}^2_2\ge\frac{\alpha^2}{4nk}\norm{\A\P}_F^2$}
\State{Append the (noisy) row $\r_j$ with the largest norm to $\M$: $\M\gets\M\circ\r_j$.}
\Else
\State{Let $\r_j$ be the (noisy) row of $\A\P$ sampled by $\alg_j$.}
\State{Append $\r_j$ to $\M$: $\M\gets\M\circ\r_j$.}
\EndIf
\EndFor
\State{\Return $\M$.}
\end{algorithmic}
\end{algorithm}

\begin{lemma}
\lemlab{lem:as:greedy}
Let $\A\in\mathbb{R}^{n\times d}$ and $\alpha>1$ be an approximation factor. 
Let $\r_j$ be the row selected by \algref{alg:as:vol:max} in round $j$ and let $\R_j=\r_1\circ\ldots\circ\r_j$ for each $j\in[k]$ and $\R_0$ be the all zeros matrix. 
For each $j\in[k]$, let $\m_j$ be the row the maximizes $\norm{\A_i(\I-\R_{j-1}^\dagger\R_{j-1})}_2$. 
Then $\norm{\r_j}_2\ge\frac{1}{2\alpha}\norm{\m_j}_2$ with probability at least $1-\frac{1}{4k}-\frac{1}{\poly(n)}$. 
\end{lemma}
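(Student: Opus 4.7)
The plan is to split on whether the algorithm enters the CountSketch branch (``Case A'') or the $L_{2,2}$ sampling branch (``Case B''). Set $T := \frac{\alpha}{2\sqrt{nk}}\norm{\A\P}_F$ so that $T^2$ matches the algorithm's threshold $\frac{\alpha^2}{4nk}\norm{\A\P}_F^2$, with $\P = \I - \R_{j-1}^\dagger\R_{j-1}$. Choose the constant hidden in the $\Theta(nk/\alpha^2)$ bucket count of $\countsketch_j$ large enough that by \lemref{lem:countsketch} every recovered noisy row $\r$ satisfies $\left|\norm{\r}_2 - \norm{\A_i\P}_2\right| \le E$ with $E \le T/8$; meanwhile $\ams_j$ supplies a constant-factor estimate of $\norm{\A\P}_F$ needed to evaluate the threshold, which only perturbs constants. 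All $1/\poly(n)$ failure events of CountSketch, AMS, and the $L_{2,2}$ sampler copies will be absorbed into a final union bound.

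\emph{Case A.} The algorithm outputs $\r_j$, the noisy row with the largest norm among those passing the threshold, so $\norm{\r_j}_2 \ge T$. If $\norm{\m_j}_2 \ge 9T/8$, then $\m_j$ exceeds the CountSketch recovery threshold $1/\sqrt{b}\norm{\A\P}_F \le T/8$, so $\countsketch_j$ returns a noisy copy of $\m_j$ of norm at least $\norm{\m_j}_2 - E \ge T$; this copy passes the threshold, and hence $\norm{\r_j}_2 \ge \norm{\m_j}_2 - E \ge \frac{8}{9}\norm{\m_j}_2 \ge \frac{1}{2\alpha}\norm{\m_j}_2$ for $\alpha > 1$. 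If instead $\norm{\m_j}_2 < 9T/8$, then directly $\norm{\r_j}_2 \ge T > \frac{8}{9}\norm{\m_j}_2 \ge \frac{1}{2\alpha}\norm{\m_j}_2$.

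\emph{Case B.} No noisy row passes the threshold, and CountSketch's accuracy forces every true norm to satisfy $\norm{\A_i\P}_2 \le T + E \le 9T/8$; in particular $\norm{\m_j}_2 \le 9T/8$. Call index $i$ \emph{good} if $\norm{\A_i\P}_2 \ge \norm{\m_j}_2/\alpha$, so that the noisy output of a successful $L_{2,2}$ sampler call on a good index has norm at least $(1-\eps)\norm{\m_j}_2/\alpha \ge \norm{\m_j}_2/(2\alpha)$ for $\eps = 1/4$. The bad rows contribute total squared mass at most $n\norm{\m_j}_2^2/\alpha^2 \le 81nT^2/(64\alpha^2) = \frac{81}{256k}\norm{\A\P}_F^2$, so the ideal single-call probability of hitting a good index exceeds $1 - 1/(3k)$. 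After accounting for the $(1\pm\eps)$ distortion from \thmref{thm:l2:sampling}, a single call succeeds with probability at least $1/2$, and taking the largest-noisy-norm output over the $\O{\log n \log k}$ independent copies in $\alg_j$ drives the failure probability below $\frac{1}{4k}$; adding the $1/\poly(n)$ union-bound terms completes the proof.

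The main obstacle is the borderline regime $\norm{\m_j}_2 \asymp T$, where neither CountSketch recovery of $\m_j$ nor the bad-mass bound is individually tight. This is resolved by the slack $E \le T/8$ secured by enlarging the bucket-count constant, which lets Case A (anchored at $\norm{\r_j}_2 \ge T$) and Case B (anchored at $\norm{\m_j}_2 \le 9T/8$) overlap and jointly cover every value of $\norm{\m_j}_2$.
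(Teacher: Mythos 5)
Your proof takes essentially the same route as the paper's: a two-case split, handling the CountSketch branch via heavy-hitter recovery and the $L_{2,2}$ sampling branch by bounding the total squared mass of ``bad'' indices whose noisy output would be too small. The paper's split is on whether $\norm{\m_j}_2^2\ge\frac{\alpha^2}{4nk}\norm{\A\Y_{j-1}}_F^2$ (a deterministic property of the input), whereas yours is directly on which branch the algorithm actually enters. Your version is slightly cleaner on this point: the paper's second case silently assumes the algorithm enters the sampler branch, but in fact a sub-threshold $\m_j$ does not preclude entering the CountSketch branch via some other heavy row, a sub-case the paper never spells out; your Case A at norm $\ge T$ handles that explicitly, and your observation about the overlap in the borderline regime $\norm{\m_j}_2 \asymp T$ is exactly the reason the two cases stitch together.

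The one place where your writeup is genuinely loose is the final failure-probability accounting in Case B. With the good/bad cutoff at $\norm{\m_j}_2/\alpha$ and $\norm{\m_j}_2\le 9T/8$, the bad mass is at most $\frac{81}{256k}\norm{\A\P}_F^2$, and after the $(1+\eps)$ sampling distortion with $\eps=1/4$ the per-sample bad probability is about $\frac{405}{1024k}\approx\frac{0.4}{k}$, which exceeds the target $\frac{1}{4k}$. Your appeal to ``taking the largest-noisy-norm output over the $\O{\log n\log k}$ copies'' would close this gap if the algorithm actually selected the max-norm successful copy, but the paper does not say it does; absent that, the ``single call succeeds with probability at least $1/2$'' line does not push the bound below $\frac{1}{4k}$. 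The paper sidesteps this by defining good indices with the absolute threshold $\norm{\A_s\P}_2^2\ge\frac{1}{8nk}\norm{\A\P}_F^2$, whose bad mass is $\frac{1}{8k}$ and gives $\frac{5}{32k}<\frac{1}{4k}$ cleanly. Tightening your cutoff (e.g.\ to $\frac{2}{3}\norm{\m_j}_2/\alpha$, still comfortably above $\frac{1}{2\alpha}\norm{\m_j}_2$ after the $(1-\eps)$ noise) fixes this without changing the structure of your argument.
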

\begin{proof}
Let $\Y_j=\I-\R_j^{\dagger}\R_j$ for each $j\in[k]$. 
Suppose $\norm{\m_j}_2^2\ge\frac{\alpha^2}{4nk}\norm{\A\Y_{j-1}}_F^2$ so that $\m_j$ is $\frac{\alpha}{2\sqrt{nk}}$-heavy with respect to $\A\Y_{j-1}$. 
Each $\countsketch$ data structure in \algref{alg:as:vol:max} maintains $\Theta\left(\frac{nk}{\alpha^2}\right)$ buckets and $\m_j$ is a heavy row for $\A\Y_{j-1}$, so \algref{alg:as:vol:max} will output some noisy row $\r_j$ with $\norm{\r_j}_2^2\ge\frac{1}{2}\norm{\m_j}_2^2$ for sufficiently small $\O{1}$ error parameter by each $\ams$ data structure.  

On the other hand, suppose $\norm{\m_j}_2^2<\frac{\alpha^2}{4nk}\norm{\A\Y_{j-1}}_F^2$. 
Since $\A\Y_{j-1}$ contains $n$ rows and $\eps=\frac{1}{4}$, the $L_{2,2}$ sampler will select an index $s\in[n]$ such that $\norm{\A_s\Y_{j-1}}_2^2\ge\frac{1}{8nk}\norm{\A\Y_{j-1}}_F^2$ and output a row $\r_j$ such that $\norm{\r_j}_2\ge\frac{1}{\sqrt{2}}\norm{\A_s\Y_{j-1}}_2$, with probability at least $1-\frac{1}{4k}-\frac{1}{\poly(n)}$ 
Therefore, 
\[\norm{\r_j}_2^2\ge\frac{1}{16nk}\norm{\A\Y_{j-1}}_F^2>\frac{1}{4\alpha^2}\norm{\m_j}_2^2,\]
which suffices to imply $\norm{\r_j}_2\ge\frac{1}{2\alpha}\norm{\m_j}_2$. 
\end{proof}

\begin{theorem}
\thmlab{thm:vm}
Given a matrix $\A\in\mathbb{R}^{n\times d}$ that arrives in a turnstile data stream and an approximation factor $\alpha>0$, there exists a one-pass streaming algorithm that outputs a set of $k$ noisy rows of $\A$ that is an $\alpha^k(k!)$-approximation to volume maximization with probability at least $\frac{2}{3}$, using $\tO{\frac{ndk^2}{\alpha^2}}$ bits of space.  
\end{theorem}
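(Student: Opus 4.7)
The plan is to show that Algorithm \algref{alg:as:vol:max} implements an approximate greedy procedure whose guarantee follows from combining Lemma \lemref{lem:as:greedy} with Lemma \lemref{lem:as:vm}. I would run the algorithm with its internal parameter set to $\alpha' = \alpha/2$, and let $\r_1,\ldots,\r_k$ denote the noisy rows it outputs, with $\R = \r_1\circ\cdots\circ\r_k$. Since Lemma \lemref{lem:as:vm} is stated directly for a greedy selection over \emph{noisy} rows, no separate bookkeeping is needed to relate the volume of the output to the volume of any idealized set of true rows — the noisy rows themselves are what the approximation bound is stated against.

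The main step is to apply Lemma \lemref{lem:as:greedy} to each of the $k$ rounds. That lemma guarantees that in round $j$, conditioned on the previously selected noisy rows $\R_{j-1}$, the new row $\r_j$ satisfies $\norm{\r_j}_2 \geq \frac{1}{2\alpha'}\norm{\m_j}_2 = \frac{1}{\alpha}\norm{\m_j}_2$, where $\m_j$ is the row of $\A$ that maximizes $\norm{\A_i(\I - \R_{j-1}^\dagger \R_{j-1})}_2$, with probability at least $1 - \frac{1}{4k} - \frac{1}{\poly(n)}$. A union bound over the $k$ rounds then gives, with probability at least $1 - \frac{1}{4} - \frac{k}{\poly(n)} \geq \frac{2}{3}$, that the sequence $\r_1,\ldots,\r_k$ is an $\alpha$-approximate greedy selection of noisy rows in the sense required by Lemma \lemref{lem:as:vm}. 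Invoking that lemma with approximation factor $\alpha$ yields $\Vol(\M) \leq \alpha^k(k!)\,\Vol(\R)$ for any set $\M$ of $k$ rows of $\A$, in particular the volume-maximizing one, proving the correctness claim.

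For the space bound, each of the $k$ stages of the algorithm maintains three sketches: the $L_{2,2}$ sampler $\alg_j$ with constant error parameter, which uses $\tO{d}$ bits by Theorem \thmref{thm:l2:sampling}; a $\countsketch_j$ instance with $\Theta(nk/\alpha^2)$ buckets, which by Lemma \lemref{lem:countsketch} uses $\tO{ndk/\alpha^2}$ bits (the dominant contribution); and an $\ams_j$ instance with constant error, which uses $\tO{d}$ bits by Lemma \lemref{lem:ams}. Summing across the $k$ stages yields the total space of $\tO{ndk^2/\alpha^2}$ claimed in the theorem; rescaling $\alpha' = \alpha/2$ only changes constants. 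There is no substantive obstacle in this proof — the interesting work, namely showing that CountSketch heavy-row recovery combined with $L_{2,2}$ sampling realizes an approximate-greedy step that handles both the ``heavy'' and ``flat'' regimes of the residual, is already contained in Lemma \lemref{lem:as:greedy}. What remains here is just the union bound, the rescaling of $\alpha$, and summing the per-stage space usage.
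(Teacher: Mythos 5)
Your proposal is correct and follows essentially the same route as the paper's own proof: apply \lemref{lem:as:greedy} per round with a union bound over the $k$ rounds, invoke \lemref{lem:as:vm} for the approximate-greedy-to-volume guarantee (with the factor-$2$ rescaling of $\alpha$), and account for space with the $\countsketch$ instances of $\Theta(nk/\alpha^2)$ buckets dominating across the $k$ stages. No gaps.
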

\begin{proof}
Recall that a single instance of the $L_{2,2}$ sampler of \algref{alg:l2:sampler} succeeds with probability $\frac{1}{\log n}$ for error parameter $\eps=\O{1}$. 
Thus by using $\O{\log n\log k}$ copies of the $L_{2,2}$ sampler of \algref{alg:l2:sampler}, the probability that \algref{alg:as:vol:max} successfully acquires a sample in each round is at least $1-\O{\frac{1}{k}}$. 
Thus by \lemref{lem:as:greedy} and a union bound, \algref{alg:as:vol:max} repeatedly chooses rows $\r_1,\ldots,\r_k$ whose distance from the subspace spanned by the previously chosen rows is within a multiplicative factor of $2\alpha$ of the largest distance, with probability at least $\frac{2}{3}$. 
Let $\R$ be the parallelepiped spanned by $\r_1,\ldots,\r_k$. 
Thus by \lemref{lem:as:vm}, $(2\alpha)^k(k!)\Vol(\R)\ge\Vol(\V)$, where $\V$ is a set of $k$ rows of $\A$ with the maximum volume. 
The result then follows from rescaling $\alpha$. 

Each \countsketch{} data structure maintains $\Theta\left(\frac{nk}{\alpha^2}\right)$ buckets of vectors with $d$ entries, each with $\O{\log n}$ bits. 
Each \ams{} data structure uses $\O{d\log^2 n}$ bits of space. 
Each $L_{2,2}$ sampler uses $\O{\log n}$ buckets of vectors with $d$ entries, each with $\O{\log n}$ bits. 
Since \algref{alg:as:vol:max} requires $k$ instances of each data structure, then the total space complexity follows. 
\end{proof}
\section{Volume Maximization in the Row-Arrival Model}
\seclab{sec:vm:ra}
In this section, we consider the volume maximization problem on row-arrival streams. 
As before, we are given the rows of the matrix $\A\in\mathbb{R}^{n\times d}$ and a parameter $k$, and the goal is to output $k$ rows of the matrix whose volume is maximized. 
Throughout the section, we will use the equivalent view that the stream consists of $n$ points from $\mathbb{R}^d$. 

\subsection{Volume Maximization via Composable Core-sets.}
We first observe that we can get an approximation algorithm for volume maximization in the row-arrival model by using algorithms of \cite{indyk2018composable, mahabadi2019composable} for composable core-sets for volume maximization. 
We recall the definition of composable core-sets, such as given in~\cite{IndykMMM14}.
\begin{definition}[$\alpha$-composable core-set]
Let $V\subset\mathbb{R}^d$ be an input set. 
Then a function $c:V\to W$, where $W\subset V$, is an $\alpha$-composable core-set for an optimization problem with a maximization objective with respect to a function $f: 2^{\mathbb{R}^d}\to\mathbb{R}$ if for any collection of set $V_1,\ldots,V_m\subset\mathbb{R}^d$, 
\[f(c(V_1)\cup\ldots\cup c(V_m))\ge\frac{1}{\alpha}f(V_1\cup\ldots\cup V_m).\]
\end{definition}
\cite{indyk2018composable} gives composable core-sets for volume maximization.
\begin{theorem}
\thmlab{thm:vm:coreset}
\cite{indyk2018composable}
There exists a polynomial time algorithm for computing an $\tO{k}^{k/2}$-composable core-set of size $\tO{k}$ for the volume maximization problem. 
\end{theorem}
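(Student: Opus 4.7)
The plan is to adapt the spectral spanner construction of \cite{indyk2018composable}. Given an input set $V \subset \mathbb{R}^d$, the core-set function $c$ outputs a locally optimal subset $W \subset V$ of size $\tO{k}$ with respect to a determinant-based objective (for instance, the product of the top $k$ singular values of the matrix whose rows are the points of $W$, or equivalently an appropriate sum of squared volumes of $k$-subsets of $W$). Concretely, I would initialize $W$ as any size-$\tO{k}$ subset of $V$ and repeatedly swap a point $q \in W$ with a point $p \in V \setminus W$ whenever the swap multiplicatively improves the objective by a factor of at least $1 + 1/\poly(k,\log n)$. Since each improving swap increases the objective by a fixed multiplicative factor and the objective is bounded above by a polynomial in $n$ (because the entries are bounded by $\poly(n)$), the process terminates in polynomial time.

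The structural heart of the argument is a directional-height preservation lemma for locally optimal subsets: for every direction $\v \in \mathbb{R}^d$,
\[
\max_{p \in W} |\langle p, \v \rangle| \ge \frac{1}{\tO{k}^{1/2}} \cdot \max_{p \in V} |\langle p, \v \rangle|.
\]
This is proved by contrapositive: if some direction witnessed a much larger height in $V$ than in $W$, then the corresponding heavy point of $V$ could be swapped into $W$, in exchange for a carefully chosen point already in $W$, so as to strictly improve the objective, contradicting local optimality. The choice of which point in $W$ to swap out uses an averaging argument against the spectral structure of the Gram matrix of $W$, exploiting that at most $\tO{k}$ of the points can be ``responsible'' for any given direction.

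Given height preservation, composability follows by expressing the volume of any $k$-subset $\{p_1,\ldots,p_k\}$ via Gram--Schmidt as
\[
\Vol(\{p_1, \ldots, p_k\}) = \prod_{i=1}^{k} d\bigl(p_i, \mathrm{span}(p_1, \ldots, p_{i-1})\bigr),
\]
which is a product of $k$ directional heights, each in a specific direction orthogonal to the span of the previously chosen points. For the optimum $k$-subset of $V_1 \cup \ldots \cup V_m$, each factor is a height in some direction relative to the set $V_j$ containing the corresponding point. The height-preservation lemma, applied to $c(V_j)$ in that direction, guarantees that there is a point in $c(V_j)$ whose directional height is within a factor $\tO{k}^{1/2}$ of the optimal one. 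Selecting such a point greedily from each core-set produces a $k$-subset of $c(V_1) \cup \ldots \cup c(V_m)$ whose volume is within $\tO{k}^{k/2}$ of the optimum, which is the claimed composability bound.

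The main obstacle I expect is establishing the directional-height preservation lemma with the right $\tO{k}^{1/2}$ per-direction loss. The subtlety is that the candidate replacement point inside $W$ must be chosen so that swapping it out still leaves the remaining $\tO{k}-1$ points spanning all the important directions well, while bringing in the heavy point in direction $\v$ strictly improves the objective. Balancing these requirements is what forces the core-set size to be $\tO{k}$ (rather than exactly $k$) and what determines the exponent $1/2$ in the per-direction distortion; both come from a spectral averaging argument on the Gram matrix of $W$, which is the technical core of \cite{indyk2018composable}.
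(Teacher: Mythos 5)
First, note that the paper does not prove this statement at all: it is imported verbatim from \cite{indyk2018composable}, and the only ingredient of it that the paper actually re-derives is the reduction you also use, namely that preserving directional heights per part implies preserving $k$-subset volumes up to the $k$\th power of the per-direction loss (cf.\ \lemref{dirheight-to-volmax-coreset} and the exchange argument behind it). Your second half is essentially that standard argument, though as written it is slightly too loose: if you fix the Gram--Schmidt directions of the optimal set and pick one matching point per direction, the matrix of the substituted points in that basis need not be triangular, so the product of the matched heights does not directly lower bound $\Vol$ of the new set. The clean version replaces the optimal points one at a time, at each step taking the direction orthogonal to \emph{all} points of the current (partially substituted) set, which loses exactly one factor of $\alpha$ per step; you should state it that way.

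The genuine gap is in the first half, which is the actual content of the cited theorem: you assert, but do not prove, that a swap-based local search on a determinant-type objective over a set of size $\tO{k}$ yields the directional-height preservation $\max_{p\in W}|\langle p,\v\rangle|\ge \tO{k}^{-1/2}\max_{p\in V}|\langle p,\v\rangle|$, and you explicitly defer the ``spectral averaging argument'' that would justify the $\tO{k}^{1/2}$ loss. That lemma is precisely the technical core of \cite{indyk2018composable}, and it is obtained there via the machinery of spectral spanners (a subset $S$ with $vv^\top$ dominated, in the L\"{o}wner order, by $\tO{k}$ times a convex combination of $\{uu^\top\}_{u\in S}$), not via a simple improving-swap local search on a determinant objective. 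The local-search-on-directional-heights route is the one taken in \cite{mahabadi2019composable}, and it is only known to give the weaker $\O{k}^{k}$ composability factor; nothing in your sketch shows that your swaps force the per-direction loss down to $\tO{k}^{1/2}$ rather than $\O{k}$, and the averaging argument you invoke is exactly where that exponent would have to be earned. (There is also a minor issue with your termination argument: if the initial $W$ has rank below $k$ the objective is zero and multiplicative-improvement steps are vacuous, so the initialization needs care.) As it stands, the proposal reproduces the easy reduction but leaves the theorem's main claim unproved.
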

We can partition the stream into consecutive blocks and apply a core-set for each block to get a streaming algorithm for volume maximization. 
\begin{corollary}\corlab{core-set-streaming}
There exists a one pass streaming algorithm in the row-arrival model that computes a $\tO{k}^{k/(2\eps)}$-approximation to the volume maximization problem, using $\tO{\frac{1}{\eps}n^{\eps}dk}$ space.
\end{corollary}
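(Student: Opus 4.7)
The plan is to invoke the composable core-set of \thmref{thm:vm:coreset} inside a standard merge-and-reduce hierarchy of depth $L=\lceil 1/\eps\rceil$ with branching factor $B=n^\eps$, chosen so that $B^L=n$ and a single summary at the top of the tree covers the entire stream. Each root-to-leaf path will then apply the $\tO{k}^{k/2}$-approximate core-set operation exactly $L$ times, and the approximation errors will compound multiplicatively to $\tO{k}^{k/(2\eps)}$.

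Algorithmically, I would maintain, for each level $\ell\in\{0,\ldots,L-1\}$, a queue of at most $B$ core-sets of size $\tO{k}$. Incoming rows are buffered into blocks of size $B$ at level $0$; each filled block is compressed through \thmref{thm:vm:coreset} into a level-$0$ core-set. Whenever $B$ core-sets accumulate at some level $\ell$, I take their union (of size $B\cdot\tO{k}$), apply \thmref{thm:vm:coreset} to obtain a single level-$(\ell{+}1)$ core-set of size $\tO{k}$, and discard the $B$ children. When the stream ends, partial queues are flushed upward in the same fashion, yielding one core-set $S^*$ at level $L$; the final output is the $k$-subset of $S^*$ of maximum volume, found in post-processing by enumeration over its $\binom{\tO{k}}{k}$ candidate subsets.

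Correctness is a straightforward induction on $\ell$ using the composability property: if $W_i=c(V_i)$ are $\alpha$-approximate core-sets of sets $V_i$, then $c\!\left(\bigcup_i W_i\right)$ is an $\alpha$-approximate core-set of $\bigcup_i W_i$, which in turn is an $\alpha$-approximate core-set of $\bigcup_i V_i$; the approximation factor thus multiplies by $\alpha=\tO{k}^{k/2}$ per level, yielding $\alpha^L=\tO{k}^{k/(2\eps)}$ at the root. For the space bound, at any moment we hold at most $B$ core-sets at each of the $L$ levels (plus the raw incoming buffer of $B$ points), each of size $\tO{kd}$ with $\O{\log n}$-bit entries, for a total of $\tO{(1/\eps)\,n^\eps\,k\,d}$ as stated.

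The main thing to verify is really just the online bookkeeping — that this hierarchy can be maintained in a single pass within the stated space budget — which is standard for merge-and-reduce templates. The substantive content of the result, namely the existence of a small $\tO{k}^{k/2}$-approximate composable core-set, is imported directly from \thmref{thm:vm:coreset}, so there is no further algorithmic difficulty beyond choosing $B$ and $L$ so that the branching $B=n^\eps$ and depth $L=1/\eps$ produce exactly the advertised tradeoff between space and approximation.
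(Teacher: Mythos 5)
Your proposal is correct and is essentially the paper's own argument: the paper likewise builds an $n^\eps$-ary tree of height $1/\eps$ over the stream, applies the $\tO{k}^{k/2}$-composable core-set of \thmref{thm:vm:coreset} at each internal node so the approximation compounds to $\tO{k}^{k/(2\eps)}$, and bounds the space by the one active root-to-leaf path holding at most $n^\eps$ core-sets of size $\tO{kd}$ per level. Your queue-based merge-and-reduce bookkeeping is just a bottom-up implementation of the same tree, so there is nothing substantively different to flag.
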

\begin{proof}
Consider a $b$-ary tree over the stream with $n$ leaves that correspond to the elements of the stream in the order they arrive. 
Let $b=n^{\eps}$ so that the height of the tree is $\frac{\log n}{\log b} = \frac{1}{\eps}$. 
For each node in the tree, as soon as all the elements corresponding to its subtree arrive in the stream, we build a core-set of size $\tO{k}$ for the points using the algorithm of \cite{indyk2018composable} in \thmref{thm:vm:coreset}, and pass the core-set to the parent node. 
More precisely, when the node receives a composable core-set from each of its $b$ children, it computes a composable core-set over the union of the core-sets of its children and passes the new core-set on to its parent. 

Then the $k$ points reported by the root gives an $\left(\tO{k}^{k/2}\right)^{1/\eps} = \tO{k}^{k/(2\eps)}$ approximation to the volume maximization problem. 
Moreover, at each time step during the stream arrival, there is only one path of active nodes (nodes whose corresponding leaf nodes have arrived but not finished) in the tree from the root to the leaves. 
Each of the nodes on this active path might need to store a composable core-set of size $\tO{k}$ for each of its $b$ children. 
Since each point has dimension $d$, then the total memory usage of the algorithm is thus at most $\left(\frac{1}{\eps}\right)\cdot b \cdot \tO{dk} = \tO{\frac{1}{\eps}n^{\eps}dk}$.
\end{proof}

%%%%%%%%%%%%%%%%%%%%%%%%%%%%%%%
\subsection{Exponential Dependence on $d$}
In this section, we give a streaming algorithm whose space complexity depends exponentially on the dimension $d$. Our main tool is the $\eps$-kernels of \cite{agarwal2005geometric} improved by \cite{chan2006faster} for directional width of a point set. 
We first define the concept of the directional width.

\begin{definition}[Directional width \cite{agarwal2005geometric}]
Given a point set $P\subset \mathbb{R}^d$ and a unit direction vector $\x\in \mathbb{R^d}$, the directional width of $P$ with respect to $\x$ is defined to be $\omega(\x,P) = \max_{\p\in P} \langle \x,\p\rangle - \min_{\p\in P} \langle\x,\p\rangle$.
\end{definition}

The following lemma shows the existence of core-sets with size exponential in the directional width of a point set but independent of the number of points.

\begin{lemma}
\cite{chan2006faster}
\lemlab{agarwal-coreset}
For any $0<\eps<1$, there exists a one pass streaming algorithm that computes an $\eps$-core-set $Q\subseteq P$, such that for any unit direction vector $\x\in \mathbb{R}^d$, we have $\omega(\x,Q)\geq (1-\eps) \omega(\x,P)$. 
Moreover the algorithm uses space $\O{\left(\frac{1}{\eps}\log\frac{1}{\eps}\right)^{d-1}}$ and the core-set has size $\abs{Q}\leq \O{\frac{1}{\eps^{d-1}}}$.
\end{lemma}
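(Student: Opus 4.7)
The plan is to approximate the directional width by discretizing the space of unit directions and, for each discretized direction, tracking only the extreme point of $P$ in that direction. The construction has two conceptually separate parts: an offline construction that works under a \emph{fatness} assumption on $P$, and a streaming scheme to maintain the fatness (or equivalently, an adaptive coordinate system) online.

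\emph{Step 1: Offline $\eps$-kernel for fat point sets.} Assume temporarily that $P \subset [-1,1]^d$ and that $P$ contains a ball of radius $\Omega(1)$; call such a $P$ \emph{fat}. Build an $\eps$-net $\mathcal{N}$ of directions on the unit sphere $S^{d-1}$ with $|\mathcal{N}| = \O{1/\eps^{d-1}}$; this can be done using a standard grid on the boundary of a hypercube. For each $\x \in \mathcal{N}$, add to $Q$ the points $\argmax_{p \in P}\langle \x, p\rangle$ and $\argmin_{p \in P}\langle \x, p\rangle$. For any unit $\y$, pick $\x \in \mathcal{N}$ with $\|\x-\y\|\le\eps$; since $P \subseteq [-1,1]^d$, we have $|\langle \x - \y, p\rangle| \le \eps\sqrt{d}$ for every $p \in P$. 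Fatness gives $\omega(\y, P) = \Omega(1)$, so the additive slack $\O{\eps}$ can be converted to the multiplicative guarantee $\omega(\x, Q) \ge (1-\eps)\omega(\x, P)$ up to rescaling $\eps$ by a constant.

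\emph{Step 2: Streaming maintenance for fat inputs.} If the fattening transformation $T$ of $P$ were known in advance, the streaming algorithm would be immediate: initialize $\O{1/\eps^{d-1}}$ ``slots,'' one per direction in $\mathcal{N}$; for each arriving point $p$, apply $T$ and update each slot if $\langle\x, T(p)\rangle$ exceeds the current extremum. This uses $\O{d/\eps^{d-1}}$ words of space and $\O{1/\eps^{d-1}}$ time per point, giving the claimed core-set size and a clean one-pass algorithm.

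\emph{Step 3: Learning the fattening online.} The real obstacle, and the source of the extra $(\log(1/\eps))^{d-1}$ factor in the space, is that $T$ depends on the entire stream. The plan is to run $\O{\log(1/\eps)}$ copies of the fat-case algorithm in parallel, one per candidate ``scale'' of the diameter of $P$ in each coordinate direction. When a new point arrives that grows $P$ outside the assumed scale of a copy, that copy is retired and a new one is initialized at the larger scale, seeded with the current core-set points (which are sufficient because they themselves approximate directional widths). One must verify that retiring and reseeding preserves the $(1-\eps)$ guarantee through the scale changes, which is the delicate step; the argument is that each reseeding loses at most a $(1-\O{\eps/\log(1/\eps)})$ factor and there are $\O{\log(1/\eps))}$ such events along any stream. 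Rescaling $\eps$ by a constant gives the stated bound.

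The hard part is Step 3: showing that a logarithmic number of scales suffices and that the hand-off between scales does not accumulate error. The rest is essentially a volumetric sphere-covering argument plus the trivial streaming maintenance of $|\mathcal{N}|$ extreme-point queries.
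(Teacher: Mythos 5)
This lemma is imported from \cite{chan2006faster}; the paper you are reading does not supply its own proof, so the only question is whether your sketch is itself correct.

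Your Step~1 is sound and standard: after an affine ``fattening'' places $P$ in $[-1,1]^d$ around a ball of constant radius, an $\eps$-net of $O(1/\eps^{d-1})$ directions together with the extreme point of $P$ in each direction gives an additive $O(\eps)$ error that fatness converts to the multiplicative $(1-\eps)$ guarantee. Step~2 is likewise immediate if the fattening map $T$ were known offline. The problem is Step~3, which is not a minor ``delicate step'' but is the entire technical content of the streaming result, and your sketch of it does not stand up. First, the object that must be maintained online is not a scalar scale but a $d$-dimensional affine map: the fattening can be highly anisotropic and can change shape, not just size, as the stream evolves, so ``$O(\log(1/\eps))$ copies, one per candidate scale of the diameter in each coordinate direction'' is not a well-defined scheme---a thin slab whose thin direction rotates over time is not handled by tracking per-coordinate diameters. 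Second, ``there are $O(\log(1/\eps))$ such events along any stream'' is simply false: the diameter can double arbitrarily many times. The plausible version of your claim is that only the most recent $O(\log(1/\eps))$ rescalings matter because earlier points become negligible, but you neither state nor prove that. Third, the error accounting is missing: the key lemma one actually needs is that seeding a coarser copy from a finer copy's kernel loses only a controlled $(1-O(\eps'))$ factor while the active window contains the stretch, and that these losses compose to $(1-O(\eps))$ over the lifetime of the stream; you flag this as delicate and then assume it. Finally, your proposed bookkeeping yields space $O(\eps^{-(d-1)}\log(1/\eps))$, whereas the cited bound is $O\bigl((\eps^{-1}\log(1/\eps))^{d-1}\bigr)$; the fact that the logarithmic factor in the reference multiplies each of the $d-1$ grid axes rather than appearing once suggests that the actual construction in \cite{chan2006faster} does per-coordinate bucketing with a logarithmic overhead in every grid dimension, a structure your ``parallel scales'' picture does not reproduce. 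In short, Steps~1--2 are fine, but Step~3---which carries all the difficulty---remains an unproven plan with both a modeling gap (scalar vs.\ affine scale) and a quantitative mismatch.
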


Now let us define the the directional height (which was implicitly defined in \cite{mahabadi2019composable}), and its relation to directional width.

\begin{definition}[Directional height]
Given a point set $P\subset \mathbb{R}^d$ and a unit direction vector $\x\in \mathbb{R^d}$, the directional height of $P$ with respect to $x$ is defined to be $h(\x,P) = \max_{\p\in P} \abs{\langle \x,\p\rangle}$.
\end{definition}

\begin{lemma}\lemlab{width-to-height-coreset}
For a point set $P\subset \mathbb{R}^d$, an $\eps$-core-set $Q$ for directional width is a $(2\eps)$-core-set for directional height.
\end{lemma}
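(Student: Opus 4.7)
The plan is to reduce everything to a direction-by-direction comparison: fix an arbitrary unit vector $\x$ and show directly that $h(\x, Q) \geq (1-2\eps) h(\x, P)$. Since $h(\x, \cdot) = h(-\x, \cdot)$, we may replace $\x$ by $-\x$ if needed and assume the point $\p^* \in P$ that attains $h(\x, P)$ satisfies $\langle \x, \p^*\rangle > 0$. Write $H = h(\x, P)$, $a_P = \max_{\p\in P}\langle \x, \p\rangle$, $b_P = \min_{\p \in P}\langle \x, \p\rangle$, and similarly $a_Q, b_Q$ for $Q$. Under our WLOG assumption, $a_P = H$, and from $|b_P| \leq H$ we get $b_P \geq -H$.

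Next I would translate the directional-width core-set guarantee into a lower bound on $a_Q$. The defining inequality $\omega(\x, Q) \geq (1-\eps)\omega(\x, P)$ reads $a_Q - b_Q \geq (1-\eps)(a_P - b_P)$, and since $Q \subseteq P$ we have $b_Q \geq b_P$. Combining these,
\[
a_Q \;\geq\; b_P + (1-\eps)(H - b_P) \;=\; (1-\eps)H + \eps b_P \;\geq\; (1-\eps)H - \eps H \;=\; (1-2\eps)H,
\]
using $b_P \geq -H$ in the final step. In particular $h(\x, Q) \geq a_Q \geq (1-2\eps)H = (1-2\eps)h(\x, P)$, which is the desired inequality; the case $\eps \geq 1/2$ is vacuous since then $(1-2\eps)H \leq 0 \leq h(\x, Q)$.

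The proof is really only one calculation, so there is no serious obstacle; the only subtle point is the initial reduction by sign flip, which exploits the invariance $h(\x, \cdot) = h(-\x, \cdot)$ to ensure the maximizing inner product is nonnegative, so that $a_P$ (rather than $-b_P$) equals $H$. The factor of $2$ in the loss arises precisely in the bound $H - b_P \leq 2H$, i.e., it is the price of the point set potentially straddling the origin in direction $\x$. Once these observations are in place, combining them with $\omega(\x,Q) \geq (1-\eps)\omega(\x,P)$ quantifier-by-quantifier over all unit $\x$ yields that $Q$ is a $(2\eps)$-core-set for directional height.
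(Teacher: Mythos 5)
Your proof is correct and takes essentially the same route as the paper's: fix a unit direction $\x$, normalize the sign so the height is attained by the maximum inner product, and combine the width guarantee with $Q\subseteq P$ and the bound $b_P\ge -h(\x,P)$. The only difference is cosmetic—you fold the paper's two cases (all points on one side of the hyperplane through the origin versus straddling it) into a single calculation via $b_P\ge -H$, which is where the factor $2$ enters in both arguments.
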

\begin{proof}
Let $\x$ be a unit direction vector in $\mathbb{R}^d$.
Let $\p_1=\argmax_{\p\in P} \langle \x,\p\rangle$ and $\p_2=\argmin_{\p\in P} \langle \x,\p\rangle$, and let $\q_1=\argmax_{\p\in Q} \langle \x,\p\rangle$ and $\q_2=\argmin_{\p\in Q} \langle \x,\p\rangle$. 
Now consider two cases. 
\begin{itemize}
\item 
First suppose that $\langle\x,\p_2\rangle\geq 0$. 
In this case, we have that $h(\x,P) = \langle \x,\p_1\rangle$. 
Since $Q$ is an $\eps$-core-set for directional width, then $\langle\x,\p_1\rangle-\langle\x,\p_2\rangle\le(1+\eps)\left(\langle\x,\q_1\rangle-\langle\x,\q_2\rangle\right)\le(1+\eps)\left(\langle\x,\q_1\rangle-\langle\x,\p_2\rangle\right)$. 
Therefore, $\langle \x,\q_1\rangle \geq (1-\eps) \langle \x, \p_1\rangle$ so that $h(\x,Q)\geq (1-\eps)h(\x,P)$.

The case of $\langle\x,\p_1\rangle\leq 0$ can be handled similarly.

\item
On the other hand, suppose $\langle\x,\p_1\rangle > 0$ and $\langle \x,\p_2\rangle < 0$. 
Assume without loss of generality that $\abs{\langle \x,\p_1\rangle}\geq \abs{\langle \x,\p_2\rangle}$. 
Then $h(\x,P) = \langle \x,\p_1\rangle$. 
As $Q$ is an $\eps$-core-set for the width, then we also have $\langle \x,\q_1\rangle \geq (1-2\eps)\langle \x,\p_1 \rangle= (1-2\eps)h(\x,P)$. 
Therefore, $h(\x,Q)\geq (1-2\eps)h(\x,P)$.
\end{itemize}
\end{proof}

Finally we define $k$-directional height and observe that a core-set for directional height leads to a core-set for $k$-directional height.

\begin{definition}[$k$-directional height \cite{mahabadi2019composable}]
Given a point set $P\subset \mathbb{R}^d$ and a $(k-1)$-dimensional subspace $\mathcal{H}$, $\x\in \mathbb{R}^d$, the $k$-directional height of $P$ with respect to $\mathcal{H}$ is defined to be $h_k(\mathcal{H},P) = \max_{\x\in\mathcal{H}^{\perp}} h(\x,P)$, where $\mathcal{H}^\perp$ is the orthogonal complement of $\mathcal{H}$.
\end{definition}

\begin{observation}\obslab{height-to-dirheight-coreset}
For a point set $P\subset \mathbb{R}^d$, an $\eps$-core-set $Q$ for directional height is an $\eps$-core-set for $k$-directional height.
\end{observation}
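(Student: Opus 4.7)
The plan is to prove this by a direct application of definitions. The key observation is that the $k$-directional height is nothing more than a directional height where the direction is restricted to lie in the orthogonal complement $\mathcal{H}^\perp$. Since the core-set guarantee for directional height holds for \emph{all} unit vectors $\x \in \mathbb{R}^d$, it holds in particular for any unit vector restricted to $\mathcal{H}^\perp$. Thus the restriction should pass through essentially for free.

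More concretely, I would proceed as follows. Let $\x^\star \in \mathcal{H}^\perp$ be a unit vector achieving the maximum in the definition of $h_k(\mathcal{H},P)$, so that $h_k(\mathcal{H},P) = h(\x^\star,P)$. Such a maximizer exists because $\mathcal{H}^\perp$ intersected with the unit sphere is compact and $h(\cdot,P)$ is continuous. Since $Q$ is an $\eps$-core-set for directional height, applying the defining inequality to the specific direction $\x^\star$ yields $h(\x^\star,Q) \geq (1-\eps)\, h(\x^\star,P)$. Now since $\x^\star \in \mathcal{H}^\perp$ is a valid direction in the maximization defining $h_k(\mathcal{H},Q)$, we have $h_k(\mathcal{H},Q) \geq h(\x^\star,Q)$. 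Chaining these two inequalities gives $h_k(\mathcal{H},Q) \geq (1-\eps)\, h(\x^\star,P) = (1-\eps)\, h_k(\mathcal{H},P)$, which is exactly the claim.

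There is essentially no obstacle here: the statement is a routine ``restriction'' lemma where a guarantee that quantifies over all directions is applied to a distinguished subfamily of directions. The only mild subtlety worth noting in passing is the existence of the maximizer $\x^\star$, but this is immediate from compactness; alternatively one can argue with a supremum and a limiting sequence of directions in $\mathcal{H}^\perp$ that approach the supremum, and use the same inequality on each element of the sequence before taking limits. No new ideas beyond the definitions of $h$, $h_k$, and an $\eps$-core-set for directional height are required.
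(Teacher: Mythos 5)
Your proof is correct and follows the same reasoning as the paper, which simply notes that since the directional-height guarantee holds for every direction in $\mathcal{H}^\perp$, the maximum over such directions is preserved as well. Your version just makes this explicit by instantiating the guarantee at a maximizing direction $\x^\star\in\mathcal{H}^\perp$ for $P$, which is exactly the intended argument.
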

In fact, a core-set for directional height is stronger than a core-set for $k$-directional height since it preserves the height in all directions $x\in \mathcal{H}^{\perp}$; thus their maximum is preserved too.
\begin{lemma}[\cite{mahabadi2019composable}]\lemlab{dirheight-to-volmax-coreset}
For a point set $P\subset \mathbb{R}^d$, let $Q$ be its $\eps$-core-set for $k$-directional height. 
Then the solution of the $k$-volume maximization on $Q$ is within a factor of $1/(1-\eps)^k$ of the solution of $k$-volume maximization over $P$.
\end{lemma}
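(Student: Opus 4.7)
The plan is to prove the lemma by a greedy swap argument: starting from the optimal $k$-subset $\{p_1^*,\ldots,p_k^*\}\subseteq P$, I would replace each $p_i^*$ one at a time by a point of $Q$, paying at most a factor of $(1-\eps)$ in volume at each swap. The workhorse is the base-times-height decomposition,
\[\Vol(v_1,\ldots,v_k)=\Vol(v_1,\ldots,v_{k-1})\cdot d(v_k,\mathrm{span}(v_1,\ldots,v_{k-1})),\]
combined with the invariance of $\Vol$ under reordering, which lets me single out \emph{any} one of the $k$ factors while holding the other $k-1$ vectors fixed. This factored form is exactly what makes the $k$-directional-height core-set property composable with a volume guarantee.

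The key inductive claim, proved by reverse induction on $i=k,k-1,\ldots,0$, is the existence of $q_{i+1},\ldots,q_k\in Q$ with $\Vol(p_1^*,\ldots,p_i^*,q_{i+1},\ldots,q_k)\ge(1-\eps)^{k-i}\Vol(p_1^*,\ldots,p_k^*)$. For the step from $i$ to $i-1$, set $S=\{p_1^*,\ldots,p_{i-1}^*,q_{i+1},\ldots,q_k\}$ and $\mathcal{H}=\mathrm{span}(S)$. Pulling out the factor for $p_i^*$ gives $\Vol(S\cup\{p_i^*\})=\Vol(S)\cdot d(p_i^*,\mathcal{H})$. Since $d(p_i^*,\mathcal{H})\le h_k(\mathcal{H},P)$, the core-set guarantee $h_k(\mathcal{H},Q)\ge(1-\eps)h_k(\mathcal{H},P)$ delivers a $q_i\in Q$ with $d(q_i,\mathcal{H})\ge(1-\eps)d(p_i^*,\mathcal{H})$. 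Substituting $q_i$ for $p_i^*$ yields $\Vol(S\cup\{q_i\})\ge(1-\eps)\Vol(S\cup\{p_i^*\})$, advancing the induction. Taking $i=0$ outputs $\{q_1,\ldots,q_k\}\subseteq Q$ with $k$-volume at least $(1-\eps)^k$ times the optimum on $P$, i.e., a $1/(1-\eps)^k$-approximation.

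The main obstacle I anticipate is ensuring that the subspace $\mathcal{H}$ at each inductive step is genuinely $(k-1)$-dimensional, since the $k$-directional height is defined only for $(k-1)$-dimensional subspaces. This is handled by the induction itself: as long as the maintained volume is positive, the $k-1$ vectors constituting $S$ must be linearly independent, so $\dim\mathcal{H}=k-1$ at every step; the degenerate case $\Vol(p_1^*,\ldots,p_k^*)=0$ makes the statement trivial. A small but necessary sanity check, which I would dispense with quickly, is the identity $h_k(\mathcal{H},P)=\max_{p\in P}d(p,\mathcal{H})$, which holds because for fixed $p$, maximizing $|\langle x,p\rangle|$ over unit $x\in\mathcal{H}^\perp$ yields the norm of the projection of $p$ onto $\mathcal{H}^\perp$, which equals $d(p,\mathcal{H})$.
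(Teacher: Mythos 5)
Your proof is correct. The paper itself does not reproduce an argument for this lemma (it is cited from \cite{mahabadi2019composable}), but the greedy swap you use---base-times-height factorization $\Vol(S\cup\{v\})=\Vol(S)\cdot d(v,\mathrm{span}(S))$, the identity $h_k(\mathcal{H},P)=\max_{p\in P}d(p,\mathcal{H})$, and a reverse induction replacing one optimal point at a time by a point of $Q$ at a cost of $(1-\eps)$ per swap---is exactly the standard and intended argument. One point worth stating explicitly when writing this up: you output the \emph{set} $\{q_1,\ldots,q_k\}\subseteq Q$, so you need the $q_i$ to be pairwise distinct. This comes for free from your own bookkeeping: at each step the maintained volume is positive, so $p_1^*,\ldots,p_{i-1}^*,q_i,q_{i+1},\ldots,q_k$ are linearly independent, which in particular forces $q_i\notin\mathrm{span}(S)$ and hence $q_i$ distinct from the previously chosen $q_j$. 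Your handling of the degenerate case $\Vol(p_1^*,\ldots,p_k^*)=0$ and of the requirement that $\mathcal{H}$ be genuinely $(k-1)$-dimensional is likewise correct and uses the same observation.
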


\begin{lemma}\lemlab{volmax-exp-d}
There exists a one pass streaming algorithm that outputs a $2^{k}$-approximation to volume maximization, using $\O{8^d}$ space.
\end{lemma}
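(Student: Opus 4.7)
The plan is to chain together the sequence of core-set reductions already established in the preceding lemmas with a single appropriate choice of $\eps$, and to invoke \lemref{agarwal-coreset} as the streaming building block.

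First, I would instantiate the streaming algorithm of \lemref{agarwal-coreset} with parameter $\eps = \frac{1}{4}$. This processes the rows of $\A$ as they arrive in the row-arrival stream and maintains an $\eps$-core-set $Q \subseteq P$ for directional width, where $P$ is the set of $n$ points corresponding to the rows of $\A$. By \lemref{agarwal-coreset}, the space used is
\[\O{\left(\frac{1}{\eps}\log\frac{1}{\eps}\right)^{d-1}} = \O{(4\log 4)^{d-1}} = \O{8^{d-1}} = \O{8^d},\]
as desired, and $|Q| = \O{1/\eps^{d-1}} = \O{4^{d-1}}$.

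After the stream ends, in post-processing I would compute an optimal $k$-subset for volume maximization over the core-set $Q$ directly (this is an offline computation and does not contribute to streaming space). Let $\S$ be this output of $k$ rows. To argue $\S$ is a $2^k$-approximation, I would chain the three reductions: by \lemref{width-to-height-coreset}, the set $Q$ is a $(2\eps)$-core-set, i.e.\ a $\frac{1}{2}$-core-set, for directional height; by \obsref{height-to-dirheight-coreset}, $Q$ is therefore also a $\frac{1}{2}$-core-set for $k$-directional height; and finally by \lemref{dirheight-to-volmax-coreset}, the optimal $k$-volume maximization on $Q$ is within a factor $1/(1-\frac{1}{2})^k = 2^k$ of the optimum over $P$. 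Thus $\Vol(\M) \le 2^k \Vol(\S)$, where $\M$ is an optimal set of $k$ rows.

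There is no substantive obstacle here — the heavy lifting has been done in the prior lemmas. The only thing to be careful about is the factor of $2$ introduced by \lemref{width-to-height-coreset} when passing from directional width to directional height, which forces the choice $\eps = \frac{1}{4}$ rather than $\eps = \frac{1}{2}$; this is precisely what makes the $\frac{1}{\eps}\log\frac{1}{\eps}$ term evaluate to $8$ and yields the claimed $\O{8^d}$ space bound. Since the streaming algorithm of \lemref{agarwal-coreset} is deterministic, the resulting approximation guarantee holds deterministically as well.
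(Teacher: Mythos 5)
Your proposal matches the paper's proof essentially step for step: run the streaming $\eps$-kernel of \lemref{agarwal-coreset} with $\eps=\tfrac14$ (space $\O{8^d}$), convert via \lemref{width-to-height-coreset} and \obsref{height-to-dirheight-coreset} into a $\tfrac12$-core-set for $k$-directional height, and apply \lemref{dirheight-to-volmax-coreset} to get the $1/(1-\tfrac12)^k=2^k$ approximation. The reasoning and the choice of constants are the same as in the paper, so no further comment is needed.
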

\begin{proof}
We use the streaming algorithm of \lemref{agarwal-coreset} on the the set of rows of $\A$ with $\eps=\frac{1}{4}$, which gives a $\frac{1}{4}$-core-set $Q\in \mathbb{R}^{4^d\times d}$ for the directional width using space $\O{8^d}$. 
Using \lemref{width-to-height-coreset} and \obsref{height-to-dirheight-coreset}, this will be a $\frac{1}{2}$-core-set for the $k$-directional height of the rows of $\A$. 
Finally, using \lemref{dirheight-to-volmax-coreset}, the optimal solution of $Q$ approximates the maximum volume over rows of $\A$ within a factor of $2^k$.
\end{proof}

%%%%%%%%%%%%%%%%%%%%%%%%%%%%%%%

\subsection{Dimensionality Reduction}
In this section, we show how to reduce the dimension of each point to $d=\O{k}$. 
Using the result of the previous section, this will give a trade-off algorithm, improving over \corref{core-set-streaming} in terms of the dependence on the parameter $\frac{1}{\eps}$. 
We prove the following lemma.

\begin{lemma}
\lemlab{volmax-tradeoff}
Let $C$ be a trade-off parameter such that $1<C<(\log n)/k$. 
There exists a randomized streaming algorithm that uses $\O{n^{\O{1/C}} d}$ space to computes a subset of size $d$ whose volume maximization solution is a $\O{Ck}^{k/2}$ approximation to the optimal solution. 
\end{lemma}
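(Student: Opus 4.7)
The plan is to compose a random Gaussian projection with the streaming algorithm from \lemref{volmax-exp-d}. Sample a matrix $\G \in \mathbb{R}^{d \times d'}$ with i.i.d.\ entries $\mathcal{N}(0, 1/d')$, where $d' = \Theta(\log n / C)$; the hypothesis $C < (\log n)/k$ ensures $d' \ge k$, and $\G$ can be regenerated on the fly from a short pseudorandom seed so its space cost is absorbed into the final bound. As each row $\A_i$ arrives, compute $\tilde{\A}_i = \A_i \G \in \mathbb{R}^{d'}$, feed $\tilde{\A}_i$ into the streaming algorithm of \lemref{volmax-exp-d} instantiated in dimension $d'$, and keep alongside each retained projected point its corresponding original row $\A_i \in \mathbb{R}^d$. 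At the end of the stream, output the set of original rows indexed by the surviving core-set.

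The correctness analysis hinges on an \emph{asymmetric} application of Gordon's theorem. For a fixed subset $S$ of $k$ row indices, write the SVD $\A_S = \U_S \Sigma_S \V_S^\top$; then $\A_S \G$ has the same singular values as $\Sigma_S (\V_S^\top \G)$, and $\V_S^\top \G \in \mathbb{R}^{k \times d'}$ has i.i.d.\ $\mathcal{N}(0,1/d')$ entries. By Gordon's inequality, its largest singular value is at most $1 + \sqrt{k/d'} + t/\sqrt{d'}$ except with probability $2e^{-t^2/2}$. Choosing $t = \Theta(\sqrt{k \log n})$ and union-bounding over all $\binom{n}{k} \le n^k$ subsets, \emph{every} subset simultaneously satisfies $\Vol(\A_S \G) \le \O{\sqrt{kC}}^k \cdot \Vol(\A_S) = \O{kC}^{k/2} \cdot \Vol(\A_S)$. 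In contrast, for the single optimal subset $S^*$ I apply the same inequality with \emph{constant} $t$ (no union bound needed), which yields $\sigma_{\min}(\V_{S^*}^\top \G) \ge 1 - O(\sqrt{k/d'}) \ge 1/2$ for a sufficiently large hidden constant in $d' = \Omega(k)$, whence $\Vol(\A_{S^*}\G) \ge \Vol(\A_{S^*}) / 2^k$.

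Combining these two bounds with the $2^k$-approximation guarantee of \lemref{volmax-exp-d} in the reduced space, if $\hat S$ is the output and $\mathrm{OPT}$ is the optimal volume in $\mathbb{R}^d$, then
\[
\Vol(\A_{\hat S}) \;\ge\; \frac{\Vol(\A_{\hat S}\G)}{\O{kC}^{k/2}} \;\ge\; \frac{\Vol(\A_{S^*}\G)}{2^k \cdot \O{kC}^{k/2}} \;\ge\; \frac{\mathrm{OPT}}{4^k \cdot \O{kC}^{k/2}} \;=\; \frac{\mathrm{OPT}}{\O{kC}^{k/2}},
\]
since $4^k$ can be absorbed into the constant inside the base. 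The space bound follows immediately: \lemref{volmax-exp-d} in dimension $d'$ uses $\O{8^{d'}} = \O{n^{\O{1/C}}}$ space on the projected stream, and attaching an original $d$-dimensional row to each core-set element multiplies the total by $d$, for $\O{n^{\O{1/C}} d}$ overall.

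The main technical obstacle is that the multiplicative distortion per singular value, $\Theta(\sqrt{kC})$, exceeds $1$ in our regime, so the textbook $(1 \pm \eps)$-isometry form of the JL lemma yields a vacuous lower bound and would in principle allow projected volumes to collapse to zero. The asymmetric union bound above is the key workaround: the inflation side is union-bounded over all subsets using large $t$, while the contraction side is needed only for $S^*$ and therefore invoked with constant $t$, where Gordon's lower tail remains strictly positive precisely because $d' \ge k$. This is also why the hypothesis $C < (\log n)/k$ is both necessary and sufficient for the argument to go through, rather than an artifact of the analysis.
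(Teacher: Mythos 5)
Your proposal is correct and follows the same overall strategy as the paper: project via a Gaussian $\G$ with $\Theta\left(\frac{\log n}{C}\right)$ columns, run the exponential-in-dimension streaming algorithm of \lemref{volmax-exp-d} on the projected rows (giving $n^{\O{1/C}}d$ space and a $2^k$ loss), and control volume \emph{inflation} for all $\binom{n}{k}$ subsets simultaneously by writing $\Vol(\A_S\G)$ as $\sqrt{\det(\Sigma\H\H^\top\Sigma)}$ with $\H=\V_S^\top\G$ Gaussian by rotational invariance, bounding each singular value by $\norm{\H}_2\Sigma_{i,i}$, and applying the spectral-norm tail (\lemref{lem:gauss:matrix:spectral}) with deviation $\Theta(\sqrt{k\log n})$ — this is exactly the paper's union-bound argument yielding the $\O{Ck}^{k/2}$ factor. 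Where you genuinely diverge is the \emph{contraction} side for the optimal subset $S^*$: the paper (\lemref{lem:vm:dr}) invokes Johnson--Lindenstrauss on the $2k$ points consisting of the optimal rows and their projections onto prefix spans, arguing each directional height shrinks by at most a factor of $2$ and hence the volume by at most $2^k$; you instead lower bound $\sigma_{\min}(\V_{S^*}^\top\G)$ via the lower tail of the same Gaussian singular-value bound at constant deviation, and use $\det(\H\H^\top)\ge\sigma_{\min}(\H)^{2k}$. Your route is arguably cleaner — both directions come from one determinant identity and the two tails of one concentration statement, avoiding the slightly informal step in the paper of transferring preserved heights of $P\cup R$ to preserved volume — at the mild cost that you must take the constant in $d'=\Theta\left(\frac{\log n}{C}\right)$ large enough that $\sqrt{k/d'}$ is bounded away from $1$ (harmless, since it only changes the constant inside $n^{\O{1/C}}$), and your constant-deviation choice gives constant rather than high success probability for the $S^*$ event, which still suffices for the stated constant-probability guarantee.
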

Note that this result improves the algorithm of \corref{core-set-streaming} for $\frac{k}{\log n} < \eps$: setting $C=1/\eps$, this provides an algorithm with memory usage of $\O{n^{\eps}}$, with approximation factor of $\O{k/\eps}^{k/2}$, improving the dependence of the approximation factor on $\frac{1}{\eps}$ by an exponential factor.

\medskip
We now continue with the proof of \lemref{volmax-tradeoff}. 
Consider a random matrix $\G\in \mathbb{R}^{d\times r}$, for $r=\frac{\log n}{C}$, where each of its entries is an independent and identically distributed (i.i.d.) random variable drawn from the Gaussian distribution $\mathcal{N}(0,1/r)$. 
Consider the matrix $\A\G$ and observe that its rows exist in an $r$ dimensional space. 
Therefore, we can use the streaming algorithm of \lemref{volmax-exp-d} to find a subset of $k$ rows of $\A\G$ that serves as a good estimator for the maximum volume.  
This approach requires $\O{2^{3r}} = {\O{1}}^{\O{(\log n)/C}} = n^{\O{1/C}}$ memory space.

\begin{lemma}
\lemlab{lem:vm:dr}
Let $\G\in \mathbb{R}^{d\times r}$, for $r=\Omega\left(\frac{\log n}{C}\right)$, have each of its entries is drawn i.i.d from the Gaussian distribution $\mathcal{N}(0,1/r)$. 
With high probability, the maximum volume of the optimal $k$-subset of the rows of $\A\G$ is within $2^k$ of the maximum volume of the optimal $k$-subset of the rows of $\A$.
\end{lemma}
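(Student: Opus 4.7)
The plan is to reduce the volume ratio $\Vol(\A_T\G)/\Vol(\A_T)$ for $k$-subsets $T$ to a statement about singular values of a $k\times r$ Gaussian matrix, and then handle the two directions of the approximation separately. The key reduction is as follows: take the thin SVD $\A_T = U_T\Sigma_T V_T^\top$ with $V_T\in\mathbb{R}^{d\times k}$ having orthonormal columns. Then $\A_T\G = U_T\Sigma_T(V_T^\top\G)$, and by rotational invariance of the Gaussian distribution, $\widetilde\G_T := V_T^\top\G\in\mathbb{R}^{k\times r}$ has i.i.d.\ $\mathcal{N}(0,1/r)$ entries. Hence $\Vol(\A_T\G)^2/\Vol(\A_T)^2 = \det(\widetilde\G_T\widetilde\G_T^\top)$, i.e., the determinant of a scaled Wishart matrix of the correct size. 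So it suffices to control the singular values of $\widetilde\G_T$ in two regimes.

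For the direction $W^*\ge V^*/2^k$, I would apply a single-subspace Johnson--Lindenstrauss / Davidson--Szarek estimate to the fixed $k$-dimensional subspace spanned by the rows $\A_{S^*}$ of the optimal subset in the original matrix. For $r=\Omega(k)$, all singular values of $\widetilde\G_{S^*}$ lie in $[1/2,3/2]$ with probability $1-e^{-\Omega(r)}$. The constraint $1<C<(\log n)/k$ forces $r\ge(\log n)/C > k$, so this estimate already gives an $n^{-\Omega(1)}$ failure probability. On that event, $\det(\widetilde\G_{S^*}\widetilde\G_{S^*}^\top)\ge(1/2)^{2k}$, so $\Vol(\A_{S^*}\G)\ge 2^{-k}\Vol(\A_{S^*})$, and therefore $W^*\ge\Vol(\A_{S^*}\G)\ge V^*/2^k$.

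For the direction $W^*\le 2^k V^*$, I need a \emph{uniform} upper bound on $s_1(\widetilde\G_T)$ over all $\binom{n}{k}$ possible $k$-subsets $T$. The Davidson--Szarek estimate $\PPr{s_1(\widetilde\G_T)>1+\sqrt{k/r}+t}\le e^{-rt^2/2}$ holds for each fixed $T$; choosing $t=\Theta(\sqrt{k\log n/r})$ so that $e^{-rt^2/2}\le n^{-k-\omega(1)}$ and union-bounding over the $\binom{n}{k}\le n^k$ subsets yields the desired bound simultaneously for all $T$. Substituting $r=\Omega(\log n/C)$, both $\sqrt{k/r}$ and $t$ are $O(1)$ precisely because of the assumption $C<(\log n)/k$, so $s_1(\widetilde\G_T)$ is bounded by an absolute constant across all $T$. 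Therefore $\det(\widetilde\G_T\widetilde\G_T^\top)\le\text{const}^{k}$, giving $\Vol(\A_T\G)\le\text{const}^k\,\Vol(\A_T)\le\text{const}^k V^*$ for every $T$, and in particular for the (random) optimal $T^*$ in the projected instance.

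The main obstacle is exactly this union bound. With only $r=\Theta(\log n/C)$ Gaussian dimensions and $\binom{n}{k}$ competing subspaces to control, the Davidson--Szarek deviation parameter $t$ cannot be made to vanish, so the singular-value bound will only be a constant rather than $1+o(1)$. This is the structural reason one gets a factor exponential in $k$ (and not $1+\eps$) in the volume approximation, and why the assumption $C<(\log n)/k$ is crucial: it is precisely the range of $r$ in which the union bound yields an $O(1)$ singular-value envelope, so that $\det\le\text{const}^k$ pushes through. Once this is established, combining the two directions gives the claimed $2^k$-factor equivalence of the optimal volumes in $\A$ and $\A\G$.
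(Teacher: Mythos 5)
Your reduction via the thin SVD and rotational invariance (so that $\widetilde\G_T=V_T^\top\G$ is an i.i.d.\ $k\times r$ Gaussian matrix and $\Vol(\A_T\G)^2/\Vol(\A_T)^2=\det(\widetilde\G_T\widetilde\G_T^\top)$) is sound, and for the direction that the optimal volume does not \emph{shrink} by more than $2^k$ it is arguably a cleaner route than the paper's: the paper applies Johnson--Lindenstrauss with $\eps=\tfrac12$ to the $2k$ points $P\cup R$ (the optimal points together with their Gram--Schmidt feet), whereas you get a genuine subspace embedding of the optimal span from Davidson--Szarek, which controls distances to the projected subspaces directly rather than just the $2k$ pairwise distances; the price is needing $r\ge c_0k$ for a sufficiently large constant $c_0$ (not merely $r>k$, which is all that $C<(\log n)/k$ literally gives you — you have to take the constant inside $r=\Omega((\log n)/C)$ large enough), and the failure probability is $e^{-\Omega(r)}=n^{-\Omega(1/C)}$ rather than $n^{-\Omega(1)}$ once $C$ is super-constant. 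These are minor.

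The genuine gap is in your second direction. With $r=\Theta((\log n)/C)$ and a union bound over the $\binom{n}{k}\le n^k$ subsets, the Davidson--Szarek deviation must satisfy $rt^2\gtrsim k\log n$, i.e.\ $t=\Theta(\sqrt{k(\log n)/r})=\Theta(\sqrt{Ck})$, which is \emph{not} $O(1)$ under $C<(\log n)/k$ — it can be as large as $\Theta(\sqrt{\log n})$. So the uniform envelope on $s_1(\widetilde\G_T)$ is $O(\sqrt{Ck})$, and the best uniform conclusion from your argument is $\Vol(\A_T\G)\le O(Ck)^{k/2}\Vol(\A_T)$, not $\mathrm{const}^k\,\Vol(\A_T)$; your sentence claiming that $C<(\log n)/k$ makes both $\sqrt{k/r}$ and $t$ into $O(1)$ is simply false, and the two-sided ``within $2^k$'' claim you are trying to prove is not attainable this way (nor is it what the paper uses). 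In the paper this lemma carries only the one-sided statement that the maximum volume does not decrease by more than $2^k$; the increase direction is a separate lemma, proved exactly by the union-bound computation you describe, and it yields the factor $(\sqrt{2Ck}+2)^k=O(Ck)^{k/2}$ with per-subset failure probability $n^{-k}$ — this is precisely why the final algorithm is an $O(Ck)^{k/2}$-approximation rather than $2^{O(k)}$. To repair your write-up, restrict the lemma's proof to the shrinkage direction (where your subspace-embedding argument works), and carry out the increase direction separately with the honest $t=\Theta(\sqrt{Ck})$, accepting the $O(Ck)^{k/2}$ loss.
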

\begin{proof}
Let $P=\{\p_1,\cdots,\p_k\}$ be the subset of $k$ points among the rows of $\A$ that maximizes the volume. 
Moreover, let $R=\{\r_2,\cdots,\r_k\}$ where $r_i$ is the projection of $\p_i$ onto the subspace spanned by the points in $\{\p_1,\cdots,\p_{i-1}\}$ for each $i\in[k]$. 
Using the Johnson-Lindenstrauss Lemma with $\eps=\frac{1}{2}$ on the set of $2k$ points $P\cup R$, the lengths of each row of $\A$ is only distorted by a factor of at most two compared to the length of the corresponding row in $\A\G$ as long as $r=\Omega(\log k)$, which is always the case for $C < (\log n)/k$. 
Hence, the maximum volume $k$-subset of rows of $\A$ does not decrease by more than a factor of $2^k$ with high probability.
\end{proof}

We now show that for every other subset $S$ of $k$ points from the rows of $\A$, their volume does not increase by much with very high probability, so that we can union bound over all such subsets. 
The following lemma may seem counterintuitive at first, since the parameter $C$ appears in the approximation factor but not the probability. 
However, recall that the algorithm pays for the parameter $C$ in the space of the algorithm. 

\begin{lemma}
Let $S$ be a subset of size $k$ from the rows of $\A$. Then after applying $\G$, its volume does not increase by more than a factor of $(\sqrt{2Ck}+2)^{k} = \O{Ck}^{k/2}$ with probability at least $1-n^{-k}$.
\end{lemma}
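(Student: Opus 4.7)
The plan is to reduce the statement to a concentration estimate for the largest singular value of a $k\times r$ standard Gaussian matrix, via a singular value decomposition of $S$ together with the rotational invariance of the Gaussian distribution.

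First, view $S\in\mathbb{R}^{k\times d}$ as a matrix whose rows are the $k$ points and write its (thin) SVD as $S=\U\Sigma\V^\top$, where $\U\in\mathbb{R}^{k\times k}$ is orthogonal, $\Sigma\in\mathbb{R}^{k\times k}$ is diagonal with the singular values $\sigma_1(S),\ldots,\sigma_k(S)$ of $S$, and $\V\in\mathbb{R}^{d\times k}$ has orthonormal columns. Then
\[
\Vol(S)^2=\det(SS^\top)=\prod_{i=1}^k\sigma_i(S)^2,\qquad \Vol(S\G)^2=\det(S\G\G^\top S^\top)=\prod_{i=1}^k\sigma_i(S)^2\cdot\det(\V^\top\G\G^\top\V).
\]
Hence it suffices to bound $\det(\M\M^\top)$ for $\M:=\V^\top\G\in\mathbb{R}^{k\times r}$. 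Since $\V$ has orthonormal columns and $\G$ has i.i.d.\ $\mathcal{N}(0,1/r)$ entries, rotational invariance implies that $\M$ also has i.i.d.\ $\mathcal{N}(0,1/r)$ entries.

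Next, I will apply the standard Gordon/Davidson--Szarek deviation bound for the top singular value of a Gaussian matrix: if $\M'\in\mathbb{R}^{k\times r}$ has i.i.d.\ $\mathcal{N}(0,1)$ entries, then for every $t>0$,
\[
\PPr{\sigma_{\max}(\M')\ge\sqrt{r}+\sqrt{k}+t}\le 2e^{-t^2/2}.
\]
Setting $t=\sqrt{2k\log n}$ and rescaling by $\sqrt{r}$ (so $\M=\M'/\sqrt{r}$) gives
\[
\sigma_{\max}(\M)\le 1+\sqrt{k/r}+\sqrt{2k\log n/r}
\]
with probability at least $1-2n^{-k}$. Plugging in $r=(\log n)/C$ yields $\sqrt{k/r}=\sqrt{Ck/\log n}\le 1$ (using $C<(\log n)/k$) and $\sqrt{2k\log n/r}=\sqrt{2Ck}$, so $\sigma_{\max}(\M)\le 2+\sqrt{2Ck}$.

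Finally, since every singular value of $\M$ is bounded by $\sigma_{\max}(\M)$,
\[
\det(\M\M^\top)=\prod_{i=1}^k\sigma_i(\M)^2\le(\sqrt{2Ck}+2)^{2k},
\]
and combining this with the SVD identity above yields
\[
\Vol(S\G)\le\Vol(S)\cdot(\sqrt{2Ck}+2)^k=\O{Ck}^{k/2}\Vol(S),
\]
with probability at least $1-n^{-k}$ (after absorbing the factor of $2$ into the implicit constant, or by a marginally larger choice of $t$). The main obstacle is merely quantitative: one must pick $t$ large enough to meet the $n^{-k}$ failure probability while keeping the resulting bound on $\sigma_{\max}(\M)$ of order $\sqrt{Ck}$, which is exactly enforced by the assumption $C<(\log n)/k$.
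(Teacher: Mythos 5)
Your proof is correct and follows the same overall strategy as the paper's: write the SVD of $S$, pull out the singular values from the determinant, use rotational invariance to reduce to a standard $k\times r$ Gaussian matrix, and then apply the Davidson--Szarek concentration bound for the top singular value. The one place where you diverge, and where your argument is actually slightly cleaner, is in how the determinant is factored. The paper writes $\Vol(S\G)^2=\det(\Sigma\H\H^\top\Sigma)$ as the product of the squared singular values of $\H^\top\Sigma$ and then proves (via a min-max argument, citing an external reference) the auxiliary claim that $\sigma_i(\H^\top\Sigma)\le\norm{\H}_2\,\Sigma_{i,i}$. You instead observe directly that since all factors are $k\times k$, $\det(\Sigma\M\M^\top\Sigma)=\det(\Sigma)^2\det(\M\M^\top)$ and $\det(\M\M^\top)=\prod_i\sigma_i(\M)^2\le\sigma_{\max}(\M)^{2k}$, which gives the same final bound $\norm{\M}_2^{2k}\Vol(S)^2$ without the auxiliary singular-value inequality. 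This is a genuine (if minor) simplification. You also explicitly flag the factor-of-$2$ slack in the failure probability coming from the $2e^{-t^2/2}$ tail and note how to absorb it; the paper silently elides this.
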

\begin{proof}
Let $\R$ be the $k\times d$ submatrix corresponding to the rows of $\A$ that are in $S$. 
The volume of $\R$ after the embedding is equivalent to $\sqrt{\det(\R\G\G^\top\R^\top)}$. 
Now consider the singular value decomposition of $\R = \U\Sigma \V^\top$ where $\U$ and $\Sigma$ are $k\times k$ (as otherwise the original volume would have been $0$), and $\V$ is $d\times k$. 
Then we can rewrite this volume as $\sqrt{\det(\U\Sigma \V^\top \G \G^\top \V\Sigma \U^\top)} = \sqrt{\det(\Sigma \H \H^\top \Sigma)}$, where $\H = \V^\top\G$ is a $k\times r$ matrix with entries drawn i.i.d. from the Gaussian distribution $\mathcal{N}(0,1/r)$, due to the rotational invariance of $\G$. 
Then $\sqrt{\det(\Sigma \H\H^\top \Sigma)}$ is just the product of the singular values of $\H^\top \Sigma$. 

\begin{claim}
\cite{StackExchange}
The $i\th$ singular value of $\H^\top\Sigma$ is at most $\norm{\H}_2\Sigma_{i,i}$.
\end{claim}
\begin{proof}
We include the proof for completeness. 
Observe that for any vector $\x\in \mathbb{R}^k$, 
\[\frac{\langle \Sigma\H\H^\top\Sigma \x, \x\rangle}{\langle \x,\x\rangle} = \frac{\x^\top\Sigma\H\H^\top\Sigma \x}{\langle \x,\x\rangle} = \frac{\norm{\H^\top\Sigma \x}^2}{\langle \x,\x\rangle} \leq \frac{\norm{\H^\top}_2^2 \x^\top\Sigma^2 \x}{\langle \x,\x\rangle} = \norm{\H^\top}_2^2\frac{\langle\Sigma \x, \x\rangle}{\langle \x,\x\rangle}.\]
Note that the $i\th$ singular value of $\H^\top\Sigma$ is the square root of the $i\th$ eigenvalue of $\Sigma \H \H^\top \Sigma$. 
Thus the min-max theorem for the characterization of the eigenvalues of $\Sigma \H \H^\top \Sigma$ shows that for any $i$, the $i$th singular value of $\H^\top\Sigma$ can be bounded by $\norm{\H}_2$ times the $i\th$ singular value of $\Sigma$, which is $\Sigma_{i,i}$.
\end{proof}
Thus, it follows that the volume of $\R\G$ is at most $\norm{\H}_2^k \Vol(\R)$, where $\Vol(\R) = \prod_{i=1}^k \Sigma_{i,i}$ is the original volume of $\R$. 
In other words, right multiplication by $\G$ increases the volume of $\R$ by at most $\norm{\H}_2^k$ in the embedding. 
We now bound $\norm{\H}_2^k$ using the following.
\begin{lemma}[Corollary 35 of \cite{vershynin2010introduction}]
\lemlab{lem:gauss:matrix:spectral}
Let $\H$ be a matrix of size $k\times r$ whose entries are independent standard normal random variables. 
Then for every $t\geq 0$, it follows that $\sigma_{\max}\leq \sqrt{k}+\sqrt{r}+t$ with probability at least $1-2\exp(-t^2/2)$, where $\sigma_{\max}$ is the largest singular value of $\H$.
\end{lemma}

By \lemref{lem:gauss:matrix:spectral} and the fact that entries of $\H$ have variance $\frac{1}{r}$, we have that $\norm{\H}_2 \leq 1 + \sqrt{k/r} + t/\sqrt{r} \leq 2 + t/\sqrt{r}$ with probability at least $1-2e^{-t^2/2}$. 
Equivalently, $\norm{\H}_2 \leq 2 + s$ with probability at least $1-2^{-s^2 r/2}$. 
Setting $s = \sqrt{2Ck}$ and using $r = (\log n)/C$, we have that the volume of $\R$ increases by at most a $(\sqrt{2Ck}+2)^k$ factor after the embedding, with probability at least $1-2^{-2Ck (\log n)/(2C)} = 1-n^{-k}$.
\end{proof}

Thus we can union bound over all subsets of size $k$ of the $n$ rows of $\A$, to argue that with high probability, none of them will have a volume increase by more than a $(\sqrt{4Ck} + 2)^k$ factor. 
This completes the proof of \lemref{volmax-tradeoff}.
\section{Volume Maximization Lower Bounds}
\seclab{sec:vm}
In this section, we complement our adaptive sampling based volume maximization algorithms, i.e., \thmref{thm:vm}, with lower bounds on turnstile streams that are tight up to lower order terms. 
Our lower bounds hold even for multiple passes through the turnstile stream. 
Additionally, we give a lower bound for volume maximization in the random order row-arrival model that is competitive with the algorithms in \secref{sec:vm:ra}. 

\subsection{Turnstile Streams}
We first consider lower bounds for turnstile streams. 
In the Gap $\ell_\infty$ problem, Alice and Bob are given vectors $\x$ and $\y$ respectively with $x,y\in[0,m]^n$ for some $m>0$ and the promise that either $|x_i-y_i|\le 1$ for all $i\in[n]$ or there exists some $i\in[n]$ such that $|x_i-y_i|=m$. 
The goal is for Alice and Bob to perform some communication protocol to decide whether there exists an index $i\in[n]$ such that $|x_i-y_i|=m$, possibly over multiple rounds of communication. 
To succeed with probability $\frac{8}{9}$, Alice and Bob must use at least $\Omega\left(\frac{n}{m^2}\right)$ bits of total communication, even if they can communicate over multiple rounds. 
\begin{theorem}
\thmlab{thm:cc:gap:linfty}
\cite{Bar-YossefJKS04}
Any protocol that solves the Gap $\ell_\infty$ problem with probability at least $\frac{8}{9}$ requires $\Omega\left(\frac{n}{m^2}\right)$ total bits of communication.
\end{theorem}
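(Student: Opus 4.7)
The plan is to prove this two-party communication lower bound via the information complexity framework that BJKS introduced, since the bound $\Omega(n/m^2)$ is tight and the standard pathway goes through a direct-sum reduction to a single-coordinate problem. First, I would define the single-coordinate version $\mathrm{GAP}_m^1$: Alice holds $x \in \{0, 1, \ldots, m\}$, Bob holds $y \in \{0, 1, \ldots, m\}$, with the promise that $|x - y| \le 1$ or $|x - y| = m$, and they must decide which case holds. The $n$-coordinate Gap $\ell_\infty$ problem is then the OR of $n$ independent copies of $\mathrm{GAP}_m^1$, which is exactly the form needed for a direct-sum argument.

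Second, I would set up a hard promise distribution $\mu$ on single-instance inputs supported on the ``$|x-y| \le 1$'' case, together with an auxiliary decoupling random variable $D$ that conditions away dependencies between coordinates in the product distribution $\mu^n$. The $\mu$-information cost of a protocol $\Pi$ is then $\mathrm{IC}_\mu(\Pi) = I(X,Y ; \Pi \mid D)$, and a standard direct-sum theorem (using additivity of mutual information given $D$) yields $\mathrm{IC}_{\mu^n}(\Pi) \ge n \cdot \mathrm{IC}_\mu(\Pi_1)$, where $\Pi_1$ is the induced single-coordinate protocol obtained by embedding a single instance into a random coordinate and sampling the other coordinates from $\mu$ conditional on $D$. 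Since $\mathrm{IC}_\mu(\Pi) \le |\Pi|$, any lower bound on single-coordinate information cost converts immediately to a communication bound.

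The main obstacle, and the technically hardest step, is the single-coordinate lower bound $\mathrm{IC}_\mu(\Pi_1) = \Omega(1/m^2)$ for any protocol correct with probability $\ge 8/9$. The key tool is the squared Hellinger distance $h^2$ between transcript distributions. Correctness forces the transcript distributions on a ``gap $= m$'' input and a ``gap $\le 1$'' input to be separated by constant $h^2$. These two inputs are linked by a chain of $m$ intermediate inputs, each adjacent pair differing by $1$ in a single coordinate, so by subadditivity of $h^2$ at least one link has $h^2 = \Omega(1/m^2)$. The rectangle/``cut-and-paste'' property of randomized two-party protocols lets us transfer this Hellinger gap to the distribution over transcripts, and the inequality $I(X,Y;\Pi\mid D) \ge h^2(\cdot,\cdot)$ (up to constants) then gives the stated information bound of $\Omega(1/m^2)$ per coordinate.

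Combining the single-coordinate Hellinger lower bound with the direct-sum theorem yields $\mathrm{IC}_{\mu^n}(\Pi) \ge \Omega(n/m^2)$, and hence the expected (and thus worst-case) communication is $\Omega(n/m^2)$ bits. The most delicate part of the plan is choosing the auxiliary variable $D$ and the promise distribution $\mu$ so that (i) the decoupling property needed for the direct sum holds cleanly, and (ii) the Hellinger chain argument produces the constant $1/m^2$ per link rather than a worse $1/m$ scaling; both are standard but must be coordinated carefully to keep the constants compatible with the $8/9$ success parameter and to make the $h^2$-to-$I$ conversion quantitatively sharp.
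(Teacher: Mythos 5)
The paper does not prove this theorem at all; it is quoted directly from Bar-Yossef--Jayram--Kumar--Sivakumar, and your outline is essentially their proof: conditional information complexity with an auxiliary decoupling variable $D$, a direct-sum decomposition over the $n$ coordinates, and a single-coordinate $\Omega(1/m^2)$ bound via Hellinger distance and the cut-and-paste property of randomized protocols. One small correction to an otherwise faithful reconstruction: the $\Omega(1/m^2)$ per coordinate does not follow from ``subadditivity of $h^2$'' along the chain (that route would only yield $\Omega(1/m)$ spread over the $\Theta(m)$ links); it follows from the triangle inequality for $h$ combined with Cauchy--Schwarz, which lower-bounds the \emph{average} squared Hellinger distance over the legal adjacent input pairs --- exactly the quantity the conditional information cost dominates --- by $\Omega(1/m^2)$, which is the resolution of the $1/m$-versus-$1/m^2$ issue you flag at the end.
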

We first reduce an instance of the Gap $\ell_\infty$ problem to giving an $\alpha$-approximation to the volume maximization problem when $k=d=1$. 
\begin{theorem}
Any $p$-pass turnstile streaming algorithm that gives an $\alpha$-approximation to the volume maximization problem requires $\Omega\left(\frac{n}{p\alpha^2}\right)$ bits of space. 
\end{theorem}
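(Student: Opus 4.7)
\medskip

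\noindent\textbf{Proof proposal.} My plan is to reduce from the Gap $\ell_\infty$ communication problem in the standard two-party setting, exploiting the fact that when $k=d=1$, volume maximization just asks for a coordinate of maximum absolute value. Alice, holding $\x \in [0,m]^n$, and Bob, holding $\y \in [0,m]^n$, jointly simulate the turnstile stream on a matrix $\A \in \mathbb{R}^{n \times 1}$ as follows: Alice first streams the updates $(i,1,x_i)$ for each $i\in[n]$, then Bob streams the updates $(i,1,-y_i)$. The resulting matrix satisfies $A_{i,1} = x_i - y_i$, so $\norm{\A}_\infty \le 1$ in the ``no'' instance and $\norm{\A}_\infty = m$ in the ``yes'' instance.

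I would then set $m = \lceil 2\alpha \rceil$, so that an $\alpha$-approximate volume maximization algorithm on this $1$-row-dimensional instance must output an entry of absolute value at least $m/\alpha > 1$ in the ``yes'' case and at most $1$ in the ``no'' case, thereby solving Gap $\ell_\infty$ with the same success probability. The next step is the standard streaming-to-communication conversion: a $p$-pass algorithm that uses $S$ bits of working memory yields a communication protocol of $2p$ rounds and total communication $O(pS)$, since Alice and Bob can alternate sending the algorithm's state at each pass boundary, and after the final pass one of them executes the post-processing and announces the answer.

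Combining these two ingredients with \thmref{thm:cc:gap:linfty}, which gives an $\Omega(n/m^2) = \Omega(n/\alpha^2)$ total communication lower bound that is robust to an arbitrary number of rounds, I obtain $pS = \Omega(n/\alpha^2)$ and hence $S = \Omega(n/(p\alpha^2))$. The success probability of $8/9$ in \thmref{thm:cc:gap:linfty} is enough to rule out any constant-success streaming algorithm; any specific constant-probability guarantee the paper wants can be arranged by standard amplification at the cost of constants absorbed into the $\Omega(\cdot)$.

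The main technical point to be careful about is the range of entries in the stream: the updates $\pm x_i, \pm y_i$ must fit in $\{-M,\ldots,M\}$ with $M = \poly(n)$ as required by the turnstile model, which is automatic since $m = O(\alpha) \le \poly(n)$ in any regime of interest. There is no subtle obstacle beyond this; the argument is essentially the canonical $\ell_\infty$-to-heavy-hitters reduction restricted to a single column, and the multi-pass factor of $p$ appears only through the elementary conversion between rounds and passes.
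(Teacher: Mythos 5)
Your proposal is correct and follows essentially the same route as the paper: a reduction from Gap $\ell_\infty$ with $k=d=1$, where Alice streams $\x$, Bob streams $-\y$, and the two players alternate passing the algorithm's state to convert a $p$-pass $S$-space algorithm into an $O(pS)$-communication protocol, which against the $\Omega(n/m^2)$ bound forces $S=\Omega(n/(p\alpha^2))$. The only cosmetic difference is your choice of gap parameter $m=\lceil 2\alpha\rceil$ versus the paper's $m=\alpha+1$, which does not affect the asymptotics.
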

\begin{proof}
Let $\alg$ be a turnstile streaming algorithm that provides an $\alpha$-approximation to the volume maximization problem. 
Let $d=1$ and $k=1$ in the volume maximization problem and $\M\in\mathbb{R}^{n\times 1}$ be the underlying matrix. 
Given an instance of the Gap $\ell_\infty$ problem with $m=\alpha+1$, suppose Alice has vector $\x$ and Bob has vector $\y$. 
Alice creates a stream with the coordinates of $\x$ so that $\M=\x$ at the end of the stream. 
Alice then passes the state of the algorithm to Bob, who updates the stream with the coordinates of $\y$ so that $\M=\x-\y$ at the end of the stream. 
Note that if $|x_i-y_i|\le 1$ for all $i\in[n]$, then the maximum possible volume is $1$ (in this case the volume is just the norm as $k=1$), whereas if $|x_i-y_i|=m=\alpha+1$ for some $i\in[n]$, then the maximum volume is equal to $\alpha+1$. 
Since $\alg$ is an $\alpha$-approximation, $\alg$ can differentiate between these two cases and solve the Gap $\ell_\infty$ problem. 
Thus by \thmref{thm:cc:gap:linfty}, $\alg$ uses $\Omega\left(\frac{n}{\alpha^2}\right)$ bits of space over the $p$ passes and hence at least $\Omega\left(\frac{n}{p\alpha^2}\right)$ bits of space.
\end{proof}

\begin{corollary}
\corlab{thm:turnstile:vm}
Any $p$-pass turnstile streaming algorithm that gives an $\alpha^k$-approximation to the volume maximization problem requires $\Omega\left(\frac{n}{kp\alpha^2}\right)$ bits of space. 
\end{corollary}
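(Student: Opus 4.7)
My plan is to reduce from the $k=1$ result (or equivalently directly from Gap $\ell_\infty$) by embedding the same hard instance into $k$ orthogonal directions, which inflates the gap to the $k$-th power while only increasing the stream length by a factor of $k$.

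Concretely, starting from a Gap $\ell_\infty$ instance of universe size $N = \lfloor n/k \rfloor$ with gap parameter $m = \alpha+1$, I would have Alice and Bob jointly construct the following $n \times k$ matrix $\M$ in a turnstile stream. For each $i \in [N]$ and each $j \in [k]$, the row indexed $(i-1)k + j$ should contain $x_i - y_i$ in column $j$ and zero elsewhere; Alice issues the updates $+x_i$ to each of these positions during her portion of the stream, she passes the algorithm's state to Bob, and Bob issues the updates $-y_i$. Thus $\M$ consists of $N$ groups of $k$ rows, where the rows in the $i$-th group are $(x_i - y_i)$ times the standard basis vectors $\e_1, \ldots, \e_k$.

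The key observation is that any set of $k$ rows whose selection of columns is not a permutation of $[k]$ has volume zero, so the volume is maximized by picking one row from each of the $k$ columns; this produces a diagonal matrix whose volume is the product of the $k$ chosen magnitudes $|x_{i_j} - y_{i_j}|$. In the NO case of Gap $\ell_\infty$, every such product is at most $1^k = 1$; in the YES case, there exists an index $i^*$ with $|x_{i^*} - y_{i^*}| = m$, so taking the $k$ rows $(x_{i^*} - y_{i^*}) \e_j$ for $j \in [k]$ achieves volume $m^k = (\alpha+1)^k$. An $\alpha^k$-approximation therefore distinguishes the two cases.

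Finally, I would invoke the standard streaming-to-communication argument: a $p$-pass streaming algorithm using $s$ bits of space yields a $\O{ps}$-bit communication protocol (Alice and Bob exchange the algorithm's state at the boundary of each pass). By \thmref{thm:cc:gap:linfty}, the protocol requires $\Omega(N/m^2) = \Omega(n/(k\alpha^2))$ bits, so $s = \Omega\left(\frac{n}{kp\alpha^2}\right)$. I do not anticipate any real obstacle here; the only thing to be slightly careful about is that $\M$ genuinely has $n$ rows (which is why we used $N = \lfloor n/k \rfloor$) and that the turnstile model accommodates Alice's and Bob's disjoint updates to the same cells, which is exactly the setting in which the reduction from the previous theorem for $k=1$ was carried out.
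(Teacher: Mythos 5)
Your proposal is correct and follows essentially the same route as the paper: the paper's proof likewise takes a Gap $\ell_\infty$ instance of size $n/k$ with gap $m=\alpha+1$ and embeds each coordinate $x_i-y_i$ into $k$ rows, one per column (i.e., $\M'_{(i-1)k+j,\,j}=x_i-y_i$), so that the maximum volume is at most $1$ in one case and $(\alpha+1)^k$ in the other, yielding the $\Omega\left(\frac{n}{kp\alpha^2}\right)$ bound via the multi-round communication lower bound. Your explicit justification that nonzero volume forces one row per column is a minor elaboration the paper leaves implicit, but the argument is the same.
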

\begin{proof}
We generalize the above construction to the case of $k=d>1$ for any value of $k>1$. 
Consider the same instance $\M\in \mathbb{R}^{n/k \times 1}$ as the above lemma but with Gap $\ell_\infty$ problem of size $\frac{n}{k}$ instead of $n$. 
Now we construct a new instance $\M'\in\mathbb{R}^{n\times k}$ as follows. 
For each row $i\in [n/k]$ and $j\in [k]$, let $\M'_{(i-1)k+j , j} = \M_{i,1}$ and let all the other entries be equal to $0$.
In words, Alice and Bob embed the problem $k$ times across the $d$ columns for $k=d$. 
Thus in one case the maximum possible volume is $1$, while in the other case the maximum volume is equal to $(\alpha+1)^k$. 
\end{proof}
\subsection{Row-Arrival Model}
We now present streaming lower bounds for the row-arrival model.  
We consider a version of the distributional set-disjointness communication problem $\DISJ_{n,d}$ in which Alice is given the set of vectors $U=\{u_1,\ldots,u_n\}$ and Bob is given the set of vectors $V=\{v_1,\ldots,v_n\}$. 
With probability $\frac{1}{4}$, $U$ and $V$ are chosen uniformly at random among all instances with the following properties:
\begin{itemize}
\item Any vector in $U\cup V$ is in $\{0,1\}^d$ and moreover its weight is exactly $\frac{d}{2}$,
\item $U\cap V$ is non-empty.
\end{itemize}
This forms the NO case. 
Otherwise with probability $\frac{3}{4}$, $U$ and $V$ are chosen uniformly at random among all instances with the following properties:
\begin{itemize}
\item Any vector in $U\cup V$ is in $\{0,1\}^d$ and moreover its weight is exactly $\frac{d}{2}$,
\item $U\cap V = \emptyset$.
\end{itemize}
This forms the YES case. 
The goal is for Alice and Bob to perform some communication protocol to decide whether the instance is a YES or a NO instance, i.e., whether $U\cap V=\emptyset$, possibly over multiple rounds of communication. 

The following result, originally due to Razborov~\cite{Razborov92} and generalized by others~\cite{KalyanasundaramS92,WoodruffZ12}, lower bounds the communication complexity of any randomized protocol that solves $\DISJ_{n,d}$ with probability at least $\frac{7}{8}$, even given multiple rounds of communication. 
\begin{theorem}
\thmlab{thm:cc:set-disjoint}
\cite{Razborov92, KalyanasundaramS92, WoodruffZ12}
Any protocol for $\DISJ_{n,d}$ that fails with probability at most $\frac{1}{6}$ requires $\Omega(n)$ bits of total communication.
\end{theorem}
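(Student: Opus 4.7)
The plan is to reduce from the classical two-party set-disjointness problem $\DISJ_N$ on universe $[N]$, for which Razborov's $\Omega(N)$ lower bound is already established on the standard ``Razborov distribution'' (where with probability $3/4$ Alice's and Bob's subsets are uniformly disjoint of size $N/4$, and with probability $1/4$ they intersect in exactly one element). The reduction uses a fixed deterministic injection $\phi$ from $[n]$ into the set of weight-$\tfrac{d}{2}$ vectors in $\{0,1\}^d$; such an injection exists as long as $\binom{d}{d/2}\ge n$, which we may assume (otherwise the bound is trivial since Alice can describe her input in $\O{n\log\binom{d}{d/2}}$ bits, which is still $\Omega(n)$ under the relevant parameter regime). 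Both parties know $\phi$ in advance, so no communication is needed to compute it.

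Given an instance $(S,T)$ of $\DISJ_n$, Alice locally sets $U=\phi(S)$ and Bob locally sets $V=\phi(T)$. Since $\phi$ is an injection, $U\cap V=\emptyset$ if and only if $S\cap T=\emptyset$. Thus any protocol for $\DISJ_{n,d}$ solving the hard distribution with failure probability at most $\tfrac{1}{6}$ yields a protocol for $\DISJ_n$ with the same communication cost and comparable failure probability. Invoking the Razborov bound \cite{Razborov92,KalyanasundaramS92} then immediately gives the $\Omega(n)$ communication lower bound.

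The main obstacle is distributional alignment: the theorem statement specifies a very particular product-style distribution on $(U,V)$ (uniformly random tuples of $n$ weight-$d/2$ vectors, conditioned on emptiness or non-emptiness of intersection, mixed $3{:}1$), which is not exactly the pushforward of the Razborov distribution under $\phi$. To handle this, I would either (i) argue that the Razborov argument is robust enough to port directly to the pushforward distribution via standard reweighting of the corruption bound, or (ii) follow the more general multi-vector framework of \cite{WoodruffZ12}, which already proves $\Omega(n)$ information complexity lower bounds for exactly this kind of ``packed'' distribution over weight-$d/2$ vectors. In either case, the core technical ingredient is the corruption/rectangle lemma: any low-communication deterministic protocol partitions the input space into few combinatorial rectangles, and one shows that no large rectangle under $\mu$ can be almost-monochromatic for the disjointness predicate, because perturbing a NO rectangle by swapping in a random weight-$\tfrac{d}{2}$ element produces many YES instances. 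This expansion-style argument extracts $\Omega(1)$ bits of communication per ``slot'' among the $n$ vectors, summing to $\Omega(n)$ overall.
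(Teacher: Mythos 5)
The paper does not actually prove this statement; it is quoted as a citation to \cite{Razborov92, KalyanasundaramS92, WoodruffZ12}, so any genuine proof would have to re-derive the distributional bound from scratch. Your reduction does not do this, and the gap you flag yourself is fatal rather than technical. The theorem is a \emph{distributional} lower bound: the error guarantee of the protocol is only over the specific distribution defined in the paper (uniform tuples of $n$ weight-$\tfrac{d}{2}$ vectors, mixed $3{:}1$ between disjoint and intersecting instances). Your injection $\phi$ pushes Razborov's hard distribution onto a tiny, highly structured subfamily of weight-$\tfrac{d}{2}$ tuples (the image of $\phi$), which has essentially zero overlap with the typical support of the paper's distribution. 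A protocol whose failure probability is at most $\tfrac{1}{6}$ under the paper's distribution may behave arbitrarily badly on your embedded instances, so the reduction gives you nothing about such protocols; distributional lower bounds do not transfer through input embeddings unless the hard distribution is mapped onto (a distribution dominated by) the target one. This is not a cosmetic issue here: the paper's downstream use of the theorem (\thmref{thm:cc:multi:set-disjoint} and especially \corref{cor:ra:turnstile}) exploits properties of exactly this uniform distribution, e.g.\ that independently drawn weight-$\tfrac{d}{2}$ vectors are rarely orthogonal and that the rows look exchangeable for the random-order argument, so proving hardness for some other distribution would not suffice.

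Your two proposed repairs do not close the gap. Option (ii) — ``follow \cite{WoodruffZ12}'' — is precisely the citation the paper makes and is not a proof. Option (i) — ``reweight the corruption bound'' — is where all the actual work lives: one must establish the corruption/information-complexity property for the paper's distribution $\mu$ itself (no large nearly-NO-monochromatic rectangle under $\mu$, or an $\Omega(1)$ conditional information cost per coordinate under a suitable embedding of single-coordinate AND/indexing-style primitives into the vector slots), and nothing in your sketch does that; saying the argument is ``robust enough to port'' is an assertion, not an argument. There is also a smaller unaddressed mismatch: $\DISJ_{n,d}$ requires each player to hold exactly $n$ vectors, whereas Razborov's distribution gives sets of size $N/4$, so you would at least need to fix $N=4n$ (or pad) and verify the weight-$\tfrac{d}{2}$ encoding is consistent with the promise; as written, $U=\phi(S)$ has the wrong cardinality. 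As it stands, the proposal reduces the wrong (worst-case/pushforward) problem and defers the essential distributional argument to the very reference being proved.
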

We first reduce an instance of the distributional set-disjointness problem to giving a $C^k$ approximation to the volume maximization problem in the row-arrival model when the order of the stream can be adversarial. 
\begin{theorem}
\thmlab{thm:cc:multi:set-disjoint}
For constant $p$ and $C=\frac{16}{15}$, any $p$-pass streaming algorithm that outputs a $C^k$ approximation to the $(2k)$-volume maximization problem in the row-arrival model with probability at least $\frac{8}{9}$ requires $\Omega(n)$ bits of space. 
\end{theorem}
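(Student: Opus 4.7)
The plan is to reduce from $\DISJ_{n,d}$, whose $\Omega(n)$ randomized communication lower bound (against protocols with multiple rounds) is provided by \thmref{thm:cc:set-disjoint}. Given an instance with Alice holding $U=\{u_1,\ldots,u_n\}$ and Bob holding $V=\{v_1,\ldots,v_n\}$, I would construct a row-arrival stream for $(2k)$-volume maximization whose optimal $(2k)$-volume has a multiplicative gap of at least $C^k$ (for $C=16/15$) between the YES case $U\cap V=\emptyset$ and the NO case $|U\cap V|\ge 1$. A $p$-pass streaming algorithm computing a $C^k$-approximation with probability $\ge 8/9$ then distinguishes the two cases; since each pass can be simulated by $O(s)$ bits of Alice--Bob communication, we force $p\cdot s=\Omega(n)$ and hence $s=\Omega(n)$ for constant $p$.

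Following the template of the turnstile construction in \corref{thm:turnstile:vm}, I would partition the ambient space $\mathbb{R}^{kD}$ into $k$ orthogonal $D$-dimensional blocks and embed a version of the DISJ instance into each block, so that by block-orthogonality of supports the $(2k)$-volume of any $2k$-subset of the stream rows factors as the product of per-block sub-volumes, with the optimum achieved by taking exactly two rows from each block. This reduces the target $C^k$ multiplicative gap to establishing a per-block 2-volume YES/NO ratio of at least $C=16/15$, which then amplifies multiplicatively over the $k$ blocks.

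Within each block, I would use a compressed characteristic-vector-style encoding in which Alice contributes a single aggregate row (a scaled $\chi_U$ restricted to that block's sub-universe) and Bob contributes the analogous row for $V$; the 2-volume in the block is then $\sqrt{|U|\cdot|V|-|U\cap V|^2}$ up to a common normalization, equal to $\sqrt{|U|\cdot|V|}$ in the YES case and at most $\sqrt{|U|\cdot|V|-1}$ in the NO case. The per-block ratio $\sqrt{|U|\cdot|V|/(|U|\cdot|V|-1)}$ exceeds $16/15$ precisely when $|U|\cdot|V|\le 256/31$, which holds for suitably small per-block set sizes; Alice streams her $k$ aggregate rows first, followed by Bob's $k$ aggregate rows.

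The main obstacle is reconciling the per-block ratio requirement (which forces small per-block set sizes) with the overall $\Omega(n)$ communication lower bound (which needs $\Omega(n)$ bits of information to be encoded). A single large DISJ instance replicated identically across all $k$ blocks gives the right per-block structure but only $\Omega(n)$ communication from one DISJ copy, while the per-block ratio collapses to $1+O(1/n^2)$ when $|U|,|V|$ are large; splitting into $k$ independent sub-instances of size $\Theta(n/k)$ gives the right per-block ratio but requires a direct-product-style argument to lift the per-sub-instance $\Omega(n/k)$ hardness to $\Omega(n)$ total when only the promise \emph{every sub-instance is YES} versus \emph{every sub-instance is NO} needs to be distinguished. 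I would handle this by using a product distribution on $k$ independent small DISJ instances and invoking a Razborov-style direct-product lower bound which, combined with the per-block volume calculation above, completes the reduction.
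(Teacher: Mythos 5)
Your high-level template (reduce from $\DISJ$, put $k$ orthogonal blocks on the diagonal so the $(2k)$-volume factors into per-block $2$-volumes, and amplify a per-block constant gap to $C^k$) matches the paper, but your within-block encoding has a genuine gap that your proposed fix does not repair. With a single aggregate row per party per block (scaled characteristic vectors), the YES/NO ratio per block is $\sqrt{|U|\,|V|/(|U|\,|V|-1)}$, which, as you note, exceeds $16/15$ only when $|U|\,|V|\le 256/31$, i.e.\ only for constant-size per-block sets. This is fatal on both horns: splitting into $k$ sub-instances of size $\Theta(n/k)$ does \emph{not} ``give the right per-block ratio'' --- it gives $1+\Theta(k^2/n^2)$ --- while constant-size sub-instances carry only $\O{k}$ bits in total. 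Moreover, the rescue via a direct-product theorem for distinguishing ``every sub-instance is YES'' from ``every sub-instance is NO'' cannot work even in principle: under that correlated promise a protocol may solve a \emph{single} sub-instance and answer, so the communication complexity of the distinguishing task is only that of one copy, and no Razborov-style direct-product statement will lift it to $\Omega(n)$. There is also a structural issue you would have to face even with a working per-block gap: if the sub-instances were independent, an instance with one NO block and $k-1$ YES blocks has total gap only $16/15$, not $C^k$, so you genuinely need all blocks to be simultaneously YES or simultaneously NO --- which is exactly the correlation that kills the direct-product route.

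The paper sidesteps all of this by never aggregating: Alice streams each of her $n/2k$ weight-$(d/2k)$ vectors as separate rows, Bob streams the \emph{complements} of his vectors, and the same instance $\A,\B$ is replicated in all $k$ diagonal blocks ($\M_A=\A\oplus\cdots\oplus\A$, $\M_B=\B\oplus\cdots\oplus\B$), so the blocks are automatically all-NO or all-YES and only one $\DISJ_{n/2k,d/k}$ instance (with $\Omega(n)$ hardness for constant $k$) is consumed. An intersection $u_a=v_b$ then produces an exactly orthogonal Alice/Bob pair in every block, giving per-block squared area $(d/2k)^2$, while in the YES case the per-block gap comes from a probabilistic argument over the hard distribution (with $d=\Theta(k\log n)$): with high probability there are no spurious orthogonal pairs and every Alice-row/Bob-row pair overlaps in more than $d/8k$ coordinates, so any $2$-subset in a block has squared area at most $(d/2k)^2-(d/8k)^2=\tfrac{15}{16}(d/2k)^2$ --- this is where $C=16/15$ comes from. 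So the missing ideas in your proposal are (i) keeping the individual set elements as rows (with complementation on Bob's side) so that the $\Omega(n)$ information and the geometric gap coexist, and (ii) replicating one instance rather than taking independent copies, which removes the need for any direct-product theorem.
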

\begin{proof}
Suppose Alice and Bob have an instance of $\DISJ_{n/2k,d/k}$, so that Alice has a set $U=\{u_1,\ldots,u_{n/2k}\}$ of vectors of $\{0,1\}^{d/k}\subset\mathbb{R}^{d/k}$ and Bob has a set $V=\{v_1,\ldots,v_{n/2k}\}$ of vectors of $\{0,1\}^{d/k}\subset\mathbb{R}^{d/k}$. 
Alice creates the matrix $\A\in\mathbb{R}^{\frac{n}{2k}\times\frac{d}{k}}$ by setting row $r\in\left[\frac{n}{2k}\right]$ of $\A$ to be precisely $u_r$. 
Alice then creates a $\frac{n}{2}\times d$ block diagonal matrix $\M_A$ to be the direct sum $\A\oplus\A\oplus\ldots\oplus\A$, where there are $k$ terms in the direct sum.  

For each $r\in\left[\frac{n}{2k}\right]$, Bob takes vector $v_r$ and creates a new vector $w_r$ by setting $w_r$ to be the complement of $v_r$, so that $w_r$ is the unique binary vector with weight $\frac{k}{2}$, but $\langle w_r,v_r\rangle=0$.
Bob then creates the matrix $\B\in\mathbb{R}^{\frac{n}{2k}\times\frac{d}{k}}$ by setting row $r\in\left[\frac{n}{2k}\right]$ of $\B$ to be precisely $w_r$. 
Bob also creates a $\frac{n}{2}\times d$ block diagonal matrix $\M_B$ to be the direct sum $\B\oplus\B\oplus\ldots\oplus\B$, where there are $k$ terms in the sum.  
Finally, define $\M\in\mathbb{R}^{n\times d}$ to be the matrix $\M_A$ stacked on top of $\M_B$ so that
\[
\M_A=\begin{bmatrix}
\A & \bzero & \ldots & \bzero\\
\bzero & \A & \ldots & \bzero\\
\vdots & \vdots & \ddots & \vdots\\
\bzero & \bzero & \ldots & \A
\end{bmatrix},\qquad
\M_B=\begin{bmatrix}
\B & \bzero & \ldots & \bzero\\
\bzero & \B & \ldots & \bzero\\
\vdots & \vdots & \ddots & \vdots\\
\bzero & \bzero & \ldots & \B
\end{bmatrix},\qquad
\M=\begin{bmatrix}
\M_A\\
\M_B
\end{bmatrix}
.
\]
Let $C=\frac{16}{15}$ and suppose there exists a $p$-pass streaming algorithm $\alg$ that computes a $C^k$-approximation to the $2k$-volume maximization problem with probability at least $1-\frac{1}{9}$ while using $o\left(\frac{n}{p}\right)$ space. 
Let $\mathcal{E}_1$ denote the event that $\alg$ correctly computes a $C^k$ approximation to the $2k$-volume maximization problem, which by the statement of the theorem holds with probability $8/9$. 
We claim that if $\mathcal{E}_1$ occurs, then Alice and Bob can use $\alg$ to construct a $p$ round communication protocol that solves $\DISJ_{n/2k,d/k}$ with high probability using $o(n)$ total communication, which contradicts the $\Omega\left(n\right)$ communication complexity of solving $\DISJ_{n/2k,d/k}$ for constant $k$.

Alice can create a stream $S$ by inserting the rows of $\M_A$ into the stream, since Alice has knowledge of the rows $\{u_r\}_{r\in[n/2k]}$. 
Alice can run $\alg$ on this stream $S$ and then pass the state of the algorithm to Bob, who appends the rows of $\M_B$ onto the stream $S$ and runs $\alg$ on this portion of the stream, starting with the state passed from Alice. 
Bob then passes the state of the algorithm back to Alice, completing both a single communication round as well as a single pass of $\alg$ through $S$. 
Alice and Bob can repeatedly pass the state of the algorithm between each other, to emulate passes over the stream. 
Thus after $p$ rounds of communication, $\alg$ will have completed $p$ passes over $S$ and output an approximation $\hat{\D}$ to the $2k$-volume maximization problem. 

We first claim that in a NO instance of $\DISJ_{n/2k,d/k}$, then with high probability, the optimal solution $\D$ to the $2k$-volume maximization problem contains two orthogonal rows $a$ and $b$ whose nonzero entries are between columns $\frac{(i-1)d}{k}+1$ and $\frac{id}{k}$ for each $i\in[k]$. 
In a NO instance, when Alice embeds vector $v$ into a row $a$, Bob embeds the complement of $v$ into a row $b$. 
Hence, there are $2k$ orthogonal vectors in the NO case, so the volume of the parallelpiped spanned by $2k$ vectors is maximized with the choice of the $2k$ orthogonal vectors, in which case the determinant is $\left(\frac{d}{2k}\right)^{2k}$. 

To analyze the YES instance, we first let $\mathcal{E}_2$ denote the event that there exist two orthogonal rows $a$ and $b$ in $\M$ whose nonzero entries are between columns $\frac{(i-1)d}{k}+1$ and $\frac{id}{k}$ for some $i\in[k]$ and either both $a\in\M_A$ and $b\in\M_B$ or $a\in\M_B$ and $b\in\M_A$. 
In other words, rows $a$ and $b$ were inserted by different people. 
For a fixed $i\in[k]$, the probability that rows $a$ and $b$ were both inserted by Alice is the probability that two vectors among $\frac{n}{2k}$ vectors of $\mathbb{R}^{\frac{d}{k}}$ with weight $\frac{d}{2k}$ are orthogonal. 
By symmetric reasoning with Bob inserting both vectors and removing the instances where Alice and Bob have ``random'' orthogonal vectors, we note that
\begin{align*}
\PPr{\neg\mathcal{E}_2}&\ge 1-2\binom{n/2k}{2}\frac{1}{\binom{d/k}{d/2k}-n}\ge 1-\frac{n^2}{4k^2}\frac{1}{2^{d/2k}-n}.
\end{align*}
Note that since $\M_A$ and $\M_B$ are each direct sums of $k$ instances of $\A$ and $\B$, we do not need to take a union bound over all indices $i\in[k]$, though even such a union bound would still cause $\mathcal{E}_2$ to hold with low probability. 

Moreover, by symmetry the volume of the spanning parallelpiped is maximized when the $2k$ rows are $k$ direct sums of two rows\footnote{Even in the NO case, the maximizer of the determinant is the direct sum of $k$ terms with two rows.}. 
Let $\mathcal{E}_3$ be the event all the rows $u$ and $v$ among the rows in both $\A$ and $\B$ intersect by more than $\frac{d}{8k}$ coordinates. 
We claim that $\mathcal{E}_3$ holds with high probability in a YES instance. 
If that were true, then the maximum volume in the YES case is less than 
\[\left(\left(\frac{d}{2k}\right)^2-\left(\frac{d}{8k}\right)^2\right)^k=\left(\frac{15d^2}{64k^2}\right)^k,\]
which would show a separation between the YES and NO instances, since the volume in the NO case is $\left(\frac{d^2}{4k^2}\right)^k$. 
Thus for $C=\frac{16}{15}$ and conditioning on $\mathcal{E}_1$, $\neg\mathcal{E}_2$ and $\mathcal{E}_3$, Alice and Bob can use any $C^k$ approximation algorithm to the volume maximization problem differentiate between a YES instance and a NO instance of $\DISJ_{n/2k,d/k}$. 

It remains to prove the claim that $\mathcal{E}_3$ holds with high probability in a YES instance. 
\begin{claim}
In a YES instance, all the rows given to Alice and Bob intersect by more than $\frac{d}{8k}$ coordinates with probability at least $1-\frac{n^2}{k^2}\sqrt{\frac{\gamma^2 d}{k}}\frac{8^{d/3k}}{9^{3d/8k}}$, for some fixed constant $\gamma$. 
That is,
\[\PPr{\mathcal{E}_3}\ge 1-\frac{n^2}{k^2}\sqrt{\frac{\gamma^2 d}{k}}\frac{8^{d/3k}}{9^{3d/8k}}.\]
\end{claim}
\begin{proof}
For a fixed pair of vectors $a$ and $b$, the probability that $a$ and $b$ intersect in at most $\frac{d}{8k}$ coordinates without Alice and Bob having ``random'' orthogonal vectors is at most
\begin{equation}
\eqnlab{eqn:vectors:distance:prob}
\frac{d}{8k}\frac{\binom{d/2k}{d/8k}\binom{d/2k}{3d/8k}}{\binom{d/k}{d/2k}-n}>\frac{d}{16k}\frac{\binom{d/2k}{d/8k}\binom{d/2k}{3d/8k}}{\binom{d/k}{d/2k}},
\end{equation}
for sufficiently large $d/k$. 
By Stirling's approximation, there exists a fixed constant $\gamma$ such that \eqnref{eqn:vectors:distance:prob} is at most
\[\frac{d}{k}\sqrt{\frac{\gamma k}{d}}\frac{(d/2k)^{d/2k}(d/2k)^{d/2k}(d/2k)^{d/2k}(d/2k)^{d/2k}}{(d/k)^{d/k}(d/8k)^{d/8k}(3d/8k)^{3d/8k}(3d/8k)^{3d/8k}(d/8k)^{d/8k}}= \sqrt{\frac{\gamma^2 d}{k}}\frac{8^{d/3k}}{9^{3d/8k}}.\]
Taking a union bound over at most $\frac{n^2}{k^2}$ pairs of vectors, the probability that there exist two rows that intersect by at most $\frac{d}{8k}$ coordinates is at most $\frac{n^2}{k^2}\sqrt{\frac{\gamma^2 d}{k}}\frac{8^{d/3k}}{9^{3d/8k}}$.
\end{proof}
For $n>d$ and $d=\Theta(k\log\gamma n)$ with a sufficiently large constant, then $\PPr{\mathcal{E}_1}\ge\frac{8}{9}$, $\PPr{\neg\mathcal{E}_2}\ge 1-\frac{1}{\poly(n)}$, and $\PPr{\mathcal{E}_3}\ge 1-\frac{1}{\poly(n)}$. 
Thus Alice and Bob can use $\alg$ to decide $\DISJ_{n/2k,d/k}$ with probability at least
\[\PPr{\mathcal{E}_1\cap\neg\mathcal{E}_2\cap\mathcal{E}_3}>1-\frac{1}{8}.\]
Thus if $\alg$ uses $o\left(\frac{n}{p}\right)$ space per pass over $p$ passes, then the total communication between Alice and Bob is $o(n)$, which contradicts \thmref{thm:cc:set-disjoint}. 
It follows that any $C^k$ approximation algorithm to the volume maximization problem that succeeds with probability at least $1-\frac{1}{9}$ requires $\Omega(n)$ space for constant $k$. 
\end{proof}
Recall that for problems that are invariant to the permutation of the rows of the input matrix $\A$, once the entries of $\A$ are chosen, an arbitrary permutation of the rows of $\A$ is chosen uniformly at random, and the rows of that permutation constitute the stream in the random order row-arrival model. 
\begin{corollary}
\corlab{cor:ra:turnstile}
For $C=\frac{16}{15}$, any one-pass streaming algorithm that outputs a $C^k$ approximation to the $2k$-volume maximization problem in the random order row-arrival model with probability at least $\frac{63}{64}$ requires $\Omega(n)$ bits of space. 
\end{corollary}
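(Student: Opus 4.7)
My plan is to mimic the reduction in the proof of \thmref{thm:cc:multi:set-disjoint} but use public randomness to convert the random-order one-pass guarantee of the purported algorithm $\mathcal{A}$ into a low-communication protocol for $\DISJ_{n/2k,d/k}$. First, Alice and Bob, given their $\DISJ$ inputs $U, V$, build the block-diagonal matrices $M_A, M_B$ as in the previous proof, so that the stacked matrix $M \in \mathbb{R}^{n \times d}$ has its maximum $2k$-volume separated by a factor of at least $C^k$ between the YES and NO instances (with $C = 16/15$), conditional on the structural events $\neg \mathcal{E}_2$ and $\mathcal{E}_3$ from the proof of \thmref{thm:cc:multi:set-disjoint}, each of which holds with high probability over the $\DISJ$ distribution.

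Using shared randomness, Alice and Bob then sample a uniformly random permutation $\pi$ of $[n]$ and simulate $\mathcal{A}$ on the stream $M_{\pi(1)}, \ldots, M_{\pi(n)}$: whichever party owns the row at the current position inserts it, and they transmit the $s$-bit state of $\mathcal{A}$ whenever the owner of $\pi(t)$ differs from that of $\pi(t+1)$. Because $\pi$ is uniform, $\mathcal{A}$ sees a genuine random-order stream and thus outputs a $C^k$-approximation with probability at least $\tfrac{63}{64}$. The slack between $\tfrac{63}{64}$ and $\tfrac{7}{8}$ absorbs the $\tfrac{1}{\poly(n)}$ failure probabilities of $\neg\mathcal{E}_2$ and $\mathcal{E}_3$, so the resulting protocol decides $\DISJ_{n/2k,d/k}$ with error at most $\tfrac{1}{8}$, and \thmref{thm:cc:set-disjoint} forces the total communication to be $\Omega(n)$.

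The main obstacle is controlling the communication cost of the simulation. In the symmetric construction of \thmref{thm:cc:multi:set-disjoint}, Alice and Bob each own $n/2$ rows, so the expected number of transitions in $\pi$ is $\Theta(n)$ and the naive bound $\Theta(ns)$ on communication only yields the trivial $s = \Omega(1)$. The crux of the argument is therefore to redesign the reduction so that one of the parties, say Bob, holds only $O(k) = O(1)$ rows; a uniformly random permutation then has only $O(k)$ expected transitions in which Bob's rows appear, yielding total communication $O(ks)$ and hence $s = \Omega(n/k) = \Omega(n)$ for constant $k$. Concretely, I would reduce from a one-sided variant of $\DISJ$ (or an \textsc{Index}-style problem) in which Alice holds a large random collection of $(d/k)$-dimensional weight-$(d/2k)$ vectors and Bob's $O(k)$ query rows are arranged, via the same direct-sum embedding used in \thmref{thm:cc:multi:set-disjoint}, so that the $C^k$ gap in the maximum $2k$-volume still witnesses whether Bob's index lies in Alice's set.

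The delicate part of the proof will be verifying that this asymmetric construction preserves the analogue of the volume gap established in \thmref{thm:cc:multi:set-disjoint}: the NO case must still admit $2k$ orthogonal rows forming a direct-sum parallelepiped of volume $(d/2k)^{2k}$, while in the YES case, Stirling-type bounds on random half-weight binary vectors force all candidate $2k$-subsets involving Bob's query rows to lose a $1 - \Omega(1)$ factor per block. Once that geometric separation is re-established in the asymmetric setting, the communication argument above closes out the lower bound.
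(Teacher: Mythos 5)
Your proposal and the paper's proof diverge sharply, and yours contains a genuine gap: you never actually prove the corollary. You correctly observe that naively converting a random-order guarantee into a communication protocol by having Alice and Bob jointly sample a public permutation $\pi$ and hand the state across every time ownership of $\pi(t)$ changes is doomed in the symmetric construction of \thmref{thm:cc:multi:set-disjoint} (with $n/2$ rows per side you get $\Theta(n)$ crossings, hence $\Theta(ns)$ communication and only $s=\Omega(1)$). But your proposed fix — replacing the symmetric $\DISJ$ instance with an \textsc{Index}-style asymmetric one where Bob holds only $O(k)$ rows — is only sketched. You explicitly flag that the analogue of the volume gap (NO case with $2k$ orthogonal rows of block-diagonal volume $(d/2k)^{2k}$ versus YES case suffering a $1-\Omega(1)$ per-block loss) needs to be re-derived in the asymmetric setting, and you do not do so. Since that geometric separation is the entire content of the hardness argument, leaving it unverified means the proof is incomplete; moreover, an \textsc{Index}-style construction is not a drop-in replacement for $\DISJ$ here, because the NO-case exchangeability of the rows (needed even in your own argument, so that the publicly permuted stream has the right marginal distribution) relies on Alice's and Bob's rows being drawn from symmetric roles.

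The paper's actual proof avoids the transition-counting problem entirely and requires no change to the \thmref{thm:cc:multi:set-disjoint} construction. Alice and Bob simply feed the algorithm the same adversarially ordered stream ($\M_A$ then $\M_B$) as before, and argue directly that the joint distribution of (rows, ordering) induced by that reduction is within total variation distance $\tfrac{1}{8}$ of the random-order distribution: in the NO case ($\Pr=\tfrac{3}{4}$) the rows are exchangeable so the two distributions agree, and in the YES case ($\Pr=\tfrac{1}{4}$) the only relevant structural difference — whether the orthogonal pair is split across the two halves of the stream — occurs under a uniformly random permutation with probability at least $\tfrac{1}{2}$. The success probability $\tfrac{63}{64}$ then survives this $\tfrac{1}{8}$ TV perturbation and the small failure probabilities of $\mathcal{E}_2,\mathcal{E}_3$, leaving a protocol that errs with probability at most $\tfrac{1}{6}$, so \thmref{thm:cc:set-disjoint} applies directly. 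This TV-coupling route buys a one-paragraph proof with no new construction; your route, if completed, would require re-proving the combinatorial volume gap from scratch for a different communication problem.
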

\begin{proof}
First, observe that Alice and Bob construct matrix $\M_A$ and $\M_B$ from rows drawn uniformly at random. 
Thus in the NO case, the distribution of the matrix $\M_A$ and $\M_B$ follows the same distribution as when the rows of $\M$ arrive uniformly at random. 
In the YES case in the above model, the two orthogonal vectors must be in separate halves of the matrix $\M$. 
Namely, one vector is in $\M_A$ and one vector is in $\M_B$. 
In the random order model, the two orthogonal vectors are in separate halves with probability at least $\frac{1}{2}$. 
Since the YES case occurs with probability $\frac{1}{4}$, the total variation distance between the distribution of the rows in the random order model and the above distribution is $\frac{1}{8}$. 
Hence for a $\frac{7}{8}$ fraction of the inputs, Alice and Bob has the same distribution as that of \thmref{thm:cc:multi:set-disjoint}. 

In that case, any one-pass streaming algorithm $\alg$ that outputs a $C^k$ approximation to the $2k$-volume maximization problem with probability at least $\frac{63}{64}$ can decide between a YES instance and a NO instance of $\DISJ_{n/2k,d/k}$ for sufficiently large $n$ and $d$ with probability at least $\frac{31}{32}$ by the same argument as \thmref{thm:cc:multi:set-disjoint}. 
Hence, the total probability of failure of the protocol is at most $\frac{1}{8}+\frac{1}{32}\le\frac{1}{6}$ and so by \thmref{thm:cc:set-disjoint}, $\alg$ requires $\Omega(n)$ space.
\end{proof}

\section*{Acknowledgements}
D. Woodruff acknowledges support in part from the National Science Foundation under Grant No. CCF-1815840.

\def\shortbib{0}
\bibliographystyle{alpha}
\bibliography{references}

\appendix
\section{Noisy Distance Sampling}
\applab{app:appendix}
\subsection{$L_{1,2}$ Sampler}
Recall that for a matrix $\A\in\mathbb{R}^{n\times d}$, we define the $L_{p,q}$ norm of $\A$ by
\[\norm{\A}_{p,q}=\left(\sum_{i=1}^n\left(\sum_{j=1}^d |A_{i,j}|^q\right)^{\frac{p}{q}}\right)^{\frac{1}{p}}.\]
In this section, we describe an algorithm for sampling rows of a matrix $\A\P$ with probability proportional to $\norm{\A\P}_2$, which we call $L_{1,2}$ sampling.  
By comparison, in \secref{sec:l2:sampler} we sampled rows of $\A\P$ with probability proportional to $\norm{\A\P}_2^2$, which can be seen as $L_{2,2}$ sampling. 

Before describing our general $L_{1,2}$ sampler, we need a subroutine similar to \ams{} for estimating $\norm{\A\P}_{1,2}$, when the data stream updates entries of $\A$ and query access to $\P$ is only given in post-processing. 
We first describe a turnstile streaming algorithm of~\cite{AndoniBIW09} that can be used to compute a constant factor approximation to $\norm{\A}_{1,2}$ and then we show that it can be modified to approximate $\norm{\A\P}_{1,2}$ due its nature of being a linear sketch. 
For each $j$, define the level sets $S_j$ by $\left\{i\in[n]\,:\,\frac{\norm{\A}_{1,2}}{2^{j+1}}<\norm{\A_i}_2\le\frac{\norm{\A}_{1,2}}{2^{j}}\right\}$. 
The algorithm of~\cite{AndoniBIW09} approximates the number of rows in each level set $S_j$ by first implicitly subsampling rows at different rates. 
The rows that are sampled at each rate then form a \emph{level} and the rows in a particular level are then aggregated across a number of buckets. 
The norms of the aggregates across each bucket are then computed and by rescaling the number of aggregates that are in each level set, we obtain an accurate estimate of the sizes of the level sets. 
The sizes of the level sets are then used to output a good approximation to $\norm{\A}_{1,2}$. 

Crucially, the aggregates of the rows in the algorithm of~\cite{AndoniBIW09} is a linear combination of the rows. 
Hence by taking the aggregates and multiplying by $\P$ after the stream ends, we obtain aggregates of the rows of $\A\P$, which can then be used to estimate the sizes of the level sets of $\norm{\A\P}_{1,2}$. 
The algorithm of~\cite{AndoniBIW09} uses $d\,\polylog(n)$ space by storing aggregates of entire rows for each bucket across multiple levels. 
Thus, we have the following:
\begin{lemma}\cite{AndoniBIW09}
\lemlab{lem:estimator}
There exist a fixed constant $\xi>1$ and a one-pass turnstile streaming algorithm \estimator{} that takes updates to entries of a matrix $\A\in\mathbb{R}^{n\times d}$, as well as query access to post-processing matrices $\P\in\mathbb{R}^{d\times d}$ and $\M\in\mathbb{R}^{n\times d}$ that arrive after the stream, and outputs a quantity $\hat{F}$ such that $\norm{\A\P-\M}_{1,2}\le\hat{F}\le\xi\norm{\A\P-\M}_{1,2}$. 
The algorithm uses $d\,\polylog(n)$ bits of space and succeeds with high probability. 
\end{lemma}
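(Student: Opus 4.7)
\begin{proofof}{\lemref{lem:estimator}}
The plan is to adapt the turnstile $L_{1,2}$ norm estimator of \cite{AndoniBIW09} so that it supports two post-processing matrices, $\P\in\mathbb{R}^{d\times d}$ acting on the right of $\A$ and $\M\in\mathbb{R}^{n\times d}$ subtracted from $\A\P$. The central observation is that the Andoni--Braverman--Indyk--Woodruff estimator is purely a \emph{linear} sketch of $\A$: it maintains, at $L=\O{\log n}$ levels of subsampling, a collection of $\O{\polylog n}$ hash buckets each of which stores an aggregate $\mathbb{R}^{1\times d}$ vector equal to a signed linear combination of the rows of $\A$ that fall into that bucket. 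This is completely analogous to what we did for \ams{} and \countsketch{} in \secref{sec:l2:sampler}, except that the rescaling factors used on each level now correspond to the geometric sampling rates used by \cite{AndoniBIW09} rather than random Rademacher signs.

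\textbf{Streaming stage.} I will run $L=\O{\log n}$ levels. For each level $\ell\in[L]$, a $\O{\polylog n}$-wise independent hash function selects a subsample of the row indices at rate $2^{-\ell}$, and the subsampled rows are hashed into $B=\O{\polylog n}$ buckets using another independent hash family. Each bucket at level $\ell$ maintains a single vector $\v_{\ell,b}\in\mathbb{R}^{1\times d}$ that is the sum of the rows of $\A$ hashed into it. Because every update $\Delta_t$ to entry $A_{i,j}$ contributes linearly to at most one bucket per level, the entire data structure is a linear function of $\A$ and uses $d\cdot L\cdot B\cdot\O{\log n}=d\,\polylog(n)$ bits.

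\textbf{Post-processing stage.} After the stream ends we receive $\P$ and $\M$. For each aggregate $\v_{\ell,b}$ we replace it by $\v_{\ell,b}\P-\sum_{i\in H_{\ell,b}}\M_i$, where $H_{\ell,b}$ is the set of (subsampled) row indices hashed to bucket $b$ at level $\ell$. Since $\v_{\ell,b}\P=\sum_{i\in H_{\ell,b}}\A_i\P$ and the transformation from $\A$ to $\A\P-\M$ is just a fixed linear map on rows, the resulting buckets are exactly what the \cite{AndoniBIW09} estimator would have produced had the stream originally been the rows of $\A\P-\M$. The key point is that the subsampling and bucket-hashing were both applied to \emph{row indices}, which are unaffected by the post-processing, so the distribution of which rows land in which bucket is identical in the two scenarios. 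Then I run the unmodified \cite{AndoniBIW09} estimator on these buckets: it examines the $2$-norms of the aggregates, groups them into geometric level sets of $\norm{(\A\P-\M)_i}_2$, and outputs an $\O{1}$ approximation to the number of rows in each level set by counting buckets that are dominated by a single row, then combines the estimates into an $\O{1}$ approximation $\hat{F}$ of $\sum_i\norm{(\A\P-\M)_i}_2=\norm{\A\P-\M}_{1,2}$.

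\textbf{Why it works.} The analysis of \cite{AndoniBIW09} is essentially a statement about a matrix whose rows have real-valued $\ell_2$ lengths; it never uses the fact that the entries themselves were updated one at a time, only that the final per-row norms are recoverable up to noise from the bucket contents. Thus once I justify that the post-multiplied buckets agree in distribution with the buckets one would get from streaming $\A\P-\M$ directly, I can invoke the \cite{AndoniBIW09} guarantee as a black box to conclude that $\norm{\A\P-\M}_{1,2}\le\hat{F}\le\xi\norm{\A\P-\M}_{1,2}$ with high probability for some absolute constant $\xi>1$. The main subtle step, and likely the only one requiring care, is this linearity-of-sketch argument together with ensuring that the limited-independence hash families used for subsampling and bucketing yield the same concentration bounds as in \cite{AndoniBIW09}; this follows from the standard observation that their analysis only uses $\O{\polylog n}$-wise independence, which can be stored in $\O{\polylog n}$ bits per level. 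Summing bucket storage over all levels gives the stated $d\,\polylog(n)$ space bound.
\end{proofof}
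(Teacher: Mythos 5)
Your proposal is correct and takes essentially the same route as the paper, which also argues purely via linearity of the \cite{AndoniBIW09} sketch: the subsampling and bucket-hashing act on row indices, so post-multiplying bucket aggregates by $\P$ (and subtracting the appropriately signed restriction of $\M$) reproduces exactly the buckets one would get from streaming $\A\P-\M$ directly. The one small notational slip is that the $\M$-subtraction should carry the same random signs used in the bucket aggregation, i.e.\ $\v_{\ell,b}\P-\sum_{i\in H_{\ell,b}}s_i\M_i$; otherwise the argument and space accounting match the paper's.
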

Using the $L_{1,2}$ estimator, we can develop a $L_{1,2}$ sampler similar to our $\ell_2$ sampler. 
\begin{algorithm}[!htb]
\caption{Single $L_{1,2}$ Sampler}
\alglab{alg:l12:sampler}
\begin{algorithmic}[1]
\Require{Matrix $\A\in\mathbb{R}^{n\times d}$ that arrives as a stream $\A_1,\ldots,\A_n\in\mathbb{R}^d$, matrix $\P\in\mathbb{R}^{d\times d}$ that arrives after the stream, constant parameter $\eps>0$.}
\Ensure{Noisy row $\r$ of $\A\P$ sampled roughly proportional to the row norms of $\A\P$.}
\State{\textbf{Pre-processing Stage:}}
\State{$b\gets\Omega\left(\frac{1}{\eps^2}\right)$, $r\gets\Theta(\log n)$ with sufficiently large constants}
\State{For $i\in[n]$, generate independent scaling factors $t_i\in[0,1]$ uniformly at random.}
\State{Let $\B$ be the matrix consisting of rows $\B_i=\frac{1}{t_i}\A_i$.}
\State{Let $\estimator$ and $\ams$ track the $L_{1,2}$ norm of $\A\P$ and Frobenius norm of $\B\P$, respectively.}
\State{Let $\countsketch$ be an $r\times b$ table, where each entry is a vector $\mathbb{R}^d$.}
\State{\textbf{Streaming Stage:}}
\For{each row $\A_i$}
\State{Update $\countsketch$ with $\B_i=\frac{1}{t_i}\A_i$.}
\State{Update linear sketch $\estimator$ with $\A_i$.}
\State{Update linear sketch $\ams$ with $\B_i=\frac{1}{t_i}\A_i$.}
\EndFor
\State{\textbf{Processing $\P$ Stage:}}
\State{After the stream, obtain matrix $\P$.}
\State{Multiply each vector $\v$ in each entry of the $\countsketch$ table by $\P$: $\v\gets\v\P$.}
\State{Multiply each vector $\v$ in $\ams$ by $\P$: $\v\gets\v\P$.}
\State{Multiply each vector $\v$ in $\estimator$ by $\P$: $\v\gets\v\P$.}
\State{\textbf{Extraction Stage:}}
\State{Use $\estimator$ to compute $\widehat{F}$ with $\norm{\A\P}_{1,2}\le\widehat{F}\le\xi\norm{\A\P}_{1,2}$.}
\Comment{\lemref{lem:estimator}}
\State{Extract the $\frac{2}{\eps^2}$ (noisy) rows of $\B\P$ with the largest estimated norms by $\countsketch$.}
\State{Let $\M$ be the $\frac{2}{\eps^2}$-sparse matrix consisting of these top (noisy) rows.}
\State{Use $\ams$ to compute $\widehat{S}$ with $\norm{\B\P-\M}_F\le\widehat{S}\le 2\norm{\B\P-\M}_F$.}
\State{Let $\r_i$ be the (noisy) row in $\countsketch$ with the largest norm.}
\State{Let $C>0$ be some large constant so that the probability of failure is $\O{\frac{1}{n^{C/2}}}$.}
\If{$\widehat{S}>\frac{C\log n}{\eps}\widehat{F}$ or $\norm{\r_i}_2<\frac{C\log n}{\eps^2}\widehat{F}$}
\State{\Return FAIL.}
\Else
\State{\Return $\r=t_i\r_i$.}
\EndIf
\end{algorithmic}
\end{algorithm}

We first show the probability that the tail is too large, i.e., $\widehat{S}>\frac{C\log n}{\eps}\widehat{F}$, is independent of the index $i$ and the value of $t_i$. 
The proof is almost verbatim to \lemref{lem:tail:failure}, but the thresholds now depend on $\norm{\A\P}_{1,2}$ rather than $\norm{\A\P}_F$.  
\begin{lemma}
\lemlab{lem:l12:tail:failure}
For each $j\in[n]$ and value of $t_j$,
\[\PPr{\widehat{S}>\frac{C\log n}{\eps}\widehat{F}}=\O{\eps}+\frac{1}{\poly(n)}.\]
\end{lemma}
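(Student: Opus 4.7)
The plan is to mirror the proof of \lemref{lem:tail:failure} with two substitutions: replace the Frobenius norm $\norm{\A\P}_F$ by the $L_{1,2}$ norm $\norm{\A\P}_{1,2}$, and account for the new scaling $\B_i=\A_i/t_i$ rather than $\A_i/\sqrt{t_i}$. The analog of \obsref{obs:sample:prob} is immediate: because $t_i$ is uniform on $[0,1]$ and $\norm{\B_i\P}_2=\norm{\A_i\P}_2/t_i$, for any $\gamma>0$ we have $\PPr{\norm{\B_i\P}_2\ge\gamma\norm{\A\P}_{1,2}}=\norm{\A_i\P}_2/(\gamma\norm{\A\P}_{1,2})$. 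Fix an arbitrary $j$ and value of $t_j$, and let $\mathcal{E}_1$ be the good event on which \estimator{} satisfies $\norm{\A\P}_{1,2}\le\widehat F\le\xi\norm{\A\P}_{1,2}$ (\lemref{lem:estimator}), \ams{} satisfies $\norm{\B\P-\M}_F\le\widehat S\le 2\norm{\B\P-\M}_F$ (\lemref{lem:ams}), and the \countsketch{}-based matrix $\M$ of top $2/\eps^2$ rows of $\B\P$ satisfies $\norm{\B\P-\M}_F\le 2\norm{(\B\P)_{tail(2/\eps^2)}}_F$ (\lemref{lem:countsketch:tail}); this holds with high probability. On $\mathcal{E}_1$ the failure event $\widehat S>\frac{C\log n}{\eps}\widehat F$ implies the purely geometric inequality $\norm{(\B\P)_{tail(2/\eps^2)}}_F>\frac{C\log n}{4\xi\eps}\norm{\A\P}_{1,2}$, and it suffices to bound the probability of this inequality by $\O{\eps}$.

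Following the template of \lemref{lem:tail:failure}, I would set $U=\eps\norm{\A\P}_{1,2}$, $y_i=\mathbb{I}[\norm{\B_i\P}_2>U]$, $z_i=\norm{\B_i\P}_2^2(1-y_i)/U^2$, $Y=\sum_{i\ne j}y_i$, and $Z=\sum_i z_i$. Letting $\W$ collect the heavy rows, we have $U^2Z=\norm{\B\P-\W}_F^2$, and whenever $Y<2/\eps^2$ the set $\W$ contains the top $2/\eps^2$ rows of $\B\P$ by norm, so $\norm{(\B\P)_{tail(2/\eps^2)}}_F\le U\sqrt Z$. Thus it suffices to bound each of $\mathcal{E}_2:=\{Y\ge 2/\eps^2\}$ and $\mathcal{E}_3:=\{U\sqrt Z>\frac{C\log n}{4\xi\eps}\norm{\A\P}_{1,2}\}$ by $\O{\eps}$. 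The observation above gives $\Ex{y_i}=\norm{\A_i\P}_2/U$, and hence $\Ex{Y}\le\norm{\A\P}_{1,2}/U=1/\eps$, so Markov's inequality immediately yields $\PPr{\mathcal{E}_2}=\O{\eps}$.

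The main calculation, which differs slightly from the $L_{2,2}$ case, is bounding $\Ex{Z}$. On the event $t_i\ge\norm{\A_i\P}_2/U$, we have $z_i=\norm{\A_i\P}_2^2/(U^2 t_i^2)$, with an extra power of $t_i$ in the denominator compared to the $L_{2,2}$ proof. Consequently the relevant integral $\int_{\norm{\A_i\P}_2/U}^{1}t_i^{-2}\,dt_i$ evaluates cleanly to $U/\norm{\A_i\P}_2-1$ without any logarithmic factor, giving $\Ex{z_i}\le\norm{\A_i\P}_2/U$ and hence $\Ex{Z}\le 1/\eps$ by linearity. As in the $L_{2,2}$ proof, I would introduce $\mathcal{E}_4:=\{t_i\ge n^{-C/2}\text{ for all }i\}$, which holds with probability $1-1/\poly(n)$, to deal with indices for which the lower limit of integration would otherwise be $0$; a direct computation shows that such indices contribute only an additive $\O{n^{1-C/2}}=1/\poly(n)$ to $\Ex{Z\mid\mathcal{E}_4}$. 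Markov's inequality then gives $\PPr{\mathcal{E}_3\mid\mathcal{E}_4}=\O{\eps^3/\log^2 n}=\O{\eps}$, and a union bound over $\neg\mathcal{E}_1,\mathcal{E}_2,\mathcal{E}_3,\neg\mathcal{E}_4$ yields the claimed $\O{\eps}+1/\poly(n)$ bound. The bound is independent of $j$ and $t_j$ because, as in the $L_{2,2}$ proof, these variables are excluded from the sums defining $Y$ and $Z$ and do not appear in the remaining estimates.

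The main obstacle is not conceptual---the proof is a close adaptation of \lemref{lem:tail:failure}. The two points requiring modest care are (i) verifying that the weaker $\xi$-factor approximation of \estimator{} (as opposed to the $(1+\eps)$-approximation of \ams{}) is harmlessly absorbed into the constant $C$ in the threshold, and (ii) checking that the change of integrand from $t^{-1}$ to $t^{-2}$ genuinely produces $\Ex{Z}=\O{1/\eps}$ rather than $\O{\log n/\eps}$, which in fact makes the Markov bound on $\mathcal{E}_3$ slightly cleaner than in the $L_{2,2}$ case.
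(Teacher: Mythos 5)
Your proof is correct, but it departs from the paper's argument in a way worth flagging. The paper adapts the level-set variables to the $L_{1,2}$ geometry: it sets $U=\norm{\A\P}_{1,2}$ (no $\eps$ factor) and defines $z_i=\frac{1}{U}\norm{\B_i\P}_2(1-y_i)$ with a \emph{first} power, so that $UZ=\norm{\B\P-\W}_{1,2}$. It then passes from $\norm{(\B\P)_{\taileps}}_F$ to $\norm{\B\P-\W}_{1,2}$ using the inequality $\norm{\cdot}_F\le\norm{\cdot}_{1,2}$, computes $\Ex{z_i}\le\int_{\norm{\A_i\P}_2/U}^1 t^{-1}\cdot\frac{\norm{\A_i\P}_2}{U}\,dt=\O{\log n}\cdot\frac{\norm{\A_i\P}_2}{U}$ (the $\int t^{-1}$ integral is what produces the $\log n$ and forces the $t_i\ge n^{-C/2}$ truncation to survive), and ends with $\Ex{Y}\le1$ and $\Ex{Z}\le C\log n$. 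You instead keep the squared variables and the $\eps$-rescaled threshold from the $L_{2,2}$ proof: $U=\eps\norm{\A\P}_{1,2}$, $z_i=\norm{\B_i\P}_2^2(1-y_i)/U^2$, so that $U^2Z$ controls $\norm{\B\P-\W}_F^2$ directly and no $L_{1,2}$-to-Frobenius comparison is needed. Your $\int t^{-2}$ integral converges at the lower endpoint with no $\log n$, giving $\Ex{Y}\le1/\eps$ and $\Ex{Z}\le1/\eps$, and Markov on $Z$ then yields $\O{\eps^3/\log^2 n}$, which is cleaner than the paper's $\O{\eps}$. Both routes close the argument and land on $\O{\eps}+1/\poly(n)$; yours buys a tighter bound on $\PPr{\mathcal{E}_3}$, while the paper's is a more economical one-line adaptation (change $\norm{\cdot}_F\to\norm{\cdot}_{1,2}$ and keep the rest). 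One small remark: your event $\mathcal{E}_4$ is actually unnecessary. The only way the lower limit of $\int_{\norm{\A_i\P}_2/U}^1 t^{-2}\,dt$ is $0$ is if $\norm{\A_i\P}_2=0$, in which case $z_i\equiv0$; since $\int t^{-2}$ (unlike $\int t^{-1}$) produces no divergence as the limit shrinks, the cutoff $t_i\ge n^{-C/2}$ does no work here and can be dropped.
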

\begin{proof}
Let $\xi$ be defined as in \lemref{lem:estimator}. 
We first define the event $\mathcal{E}_1$ as when the following three inequalities hold:
\begin{enumerate}
\item
$\norm{\A\P}_{1,2}\le\widehat{F}\le\xi\norm{\A\P}_{1,2}$
\item
$\norm{\B\P-\M}_F\le\widehat{S}\le 2\norm{\B\P-\M}_F$
\item
$\norm{(\B\P)_{\taileps}}_F\le\norm{\B\P-\M}_F\le 2\norm{(\B\P)_{\taileps}}_F$
\end{enumerate}
Let $j\in[n]$ be a fixed index and $t_j=t$ be a fixed uniform random scaling variable. 
$\mathcal{E}_1$ holds with high probability by \lemref{lem:countsketch} and \lemref{lem:estimator}. 
We bound the probability that $2\xi\norm{(\B\P)_{\taileps}}_F>\frac{C\log n}{\eps}\norm{\A\P}_{1,2}$, which must hold if $\widehat{S}>\frac{C\log n}{\eps}\widehat{F}$. 

Let $U=\norm{\A\P}_{1,2}$ and for each $i\in[n]$, define the indicator variable $y_i=1$ if $\norm{\B_i\P}_2>U$ and $y_i=0$ otherwise. 
For each $i\in[n]$, define the scaled indicator variable $z_i=\frac{1}{U}\norm{\B_i\P}_2(1-y_i)$. 
Observe that $z_i\in[0,1]$ represents a scaled contribution of the rows that are not heavy. 
Let $Y=\sum_{i\neq j} y_i$ and $Z=\sum_{i\neq j} z_i$. 
Define the matrix $\W\in\mathbb{R}^{n\times d}$ so that for each $i\in[n]$, its row $i$ satisfies $\W_i=\B_i\P$ if $y_i=1$ and otherwise if $y_i=0$, then $\W_i$ is the row of all zeros. 
Hence, $\W$ has at most $Y+1$ nonzero rows and $UZ=\norm{\B\P-\W}_{1,2}$. 

If there are not too many heavy rows in $\W$, i.e., $Y<\frac{2}{\eps^2}$, then $\norm{(\B\P)_{\taileps}}_F\le UZ$ since the rows of $\W$ that are all zeros contain the tail of $\B\P$ and the Frobenius norm is at most the $L_{1,2}$ norm. 
Let $\mathcal{E}_2$ denote the event that $Y\ge\frac{2}{\eps^2}$ and $\mathcal{E}_3$ denote the event that $Z\ge\frac{C\log n}{2\xi U\eps}\norm{\A\P}_{1,2}=\frac{C\log n}{2\xi\eps}$. 
If we bound the probability of the events $\mathcal{E}_2$ and $\mathcal{E}_3$ by $\O{\eps}$, then $2\xi\norm{(\B\P)_{\taileps}}_F\le\frac{C\log n}{\eps}\norm{\A\P}_{1,2}$ with probability at least $1-\O{\eps}$, conditioned on $\mathcal{E}_1$. 

To bound $\mathcal{E}_2$, observe that $\Ex{y_i}=\frac{\norm{\A_i\P}_2}{U}$ so that $\Ex{Y}\le 1$ since $U=\norm{\A\P}_{1,2}$. 
Thus for sufficiently large $n$, $\PPr{\mathcal{E}_2}=\O{\eps}$, by Markov's inequality. 

To bound $\PPr{\mathcal{E}_3}$, observe that $z_i=\frac{1}{U}\norm{\B_i\P}_2(1-y_i)$ implies $z_i>0$ only if $y_i=0$, i.e., $\norm{\B_i\P}_2\le U$. 
Since $\B_i\P=\frac{\A_i\P}{t_i}$, then $z_i>0$ only for $t_i\ge\frac{\norm{\A_i\P}_2}{\norm{\A\P}_{1,2}}$. 
Therefore, 
\begin{align*}
\Ex{z_i}\le\int_{\norm{\A_i\P}_2/\norm{\A\P}_{1,2}}^1 z_i\,dt_i=\int_{\norm{\A_i\P}_2/\norm{\A\P}_{1,2}}^1\frac{1}{t_i}\frac{1}{U}\norm{\A_i\P}_2\,dt_i\le C\log n\frac{\norm{\A_i\P}_2}{\norm{\A\P}_{1,2}},
\end{align*}
conditioned on $t_i\ge\frac{1}{\poly(n)}$. 
Hence $\Ex{Z}\le C\log n$ so $\PPr{\mathcal{E}_3}=\PPr{Z>\frac{C\log n}{2\xi\eps}}=\O{\eps}$ by Markov's inequality. 
Therefore, the failure events $\neg\mathcal{E}_1\vee\mathcal{E}_2\vee\mathcal{E}_3$ occur with probability $\O{\eps}+\frac{1}{\poly(n)}$, and the claim follows.
\end{proof}

\begin{lemma}
\lemlab{lem:l12:sampling:prob}
Conditioned on a fixed value of $\widehat{F}$, the probability that \algref{alg:l12:sampler} outputs (noisy) row $i$ is $\left(1\pm\O{\eps}\right)\frac{\norm{\A_i\P}_2}{\widehat{F}}+\frac{1}{\poly(n)}$. 
\end{lemma}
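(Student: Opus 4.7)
The plan is to mirror the proof of \lemref{lem:sampling:prob} from the $L_{2,2}$ case, adjusting the threshold arithmetic for the linear (rather than square-root) scaling $\B_i = \frac{1}{t_i}\A_i$. First I would define the ``ideal'' event $\mathcal{E}$ that $t_i$ is small enough for $\B_i\P$ to exceed the threshold $T = \frac{C\log n}{\eps^2}\widehat{F}$. Since $\norm{\B_i\P}_2 = \norm{\A_i\P}_2/t_i$ and $t_i$ is uniform on $[0,1]$, the $L_{1,2}$ analogue of \obsref{obs:sample:prob} gives $\PPr{\mathcal{E}} = \frac{\norm{\A_i\P}_2}{T}$, which supplies the leading term in the claimed sampling probability.

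Next I would enumerate the perturbing events exactly as in \lemref{lem:sampling:prob}. Event $\mathcal{E}_1$, the failure of any sketch among $\countsketch$, $\estimator$, $\ams$, occurs with probability $\frac{1}{\poly(n)}$ by \lemref{lem:countsketch}, \lemref{lem:estimator}, and \lemref{lem:ams}. Event $\mathcal{E}_2$, that $\widehat{S}>\frac{C\log n}{\eps}\widehat{F}$, occurs with probability $\O{\eps}$ by \lemref{lem:l12:tail:failure}, and crucially that bound is established conditionally on every fixed value of $t_j$, so it remains $\O{\eps}$ after conditioning on $\mathcal{E}$. Event $\mathcal{E}_3$, that some other row $j\neq i$ exceeds (or nearly exceeds) $T$, is handled by applying the $L_{1,2}$ version of \obsref{obs:sample:prob} to each $j$ and union bounding: the threshold window that matters is $T - \eps\widehat{S}$, and summing over $j$ contributes a perturbation of magnitude $\O{\eps}\frac{\norm{\A_i\P}_2}{\widehat{F}}$ to the sampling probability.

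The most delicate event is $\mathcal{E}_4$: the reported row norm $\norm{\r_i}_2$ undershoots the threshold despite $\norm{\B_i\P}_2\ge T$, owing to noise in the bucket containing $\B_i\P$. Conditioning on $\neg\mathcal{E}_1\wedge\neg\mathcal{E}_2$, \lemref{lem:countsketch:tail} yields
\[
\left|\norm{\B_i\P}_2-\norm{\widehat{\B_i\P}}_2\right|\le\eps\norm{(\B\P)_{\taileps}}_F\le\eps\widehat{S}\le C\log n\cdot\widehat{F},
\]
so $\mathcal{E}_4$ can only be triggered when $T\le\norm{\B_i\P}_2\le T+C\log n\cdot\widehat{F}$. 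Translating back through $\norm{\B_i\P}_2=\norm{\A_i\P}_2/t_i$, this is a sliver of width $\O{\eps}$ times the baseline event $\mathcal{E}$, i.e., has probability $\O{\eps}\cdot\frac{\norm{\A_i\P}_2}{T}$. Summing the contributions of $\mathcal{E}_2,\mathcal{E}_3,\mathcal{E}_4$ relative to $\mathcal{E}$ yields the claimed $(1\pm\O{\eps})$ multiplicative distortion on the sampling probability, with the $\frac{1}{\poly(n)}$ additive slack absorbing $\mathcal{E}_1$. Finally, since $\norm{\widehat{\B_i\P}}_2\ge T$ while the CountSketch additive error is at most $C\log n\cdot\widehat{F}=\eps^2\cdot T$, rescaling by $t_i$ gives $\norm{\r}_2=(1\pm\eps)\norm{\A_i\P}_2$, so the returned row is indeed a mild noisy perturbation of the sampled row.

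The main obstacle will be keeping the threshold arithmetic straight: because $T$ scales as $1/\eps^2$ and $\widehat{S}$ scales as $(\log n)/\eps$ rather than the $\sqrt{\cdot}$ scalings in the $L_{2,2}$ case, one must carefully verify that the CountSketch noise band $\eps\widehat{S}\approx C\log n\cdot\widehat{F}$ corresponds, after the change of variables $t_i\mapsto\norm{\A_i\P}_2/\norm{\B_i\P}_2$, to only an $\O{\eps}$ fraction of the mass of $\mathcal{E}$. Once this rescaling is set up correctly, the rest of the analysis — the union bound over competing heavy rows, the independence of $\mathcal{E}_2$ from $t_i$, and the gain/loss accounting that yields $(1\pm\O{\eps})$ — follows the $L_{2,2}$ template essentially line for line.
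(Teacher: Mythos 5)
Your proposal is correct and follows essentially the same route as the paper's proof: the same event decomposition $\mathcal{E},\mathcal{E}_1,\ldots,\mathcal{E}_4$, the same threshold $T=\frac{C\log n}{\eps^2}\widehat{F}$ with the $L_{1,2}$ analogue of \obsref{obs:sample:prob} supplying the leading term, \lemref{lem:l12:tail:failure} (conditioned on $t_j$) for $\mathcal{E}_2$, a union bound over near-threshold rows for $\mathcal{E}_3$, and the CountSketch noise band $\eps\widehat{S}\le C\log n\cdot\widehat{F}$ for $\mathcal{E}_4$. The only cosmetic difference is that you bound the $\mathcal{E}_4$ sliver by an $\O{\eps}$ fraction of $\PPr{\mathcal{E}}$ where the paper's change of variables gives the sharper $\O{\eps^2}$ factor, which is immaterial to the conclusion.
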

\begin{proof}
We first define $\mathcal{E}$ to be the event that $t_i<\frac{\eps^2\norm{\A_i\P}_2}{(C\log n)\widehat{F}}$. 
Note that $\PPr{\mathcal{E}}=\frac{\eps^2\norm{\A_i\P}_2}{(C\log n)\widehat{F}}$. 
Next, we define $\mathcal{E}_1$ to be the event that $\countsketch$, $\ams$, or $\estimator$ fails and note that $\PPr{\mathcal{E}_1}=\frac{1}{\poly(n)}$ by \lemref{lem:countsketch}, \lemref{lem:ams}, and \lemref{lem:estimator}. 
We then define $\mathcal{E}_2$ to be the event that $\widehat{S}>\frac{C\log n}{\eps}\widehat{F}$ and note that $\PPr{\mathcal{E}_2}=\O{\eps}$ by \lemref{lem:l12:tail:failure}. 
Finally, we let $\mathcal{E}_3$ be the event that the CountSketch data structure observes multiple rows $\B_j\P$ exceeding the threshold and $\mathcal{E}_4$ be the event that $\norm{\B_i\P}_2$ exceeds the threshold but is not reported due to noise in the CountSketch data structure. 

Observe that a row $j$ is close enough to the threshold if $\norm{\B_j\P}_2\ge\frac{C\log n}{\eps^2}\widehat{F}-(C\log n)\widehat{F}$, which occurs with probability at most $\O{\frac{\eps\norm{\A_j\P}_2}{\widehat{F}}}$. 
Taking a union bound over all $n$ rows, we have $\PPr{\mathcal{E}_3}=\O{\eps}$. 

To analyze the probability of $\mathcal{E}_4$, we first condition on $\neg\mathcal{E}_1$ and $\neg\mathcal{E}_2$, so that we have $\norm{\B\P-\M}_F\le\widehat{S}$ and $\widehat{S}\le\frac{C\log n}{\eps}\widehat{F}$. 
Thus by \lemref{lem:countsketch}, 
\[\left|\norm{\B_i\P}_2-\norm{\widehat{\B_i\P}}_2\right|\le\eps\norm{\B\P_{\taileps}}_F\le\eps\norm{\B\P-\M}_F\le\eps\widehat{S}\le(C\log n)\widehat{F}.\]
Hence, $\mathcal{E}_4$ can only occur for 
\[\frac{C\log n}{\eps^2}\widehat{F}\le\norm{\B_i\P}_2\le\frac{C\log n}{\eps^2}\widehat{F}+(C\log n)\widehat{F},\]
which occurs with probability at most $\frac{\eps^4\norm{\A_i\P}_2}{(C\log n)\widehat{F}}$.

In summary if $\mathcal{E}$ occurs, then the sampler should output (noisy) row $\A_i\P$ but may fail to do so because of any of the events $\mathcal{E}_1$, $\mathcal{E}_2$, $\mathcal{E}_3$, or $\mathcal{E}_4$. 
We have $\PPr{\mathcal{E}_2\vee\mathcal{E}_3\,|\,\mathcal{E}}=\O{\eps}$ and $\PPr{\mathcal{E}_4}=\frac{\eps^4\norm{\A_i\P}_2}{(C\log n)\widehat{F}}$ so that $\PPr{\mathcal{E}_4\,|\,\mathcal{E}}=\O{\eps^2}$. 
Since $\PPr{\mathcal{E}_1}=\frac{1}{\poly(n)}$, then each $\A_i\P$ is output with probability $(1+\O{\eps})\frac{\norm{\A_i\P}_2}{\widehat{F}}$. 

Moreover by \lemref{lem:countsketch:tail}, we have that $\left|\norm{\B_i\P}_2-\norm{\widehat{\B_i\P}}_2\right|\le(C\log n)\widehat{F}$ and $\norm{\widehat{\B_i\P}}_2\ge\frac{C\log n}{\eps^2}\widehat{F}$. 
Thus, $\norm{\widehat{\B_i\P}}_2$ is a $(1+\O{\eps})$ approximation to $\norm{\B_i\P}_2$ and similarly, $t_i\norm{\widehat{\B_i\P}}_2$ is within $(1+\O{\eps})$ of $\norm{\A_i\P}_2$. 
\end{proof}	
We now provide the full guarantees of the $L_{1,2}$ sampler. 
\begin{theorem}
\thmlab{thm:l12:sampling}
Given $\eps>0$, there exists a one-pass streaming algorithm that takes rows of a matrix $\A\in\mathbb{R}^{n\times d}$ as a data stream and a matrix $\P\in\mathbb{R}^{d\times d}$ after the stream, and outputs (noisy) row $i$ of $\A\P$ with probability $\left(1\pm\O{\eps}\right)\frac{\norm{\A_i\P}_2}{\norm{\A\P}_{1,2}}+\frac{1}{\poly(n)}$. 
The algorithm uses $\O{d\,\poly\left(\frac{1}{\eps},\log n\right)}$ bits of space and succeeds with high probability.  
\end{theorem}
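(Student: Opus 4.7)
The plan is to follow the same template as the proof of \thmref{thm:l2:sampling} for the $L_{2,2}$ sampler, but substitute the $L_{1,2}$-specific estimates from \lemref{lem:estimator}, \lemref{lem:l12:tail:failure}, and \lemref{lem:l12:sampling:prob} in place of their $L_{2,2}$ analogues. A single instance of \algref{alg:l12:sampler} succeeds in emitting a row with only small probability, so amplification by many independent parallel instances is required for the high-probability guarantee; each instance is a linear sketch, so all of them share the single turnstile pass and can independently absorb right-multiplication by $\P$ after the stream.

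First I would condition on the value of $\widehat{F}$. By \lemref{lem:estimator}, with probability $1-1/\poly(n)$ we have $\norm{\A\P}_{1,2}\le\widehat{F}\le\xi\norm{\A\P}_{1,2}$, so up to a constant factor $\widehat{F}$ stands in for $\norm{\A\P}_{1,2}$. Conditioned on this value of $\widehat{F}$, \lemref{lem:l12:sampling:prob} gives that a single run of \algref{alg:l12:sampler} outputs index $i$ with probability $(1\pm\O{\eps})\tfrac{\norm{\A_i\P}_2}{\widehat{F}}+\tfrac{1}{\poly(n)}$, so summing across $i\in[n]$ the probability that a single instance does not abort is $\Theta\!\left(\tfrac{\eps^2}{\log n}\right)$ — matching the per-row threshold $\tfrac{C\log n}{\eps^2}\widehat{F}$ appearing in the algorithm via the $L_{1,2}$ analogue of \obsref{obs:sample:prob}. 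Conditioned on the first non-aborting instance, the output distribution is therefore $(1\pm\O{\eps})\tfrac{\norm{\A_i\P}_2}{\norm{\A\P}_{1,2}}+\tfrac{1}{\poly(n)}$, which is exactly the sampling guarantee claimed in the theorem.

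To boost from constant success to $1-\delta$, I would run $\Theta\!\left(\tfrac{\log n}{\eps^2}\log\tfrac{1}{\delta}\right)$ independent copies of \algref{alg:l12:sampler} in parallel and return the output of the earliest one that does not declare failure. Independence of the uniform scalings $t_i$ and of the \countsketch, \ams, and \estimator{} randomness across copies makes these trials independent Bernoulli events, so the probability that every copy aborts is at most $\delta+1/\poly(n)$. A straightforward union bound over all copies over the $\poly(n)$-probability failure events of \lemref{lem:estimator}, \lemref{lem:ams}, \lemref{lem:countsketch}, \lemref{lem:l12:tail:failure}, and \lemref{lem:l12:sampling:prob} absorbs into the $\tfrac{1}{\poly(n)}$ additive slack. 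For the space bound, a single sampler stores an $r\times b$ \countsketch{} table with $r=\Theta(\log n)$, $b=\Theta(1/\eps^2)$, each bucket a $d$-dimensional vector with $\O{\log n}$-bit entries, together with one \ams{} and one \estimator{} using $\O{d\log^2 n/\eps^2}$ and $d\,\polylog(n)$ bits respectively; multiplying by the number of repetitions yields the claimed $\O{d\,\poly(1/\eps,\log n)}$ bits.

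The main obstacle is not really the probabilistic bookkeeping (which is parallel to the $L_{2,2}$ case) but the fact that the level-set sketch underlying \estimator{} must be linear so that it still estimates $\norm{\A\P}_{1,2}$ correctly when $\P$ arrives only in post-processing. This is precisely the guarantee recorded in \lemref{lem:estimator}: the aggregates of \cite{AndoniBIW09} are linear combinations of the rows of $\A$, so right-multiplication by $\P$ after the stream turns them into the corresponding aggregates for $\A\P$ without affecting the level-set analysis. Once this observation is in hand, the rest of the argument is a direct specialization of the $L_{2,2}$ proof with the thresholds $\sqrt{C\log n/\eps}$ replaced by $C\log n/\eps^2$ and the Frobenius norm replaced by the $L_{1,2}$ norm.
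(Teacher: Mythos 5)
Your proposal is correct and matches the paper's own proof: it conditions on $\widehat{F}$ via \lemref{lem:estimator}, invokes \lemref{lem:l12:sampling:prob} for the per-row probability, notes the $\Theta\left(\frac{\eps^2}{C\log n}\right)$ per-instance success rate, and amplifies with $\poly\left(\frac{1}{\eps},\log n\right)$ parallel linear-sketch copies, with the same space accounting. The extra detail you give on the post-processing linearity of \estimator{} and the union bound over failure events is consistent with, and slightly more explicit than, the paper's argument.
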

\begin{proof}
From \lemref{lem:l12:sampling:prob} and the fact that $\norm{\A\P}_{1,2}\le\widehat{F}\le\xi\norm{\A\P}_{1,2}$ with high probability by \lemref{lem:estimator}, then it follows that each row $\A_i\P$ is sampled with probability $\left(1+\eps\right)\frac{\norm{\A_i\P}_2}{\norm{\A\P}_{1,2}}+\frac{1}{\poly(n)}$, conditioned on the sampler succeeding. 
The probability of the sampler succeeds is $\Theta\left(\frac{\eps^2}{C\log n}\right)$, then the sampler can be repeated $\poly\left(\frac{1}{\eps},\log n\right)$ times to obtain probability of success at least $1-\frac{1}{\poly(n)}$. 
Since each instance of $\ams$, $\estimator$, and $\countsketch$ uses $\O{d\,\poly\left(\frac{1}{\eps},\log n\right)}$ bits of space, then the total space complexity follows. 
\end{proof}

\subsection{Noisy Adaptive Distance Sampling}
\seclab{sec:l12:adaptive}
Our algorithm for noisy adaptive distance sampling, given in \algref{alg:l12:noisy:adaptive}, is similar to \secref{sec:noisy:adaptive}, except it uses the $L_{1,2}$ sampling primitive of \thmref{thm:l12:sampling} instead of the $L_{2,2}$ sampler. 
\begin{algorithm}[!htb]
\caption{Noisy Adaptive Sampler}
\alglab{alg:l12:noisy:adaptive}
\begin{algorithmic}[1]
\Require{Matrix $\A\in\mathbb{R}^{n\times d}$ that arrives as a stream $\A_1,\ldots,\A_n\in\mathbb{R}^d$, parameter $k$ for number of sampled rows, constant parameter $\eps>0$.}
\Ensure{$k$ Noisy and projected rows of $\A$.}
\State{Create instances $\alg_1,\ldots,\alg_k$ of the $L_{1,2}$ sampler of \algref{alg:l12:sampler} where the number of buckets $b=\Theta\left(\frac{\log^4 n}{\eps^2}\right)$ is sufficiently large.}
\State{Let $\M$ be empty $0\times d$ matrix.}
\State{\textbf{Streaming Stage:}}
\For{each row $\A_i$}
\State{Update each sketch $\alg_1,\ldots,\alg_k$}
\EndFor
\State{\textbf{Post-processing Stage:}}
\For{$j=1$ to $j=k$}
\State{Post-processing matrix $\P\gets\I-\M^\dagger\M$.}
\State{Update $\alg_j$ with post-processing matrix $\P$.}
\State{Let $\r_j$ be the noisy row output by $\alg_j$.}
\State{Append $\r_j$ to $\M$: $\M\gets\M\circ\r_j$.}
\EndFor
\State{\Return $\M$.}
\end{algorithmic}
\end{algorithm}

We first bound the norm of the perturbation of the sampled row at each instance. 
\begin{lemma}
\lemlab{lem:l12:orthogonal:noise}
Given a matrix $\A\in\mathbb{R}^{n\times d}$ and a matrix $\P\in\mathbb{R}^{d\times d}$, as defined in Line 8 and round $i\le k$, of \algref{alg:l12:sampler}, suppose index $j\in[n]$ is sampled (in round $i$). 
Then with high probability, the sampled (noisy) row $\r_i$ satisfies $\r_i=\A_j\P+\v_e$ with 
\[\norm{\v_e\Q}_2\le\frac{\eps^3}{C\log n}\frac{\norm{\A\P\Q}_{1,2}}{\norm{\A\P}_{1,2}}\norm{\A_j\P}_2,\]
for any projection matrix $\Q\in\mathbb{R}^{d\times d}$. 
Hence, $\v_e$ is orthogonal to each noisy row $\r_y$, where $y\in[i-1]$. 
\end{lemma}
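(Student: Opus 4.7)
The proof mirrors that of \lemref{lem:orthogonal:noise} but with the $L_{1,2}$ scaling $\B_x = \A_x/t_x$ in place of $\A_x/\sqrt{t_x}$. Fix a projection matrix $\Q \in \mathbb{R}^{d\times d}$. Write the output of $\countsketch$ at the bucket associated with index $j$ as $\A_j\P/t_j + \E$, where $\E$ is a signed linear combination of the other rows of $\B\P$ that hash to the same bucket (and, via the median across $\Theta(\log n)$ rows, this is the representative estimator). The algorithm returns $\r_i = t_j\cdot(\A_j\P/t_j + \E) = \A_j\P + \v_e$, so $\v_e = t_j\E$. Because $\E$ is a linear combination of rows of $\A\P = \A(\I - \M^\dagger\M)$, the noise $\v_e$ lies in the row span of $\A\P$, which is orthogonal to the span of the previously sampled noisy rows $\r_1,\ldots,\r_{i-1}$ forming $\M$, giving the final orthogonality claim. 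The rest of the proof is to bound $\norm{\v_e\Q}_2 = t_j\norm{\E\Q}_2$.

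The plan is to separately bound $\norm{\E\Q}_2$ using the CountSketch guarantee applied to $\B\P\Q$, and $t_j$ using the condition under which index $j$ was selected. For the first bound, I would argue as in \lemref{lem:orthogonal:noise}: since $t_x \in [0,1]$ is uniform, \obsref{obs:sample:prob}-style reasoning gives $\PPr{\norm{\B_x\P\Q}_2 \ge \norm{\A\P\Q}_{1,2}/2^c} \le 2^c\norm{\A_x\P\Q}_2/\norm{\A\P\Q}_{1,2}$, so the expected size of the level set $S_c := \{x : \norm{\A\P\Q}_{1,2}/2^{c+1} < \norm{\B_x\P\Q}_2 \le \norm{\A\P\Q}_{1,2}/2^c\}$ is at most $\min(2^{c+1},n)$. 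Using $\O{\log n}$-wise independent scaling variables and the Chernoff-Hoeffding bound of \cite{SchmidtSS95}, $\abs{S_c} \le 2^{c+\beta}\log n$ with probability $1-1/\poly(n)$ for sufficiently large $\beta$. Summing the geometric series and handling the negligible tail with $c > \log n$ via the a priori bound $t_x \ge 1/\poly(n)$, I obtain $\norm{\B\P\Q}_F^2 \le \poly(\log n)\,\norm{\A\P\Q}_{1,2}^2$ with high probability. Applying \lemref{lem:countsketch} (which is a purely linear sketch and therefore handles right-multiplication by $\Q$ in post-processing) with $b = \Theta(\log^4 n/\eps^2)$ buckets yields $\norm{\E\Q}_2 \le \eps\,\norm{\A\P\Q}_{1,2}$ (where the $\poly(\log n)$ factor coming from $\norm{\B\P\Q}_F$ is absorbed into the choice of $b$).

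For the bound on $t_j$, recall that the single $L_{1,2}$ sampler only outputs index $j$ when $\norm{\r_{i,j}}_2 \ge \frac{C\log n}{\eps^2}\widehat{F}$ where $\widehat{F} \ge \norm{\A\P}_{1,2}$ by \lemref{lem:estimator}. Conditioning on the $\O{1/\poly(n)}$ failure event of the Frobenius/tail estimators and on $\norm{\widehat{\B_j\P}}_2$ being a $(1+\eps)$-approximation of $\norm{\B_j\P}_2$ (as established in the proof of \lemref{lem:l12:sampling:prob}), we conclude $\norm{\B_j\P}_2 \gtrsim \frac{C\log n}{\eps^2}\norm{\A\P}_{1,2}$, which combined with $\B_j = \A_j/t_j$ gives $t_j \le \frac{\eps^2\,\norm{\A_j\P}_2}{C\log n\,\norm{\A\P}_{1,2}}$. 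Multiplying the two bounds produces
\[
\norm{\v_e\Q}_2 = t_j\norm{\E\Q}_2 \le \frac{\eps^2\,\norm{\A_j\P}_2}{C\log n\,\norm{\A\P}_{1,2}}\cdot\eps\,\norm{\A\P\Q}_{1,2} = \frac{\eps^3}{C\log n}\,\frac{\norm{\A\P\Q}_{1,2}}{\norm{\A\P}_{1,2}}\,\norm{\A_j\P}_2,
\]
as desired.

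The main obstacle is the Frobenius bound on $\B\P\Q$: unlike the $L_{2,2}$ case, the natural reference here is an $L_{1,2}$ norm while we still need to control a Frobenius-style CountSketch error. This requires the level-set analysis to convert $\norm{\A\P\Q}_{1,2}$ into an appropriately bounded $\norm{\B\P\Q}_F$, and in turn forces the bucket count in $\countsketch$ to scale as $\Theta(\log^4 n/\eps^2)$ rather than the $\Theta(\log^2 n/\eps^2)$ used in the $L_{2,2}$ sampler; the rest is careful bookkeeping of failure events across the Chernoff bound, the tail/norm estimators, and the CountSketch guarantee.
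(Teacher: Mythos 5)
Your proposal is correct and follows essentially the same route as the paper: decompose $\r_i=\A_j\P+t_j\E$, bound $\norm{\E\Q}_2\le\eps\norm{\A\P\Q}_{1,2}$ via the level-set/Chernoff argument showing the scaled mass of $\B\P\Q$ is $\O{\log^2 n}\norm{\A\P\Q}_{1,2}$ together with CountSketch on $b=\Theta(\log^4 n/\eps^2)$ buckets, bound $t_j\le\frac{\eps^2\norm{\A_j\P}_2}{C\log n\,\norm{\A\P}_{1,2}}$ from the sampling threshold, and multiply. The only cosmetic deviation is your handling of the level sets with $c>\log n$ (the paper simply notes those rows each have norm at most $\norm{\A\P\Q}_{1,2}/n$, so their total contribution is $\norm{\A\P\Q}_{1,2}$, rather than invoking $t_x\ge 1/\poly(n)$), which does not affect correctness.
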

\begin{proof}
Let $\Q\in\mathbb{R}^{d\times d}$ be a projection matrix, $\B_x=\frac{\A_x}{t_x}$ be the rescaled row of $\A_x$ for each $x\in[n]$, and $\B\in\mathbb{R}^{n\times d}$ be the rescaled matrix of $\A$ so that row $x$ of $\B$ is $\B_x$ for $x\in[n]$. 
Let $\E$ be the noise in the bucket corresponding to the selected row $j$, so that the output vector is $\A_j+t_j\E$. 
Since $t_x\in[0,1]$ is selected uniformly at random for each $x\in[n]$, then for each integer $c\ge0$,
\[\PPr{\frac{\norm{\A_x\P\Q}_2}{t_x}\ge\frac{\norm{\A\P\Q}_{1,2}}{2^c}}\le\frac{2^c\norm{\A_x\P\Q}_2}{\norm{\A\P\Q}_{1,2}}.\]
Because $\B_x=\frac{\A_x}{t_x}$, then by linearity of expectation over $x\in[n]$, we can bound the expected size of each of the level sets $S_c:=\left\{x\in[n]\,:\,\frac{\norm{\A\P\Q}_{1,2}}{2^{c-1}}>\norm{\B_x\P\Q}_2\ge\frac{\norm{\A\P\Q}_{1,2}}{2^c}\right\}$ by $\Ex{|S_c|}\le\min(2^c,n)$. 
Thus $\PPr{\left|S_c\right|\le\min(2^{c+C}\log n,n)}\ge 1-\frac{1}{\poly(n)}$ by standard Chernoff bounds for appropriate constant $C$.  

We can now roughly bound the $L_{1,2}$ norm of $\B\P\Q$ by the norm of $\A\P\Q$ using a union bound over level sets $S_c$ for $0\le c\le\log n$ and upper bounding the norms of all the rows in level sets $S_c$ with $c>\log n$ by $\frac{\norm{\A\P\Q}_{1,2}}{n}$. 
That is,  
\[\PPr{\norm{\B\P\Q}_{1,2}\ge 2^C\cdot4\log^2 n\norm{\A\P\Q}_{1,2}}\le\frac{1}{\poly(n)}.\]
Hence with high probability, the total mass $\norm{\B\P\Q}_{1,2}$ distributed across the CountSketch table is $\O{\log^2 n\norm{\A\P\Q}_{1,2}}$. 

Using a CountSketch table with $b=\Theta\left(\frac{\log^4 n}{\eps^2}\right)$ buckets with sufficiently large constant to hash the rows of $\B\P\Q$, then \lemref{lem:countsketch} implies that the bucket corresponding to $\A_j$ has mass at most $\eps\norm{\A\P\Q}_F\le\eps\norm{\A\P\Q}_{1,2}$ in the subspace to which $\Q$ projects, i.e., $\norm{\E\Q}_2\le\eps\norm{\A\P\Q}_{1,2}$. 
Since row $j$ was sampled by \algref{alg:l12:sampler}, then $\norm{\B_j\P}_2\ge\frac{C\log n}{\eps^2}\norm{\A\P}_{1,2}$. 
Thus $t_j\le\frac{\eps^2}{C\log n}\frac{\norm{\A_j\P}_2}{\norm{\A\P}_{1,2}}$ since $\B_j=\frac{\A_j\P}{t_j}$, and moreover with high probability,
\[\norm{t_j\E\Q}_2\le\frac{\eps^3}{C\log n}\frac{\norm{\A\P\Q}_{1,2}}{\norm{\A\P}_{1,2}}\norm{\A_j\P}_2.\]
Since $\v_e\Q=t_j\E\Q$, then the claim follows. 
\end{proof}
We now bound the total variation distance between the distribution of sampled rows and the distribution of adaptive sampling with respect to distances to selected subspace. 
The proof is almost verbatim to \lemref{lem:adaptive:tvd}, except we now consider the probabilities with respect to the distances to the previous subspace, rather than the squared distances. 
We can still use the change of basis matrix in (\ref{eqn:cob:matrix}) to denote the perturbation in each round, where we set $\tau_i=\frac{\eps^3\sum_{a=1}^n\lambda_{a,i}}{\sum_{a=1}^n\sqrt{\sum_{b=1}^d\lambda_{a,b}^2}}$ though \lemref{lem:l12:orthogonal:noise} implies we could actually even set the scaling factor to $\frac{\eps^3}{C\log n}$ rather than just $\eps^2$. 
We can then bound $\left|\sqrt{\sum_{i=2}^d\zeta_{s,i}^2}-\sqrt{\sum_{i=2}^d\lambda_{s,i}^2}\right|$ from a bound on $\left|\sum_{i=2}^d\zeta_{s,i}^2-\sum_{i=2}^d\lambda_{s,i}^2\right|$. 
\begin{lemma}
\lemlab{lem:l12:adaptive:tvd}
Let $f(1)$ be the index of a noisy row $\r_1$ sampled in the first iteration of \algref{alg:l12:noisy:adaptive}. 
Let $\mathcal{P}_1$ be a process that projects away from $\A_{f(1)}$ and iteratively selects $k-1$ additional rows of $\A$ through adaptive sampling (with $p=1$).  
Let $\mathcal{P}_2$ be a process that projects away from $\r_1$ and iteratively selects $k-1$ additional rows of $\A$ through adaptive sampling (with $p=1$). 
Then for $\eps<\frac{1}{d}$, the total variation distance between the distributions of the $k$ indices output by $\mathcal{P}_1$ and $\mathcal{P}_2$ is $\O{k\eps}$. 
\end{lemma}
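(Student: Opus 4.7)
I plan to mirror the proof of \lemref{lem:adaptive:tvd} almost verbatim, substituting the $L_{1,2}$-analog of the noise bound (\lemref{lem:l12:orthogonal:noise}) for the $L_{2,2}$ noise bound (\lemref{lem:orthogonal:noise}) and adjusting for the fact that the sampling probability in round $t$ is now proportional to $\sqrt{\sum_{i\ge t}\lambda_{s,i}^2}$ rather than $\sum_{i\ge t}\lambda_{s,i}^2$. The induction on $t$ and the construction of the two orthonormal bases $U=\{\u_1,\ldots,\u_d\}$ (with $\u_1\parallel\A_{f(1)}$) and $W=\{\w_1,\ldots,\w_d\}$ (with $\w_1\parallel\r_1$, obtained from Gram--Schmidt on $\{\w_1,\u_2,\ldots,\u_d\}$) carry over unchanged, and the change-of-basis matrix $\B$ still has the structure displayed in (\ref{eqn:cob:matrix}). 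The only modification is that \lemref{lem:l12:orthogonal:noise} yields $\tau_i\le\frac{\eps^3}{C\log n}\cdot\frac{\sum_a|\lambda_{a,i}|}{\sum_a\sqrt{\sum_b\lambda_{a,b}^2}}$, since projecting $\A\P$ onto $\u_i$ gives a one-dimensional vector whose $L_{1,2}$ norm equals $\sum_a|\lambda_{a,i}|$.

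After expanding each $\A_s$ in both bases, $\A_s=\sum_i\lambda_{s,i}\u_i=\sum_i\zeta_{s,i}\w_i$, the coefficient-by-coefficient bound derived in the proof of \lemref{lem:adaptive:tvd} still gives $\zeta_{s,i}=(1-\O{\tau_i})\lambda_{s,i}\pm\O{\tau_i}\lambda_{s,1}\pm\sum_{j\notin\{i,1\}}\O{\tau_i\tau_j}\lambda_{s,j}$ for $i\ge 2$. To pass from squared-coefficient differences to differences of Euclidean norms, I will use the 1-Lipschitz property of $\ell_2$: for any two $(d-t+1)$-vectors $\bm\zeta_s,\bm\lambda_s$,
\[\left|\sqrt{\textstyle\sum_{i\ge t}\zeta_{s,i}^2}-\sqrt{\textstyle\sum_{i\ge t}\lambda_{s,i}^2}\right|\le\sqrt{\textstyle\sum_{i\ge t}(\zeta_{s,i}-\lambda_{s,i})^2}\le\sum_{i\ge t}|\zeta_{s,i}-\lambda_{s,i}|.\]
Plugging in the coefficient bound and summing over $s\in[n]$, the three types of error terms $\tau_i|\lambda_{s,i}|$, $\tau_i|\lambda_{s,1}|$, and $\tau_i\tau_j|\lambda_{s,j}|$ can be controlled using the $L_{1,2}$-scaling of $\tau_i$: for example, $\sum_i\tau_i\sum_s|\lambda_{s,i}|\le\frac{\eps^3}{C\log n}\sum_i\frac{(\sum_a|\lambda_{a,i}|)^2}{\sum_a\sqrt{B_a}}$ where $B_a=\sum_b\lambda_{a,b}^2$, and analogous manipulations (using Cauchy--Schwarz to relate $\sum_s|\lambda_{s,1}|$ to the total $L_{1,2}$ mass, and using $\sum_i\tau_i=\O{\eps^{5/2}}$ via $\sum_i|\lambda_{a,i}|\le\sqrt d\sqrt{B_a}$) bound the remaining terms. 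This shows that both $\sum_s\sqrt{\sum_{i\ge t}\zeta_{s,i}^2}$ is within a $(1+\O{\eps})$ factor of $\sum_s\sqrt{\sum_{i\ge t}\lambda_{s,i}^2}$ and the per-row perturbation is a small additive fraction of the total denominator, so the two probability distributions differ by $\O{\eps}$ in total variation in this round.

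The inductive step then reuses the argument from \lemref{lem:adaptive:tvd} unchanged: conditioning on the first $t-1$ sampled rows being the actual rows of $\A$ (or lying in the linearly dependent ``$\mathsf{BAD}$'' set absorbed into the first $t-1$ rounds' total variation budget), the same $U,W,\B$ construction applies, the same coefficient bounds hold, and the same sum-of-conditional-distances argument yields $\O{k\eps}$ total variation across $k$ rounds. The failure probability $\frac{1}{\poly(n)}$ of \lemref{lem:l12:orthogonal:noise} is absorbed into the final bound.

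The main technical obstacle is the third step: controlling $\sum_s\sum_{i\ge t}|\zeta_{s,i}-\lambda_{s,i}|$ by $\O{\eps}D_t$ where $D_t=\sum_s\sqrt{\sum_{i\ge t}\lambda_{s,i}^2}$ is the $L_{1,2}$-denominator of the ideal distribution. In the $L_{2,2}$ proof this was easy because everything was a sum of squared quantities homogeneous in the Frobenius norm, but now the denominator mixes $\ell_2$ and $\ell_1$, so the cross-terms involving $\lambda_{s,1}$ (the removed direction) require the additional factor of $\eps$ hidden in the $L_{1,2}$-normalized $\tau_i$ to cancel the potential blowup. Careful bookkeeping, together with the assumption $\eps<\frac{1}{d}$ that suppresses all higher-order $\tau_i\tau_j$ and $\tau_i\tau_j\tau_\ell$ contributions, should make this work out to a clean $\O{k\eps}$ bound.
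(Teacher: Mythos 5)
Your proposal is correct, and the overall scaffolding (bases $U,W$, change-of-basis matrix $\B$, the $L_{1,2}$-normalized $\tau_i$ from \lemref{lem:l12:orthogonal:noise}, the inductive step reused verbatim from \lemref{lem:adaptive:tvd}) matches the paper's proof. You diverge from the paper at the key technical step of converting coefficient perturbations into differences of row norms. The paper first bounds each $|\zeta_{s,i}^2-\lambda_{s,i}^2|$ using the same AM-GM decomposition as the $L_{2,2}$ case (with the $\tau_i\lambda_{s,1}\lambda_{s,i}$ cross term split as $\eps^2\lambda_{s,i}^2+\frac{\tau_i^2}{\eps^2}\lambda_{s,1}^2$ rather than with $\eps$, precisely so that a later square root lands on $\eps$), then applies $\bigl|\sqrt{a}-\sqrt{b}\bigr|\le\sqrt{|a-b|}$ and $\sqrt{x+y+z}\le\sqrt x+\sqrt y+\sqrt z$ to split the bound into three summands before controlling each with the $L_{1,2}$-form of $\tau_i$. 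You instead go through the reverse triangle inequality for $\ell_2$, $\bigl|\,\norm{\bm\zeta_s}_2-\norm{\bm\lambda_s}_2\,\bigr|\le\norm{\bm\zeta_s-\bm\lambda_s}_2\le\norm{\bm\zeta_s-\bm\lambda_s}_1$, which directly reduces the problem to bounding $\sum_s\sum_{i\ge t}|\zeta_{s,i}-\lambda_{s,i}|$. This is cleaner bookkeeping: the unsquared coefficient difference $\zeta_{s,i}-\lambda_{s,i}$ produces only the three linear error types $\tau_i|\lambda_{s,i}|$, $\tau_i|\lambda_{s,1}|$, $\tau_i\tau_j|\lambda_{s,j}|$, without the square-then-AM-GM-then-square-root dance; and each of them is bounded by $\O{\eps}D_t$ by the same observations the paper uses (namely $\sum_s|\lambda_{s,1}|\le\norm{\A}_{1,2}$ to cancel the $L_{1,2}$ denominator in $\tau_i$, together with $\sum_{i\ge t}|\lambda_{s,i}|\le\sqrt d\sqrt{\sum_{i\ge t}\lambda_{s,i}^2}$ and $\eps<\frac1d$). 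The two routes reach the same $\O{\eps}$-per-round TV bound and both are valid; yours trades the paper's systematic re-use of the $L_{2,2}$ machinery for a shorter chain of inequalities tailored to the $\sqrt{\cdot}$ in the $L_{1,2}$ denominator.
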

\begin{proof}
As in the proof of \lemref{lem:adaptive:tvd}, we consider the probability distributions induced by linearly independent vectors $\A_{f(1)},\ldots,\A_{f(t-1)}$ and by linearly independent vectors $\r_1,\A_{f(2)},\ldots,\A_{f(t-1)}$. 
Let $U=\{\u_1,\ldots,\u_d\}$ be an orthonormal basis for the row span of $\A$ such that $\{\u_1,\ldots,\u_s\}$ is a basis for the row span of $\{\A_{f(1)},\ldots,\A_{f(s)}\}$ for each $2\le s\le t-1$ and let $W=\{\w_1,\ldots,\w_d\}$ be an orthonormal basis for the row span of $\A$ such that $\{\w_1,\ldots,\w_s\}$ is an orthonormal basis that extends the row span of $\{\r_1,\A_{f(2)},\ldots,\A_{f(s)}\}$ for each $2\le s\le t-1$. 

\lemref{lem:l12:orthogonal:noise} then implies that $\r_1=\norm{\A_{f(1)}}_2\left(\u_1+\sum_{i=1}^d(\pm\O{\tau_i})\u_i\right)$, where $\tau_i=\frac{\eps^3\norm{\A\P_i}_{1,2}}{\norm{\A}_{1,2}}$ (in fact we could set $\tau_i$ to $\frac{\eps^3}{C\log n}\frac{\norm{\A\P_i}_{1,2}}{\norm{\A}_{1,2}}$ but we do not need this smaller value) with high probability, and $\P_i=\u_i^\dagger\u_i$ is the projection matrix onto $\u_i$. 
From the Gram-Schmidt process, the change of basis matrix $\B$ from $U$ to $W$ has the form (\ref{eqn:cob:matrix}). 

We can write each row $\A_s$ in terms of basis $U$ as $\A_s=\sum_{i=1}^d\lambda_{s,i}\u_i$ and in terms of basis $W$ as $\A_s=\sum_{i=1}^d\zeta_{s,i}\w_i$. 
Since we project away from $\A_{f(1)}$, we should have sampled $\A_s$ with probability $\frac{\norm{\A_s\Z_{t-1}}_2}{\norm{\A\Z_{t-1}}_{1,2}}=\frac{\sqrt{\sum_{i=2}^d\lambda_{s,i}^2}}{\sum_{j=1}^n\sqrt{\sum_{i=2}^d\lambda_{j,i}^2}}$ in round $t$ but instead we sample it with probability $\frac{\norm{\A_s\Y_{t-1}}_2}{\norm{\A\Y_{t-1}}_{1,2}}=\frac{\sqrt{\sum_{i=2}^d\zeta_{s,i}^2}}{\sum_{j=1}^n\sqrt{\sum_{i=2}^d\zeta_{j,i}^2}}$. 
From the change of basis matrix $\B$, $\zeta_{s,i}=(1-\O{\tau_i})\lambda_{s,i}\pm\O{\tau_i}\lambda_{s,1}\pm\sum_{j\notin \{i,1\}}\O{\tau_i\tau_j}\lambda_{s,j}$, for $i\ge 2$. 
Note that $\zeta_{s,i}$ has exactly the same form as the proof of \lemref{lem:adaptive:tvd}, since $\zeta_{s,i}$ is derived from the same change of basis matrix $\B$, albeit with different values of $\tau_i$.
It then follows by the same reasoning as (\ref{eqn:coeff:diff}) in the proof of \lemref{lem:adaptive:tvd} that for 
\begin{align*}
|\zeta_{s,i}^2-\lambda_{s,i}^2|&\le25\Big(\tau_i\lambda_{s,i}^2+\tau_i^2\lambda_{s,1}^2+\tau_i\lambda_{s,1}\lambda_{s,i}+\sum_{j,\ell\neq\{i,1\}}\tau_i^2\tau_j\tau_\ell\lambda_{s,j}\lambda_{s,\ell}\\
&+\sum_{j\neq\{i,1\}}\tau_i\tau_j\lambda_{s,i}\lambda_{s,j}+\sum_{j\neq\{i,1\}}\tau_i^2\tau_j\lambda_{s,1}\lambda_{s,j}\Big).
\end{align*}
Thus from AM-GM, we have
\begin{align*}
\tau_i\lambda_{s,1}\lambda_{s,i}&\le\eps^2\lambda_{s,i}^2+\frac{\tau_i^2}{\eps^2}\lambda_{s,1}^2\\
\tau_i^2\tau_j\tau_\ell\lambda_{s,j}\lambda_{s,\ell}&\le\tau_i^2\tau_j^2\lambda_{s,j}^2+\tau_i^2\tau_{\ell}^2\lambda_{s,\ell}^2,\\
\tau_i\tau_j\lambda_{s,i}\lambda_{s,j}&\le\eps^2\lambda_{s,i}^2+\frac{\tau_i^2\tau_j^2}{\eps^2}\lambda_{s,j}^2,\\
\tau_i^2\tau_j\lambda_{s,1}\lambda_{s,j}&\le\tau_i^2\lambda_{s,1}^2+\tau_i^2\tau_j^2\lambda_{s,j}^2
\end{align*}
so that for $\eps<\frac{1}{d}$, we have $|\zeta_{s,i}^2-\lambda_{s,i}^2|\le25\left(2\eps^2\lambda_{s,i}^2+\frac{2\tau_i^2}{\eps^2}\lambda_{s,1}^2+4\sum_{j=2}^d\frac{\tau_i^2\tau_j^2}{\eps^2}\lambda_{s,j}^2\right)$. 
Note that in comparison to \lemref{lem:adaptive:tvd}, we compare the $\tau_i\lambda_{s,1}\lambda_{s,i}$ term with $\eps^2\lambda_{s,i}^2$ rather than $\eps\lambda_{s,i}^2$. 
Therefore,
\begin{align*}
\left|\sqrt{\sum_{i=t}^d\zeta_{s,i}^2}-\sqrt{\sum_{i=t}^d\lambda_{s,i}^2}\right|&\le\sqrt{\left|\sum_{i=t}^d\zeta_{s,i}^2-\sum_{i=t}^d\lambda_{s,i}^2\right|}\\
&\le\sqrt{25\sum_{i=t}^d\left(2\eps^2\lambda_{s,i}^2+\frac{2\tau_i^2}{\eps^2}\lambda_{s,1}^2+4\sum_{j=2}^d\frac{\tau_i^2\tau_j^2}{\eps^2}\lambda_{s,j}^2\right)}\\
&\le5\sqrt{\sum_{i=t}^d 2\eps^2\lambda_{s,i}^2}+5\sqrt{\sum_{i=t}^d\frac{2\tau_i^2}{\eps^2}\lambda_{s,1}^2}+5\sqrt{4\sum_{i=t}^d\sum_{j=2}^d\frac{\tau_i^2\tau_j^2}{\eps^2}\lambda_{s,j}^2}.
\end{align*}
Since $\tau_i=\frac{\eps^3\norm{\A\P_i}_{1,2}}{\norm{\A}_{1,2}}=\frac{\eps^3\sum_{a=1}^n\lambda_{a,i}}{\sum_{a=1}^n\sqrt{\sum_{b=1}^d\lambda_{a,b}^2}}$ and $\eps<\frac{1}{d}$, then we have
\begin{align*}
\sum_{s=1}^n\sqrt{\sum_{i=t}^d\frac{\tau_i^2}{\eps^2}\lambda_{s,1}^2}&\le\left(\sum_{s=1}^n\lambda_{s,1}\right)\sqrt{\sum_{i=t}^d\frac{\tau_i^2}{\eps^2}}\le\eps^2\left(\sum_{s=1}^n\lambda_{s,1}\right)\frac{\sum_{i=t}^d\sum_{a=1}^n\lambda_{a,i}}{\sum_{a=1}^n\sqrt{\sum_{b=1}^d\lambda_{a,b}^2}}\\
&\le\eps^2\sum_{s=1}^n\sum_{i=t}^d\lambda_{s,i}\le\eps\sum_{s=1}^n\sqrt{\sum_{i=t}^d\lambda_{s,i}^2}.
\end{align*}
Similarly, since $\tau_j<\frac{1}{d}$ for each integer $j\in[2,d]$, we have 
\begin{align*}
\sum_{s=1}^n\sqrt{\sum_{i=t}^d\sum_{j=2}^d\frac{\tau_i^2\tau_j^2}{\eps^2}\lambda_{s,j}^2}&\le\sum_{s=1}^n\max_{j\in[2,d]}\sqrt{\sum_{i=t}^d\frac{\tau_i^2}{\eps^2}\lambda_{s,j}^2}\\
&\le\eps^2\left(\sum_{s=1}^n\max_{j\in[2,d]}\lambda_{s,j}\right)\frac{\sum_{i=t}^d\sum_{a=1}^n\lambda_{a,i}}{\sum_{a=1}^n\sqrt{\sum_{b=1}^d\lambda_{a,b}^2}}\\
&\le\eps^2\sum_{s=1}^n\sum_{i=t}^d\lambda_{s,i}\le\eps\sum_{s=1}^n\sqrt{\sum_{i=t}^d\lambda_{s,i}^2}.
\end{align*}
Hence, we have 
\begin{align*}
\sum_{s=1}^n \left|\sqrt{\sum_{i=t}^d\zeta_{s,i}^2}-\sqrt{\sum_{i=t}^d\lambda_{s,i}^2}\right|&\le
200\eps\sum_{s=1}^n\sqrt{\sum_{i=t}^d\lambda_{s,i}^2},
\end{align*}
so that $\norm{\A\Y_{t-1}}_{1,2}$ is once again within a $(1+200\eps)$ factor of $\norm{\A\Z_{t-1}}_{1,2}$, from which we can bound the total variation distance by $800\eps$, including the $\frac{1}{\poly(n)}$ event of failure from \lemref{lem:l12:orthogonal:noise}. 
It follows from induction that the total variation distance across $k$ rounds is at most $800k\eps$.
\end{proof}
Thus we can also approximately simulate adaptive sampling in a stream with respect to the distances to the subspace spanned by the previously sampled rows. 
\begin{theorem}
\thmlab{thm:l12:adaptive:sampler}
Given a matrix $\A\in\mathbb{R}^{n\times d}$ that arrives in a turnstile data stream, there exists a one-pass streaming algorithm \adaptiveone{} that outputs a set of $k$ indices such that the probability distribution for each set of $k$ indices has total variation distance $\eps$ of the probability distribution induced by adaptive sampling with respect to the distances to the subspace in each iteration. 
The algorithm uses $\poly\left(d,k,\frac{1}{\eps},\log n\right)$ bits of space. 
\end{theorem}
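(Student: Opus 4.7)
The plan is to follow the hybrid-argument template from the proof of \thmref{thm:adaptive:sampler} almost verbatim, with the $L_{1,2}$ sampler and its associated distortion lemma in place of the $L_{2,2}$ ones. First I would instantiate \algref{alg:l12:noisy:adaptive} with $k$ copies of the $L_{1,2}$ sampler from \thmref{thm:l12:sampling}, each tuned to succeed with high probability by repetition, and amplify success so that across $k$ post-processing rounds every instance returns a sample (rather than $\bot$) with probability $1-\frac{1}{\poly(n)}$. The error parameter $\eps'$ fed to the $L_{1,2}$ samplers will be chosen at the very end.

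Next I would define the chain of processes $\mathcal{P}_1,\ldots,\mathcal{P}_{k+1}$, where $\mathcal{P}_i$ uses noisy rows produced by the $L_{1,2}$ sampler in the first $i-1$ rounds and then ideal adaptive samples (with $p=1$) for the remaining $k-i+1$ rounds. So $\mathcal{P}_1$ is the ideal offline $p=1$ adaptive sampler and $\mathcal{P}_{k+1}$ is exactly \algref{alg:l12:noisy:adaptive}. Applying \lemref{lem:l12:adaptive:tvd} between $\mathcal{P}_i$ and $\mathcal{P}_{i+1}$ (reusing its proof with the input matrix replaced by $\A\Q$, where $\Q$ projects away from the noisy rows fixed in the first $i-1$ rounds) gives $\tvd(\mathcal{P}_i,\mathcal{P}_{i+1})=\O{k\eps'}$, so by triangle inequality
\[
\tvd(\mathcal{P}_1,\mathcal{P}_{k+1})\le\sum_{i=1}^{k}\O{k\eps'}=\O{k^2\eps'}.
\]
Setting $\eps'=\Theta\!\left(\frac{\eps}{dk^2}\right)$ (the $\frac{1}{d}$ factor is forced by the hypothesis $\eps<\frac{1}{d}$ of \lemref{lem:l12:adaptive:tvd}) then yields total variation distance at most $\eps$, as required.

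For the space bound I would simply plug $\eps'$ into \thmref{thm:l12:sampling}: each sampler needs $\O{d\cdot\poly(1/\eps',\log n)}$ bits, the CountSketch inside it needs $b=\Theta(\log^4 n/(\eps')^2)$ buckets holding $d$-dimensional vectors of $\O{\log n}$-bit entries, and we need $\O{\frac{1}{\eps'}\polylog n}$ parallel copies to boost the per-round success probability to $1-\frac{1}{\poly(n)}$; multiplying by $k$ for the $k$ rounds gives $\poly(d,k,1/\eps,\log n)$ overall. The generation of the uniform scaling variables $t_i$ can be handled with $\O{\log n}$-wise (or $\delta$-approximately so) independent families exactly as in the \thmref{thm:l2:sampling} discussion, absorbing the deviations into the $\frac{1}{\poly(n)}$ additive error.

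The only subtle point — and the main obstacle — is verifying that \lemref{lem:l12:adaptive:tvd} really does transport across consecutive pairs $(\mathcal{P}_i,\mathcal{P}_{i+1})$ and not just between $(\mathcal{P}_1,\mathcal{P}_2)$. The lemma is stated for a single noisy row in the very first round, but its proof only uses (i) the per-row orthogonal-direction noise bound of \lemref{lem:l12:orthogonal:noise} and (ii) the Gram–Schmidt change-of-basis calculation that manipulates one row at a time. Because we can condition on the first $i-1$ noisy rows already fixed and apply the identical argument to the residual matrix $\A\Q$ sampled for one more noisy round followed by ideal rounds, the same $\O{k\eps'}$ bound goes through. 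I would spell this reduction out explicitly (as \thmref{thm:adaptive:sampler} does in its proof) to be safe, and would also note the $p=1$ analogue of \corref{cor:sampler:distortion}, call it \corref{cor:l12:sampler:distortion}, which comes for free from the proof of \lemref{lem:l12:adaptive:tvd} since that proof already shows $\norm{\A\Y_t}_{1,2}=(1\pm\O{\eps'})\norm{\A\Z_t}_{1,2}$.
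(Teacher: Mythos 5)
Your proposal is correct and follows essentially the same route as the paper's own proof: the same hybrid chain $\mathcal{P}_1,\ldots,\mathcal{P}_{k+1}$, the same reuse of \lemref{lem:l12:adaptive:tvd} on the residual matrix $\A\Q$ for each consecutive pair, the same triangle-inequality bound of $\O{k^2\eps'}$ with the rescaling $\eps'=\Theta(\eps/(dk^2))$, and the same space accounting via \thmref{thm:l12:sampling}. Your explicit attention to why the lemma transports to later rounds and to the $p=1$ analogue of \corref{cor:sampler:distortion} matches what the paper states, so no gaps remain.
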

\begin{proof}
Like the proof of \thmref{thm:adaptive:sampler}, we again consider a set of processes $\mathcal{P}_1,\mathcal{P}_2,\ldots,\mathcal{P}_{k+1}$, where for each $i\in[k+1]$, $\mathcal{P}_i$ is a process that samples noisy rows from the $L_{1,2}$ sampler for the first $i-1$ rounds and actual rows from $\A$ beginning with round $i$, through adaptive sampling with $p=1$. 
Then $\mathcal{P}_1$ is the actual adaptive sampling process, while $\mathcal{P}_{k+1}$ is the noisy process of \algref{alg:l12:noisy:adaptive}. 
Then \lemref{lem:l12:adaptive:tvd} argues that the total variation distance between the output distributions of the $k$ indices sampled by $\mathcal{P}_1$ and $\mathcal{P}_2$ is at most $\O{k\eps}$. 
Moreover, the total variation distance between the output distributions of the indices sampled by $\mathcal{P}_i$ and $\mathcal{P}_{i+1}$ is at most $\O{k\eps}$ for any $i\in[k]$ since the sampling distributions of $\mathcal{P}_i$ and $\mathcal{P}_{i+1}$ is identical in the first $i$ rounds, so we can use the same argument starting at round $i$ using the input matrix $\A\Q$ rather than $\A$, where $\Q$ is the projection matrix away from the noisy rows sampled in the first $i$ rounds. 
Now by a triangle inequality argument over the $k+1$ processes $\mathcal{P}_1,\ldots,\mathcal{P}_{k+1}$, the total variation distance between the probability distribution of the $k$ indices output by \algref{alg:l12:noisy:adaptive} and the probability distribution of the $k$ indices output by adaptive sampling is at most $\O{k^2\eps}$. 
Hence the total variation distance is at most $\eps$ after the appropriate rescaling factor. 

For the space complexity, observe that we run $k$ instances of \algref{alg:l12:sampler}, each using $b=\poly\left(k,\frac{1}{\eps},\log n\right)$ buckets due to the error parameter $\frac{\eps}{dk^2}$. 
\algref{alg:l12:sampler} uses $\poly\left(k,\frac{1}{\eps},\log n\right)$ CountSketch data structures that each use $\poly\left(d,k,\frac{1}{\eps},\log n\right)$ bits of space. 
Moreover by \lemref{lem:estimator}, each instance of \estimator{} uses $d\,\polylog(n)$ bits of space. 
Hence, the total space complexity is $\poly\left(d,k,\frac{1}{\eps},\log n\right)$. 
\end{proof}
Finally, by the same argument as \corref{cor:sampler:distortion}, we have the following: 
\begin{corollary}
\corlab{cor:l12:sampler:distortion}
Suppose \algref{alg:l12:noisy:adaptive} samples noisy rows $\r_1,\ldots,\r_k$ rather than the actual rows $\A_{f(1)},\ldots,\A_{f(k)}$. 
Let $\T_k=\A_{f(1)}\circ\ldots\circ\A_{f(k)}$, $\Z_k=\I-\T_k^\dagger\T_k$, $\R_k=\r_1\circ\ldots\circ\r_k$ and $\Y_k=\I-\R_k^\dagger\R_k$. 
Then $(1-\eps)\norm{\A\Y_k}_{1,2}\le\norm{\A\Z_k}_{1,2}\le(1+\eps)\norm{\A\Y_k}_{1,2}$ with probability at least $1-\eps$. 
\end{corollary}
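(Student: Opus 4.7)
The plan is to mirror, step-for-step, the proof of Corollary~\ref{cor:sampler:distortion} (the $L_{2,2}$ analog), replacing the Frobenius-norm bookkeeping with its $L_{1,2}$ counterpart that we already worked out inside the proof of Lemma~\ref{lem:l12:adaptive:tvd}. Recall that in that lemma we showed the following single-substitution statement: if we replace the true row $\A_{f(1)}$ by the noisy row $\r_1$ but then continue with the true rows $\A_{f(2)},\ldots,\A_{f(t)}$, the resulting $L_{1,2}$ residual satisfies $\norm{\A\Y_{t-1}}_{1,2} = (1\pm\O{\eps})\norm{\A\Z_{t-1}}_{1,2}$ with high probability, provided $\eps<\tfrac{1}{d}$. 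This is the atomic fact we will iterate.

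First I will define the same chain of hybrid processes $\mathcal{P}_1,\ldots,\mathcal{P}_{k+1}$ that appeared in the proof of Theorem~\ref{thm:l12:adaptive:sampler}: process $\mathcal{P}_i$ draws noisy rows via the $L_{1,2}$ sampler for its first $i-1$ iterations and true rows thereafter. The desired quantities $\norm{\A\Z_k}_{1,2}$ and $\norm{\A\Y_k}_{1,2}$ correspond (up to a deterministic transformation of the residual subspace) to the two endpoints $\mathcal{P}_1$ and $\mathcal{P}_{k+1}$. Between consecutive hybrids $\mathcal{P}_i$ and $\mathcal{P}_{i+1}$, only the $i$-th sample is swapped between a true and a noisy row; by applying Lemma~\ref{lem:l12:adaptive:tvd} to the residual matrix $\A\Q_i$ where $\Q_i$ is the projection away from the first $i-1$ already-sampled rows, the single-swap $L_{1,2}$-norm distortion in that step is a multiplicative $(1\pm\O{\eps})$, with failure probability $\tfrac{1}{\poly(n)}$ coming from Lemma~\ref{lem:l12:orthogonal:noise}.

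Second, I will telescope across the $k$ swaps: $k$ multiplicative $(1\pm\O{\eps})$ factors compose to $(1\pm\O{k\eps})$ by a triangle inequality on $\log$ ratios (or equivalently by Bernoulli since each individual distortion is much smaller than $1$). Taking a union bound over the $k$ high-probability events yields an overall success probability of $1-\O{k/\poly(n)}$, which is dominated by the $\eps$ slack. To produce the clean $(1\pm\eps)$-with-probability-$1-\eps$ statement of the corollary, I will run the $L_{1,2}$ samplers inside Algorithm~\ref{alg:l12:noisy:adaptive} with error parameter set to $\O{\eps/k}$; by Theorem~\ref{thm:l12:adaptive:sampler}, this only inflates the polynomial space overhead.

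The only mildly delicate point, and what I expect to be the main obstacle, is that the single-step bound from Lemma~\ref{lem:l12:adaptive:tvd} was stated with respect to the \emph{original} matrix $\A$, whereas in hybrid step $i$ we must apply it to the residual $\A\Q_i$ where $\Q_i$ is built from a mixture of noisy and true rows from previous hybrids. Fortunately the lemma is proved entirely via the change-of-basis matrix in~\eqref{eqn:cob:matrix} together with the one-shot orthogonal noise bound in Lemma~\ref{lem:l12:orthogonal:noise}, and both apply verbatim after replacing $\A$ by $\A\Q_i$ since the $L_{1,2}$ sampler for the $i$-th round operates on exactly that residual by construction. Once this is noted, the triangle-telescope and rescaling produce the claim.
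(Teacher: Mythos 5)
Your proposal matches the paper's intended argument: the paper proves this corollary exactly by invoking the in-proof bound from \lemref{lem:l12:adaptive:tvd} that a single true-to-noisy swap distorts $\norm{\A\cdot}_{1,2}$ residuals by only $(1\pm\O{\eps})$, and then extending to $k$ noisy rows via the hybrid processes $\mathcal{P}_1,\ldots,\mathcal{P}_{k+1}$ applied to the residual matrices $\A\Q_i$, with the error parameter rescaled (the paper uses $\O{\eps/(dk^2)}$, which also takes care of the $\eps<\tfrac{1}{d}$ requirement you flag) and failure events union-bounded. No substantive differences or gaps.
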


\end{document}